\newenvironment{tikzequation}
  {\begin{equation}
    \begin{tikzpicture}[line width=0.3mm,baseline={(current bounding box.center)}]
     \begin{scope}
  }
  { \end{scope}
    \end{tikzpicture}
   \end{equation}
  }
\newcommand{\algmargin}{\the\ALG@thistlm}
\newlength{\whilewidth}
\algnewcommand{\parState}[1]{\State%
  \parbox[t]{\dimexpr\linewidth-\algmargin}{\strut #1\strut}}
\tikzstyle WL=[line width=2pt,opacity=1.0]
\tikzstyle WLL=[line width=5pt,opacity=1.0]
\newtheorem{theorem}{Theorem}
\newtheorem{rmk}{Remark}
\newtheorem{coro}{Corollary}
\newtheorem{lemma}{Lemma}
\newtheorem{prop}{Proposition}
\newtheorem{definition}{Definition}
\theoremstyle{definition}
\renewcommand{\exp}[1]{\text{exp}\left( #1 \right)}
\begin{document}
\preprint{APS/123-QED}

\title{
Quon Classical Simulation: Unifying Cliffords, Matchgates and Entanglement}

\vspace{0.5cm}

\author{Zixuan Feng}
\thanks{All authors contributed equally to this work. \\Contact Author: liuzhengwei@mail.tsinghua.edu.cn}
\affiliation{Department of Mathematics, Tsinghua University,  Beijing 100084, China}

\author{Zhengwei Liu}
\thanks{All authors contributed equally to this work. \\Contact Author: liuzhengwei@mail.tsinghua.edu.cn}
\affiliation{Department of Mathematics, Tsinghua University,  Beijing 100084, China}
\affiliation{Yau Mathematical Sciences Center, Tsinghua University,  Beijing 100084, China}
\affiliation{Yanqi Lake Beijing Institute of Mathematical Sciences and Applications, Beijing 100407, China}

\author{Fan Lu}
\thanks{All authors contributed equally to this work. \\Contact Author: liuzhengwei@mail.tsinghua.edu.cn}
\affiliation{Department of Mathematics, Tsinghua University,  Beijing 100084, China}

\author{Ningfeng Wang}

\thanks{All authors contributed equally to this work. \\Contact Author: liuzhengwei@mail.tsinghua.edu.cn}
\affiliation{Department of Mathematics, Tsinghua University,  Beijing 100084, China}

%\date{\today}% It is always \today, today,
             % but any date may be explicitly specified

\begin{abstract}
We propose a new framework of topological complexity to study the computational complexity of quantum circuits and tensor networks. Within this framework, we establish the Quon Classical Simulation (QCS) for hybrid Clifford-Matchgate circuits, which is efficient for both Clifford circuits and Matchgate circuits, therefore answering a long standing open question on unifying efficient classical simulations. This framework is built upon the Quon language, a 2+1D topological quantum field theory with space-time boundary and defects. Its exponential computation complexity is captured by Magic holes, a topological feature capturing the global long-range entanglement. Both Clifford circuits and Matchgate circuits are free of Magic holes. Efficient classical simulations of Cliffords and Matchgates are implemented by two parallel operations, generalized surgery theory of 3-manifolds and Yang-Baxter relations on the 2D boundary respectively, with additional binary arithmetic properties.

\end{abstract}
\maketitle

\onecolumngrid

%\tableofcontents

\newcommand{\zerozero}[3]{
\pgfmathsetmacro{\s}{#1};
\pgfmathsetmacro{\x}{#2};
\pgfmathsetmacro{\y}{#3};
\begin{scope}[shift={(\x,\y)},scale=\s]
\draw(0,0)arc[start angle=180,end angle=0,radius=.25];
\draw(1,0)arc[start angle=180,end angle=0,radius=.25];
\end{scope}
}
\newcommand{\pnbraiding}[3]{
\pgfmathsetmacro{\s}{#1};
\pgfmathsetmacro{\x}{#2};
\pgfmathsetmacro{\y}{#3};
\begin{scope}[shift={(\x,\y)},scale=\s]
\draw(0,0)--(0,1);
\draw(.5,0)--(1,1);
\draw(1,0)--(.5,1);
\draw(1.5,0)--(1.5,1);
\end{scope}

}

\newcommand{\idid}[3]{
\pgfmathsetmacro{\s}{#1};
\pgfmathsetmacro{\x}{#2};
\pgfmathsetmacro{\y}{#3};
\begin{scope}[shift={(\x,\y)},scale=\s]
\draw(0,0)--(0,1);
\draw(.5,0)--(1,0);
\draw(1,1)--(.5,1);
\draw(1.5,0)--(1.5,1);
\end{scope}

}
\newcommand{\plusplus}[3]{
\pgfmathsetmacro{\s}{#1};
\pgfmathsetmacro{\x}{#2};
\pgfmathsetmacro{\y}{#3};
\begin{scope}[shift={(\x,\y)},scale=\s]
\draw(0,0)arc[start angle=180,end angle=0,radius=.75];
\draw(.5,0)arc[start angle=180,end angle=0,radius=.25];
\end{scope}

}

\newcommand{\hole}[3]{
\pgfmathsetmacro{\s}{#1};
\pgfmathsetmacro{\x}{#2};
\pgfmathsetmacro{\y}{#3};
\begin{scope}[scale=\s, shift={(\x,\y)}]
\draw[blue](.25,-1)arc[start angle=150,end angle=30, radius=.5];
            \draw[blue] (0,-.8)arc[start angle=-150,end angle=-30, radius=.8];
\end{scope}
}
\newcommand{\holenew}[3]{
\pgfmathsetmacro{\s}{#1};
\pgfmathsetmacro{\x}{#2};
\pgfmathsetmacro{\y}{#3};

\begin{scope}[shift={(\x,\y)},scale=\s]
\draw[blue](.25,-1)arc[start angle=150,end angle=30, radius=.5];
\draw[blue] (0,-.8)arc[start angle=-150,end angle=-30, radius=.8];
\end{scope}
}

\newcommand{\noholestring}[3]{
\pgfmathsetmacro{\s}{#1};
\pgfmathsetmacro{\x}{#2};
\pgfmathsetmacro{\y}{#3};

\begin{scope}[shift={(\x,\y)},scale=\s]
%\draw[blue](.25,-1-1.5)arc[start angle=150,end angle=30, radius=.5];
%\draw[blue](0,-.8-1.5)arc[start angle=-150,end angle=-30, radius=.8];
\draw(1.7,-1)arc[start angle=0,end angle=180,radius=1]--(1.7-2,-3.5)arc[start angle=180,end angle=360,radius=1]--(1.7,-1);
\end{scope}
\begin{scope}[shift={(\x+.15,\y-.5)},scale=\s*0.5]
%\draw[blue](.25,-1-1.5)arc[start angle=150,end angle=30, radius=.5];
%\draw[blue](0,-.8-1.5)arc[start angle=-150,end angle=-30, radius=.8];
\begin{scope}
\draw[blue](1.7,-1)arc[start angle=0,end angle=180,radius=1]--(1.7-2,-3.5)arc[start angle=180,end angle=360,radius=1]--(1.7,-1);
\end{scope}
\end{scope}
}
\newcommand{\holestring}[3]{
\pgfmathsetmacro{\s}{#1};
\pgfmathsetmacro{\x}{#2};
\pgfmathsetmacro{\y}{#3};

\begin{scope}[shift={(\x,\y)},scale=\s]
\draw[blue](.25,-1-1.5)arc[start angle=150,end angle=30, radius=.5];
\draw[blue](0,-.8-1.5)arc[start angle=-150,end angle=-30, radius=.8];
\draw(1.7,-1)arc[start angle=0,end angle=180,radius=1]--(1.7-2,-3.5)arc[start angle=180,end angle=360,radius=1]--(1.7,-1);
\end{scope}
}

\newcommand{\holestringsmall}[3]{
\pgfmathsetmacro{\s}{#1};
\pgfmathsetmacro{\x}{#2};
\pgfmathsetmacro{\y}{#3};

\begin{scope}[shift={(\x,\y)},scale=\s]
\draw[blue](.25,-1-1.5)arc[start angle=150,end angle=30, radius=.5];
\draw[blue](0,-.8-1.5)arc[start angle=-150,end angle=-30, radius=.8];
\draw(1.7,-2.5)arc[start angle=0,end angle=360,radius=1];
\end{scope}
}

\newcommand{\holespin}[3]{
\pgfmathsetmacro{\s}{#1};
\pgfmathsetmacro{\x}{#2};
\pgfmathsetmacro{\y}{#3};

\begin{scope}[shift={(\x,\y)},scale=\s/2]
%\draw[blue](.25,-1)arc[start angle=150,end angle=30, radius=.5];
%\draw[blue] (0,-.8)arc[start angle=-150,end angle=-30, radius=.8];

\draw(-.5,.5)--(0,0);
\draw(-.5,0)--(0,0.5);
%\draw(-.5,-.5)[bend left=90]to(.5,-.5);
\end{scope}

}

\newcommand{\braidleft}[4]{
\pgfmathsetmacro{\yup}{#1};
\pgfmathsetmacro{\ydown}{#2};
\pgfmathsetmacro{\xleft}{#3};
\pgfmathsetmacro{\xright}{#4};
\begin{scope}[scale=\s, shift={(\x,\y)}]
\draw(\xleft,\ydown)--(\xright,\yup);
\draw[white,WL](\xleft,\yup)--(\xright,\ydown);
\draw(\xleft,\yup)--(\xright,\ydown);
\end{scope}
}

\newcommand{\tpspin}[3]{
\pgfmathsetmacro{\ss}{#1};
\pgfmathsetmacro{\xx}{#2};
\pgfmathsetmacro{\yy}{#3};
\begin{scope}[shift={(\xx,\yy)},scale=\ss]
\hole{0.6}{1}{1};
%\node at (2.2,0.15){$\alpha$};
\draw(2,0.4)arc[start angle=15,end angle=345,radius=1]--(1.6,0.4)arc[start angle=30,end angle=330,radius=.6]--(2,.4);
\end{scope}
}

\newcommand{\tpspinnohole}[3]{
\pgfmathsetmacro{\ss}{#1};
\pgfmathsetmacro{\xx}{#2};
\pgfmathsetmacro{\yy}{#3};
\begin{scope}[shift={(\xx,\yy)},scale=\ss]
%\node at (2.2,0.15){$\alpha$};
\draw(2,0.4)arc[start angle=15,end angle=345,radius=1]--(1.6,0.4)arc[start angle=30,end angle=330,radius=.6]--(2,.4);
\end{scope}
}
\newcommand{\crossing}{
\tikz[baseline=(current bounding box.center), line width=0.3mm] {

    \pgfmathsetmacro{\x}{0.5}
    \pgfmathsetmacro{\y}{1}
    \pgfmathsetmacro{\xx}{0.2}
    \draw (-\x,\y) -- (\x,-\y);
    \draw (\x,\y) -- (-\x,-\y);
    
}
}

\newcommand{\crossingleftnode}{
\tikz[baseline=(current bounding box.center), line width=0.3mm] {
\begin{scope}[yscale=.6,xscale=.75]
    \pgfmathsetmacro{\x}{0.5}
    \pgfmathsetmacro{\y}{1}
    \pgfmathsetmacro{\xx}{0.2}
    \draw (-\x,\y) -- (\x,-\y);
    \draw (\x,\y) -- (-\x,-\y);
    \node at (-\xx,0) {$\alpha$};
\end{scope}
}
}
\newcommand{\crossingrightnode}{
\tikz[baseline=(current bounding box.center), line width=0.3mm] {
\begin{scope}[yscale=.6,xscale=.75]
    \pgfmathsetmacro{\x}{0.5}
    \pgfmathsetmacro{\y}{1}
    \pgfmathsetmacro{\xx}{0.2}
    \draw (-\x,\y) -- (\x,-\y);
    \draw (\x,\y) -- (-\x,-\y);
    \node at (\xx,0) {$\alpha$};
\end{scope}
}
}

\newcommand{\crossingleftnodedoubleparameter}{
\tikz[baseline=(current bounding box.center), line width=0.3mm] {
\begin{scope}[yscale=.6,xscale=.75]
    \pgfmathsetmacro{\x}{0.5}
    \pgfmathsetmacro{\y}{1}
    \pgfmathsetmacro{\xx}{0.2}
    \draw (-\x,\y) -- (\x,-\y);
    \draw (\x,\y) -- (-\x,-\y);
    \node at (-4*\xx,0) {$(\alpha,\beta)$};
\end{scope}
}
}

\newcommand{\crossingrightnodedoubleparameter}{
\tikz[baseline=(current bounding box.center), line width=0.3mm] {
\begin{scope}[yscale=.6,xscale=.75]
    \pgfmathsetmacro{\x}{0.5}
    \pgfmathsetmacro{\y}{1}
    \pgfmathsetmacro{\xx}{0.2}
    \draw (-\x,\y) -- (\x,-\y);
    \draw (\x,\y) -- (-\x,-\y);
    \node at (4*\xx,0) {$(\alpha,\beta)$};
\end{scope}
}
}

\newcommand{\crossingleftnodetwo}{
\tikz[baseline=(current bounding box.center), line width=0.3mm] {
\begin{scope}[yscale=.6,xscale=.75]
    \pgfmathsetmacro{\x}{0.5}
    \pgfmathsetmacro{\y}{1}
    \pgfmathsetmacro{\xx}{0.2}
    \draw (-\x,\y) -- (\x,-\y);
    \draw (\x,\y) -- (-\x,-\y);
    \node at (-\xx-.2,0) {$\alpha\beta$};
\end{scope}
}
}

\newcommand{\crossinguppernode}{
\tikz[baseline=(current bounding box.center), line width=0.3mm] {
\begin{scope}[yscale=.6,xscale=.75]
    \pgfmathsetmacro{\x}{0.5}
    \pgfmathsetmacro{\y}{1}
    \pgfmathsetmacro{\yy}{0.4}
    \draw (-\x,\y) -- (\x,-\y);
    \draw (\x,\y) -- (-\x,-\y);
    \node at (0,\yy+.3) {$\frac{1-\alpha}{1+\alpha}$};
\end{scope}
}
}

\newcommand{\crossinguppernodealpha}{
\tikz[baseline=(current bounding box.center), line width=0.3mm] {
\begin{scope}[yscale=.6,xscale=.75]
    \pgfmathsetmacro{\x}{0.5}
    \pgfmathsetmacro{\y}{1}
    \pgfmathsetmacro{\yy}{0.4}
    \draw (-\x,\y) -- (\x,-\y);
    \draw (\x,\y) -- (-\x,-\y);
    \node at (0,\yy+.2) {$\alpha$};
\end{scope}
}
}

\newcommand{\crossinguppernodealphadoubleparameter}{
\tikz[baseline=(current bounding box.center), line width=0.3mm] {
\begin{scope}[yscale=.6,xscale=.75]
    \pgfmathsetmacro{\x}{0.5}
    \pgfmathsetmacro{\y}{1}
    \pgfmathsetmacro{\yy}{0.4}
    \draw (-\x,\y) -- (\x,-\y);
    \draw (\x,\y) -- (-\x,-\y);
    \node at (0,\yy*2) {$(\alpha,\beta)$};
\end{scope}
}
}

\newcommand{\crossingdownernodealphadoubleparameter}{
\tikz[baseline=(current bounding box.center), line width=0.3mm] {
\begin{scope}[yscale=.6,xscale=.75]
    \pgfmathsetmacro{\x}{0.5}
    \pgfmathsetmacro{\y}{1}
    \pgfmathsetmacro{\yy}{0.4}
    \draw (-\x,\y) -- (\x,-\y);
    \draw (\x,\y) -- (-\x,-\y);
    \node at (0,-\yy*2) {$(\alpha,\beta)$};
\end{scope}
}
}

\newcommand{\crossingdownernodealpha}{
\tikz[baseline=(current bounding box.center), line width=0.3mm] {
\begin{scope}[yscale=.6,xscale=.75]
    \pgfmathsetmacro{\x}{0.5}
    \pgfmathsetmacro{\y}{1}
    \pgfmathsetmacro{\yy}{0.4}
    \draw (-\x,\y) -- (\x,-\y);
    \draw (\x,\y) -- (-\x,-\y);
    \node at (0,-\yy-.2) {$\alpha$};
\end{scope}
}
}

\newcommand{\updowncircle}{
\tikz[baseline=(current bounding box.center), line width=0.3mm] {
\begin{scope}[yscale=.6,xscale=.6]
    \pgfmathsetmacro{\rx}{0.5}
    \pgfmathsetmacro{\ry}{0.5}
    \pgfmathsetmacro{\h}{1}
    \draw(0,\h) arc (180:360:{\rx} and {\ry});
    \draw(2*\rx,-\h) arc (0:180:{\rx} and {\ry});
\end{scope}
}
}

\newcommand{\doublestrings}{
\tikz[baseline=(current bounding box.center), line width=0.3mm] {
\begin{scope}[yscale=.6,xscale=.6]
    \pgfmathsetmacro{\rx}{0.5}
    \pgfmathsetmacro{\ry}{0.5}
    \pgfmathsetmacro{\h}{1}
    \draw(0,\h)--(0,-\h);
    \draw(2*\rx,-\h)--(2*\rx,\h);
\end{scope}
}
}
\newcommand{\doublestringscharge}{
\tikz[baseline=(current bounding box.center), line width=0.3mm] {
\begin{scope}[yscale=.6,xscale=.6]
    \pgfmathsetmacro{\rx}{0.5}
    \pgfmathsetmacro{\ry}{0.5}
    \pgfmathsetmacro{\h}{1}
    \draw(0,\h)--(0,-\h);
    \draw(2*\rx,-\h)--(2*\rx,\h);
    \draw[red,decorate,decoration={snake}](0,0)--(2*\rx,0);
    \fill(2*\rx,0) circle(.1);
    \fill(0,0) circle(.1);
\end{scope}
}
}

\newcommand{\doublestringslabel}[1]{
\tikz[baseline=(current bounding box.center), line width=0.3mm] {
\begin{scope}[yscale=.6,xscale=.6]
    \pgfmathsetmacro{\rx}{0.5}
    \pgfmathsetmacro{\ry}{0.5}
    \pgfmathsetmacro{\h}{1}
    \draw(0,\h)--(0,-\h);
    \draw(2*\rx,-\h)--(2*\rx,\h);
    \draw[red,decorate,decoration={snake}](0,0)--(2*\rx,0);
    \node at (\rx,\rx) {$#1$};
    \fill(2*\rx,0) circle(.1);
    \fill(0,0) circle(.1);
\end{scope}
}
}

\newcommand{\Psanbasisone}{
\tikz[baseline=(current bounding box.center), line width=0.3mm] {
    \pgfmathsetmacro{\rx}{0.5}
    \pgfmathsetmacro{\ry}{0.5}
    \pgfmathsetmacro{\h}{1}
    \draw(0,\h) arc (180:360:{\rx} and {\ry});
    \draw(2*\rx,-\h) arc (0:180:{\rx} and {\ry});
    \draw (3*\rx,\h) -- (3*\rx,-\h);
}
}

\newcommand{\updowncirclecharge}{
\tikz[baseline=(current bounding box.center), line width=0.3mm] {
\begin{scope}[yscale=.6,xscale=.6]
    \pgfmathsetmacro{\rx}{0.5}
    \pgfmathsetmacro{\ry}{0.5}
    \pgfmathsetmacro{\h}{1}
    \draw(0,\h) arc (180:360:{\rx} and {\ry});
    \draw(2*\rx,-\h) arc (0:180:{\rx} and {\ry});
    
    \draw[decorate, decoration={snake},red] (\rx,\h-\ry) -- (\rx,-\h+\ry);
    
    \fill[fill]  (\rx,\h-\ry)  circle[radius=2.5pt];
    \fill[fill]  (\rx,-\h+\ry)  circle[radius=2.5pt];
\end{scope}
}
}

\newcommand{\Psanbasistwo}{
\tikz[baseline=(current bounding box.center), line width=0.3mm] {
    \pgfmathsetmacro{\rx}{0.5}
    \pgfmathsetmacro{\ry}{0.5}
    \pgfmathsetmacro{\h}{1}
    \draw(0,\h) arc (180:360:{\rx} and {\ry});
    \draw(2*\rx,-\h) arc (0:180:{\rx} and {\ry});
    
    \draw[decorate, decoration={snake},red] (\rx,\h-\ry) -- (\rx,-\h+\ry);
    
    \fill[fill]  (\rx,\h-\ry)  circle[radius=2.5pt];
    \fill[fill]  (\rx,-\h+\ry)  circle[radius=2.5pt];

    \draw (3*\rx,\h) -- (3*\rx,-\h)
}
}

\newcommand{\Psanbasisthree}{
\tikz[baseline=(current bounding box.center), line width=0.3mm] {
    \pgfmathsetmacro{\rx}{0.5}
    \pgfmathsetmacro{\ry}{0.5}
    \pgfmathsetmacro{\h}{1}
    \draw(0,\h) arc (180:360:{\rx} and {\ry});
    \draw(2*\rx,-\h) arc (0:180:{\rx} and {\ry});
    
    \fill[fill]  (\rx,\h-\ry)  circle[radius=2.5pt];
    \fill[fill]  (3*\rx,0)  circle[radius=2.5pt];

    \draw (3*\rx,\h) -- (3*\rx,-\h);

    \draw[decorate, decoration={snake},red] (\rx,\h-\ry) -- (3*\rx,0);
}
}

\newcommand{\Psanbasisfour}{
\tikz[baseline=(current bounding box.center), line width=0.3mm] {
    \pgfmathsetmacro{\rx}{0.5}
    \pgfmathsetmacro{\ry}{0.5}
    \pgfmathsetmacro{\h}{1}
    \draw(0,\h) arc (180:360:{\rx} and {\ry});
    \draw(2*\rx,-\h) arc (0:180:{\rx} and {\ry});
    
    \fill[fill]  (\rx,-\h+\ry)  circle[radius=2.5pt];
    \fill[fill]  (3*\rx,0)  circle[radius=2.5pt];

    \draw (3*\rx,\h) -- (3*\rx,-\h);

    \draw[decorate, decoration={snake},red] (\rx,-\h+\ry) -- (3*\rx,0);
}
}

\newcommand{\Rone}{%
\tikz[baseline=(current bounding box.center), line width=0.3mm] {
    \pgfmathsetmacro{\rx}{1};
    \pgfmathsetmacro{\ry}{1};
    \pgfmathsetmacro{\rya}{0.2*\ry};
    \pgfmathsetmacro{\theta}{60};
    \pgfmathsetmacro{\rxa}{\rx*(1-cos(\theta))};
    %一撇
    \draw(0,-\rya) arc (180-\theta:180:{\rx} and {\ry});
    %一那
    \draw(-\rxa,\rya) arc (180:180+\theta:{\rx} and {\ry});
    %右侧
    \draw(0,-\rya) arc (90:-90:{0.5*\ry*sin(\theta)-\rya});

    \node at (-0.5,-0.5) {$\alpha$};
}}

\newcommand{\shu}{
\tikz[baseline=(current bounding box.center), line width=0.3mm] {
  \draw[thick] (0,0.7) -- (0,-0.7);
}
}

\newcommand{\Ronebraidleft}{%
\begin{scope}
    \pgfmathsetmacro{\rx}{1};
    \pgfmathsetmacro{\ry}{1};
    \pgfmathsetmacro{\rya}{0.2*\ry};
    \pgfmathsetmacro{\theta}{60};
    \pgfmathsetmacro{\rxa}{\rx*(1-cos(\theta))};
    %一撇
    \draw(0,-\rya) arc (180-\theta:180:{\rx} and {\ry});
    %一那
     \draw[white,WLL](-\rxa,\rya) arc (180:180+\theta:{\rx} and {\ry});
    \draw(-\rxa,\rya) arc (180:180+\theta:{\rx} and {\ry});
    %右侧
    \draw(0,-\rya) arc (90:-90:{0.5*\ry*sin(\theta)-\rya});

   % \node at (-0.5,-0.5) {$\alpha$};
\end{scope}}

\newcommand{\Ronebraidright}{%
\begin{scope}
    \pgfmathsetmacro{\rx}{1};
    \pgfmathsetmacro{\ry}{1};
    \pgfmathsetmacro{\rya}{0.2*\ry};
    \pgfmathsetmacro{\theta}{60};
    \pgfmathsetmacro{\rxa}{\rx*(1-cos(\theta))};
   
    %一那
     
    \draw(-\rxa,\rya) arc (180:180+\theta:{\rx} and {\ry});
    \draw[white,WLL](0,-\rya) arc (180-\theta:180:{\rx} and {\ry});

     %一撇
    \draw(0,-\rya) arc (180-\theta:180:{\rx} and {\ry});
    %右侧
    \draw(0,-\rya) arc (90:-90:{0.5*\ry*sin(\theta)-\rya});

   % \node at (-0.5,-0.5) {$\alpha$};
\end{scope}}

\newcommand{\Rtwo}{
\tikz[baseline=(current bounding box.center), line width=0.3mm] {
    \pgfmathsetmacro{\x}{0.3}
    \pgfmathsetmacro{\y}{1.2}
    \pgfmathsetmacro{\rx}{1.8}
    \pgfmathsetmacro{\ry}{1.2}
    \pgfmathsetmacro{\yy}{0.7}
    \pgfmathsetmacro{\xx}{0.3}
    \draw(\x,-0) arc (0:60:{\rx} and {\ry});
    \draw(\x,-0) arc (0:-60:{\rx} and {\ry});
    \draw(-\x,0) arc (180:240:{\rx} and {\ry});
    \draw(-\x,0) arc (180:120:{\rx} and {\ry});

    \node at (-\xx,\yy) {$\beta$};
    \node at (-\xx,-\yy) {$\alpha$};
}
}

\newcommand{\Rtwobraidrightleft}{
\begin{scope} 
    \pgfmathsetmacro{\x}{0.3}
    \pgfmathsetmacro{\y}{1.2}
    \pgfmathsetmacro{\rx}{1.8}
    \pgfmathsetmacro{\ry}{1.2}
    \pgfmathsetmacro{\yy}{0.7}
    \pgfmathsetmacro{\xx}{0.3}

    \draw(\x,-0) arc (0:60:{\rx} and {\ry});

      \draw(\x,-0) arc (0:-60:{\rx} and {\ry});

        \draw[white,WLL](-\x,0) arc (180:120:{\rx} and {\ry});
    \draw(-\x,0) arc (180:120:{\rx} and {\ry});

     \draw[white,WLL](-\x,0) arc (180:240:{\rx} and {\ry}); \draw(-\x,0) arc (180:240:{\rx} and {\ry});

    %\node at (-\xx,\yy) {$\beta$};
    %\node at (-\xx,-\yy) {$\alpha$};
\end{scope}
}
\newcommand{\Rtwobraidleftright}{
\begin{scope} 
    \pgfmathsetmacro{\x}{0.3}
    \pgfmathsetmacro{\y}{1.2}
    \pgfmathsetmacro{\rx}{1.8}
    \pgfmathsetmacro{\ry}{1.2}
    \pgfmathsetmacro{\yy}{0.7}
    \pgfmathsetmacro{\xx}{0.3}

    \draw(-\x,0) arc (180:120:{\rx} and {\ry});

    \draw[white,WLL](\x,-0) arc (0:60:{\rx} and {\ry});
    \draw(\x,-0) arc (0:60:{\rx} and {\ry});

     \draw(-\x,0) arc (180:240:{\rx} and {\ry});

      \draw[white,WLL](\x,-0) arc (0:-60:{\rx} and {\ry}); 
      \draw(\x,-0) arc (0:-60:{\rx} and {\ry});

    %\node at (-\xx,\yy) {$\beta$};
    %\node at (-\xx,-\yy) {$\alpha$};
\end{scope}
}

%%%%%%%%%%%%%%%%%%%%%%%%%%%%%%%%%%%%%%%%%%%%%%%%%
% ====== 模块 1：参数化圆柱体 ======
\newcommand{\cylinderonly}[3]{%
\begin{scope}
  \pgfmathsetmacro{\Rx}{#1}
  \pgfmathsetmacro{\Ry}{#2}
  \pgfmathsetmacro{\H}{#3}
  \draw[draw=blue] (0,\H) ellipse ({\Rx} and {\Ry});
  \draw[blue] (-\Rx,0) -- (-\Rx,\H);
  \draw[blue] (\Rx,0) -- (\Rx,\H);
  \draw[blue, dashed] (\Rx,0) arc (0:180:{\Rx} and {\Ry});
  \draw[blue] (-\Rx,0) arc (180:360:{\Rx} and {\Ry});

  \pgfmathsetmacro{\theta}{250}
  \pgfmathsetmacro{\xx}{\Rx*cos(\theta)}
  \pgfmathsetmacro{\yy}{\Ry*sin(\theta)}
\draw[->,blue] (\xx,\yy) -- (\xx+0.01,\yy);
\draw[->,blue] (\xx,\yy+\H) -- (\xx+0.01,\yy+\H);
\end{scope}
}

\newcommand{\cylinderonlybulk}[3]{%
\begin{scope}
  \pgfmathsetmacro{\Rx}{#1}
  \pgfmathsetmacro{\Ry}{#2}
  \pgfmathsetmacro{\H}{#3}
   
   \fill[blue!25] (-\Rx,\H)--(-\Rx,0) arc (180:360:{\Rx} and {\Ry})--(\Rx,\H);
   \fill[blue!18] (0,\H) ellipse ({\Rx} and {\Ry});
  \draw[draw=blue] (0,\H) ellipse ({\Rx} and {\Ry});
  \draw[blue] (-\Rx,0) -- (-\Rx,\H);
  \draw[blue] (\Rx,0) -- (\Rx,\H);
  \draw[blue, dashed] (\Rx,0) arc (0:180:{\Rx} and {\Ry});
 
  \draw[blue] (-\Rx,0) arc (180:360:{\Rx} and {\Ry});

  \pgfmathsetmacro{\theta}{250}
  \pgfmathsetmacro{\xx}{\Rx*cos(\theta)}
  \pgfmathsetmacro{\yy}{\Ry*sin(\theta)}
\draw[->,blue] (\xx,\yy) -- (\xx+0.01,\yy);
\draw[->,blue] (\xx,\yy+\H) -- (\xx+0.01,\yy+\H);
\end{scope}
}

\newcommand{\caponly}[3]{%
\begin{scope}
  \pgfmathsetmacro{\Rx}{#1}
  \pgfmathsetmacro{\Ry}{#2}
  \pgfmathsetmacro{\H}{#3}
  \draw[blue] (-\Rx,0)arc[start angle=180, end angle=0, radius=\Rx];
  %\draw[draw=blue] (0,\H) ellipse ({\Rx} and {\Ry});
 % \draw[blue] (-\Rx,0) -- (-\Rx,\H);
 % \draw[blue] (\Rx,0) -- (\Rx,\H);
  \draw[blue, dashed] (\Rx,0) arc (0:180:{\Rx} and {\Ry});
  \draw[blue] (-\Rx,0) arc (180:360:{\Rx} and {\Ry});

  \pgfmathsetmacro{\theta}{250}
  \pgfmathsetmacro{\xx}{\Rx*cos(\theta)}
  \pgfmathsetmacro{\yy}{\Ry*sin(\theta)}
\draw[->,blue] (\xx,\yy) -- (\xx+0.01,\yy);
%\draw[->,blue] (\xx,\yy+\H) -- (\xx+0.01,\yy+\H);
\end{scope}
}

\newcommand{\caponlybulk}[3]{%
\begin{scope}
  \pgfmathsetmacro{\Rx}{#1}
  \pgfmathsetmacro{\Ry}{#2}
  \pgfmathsetmacro{\H}{#3}
  \fill[blue!25] (-\Rx,0)arc[start angle=180, end angle=0, radius=\Rx];
  \draw[blue] (-\Rx,0)arc[start angle=180, end angle=0, radius=\Rx];
  %\draw[draw=blue] (0,\H) ellipse ({\Rx} and {\Ry});
 % \draw[blue] (-\Rx,0) -- (-\Rx,\H);
 % \draw[blue] (\Rx,0) -- (\Rx,\H);
  \draw[blue, dashed] (\Rx,0) arc (0:180:{\Rx} and {\Ry});
  \fill[blue!25] (-\Rx,0) arc (180:360:{\Rx} and {\Ry});
  \draw[blue] (-\Rx,0) arc (180:360:{\Rx} and {\Ry});

  \pgfmathsetmacro{\theta}{250}
  \pgfmathsetmacro{\xx}{\Rx*cos(\theta)}
  \pgfmathsetmacro{\yy}{\Ry*sin(\theta)}
\draw[->,blue] (\xx,\yy) -- (\xx+0.01,\yy);
%\draw[->,blue] (\xx,\yy+\H) -- (\xx+0.01,\yy+\H);
\end{scope}
}

\newcommand{\cuponly}[3]{%
\begin{scope}
  \pgfmathsetmacro{\Rx}{#1}
  \pgfmathsetmacro{\Ry}{#2}
  \pgfmathsetmacro{\H}{#3}
  \draw[blue] (-\Rx,0)arc[start angle=180, end angle=360, radius=\Rx];
  %\draw[draw=blue] (0,\H) ellipse ({\Rx} and {\Ry});
 % \draw[blue] (-\Rx,0) -- (-\Rx,\H);
 % \draw[blue] (\Rx,0) -- (\Rx,\H);
  \draw[blue] (\Rx,0) arc (0:180:{\Rx} and {\Ry});
  \draw[blue] (-\Rx,0) arc (180:360:{\Rx} and {\Ry});

  \pgfmathsetmacro{\theta}{250}
  \pgfmathsetmacro{\xx}{\Rx*cos(\theta)}
  \pgfmathsetmacro{\yy}{\Ry*sin(\theta)}
\draw[->,blue] (\xx,\yy) -- (\xx+0.01,\yy);
%\draw[->,blue] (\xx,\yy+\H) -- (\xx+0.01,\yy+\H);
\end{scope}
}

\newcommand{\cuponlybulk}[3]{%
\begin{scope}
  \pgfmathsetmacro{\Rx}{#1}
  \pgfmathsetmacro{\Ry}{#2}
  \pgfmathsetmacro{\H}{#3}
   \fill[blue!25] (-\Rx,0)arc[start angle=180, end angle=360, radius=\Rx];
  \draw[blue] (-\Rx,0)arc[start angle=180, end angle=360, radius=\Rx];
  %\draw[draw=blue] (0,\H) ellipse ({\Rx} and {\Ry});
 % \draw[blue] (-\Rx,0) -- (-\Rx,\H);
 % \draw[blue] (\Rx,0) -- (\Rx,\H);
 \fill[blue!15] (\Rx,0) arc (0:360:{\Rx} and {\Ry});
  \draw[blue] (\Rx,0) arc (0:180:{\Rx} and {\Ry});
  \draw[blue] (-\Rx,0) arc (180:360:{\Rx} and {\Ry});

  \pgfmathsetmacro{\theta}{250}
  \pgfmathsetmacro{\xx}{\Rx*cos(\theta)}
  \pgfmathsetmacro{\yy}{\Ry*sin(\theta)}
\draw[->,blue] (\xx,\yy) -- (\xx+0.01,\yy);
%\draw[->,blue] (\xx,\yy+\H) -- (\xx+0.01,\yy+\H);
\end{scope}
}

% ====== 模块 2：带 charge 的左边版本，带参数 ======
\newcommand{\yuanzhuleftcharge}[3]{% #1=Rx, #2=Ry, #3=H
\tikz[baseline=(current bounding box.center), scale=0.7, line width=0.3mm] {
  \pgfmathsetmacro{\Rx}{#1}
  \pgfmathsetmacro{\Ry}{#2}
  \pgfmathsetmacro{\H}{#3}

  \cylinderonly{\Rx}{\Ry}{\H}

  % 角度值
  \pgfmathsetmacro{\cosA}{0.6427876}   % cos(-50)
  \pgfmathsetmacro{\cosB}{-0.6427876}  % cos(-130)
  \pgfmathsetmacro{\cosC}{0.2588190}   % cos(-75)
  \pgfmathsetmacro{\cosD}{-0.2588190}  % cos(-105)
  \pgfmathsetmacro{\sinA}{-0.7660444}  % sin(-50)
  \pgfmathsetmacro{\sinB}{-0.7660444}  % sin(-130)
  \pgfmathsetmacro{\sinC}{-0.9659258}  % sin(-75)
  \pgfmathsetmacro{\sinD}{-0.9659258}  % sin(-105)

  \pgfmathsetmacro{\xa}{\Rx*\cosA}
  \pgfmathsetmacro{\ya}{\Ry*\sinA}
  \pgfmathsetmacro{\xb}{\Rx*\cosB}
  \pgfmathsetmacro{\yb}{\Ry*\sinB}
  \pgfmathsetmacro{\xc}{\Rx*\cosC}
  \pgfmathsetmacro{\yc}{\Ry*\sinC}
  \pgfmathsetmacro{\xd}{\Rx*\cosD}
  \pgfmathsetmacro{\yd}{\Ry*\sinD}

  \draw[magenta] (\xa,\ya) -- (\xa,\ya+\H);
  \draw[magenta] (\xb,\yb) -- (\xb,\yb+\H);
  \draw (\xc,\yc) -- (\xc,\yc+\H);
  \draw (\xd,\yd) -- (\xd,\yd+\H);

  \pgfmathsetmacro{\yone}{\ya + \H/2}
  \pgfmathsetmacro{\ytwo}{\yc + \H/2}
  \fill[black] (\xa,\yone) circle[radius=2.5pt];
  \fill[black] (\xc,\ytwo) circle[radius=2.5pt];

  \draw[decorate, decoration={snake, segment length=4pt, amplitude=0.6mm}, red]
    (\xa,\yone) -- (\xc,\ytwo);
}
}

% ====== 模块 3：带 charge 的右边版本，带参数 ======
\newcommand{\yuanzhurightcharge}[3]{% #1=Rx, #2=Ry, #3=H
\tikz[baseline=(current bounding box.center), scale=0.7, line width=0.3mm] {

  \pgfmathsetmacro{\Rx}{#1}
  \pgfmathsetmacro{\Ry}{#2}
  \pgfmathsetmacro{\H}{#3}

  \cylinderonly{\Rx}{\Ry}{\H}

  % 镜像角度
  \pgfmathsetmacro{\cosA}{-0.6427876}   % cos(130)
  \pgfmathsetmacro{\cosB}{0.6427876}    % cos(50)
  \pgfmathsetmacro{\cosC}{-0.2588190}   % cos(105)
  \pgfmathsetmacro{\cosD}{0.2588190}    % cos(75)
  \pgfmathsetmacro{\sinA}{-0.7660444}
  \pgfmathsetmacro{\sinB}{-0.7660444}
  \pgfmathsetmacro{\sinC}{-0.9659258}
  \pgfmathsetmacro{\sinD}{-0.9659258}

  \pgfmathsetmacro{\xa}{\Rx*\cosA}
  \pgfmathsetmacro{\ya}{\Ry*\sinA}
  \pgfmathsetmacro{\xb}{\Rx*\cosB}
  \pgfmathsetmacro{\yb}{\Ry*\sinB}
  \pgfmathsetmacro{\xc}{\Rx*\cosC}
  \pgfmathsetmacro{\yc}{\Ry*\sinC}
  \pgfmathsetmacro{\xd}{\Rx*\cosD}
  \pgfmathsetmacro{\yd}{\Ry*\sinD}

  \draw[magenta] (\xa,\ya) -- (\xa,\ya+\H);
  \draw[magenta] (\xb,\yb) -- (\xb,\yb+\H);
  \draw (\xc,\yc) -- (\xc,\yc+\H);
  \draw (\xd,\yd) -- (\xd,\yd+\H);

  \pgfmathsetmacro{\yone}{\ya + \H/2}
  \pgfmathsetmacro{\ytwo}{\yc + \H/2}
  \fill[black] (\xa,\yone) circle[radius=2.5pt];
  \fill[black] (\xc,\ytwo) circle[radius=2.5pt];

  \draw[decorate, decoration={snake, segment length=4pt, amplitude=0.6mm}, red]
    (\xa,\yone) -- (\xc,\ytwo);

}
}

% ====== 模块 4：中间交叉加 alpha ======
\newcommand{\yuanzhualpha}[3]{% #1=Rx, #2=Ry, #3=H
\tikz[baseline=(current bounding box.center), scale=0.7, line width=0.3mm] {

  \pgfmathsetmacro{\Rx}{#1}
  \pgfmathsetmacro{\Ry}{#2}
  \pgfmathsetmacro{\H}{#3}

  \cylinderonly{\Rx}{\Ry}{\H}

  % 仍然画两侧四条竖线
  \pgfmathsetmacro{\cosA}{0.6427876}
  \pgfmathsetmacro{\cosB}{-0.6427876}
  \pgfmathsetmacro{\cosC}{0.2588190}
  \pgfmathsetmacro{\cosD}{-0.2588190}
  \pgfmathsetmacro{\sinA}{-0.7660444}
  \pgfmathsetmacro{\sinB}{-0.7660444}
  \pgfmathsetmacro{\sinC}{-0.9659258}
  \pgfmathsetmacro{\sinD}{-0.9659258}

  \pgfmathsetmacro{\xa}{\Rx*\cosA}
  \pgfmathsetmacro{\ya}{\Ry*\sinA}
  \pgfmathsetmacro{\xb}{\Rx*\cosB}
  \pgfmathsetmacro{\yb}{\Ry*\sinB}
  \pgfmathsetmacro{\xc}{\Rx*\cosC}
  \pgfmathsetmacro{\yc}{\Ry*\sinC}
  \pgfmathsetmacro{\xd}{\Rx*\cosD}
  \pgfmathsetmacro{\yd}{\Ry*\sinD}

  \draw[magenta] (\xa,\ya) -- (\xa,\ya+\H);
  \draw[magenta] (\xb,\yb) -- (\xb,\yb+\H);

  % 画中间交叉线
  \draw (\xc,\yc) -- (\xd,\yd+\H);
  \draw (\xd,\yd) -- (\xc,\yc+\H);

  % 在交点（中点）标 alpha
  \node at (-\xc,\H/2.7) {$\alpha$};

}
}
\newcommand{\yuanzhualphablack}[3]{% #1=Rx, #2=Ry, #3=H

\begin{scope} 
  \pgfmathsetmacro{\Rx}{#1}
  \pgfmathsetmacro{\Ry}{#2}
  \pgfmathsetmacro{\H}{#3}

  \cylinderonly{\Rx}{\Ry}{\H}

  % 仍然画两侧四条竖线
  \pgfmathsetmacro{\cosA}{0.6427876}
  \pgfmathsetmacro{\cosB}{-0.6427876}
  \pgfmathsetmacro{\cosC}{0.2588190}
  \pgfmathsetmacro{\cosD}{-0.2588190}
  \pgfmathsetmacro{\sinA}{-0.7660444}
  \pgfmathsetmacro{\sinB}{-0.7660444}
  \pgfmathsetmacro{\sinC}{-0.9659258}
  \pgfmathsetmacro{\sinD}{-0.9659258}

  \pgfmathsetmacro{\xa}{\Rx*\cosA}
  \pgfmathsetmacro{\ya}{\Ry*\sinA}
  \pgfmathsetmacro{\xb}{\Rx*\cosB}
  \pgfmathsetmacro{\yb}{\Ry*\sinB}
  \pgfmathsetmacro{\xc}{\Rx*\cosC}
  \pgfmathsetmacro{\yc}{\Ry*\sinC}
  \pgfmathsetmacro{\xd}{\Rx*\cosD}
  \pgfmathsetmacro{\yd}{\Ry*\sinD}

  \draw (\xa,\ya) -- (\xa,\ya+\H);
  \draw (\xb,\yb) -- (\xb,\yb+\H);

  % 画中间交叉线
  \draw (\xc,\yc) -- (\xd,\yd+\H);
  \draw (\xd,\yd) -- (\xc,\yc+\H);

  % 在交点（中点）标 alpha
  \node at (\xc-.25,\H/2+.15) {$\alpha$};
\end{scope}
}

\newcommand{\yuanzhualphascope}[3]{% #1=Rx, #2=Ry, #3=H
%\tikz[baseline=(current bounding box.center), scale=0.7, line width=0.3mm] {
\begin{scope}[scale=.8]
  \pgfmathsetmacro{\Rx}{#1}
  \pgfmathsetmacro{\Ry}{#2}
  \pgfmathsetmacro{\H}{#3}

  \cylinderonly{\Rx}{\Ry}{\H}

  % 仍然画两侧四条竖线
  \pgfmathsetmacro{\cosA}{0.6427876}
  \pgfmathsetmacro{\cosB}{-0.6427876}
  \pgfmathsetmacro{\cosC}{0.2588190}
  \pgfmathsetmacro{\cosD}{-0.2588190}
  \pgfmathsetmacro{\sinA}{-0.7660444}
  \pgfmathsetmacro{\sinB}{-0.7660444}
  \pgfmathsetmacro{\sinC}{-0.9659258}
  \pgfmathsetmacro{\sinD}{-0.9659258}

  \pgfmathsetmacro{\xa}{\Rx*\cosA}
  \pgfmathsetmacro{\ya}{\Ry*\sinA}
  \pgfmathsetmacro{\xb}{\Rx*\cosB}
  \pgfmathsetmacro{\yb}{\Ry*\sinB}
  \pgfmathsetmacro{\xc}{\Rx*\cosC}
  \pgfmathsetmacro{\yc}{\Ry*\sinC}
  \pgfmathsetmacro{\xd}{\Rx*\cosD}
  \pgfmathsetmacro{\yd}{\Ry*\sinD}

  \draw[magenta] (\xa,\ya) -- (\xa,\ya+\H);
  \draw[magenta] (\xb,\yb) -- (\xb,\yb+\H);

  % 画中间交叉线
  \draw (\xc,\yc) -- (\xd,\yd+\H);
  \draw (\xd,\yd) -- (\xc,\yc+\H);

  % 在交点（中点）标 alpha
  \node at (-\xc-.2,\H*.45) {$\alpha$};
\end{scope}
}
%}

\newcommand{\yuanzhualphaleft}[3]{% #1=Rx, #2=Ry, #3=H
\begin{scope}
  \pgfmathsetmacro{\Rx}{#1}
  \pgfmathsetmacro{\Ry}{#2}
  \pgfmathsetmacro{\H}{#3}

  \cylinderonly{\Rx}{\Ry}{\H}

  % 仍然画两侧四条竖线
  \pgfmathsetmacro{\cosA}{0.6427876}
  \pgfmathsetmacro{\cosB}{-0.6427876}
  \pgfmathsetmacro{\cosC}{0.2588190}
  \pgfmathsetmacro{\cosD}{-0.2588190}
  \pgfmathsetmacro{\sinA}{-0.7660444}
  \pgfmathsetmacro{\sinB}{-0.7660444}
  \pgfmathsetmacro{\sinC}{-0.9659258}
  \pgfmathsetmacro{\sinD}{-0.9659258}

  \pgfmathsetmacro{\xa}{\Rx*\cosA}
  \pgfmathsetmacro{\ya}{\Ry*\sinA}
  \pgfmathsetmacro{\xb}{\Rx*\cosB}
  \pgfmathsetmacro{\yb}{\Ry*\sinB}
  \pgfmathsetmacro{\xc}{\Rx*\cosC}
  \pgfmathsetmacro{\yc}{\Ry*\sinC}
  \pgfmathsetmacro{\xd}{\Rx*\cosD}
  \pgfmathsetmacro{\yd}{\Ry*\sinD}

  \draw(\xb,\yb) -- (\xd,\yd+\H);
  \draw (\xd,\yd) -- (\xb,\yb+\H);

  % 画中间交叉线
  \draw (\xc,\yc) -- (\xc,\yc+\H);
  \draw (\xa,\ya) -- (\xa,\ya+\H);

  % 在交点（中点）标 alpha
 % \node at (\x+,\H*0.4) {$e^{-i2\theta}$};

\end{scope}
}

\newcommand{\yuanzhualphaleftspin}[3]{% #1=Rx, #2=Ry, #3=H
\begin{scope}[scale=0.7]
  \pgfmathsetmacro{\Rx}{#1}
  \pgfmathsetmacro{\Ry}{#2}
  \pgfmathsetmacro{\H}{#3}

  \cylinderonly{\Rx}{\Ry}{\H}
  %\newcommand{\PIE}{3.1415926536}

%\draw[red](0,0)--(\cos{\PIE},0);
  % 仍然画两侧四条竖线
  \pgfmathsetmacro{\cosA}{0.8660254} %30
  \pgfmathsetmacro{\cosB}{-0.8660254}
  \pgfmathsetmacro{\cosC}{0.5}
  \pgfmathsetmacro{\cosD}{-0.5}
  \pgfmathsetmacro{\sinA}{-0.5}
  \pgfmathsetmacro{\sinB}{-0.5}
  \pgfmathsetmacro{\sinC}{-0.8660254}
  \pgfmathsetmacro{\sinD}{-0.8660254}

  \pgfmathsetmacro{\xa}{\Rx*\cosA}
  \pgfmathsetmacro{\ya}{\Ry*\sinA}
  \pgfmathsetmacro{\xb}{\Rx*\cosB}
  \pgfmathsetmacro{\yb}{\Ry*\sinB}
  \pgfmathsetmacro{\xc}{\Rx*\cosC}
  \pgfmathsetmacro{\yc}{\Ry*\sinC}
  \pgfmathsetmacro{\xd}{\Rx*\cosD}
  \pgfmathsetmacro{\yd}{\Ry*\sinD}

  \draw(\xb,\yb) -- (\xd,\yd+\H);
  \draw (\xd,\yd) -- (\xb,\yb+\H);

  % 画中间交叉线
  \draw (\xc,\yc) -- (\xc,\yc+\H);
  \draw (\xa,\ya) -- (\xa,\ya+\H);

  % 在交点（中点）标 alpha
  \node at (-\xa-.1,\H/2+.05) {$\alpha$};

  \draw[blue](\xb/2+\xc/2+.14,\H/2-.3)arc[start angle=180,end angle=0, radius=.1];

  \draw[blue](\xb/2+\xc/2+.05,\H/2-.3)arc[start angle=200,end angle=340, radius=.2];
  \draw(\xb/2+\xc/2-.35,\H/2-.3)arc[start angle=180,end angle=540, radius=.6];
  \draw(\xb/2+\xc/2-.15,\H/2-.3)arc[start angle=180,end angle=540, radius=.4];

\end{scope}
}

\newcommand{\yuanzhualphaleftspinreal}[3]{% #1=Rx, #2=Ry, #3=H
\begin{scope}[scale=0.7]
  \pgfmathsetmacro{\Rx}{#1}
  \pgfmathsetmacro{\Ry}{#2}
  \pgfmathsetmacro{\H}{#3}

  \cylinderonly{\Rx}{\Ry}{\H}
  %\newcommand{\PIE}{3.1415926536}

%\draw[red](0,0)--(\cos{\PIE},0);
  % 仍然画两侧四条竖线
  \pgfmathsetmacro{\cosA}{0.8660254} %30
  \pgfmathsetmacro{\cosB}{-0.8660254}
  \pgfmathsetmacro{\cosC}{0.5}
  \pgfmathsetmacro{\cosD}{-0.5}
  \pgfmathsetmacro{\sinA}{-0.5}
  \pgfmathsetmacro{\sinB}{-0.5}
  \pgfmathsetmacro{\sinC}{-0.8660254}
  \pgfmathsetmacro{\sinD}{-0.8660254}

  \pgfmathsetmacro{\xa}{\Rx*\cosA}
  \pgfmathsetmacro{\ya}{\Ry*\sinA}
  \pgfmathsetmacro{\xb}{\Rx*\cosB}
  \pgfmathsetmacro{\yb}{\Ry*\sinB}
  \pgfmathsetmacro{\xc}{\Rx*\cosC}
  \pgfmathsetmacro{\yc}{\Ry*\sinC}
  \pgfmathsetmacro{\xd}{\Rx*\cosD}
  \pgfmathsetmacro{\yd}{\Ry*\sinD}

  \draw(\xb,\yb) -- (\xb,\yb+\H);
  \draw (\xd,\yd) -- (\xd,\yd+\H);

  % 画中间交叉线
  \draw (\xc,\yc) -- (\xc,\yc+\H);
  \draw (\xa,\ya) -- (\xa,\ya+\H);

  % 在交点（中点）标 alpha
  \node at (\xd+.05,\H/2-.3) {$\alpha$};

  \draw[blue](\xb/2+\xc/2+.14,\H/2-.25)arc[start angle=180,end angle=0, radius=.1];

  \draw[blue](\xb/2+\xc/2+.05,\H/2-.25)arc[start angle=200,end angle=340, radius=.2];
  \draw(\xb/2+\xc/2-.28,\H/2)arc[start angle=150,end angle=-150, radius=.6]--(\xb/2+\xc/2-.1,\H/2-.1);
  \draw(\xb/2+\xc/2-.1,\H/2-.1)arc[start angle=150,end angle=-150, radius=.4]--(\xb/2+\xc/2-.28,\H/2);

\end{scope}
}

\newcommand{\yibiomega}{
\tikz[baseline=(current bounding box.center), line width=0.3mm] {
    \pgfmathsetmacro{\x}{0.5}
    \pgfmathsetmacro{\y}{1}
    \pgfmathsetmacro{\xx}{0.3}
    \draw (-\x,\y) -- (\x,-\y);
    \draw (\x,\y) -- (-\x,-\y);
    \node at (-\xx,0) {$\frac{1}{\omega}$};
}}

\newcommand{\yibiomegacharge}{
\tikz[baseline=(current bounding box.center), line width=0.3mm] {
    \pgfmathsetmacro{\x}{0.5}
    \pgfmathsetmacro{\y}{1}
    \pgfmathsetmacro{\xx}{0.3}
    \draw (-\x,\y) -- (\x,-\y);
    \draw (\x,\y) -- (-\x,-\y);
    \node at (-\xx,0) {$\frac{1}{\omega'}$};

    \fill[fill] (\x/2,\y/2)  circle[radius=2.5pt];
    \fill[fill] (-\x/2,\y/2)  circle[radius=2.5pt];
    \draw[decorate, decoration={snake},red] (-\x/2,\y/2) -- (\x/2,\y/2);
}}

%%%%%%%%%%%%%%%%%%%%%%%%%%%%%%%%%%%%%%%%%%%%%%%%%%%%%%%%%%%%%

% Define crossing macros
\newcommand{\shadowone}{%
\tikz[baseline=(current bounding box.center), line width=0.3mm] {
    \pgfmathsetmacro{\x}{1}
    \pgfmathsetmacro{\y}{1}
    
    \draw (-\x,\y) -- (\x,-\y);
    \draw (\x,\y) -- (-\x,-\y);
    \draw (0.5*\x,\y) -- (0.5*\x,-\y);
    
   % \fill[gray!70] (-0.5*\x,0.5*\y) -- (0,0) -- (0.5*\x,0.5*\y) -- (0.5*\x,0.85*\y) -- cycle;
  %  \fill[gray!70] (-0.5*\x,-0.5*\y) -- (0,0) -- (0.5*\x,-0.5*\y) -- (0.5*\x,-0.85*\y) -- cycle;
  %  \fill[gray!70] (0.5*\x,-0.5*\y) -- (\x,-\y) -- (\x,\y) -- (0.5*\x,0.5*\y) -- cycle;
    
    \node at (0.7*\x,-0.85*\y) {$c_1$};
    \node at (0.7*\x,0.8*\y+.1) {$c_3$};
    \node at (-0.2*\x,0) {$c_2$};
}}

\newcommand{\shadowtwo}{%
\tikz[baseline=(current bounding box.center), line width=0.3mm] {
    \pgfmathsetmacro{\x}{1}
    \pgfmathsetmacro{\y}{1}
    
    \draw (-\x,\y) -- (\x,-\y);
    \draw (\x,\y) -- (-\x,-\y);
    \draw (-0.5*\x,\y) -- (-0.5*\x,-\y);
    
  %  \fill[gray!70] (-0.5*\x,0.5*\y) -- (0,0) -- (-0.5*\x,-0.5*\y) -- cycle;
 %   \fill[gray!70] (-0.5*\x,0.5*\y) -- (-\x,\y) -- (-0.5*\x,\y) -- cycle;
  %  \fill[gray!70] (-0.5*\x,-0.5*\y) -- (-\x,-\y) -- (-0.5*\x,-\y) -- cycle;
 %   \fill[gray!70] (0,0) -- (\x,\y) -- (\x,-\y) -- cycle;
    
    \node at (-0.7*\x,0.5*\y) {$b_1$};
    \node at (-0.7*\x,-0.5*\y+.1) {$b_3$};
    \node at (0,0.2*\y+.1) {$b_2$};
}}

\newcommand{\shadowtwoprime}{%
\tikz[baseline=(current bounding box.center), line width=0.3mm] {
    \pgfmathsetmacro{\x}{1}
    \pgfmathsetmacro{\y}{1}
    
    \draw (-\x,\y) -- (\x,-\y);
    \draw (\x,\y) -- (-\x,-\y);
    \draw (-0.5*\x,\y) -- (-0.5*\x,-\y);
    
  %  \fill[gray!70] (-0.5*\x,0.5*\y) -- (0,0) -- (-0.5*\x,-0.5*\y) -- cycle;
 %   \fill[gray!70] (-0.5*\x,0.5*\y) -- (-\x,\y) -- (-0.5*\x,\y) -- cycle;
  %  \fill[gray!70] (-0.5*\x,-0.5*\y) -- (-\x,-\y) -- (-0.5*\x,-\y) -- cycle;
 %   \fill[gray!70] (0,0) -- (\x,\y) -- (\x,-\y) -- cycle;
    
    \node at (-0.7*\x,0.5*\y) {$b_1'$};
    \node at (-0.7*\x,-0.5*\y+.1) {$b_3'$};
    \node at (0,0.2*\y+.1) {$b_2'$};
}}

\newcommand{\shadowthree}{%
\tikz[baseline=(current bounding box.center), line width=0.3mm] {
    \pgfmathsetmacro{\x}{1}
    \pgfmathsetmacro{\y}{1}
    
    \draw (-\x,\y) -- (\x,-\y);
    \draw (\x,\y) -- (-\x,-\y);
    \draw (-0.5*\x,\y) -- (-0.5*\x,-\y);
    
  %  \fill[gray!70] (-0.5*\x,0.5*\y) -- (0,0) -- (-0.5*\x,-0.5*\y) -- cycle;
 %   \fill[gray!70] (-0.5*\x,0.5*\y) -- (-\x,\y) -- (-0.5*\x,\y) -- cycle;
 %   \fill[gray!70] (-0.5*\x,-0.5*\y) -- (-\x,-\y) -- (-0.5*\x,-\y) -- cycle;
 %   \fill[gray!70] (0,0) -- (\x,\y) -- (\x,-\y) -- cycle;
    
    \node at (-0.7*\x,0.5*\y) {$b_1'$};
    \node at (-0.7*\x,-0.5*\y) {$b_3'$};
    \node at (0,0.2*\y) {$b_2'$};
}}

% CLIFFORD

\newcommand{\yuanzhuS}[3]{% #1=Rx, #2=Ry, #3=H
%\tikz[baseline=(current bounding box.center), scale=0.7, line width=0.3mm] {
\begin{scope}[scale=.7]

  \pgfmathsetmacro{\Rx}{#1}
  \pgfmathsetmacro{\Ry}{#2}
  \pgfmathsetmacro{\H}{#3}

  \cylinderonly{\Rx}{\Ry}{\H}

  % 仍然画两侧四条竖线
  \pgfmathsetmacro{\cosA}{0.6427876}
  \pgfmathsetmacro{\cosB}{-0.6427876}
  \pgfmathsetmacro{\cosC}{0.2588190}
  \pgfmathsetmacro{\cosD}{-0.2588190}
  \pgfmathsetmacro{\sinA}{-0.7660444}
  \pgfmathsetmacro{\sinB}{-0.7660444}
  \pgfmathsetmacro{\sinC}{-0.9659258}
  \pgfmathsetmacro{\sinD}{-0.9659258}

  \pgfmathsetmacro{\xa}{\Rx*\cosA}
  \pgfmathsetmacro{\ya}{\Ry*\sinA}
  \pgfmathsetmacro{\xb}{\Rx*\cosB}
  \pgfmathsetmacro{\yb}{\Ry*\sinB}
  \pgfmathsetmacro{\xc}{\Rx*\cosC}
  \pgfmathsetmacro{\yc}{\Ry*\sinC}
  \pgfmathsetmacro{\xd}{\Rx*\cosD}
  \pgfmathsetmacro{\yd}{\Ry*\sinD}

  \draw[] (\xa,\ya) -- (\xa,\ya+\H);
  \draw[] (\xc,\yc) -- (\xc,\yc+\H);
 
  \draw (\xd,\yd) -- (\xb,\yb+\H);
   \draw[white,WL] (\xb,\yb) -- (\xd,\yd+\H);
  
   \draw (\xb,\yb) -- (\xd,\yd+\H);

  % 在交点（中点）标 alpha
  %\node at (-\xc,\H/2) {$\alpha$};
%}

\end{scope}

}
\newcommand{\yuanzhuSS}[3]{% #1=Rx, #2=Ry, #3=H
%\tikz[baseline=(current bounding box.center), scale=0.7, line width=0.3mm] {
\begin{scope}[scale=.7]

  \pgfmathsetmacro{\Rx}{#1}
  \pgfmathsetmacro{\Ry}{#2}
  \pgfmathsetmacro{\H}{#3}

  \cylinderonly{\Rx}{\Ry}{\H}

  % 仍然画两侧四条竖线
  \pgfmathsetmacro{\cosA}{0.6427876}
  \pgfmathsetmacro{\cosB}{-0.6427876}
  \pgfmathsetmacro{\cosC}{0.2588190}
  \pgfmathsetmacro{\cosD}{-0.2588190}
  \pgfmathsetmacro{\sinA}{-0.7660444}
  \pgfmathsetmacro{\sinB}{-0.7660444}
  \pgfmathsetmacro{\sinC}{-0.9659258}
  \pgfmathsetmacro{\sinD}{-0.9659258}

  \pgfmathsetmacro{\xa}{\Rx*\cosA}
  \pgfmathsetmacro{\ya}{\Ry*\sinA}
  \pgfmathsetmacro{\xb}{\Rx*\cosB}
  \pgfmathsetmacro{\yb}{\Ry*\sinB}
  \pgfmathsetmacro{\xc}{\Rx*\cosC}
  \pgfmathsetmacro{\yc}{\Ry*\sinC}
  \pgfmathsetmacro{\xd}{\Rx*\cosD}
  \pgfmathsetmacro{\yd}{\Ry*\sinD}

\draw (\xa,\ya) -- (\xc,\yc+\H);

  \draw[white,WL] (\xc,\yc) -- (\xa,\ya+\H);
  \draw (\xc,\yc) -- (\xa,\ya+\H);

  \draw (\xd,\yd) -- (\xd,\yd+\H);
  
   \draw (\xb,\yb) -- (\xb,\yb+\H);

  % 在交点（中点）标 alpha
  %\node at (-\xc,\H/2) {$\alpha$};
%}

\end{scope}

}
\newcommand{\yuanzhuSSDG}[3]{% #1=Rx, #2=Ry, #3=H
%\tikz[baseline=(current bounding box.center), scale=0.7, line width=0.3mm] {
\begin{scope}[scale=.7]

  \pgfmathsetmacro{\Rx}{#1}
  \pgfmathsetmacro{\Ry}{#2}
  \pgfmathsetmacro{\H}{#3}

  \cylinderonly{\Rx}{\Ry}{\H}

  % 仍然画两侧四条竖线
  \pgfmathsetmacro{\cosA}{0.6427876}
  \pgfmathsetmacro{\cosB}{-0.6427876}
  \pgfmathsetmacro{\cosC}{0.2588190}
  \pgfmathsetmacro{\cosD}{-0.2588190}
  \pgfmathsetmacro{\sinA}{-0.7660444}
  \pgfmathsetmacro{\sinB}{-0.7660444}
  \pgfmathsetmacro{\sinC}{-0.9659258}
  \pgfmathsetmacro{\sinD}{-0.9659258}

  \pgfmathsetmacro{\xa}{\Rx*\cosA}
  \pgfmathsetmacro{\ya}{\Ry*\sinA}
  \pgfmathsetmacro{\xb}{\Rx*\cosB}
  \pgfmathsetmacro{\yb}{\Ry*\sinB}
  \pgfmathsetmacro{\xc}{\Rx*\cosC}
  \pgfmathsetmacro{\yc}{\Ry*\sinC}
  \pgfmathsetmacro{\xd}{\Rx*\cosD}
  \pgfmathsetmacro{\yd}{\Ry*\sinD}

  \draw (\xc,\yc) -- (\xa,\ya+\H);
  \draw[white,WL] (\xa,\ya) -- (\xc,\yc+\H);
  \draw (\xa,\ya) -- (\xc,\yc+\H);

  \draw (\xd,\yd) -- (\xd,\yd+\H);
  
   \draw (\xb,\yb) -- (\xb,\yb+\H);

  % 在交点（中点）标 alpha
  %\node at (-\xc,\H/2) {$\alpha$};
%}

\end{scope}

}
\newcommand{\yuanzhuSDG}[3]{% #1=Rx, #2=Ry, #3=H
%\tikz[baseline=(current bounding box.center), scale=0.7, line width=0.3mm] {
\begin{scope}[scale=.7]
  \pgfmathsetmacro{\Rx}{#1}
  \pgfmathsetmacro{\Ry}{#2}
  \pgfmathsetmacro{\H}{#3}

  \cylinderonly{\Rx}{\Ry}{\H}

  % 仍然画两侧四条竖线
  \pgfmathsetmacro{\cosA}{0.6427876}
  \pgfmathsetmacro{\cosB}{-0.6427876}
  \pgfmathsetmacro{\cosC}{0.2588190}
  \pgfmathsetmacro{\cosD}{-0.2588190}
  \pgfmathsetmacro{\sinA}{-0.7660444}
  \pgfmathsetmacro{\sinB}{-0.7660444}
  \pgfmathsetmacro{\sinC}{-0.9659258}
  \pgfmathsetmacro{\sinD}{-0.9659258}

  \pgfmathsetmacro{\xa}{\Rx*\cosA}
  \pgfmathsetmacro{\ya}{\Ry*\sinA}
  \pgfmathsetmacro{\xb}{\Rx*\cosB}
  \pgfmathsetmacro{\yb}{\Ry*\sinB}
  \pgfmathsetmacro{\xc}{\Rx*\cosC}
  \pgfmathsetmacro{\yc}{\Ry*\sinC}
  \pgfmathsetmacro{\xd}{\Rx*\cosD}
  \pgfmathsetmacro{\yd}{\Ry*\sinD}

   \draw[] (\xa,\ya) -- (\xa,\ya+\H);
  \draw[] (\xc,\yc) -- (\xc,\yc+\H);
  \draw (\xb,\yb) -- (\xd,\yd+\H);
  
  \draw[WL,white] (\xd,\yd) -- (\xb,\yb+\H);
  \draw (\xd,\yd) -- (\xb,\yb+\H);

  % 在交点（中点）标 alpha
  %\node at (-\xc,\H/2) {$\alpha$};
\end{scope}
}

\newcommand{\yuanzhuID}[3]{% #1=Rx, #2=Ry, #3=H
%\tikz[baseline=(current bounding box.center), scale=0.7, line width=0.3mm] {
\begin{scope}[scale=.7]
  \pgfmathsetmacro{\Rx}{#1}
  \pgfmathsetmacro{\Ry}{#2}
  \pgfmathsetmacro{\H}{#3}

  \cylinderonlybulk{\Rx}{\Ry}{\H}

  % 仍然画两侧四条竖线
  \pgfmathsetmacro{\cosA}{0.6427876}
  \pgfmathsetmacro{\cosB}{-0.6427876}
  \pgfmathsetmacro{\cosC}{0.2588190}
  \pgfmathsetmacro{\cosD}{-0.2588190}
  \pgfmathsetmacro{\sinA}{-0.7660444}
  \pgfmathsetmacro{\sinB}{-0.7660444}
  \pgfmathsetmacro{\sinC}{-0.9659258}
  \pgfmathsetmacro{\sinD}{-0.9659258}

  \pgfmathsetmacro{\xa}{\Rx*\cosA}
  \pgfmathsetmacro{\ya}{\Ry*\sinA}
  \pgfmathsetmacro{\xb}{\Rx*\cosB}
  \pgfmathsetmacro{\yb}{\Ry*\sinB}
  \pgfmathsetmacro{\xc}{\Rx*\cosC}
  \pgfmathsetmacro{\yc}{\Ry*\sinC}
  \pgfmathsetmacro{\xd}{\Rx*\cosD}
  \pgfmathsetmacro{\yd}{\Ry*\sinD}

  \draw[] (\xa,\ya) -- (\xa,\ya+\H);
  \draw (\xd,\yd) -- (\xd,\yd+\H);
  \draw (\xc,\yc) -- (\xc,\yc+\H);
  
  %\draw[WL,white] (\xb,\yb) -- (\xd,\yd+\H);
  \draw[] (\xb,\yb) -- (\xb,\yb+\H);

  % 在交点（中点）标 alpha
  %\node at (-\xc,\H/2) {$\alpha$};
\end{scope}
}

\newcommand{\yuanzhuddd}[3]{% #1=Rx, #2=Ry, #3=H
%\tikz[baseline=(current bounding box.center), scale=0.7, line width=0.3mm] {
\begin{scope}[scale=.7]
  \pgfmathsetmacro{\Rx}{#1}
  \pgfmathsetmacro{\Ry}{#2}
  \pgfmathsetmacro{\H}{#3}

  \cylinderonly{\Rx}{\Ry}{\H}

  % 仍然画两侧四条竖线
  \pgfmathsetmacro{\cosA}{0.6427876}
  \pgfmathsetmacro{\cosB}{-0.6427876}
  \pgfmathsetmacro{\cosC}{0.2588190}
  \pgfmathsetmacro{\cosD}{-0.2588190}
  \pgfmathsetmacro{\sinA}{-0.7660444}
  \pgfmathsetmacro{\sinB}{-0.7660444}
  \pgfmathsetmacro{\sinC}{-0.9659258}
  \pgfmathsetmacro{\sinD}{-0.9659258}

  \pgfmathsetmacro{\xa}{\Rx*\cosA}
  \pgfmathsetmacro{\ya}{\Ry*\sinA}
  \pgfmathsetmacro{\xb}{\Rx*\cosB}
  \pgfmathsetmacro{\yb}{\Ry*\sinB}
  \pgfmathsetmacro{\xc}{\Rx*\cosC}
  \pgfmathsetmacro{\yc}{\Ry*\sinC}
  \pgfmathsetmacro{\xd}{\Rx*\cosD}
  \pgfmathsetmacro{\yd}{\Ry*\sinD}

  \draw (\xa,\ya) -- (\xa,\ya+\H);
  \draw (\xd,\yd) -- (\xd,\yd+\H);
  \draw (\xc,\yc) -- (\xc,\yc+\H);
  
  %\draw[WL,white] (\xb,\yb) -- (\xd,\yd+\H);
  \draw (\xb,\yb) -- (\xb,\yb+\H);
  \node at(\xb/2+\xc/2,\xd+\H/2) {$...$};

  % 在交点（中点）标 alpha
  %\node at (-\xc,\H/2) {$\alpha$};
\end{scope}
}

\newcommand{\yuanzhuNone}[3]{% #1=Rx, #2=Ry, #3=H
%\tikz[baseline=(current bounding box.center), scale=0.7, line width=0.3mm] {
\begin{scope}[scale=.7]
  \pgfmathsetmacro{\Rx}{#1}
  \pgfmathsetmacro{\Ry}{#2}
  \pgfmathsetmacro{\H}{#3}

  \cylinderonly{\Rx}{\Ry}{\H}

  % 仍然画两侧四条竖线
  \pgfmathsetmacro{\cosA}{0.6427876}
  \pgfmathsetmacro{\cosB}{-0.6427876}
  \pgfmathsetmacro{\cosC}{0.2588190}
  \pgfmathsetmacro{\cosD}{-0.2588190}
  \pgfmathsetmacro{\sinA}{-0.7660444}
  \pgfmathsetmacro{\sinB}{-0.7660444}
  \pgfmathsetmacro{\sinC}{-0.9659258}
  \pgfmathsetmacro{\sinD}{-0.9659258}

  \pgfmathsetmacro{\xa}{\Rx*\cosA}
  \pgfmathsetmacro{\ya}{\Ry*\sinA}
  \pgfmathsetmacro{\xb}{\Rx*\cosB}
  \pgfmathsetmacro{\yb}{\Ry*\sinB}
  \pgfmathsetmacro{\xc}{\Rx*\cosC}
  \pgfmathsetmacro{\yc}{\Ry*\sinC}
  \pgfmathsetmacro{\xd}{\Rx*\cosD}
  \pgfmathsetmacro{\yd}{\Ry*\sinD}

%  \draw[] (\xa,\ya) -- (\xa,\ya+\H);
 % \draw (\xd,\yd) -- (\xd,\yd+\H);
 % \draw (\xc,\yc) -- (\xc,\yc+\H);
  
  %\draw[WL,white] (\xb,\yb) -- (\xd,\yd+\H);
 % \draw[] (\xb,\yb) -- (\xb,\yb+\H);

  % 在交点（中点）标 alpha
  %\node at (-\xc,\H/2) {$\alpha$};
\end{scope}
}

\newcommand{\yuanzhuIDEMPTY}[3]{% #1=Rx, #2=Ry, #3=H
%\tikz[baseline=(current bounding box.center), scale=0.7, line width=0.3mm] {
\begin{scope}[scale=.7]
  \pgfmathsetmacro{\Rx}{#1}
  \pgfmathsetmacro{\Ry}{#2}
  \pgfmathsetmacro{\H}{#3}

  \cylinderonly{\Rx}{\Ry}{\H}

  % 仍然画两侧四条竖线
  \pgfmathsetmacro{\cosA}{0.6427876}
  \pgfmathsetmacro{\cosB}{-0.6427876}
  \pgfmathsetmacro{\cosC}{0.2588190}
  \pgfmathsetmacro{\cosD}{-0.2588190}
  \pgfmathsetmacro{\sinA}{-0.7660444}
  \pgfmathsetmacro{\sinB}{-0.7660444}
  \pgfmathsetmacro{\sinC}{-0.9659258}
  \pgfmathsetmacro{\sinD}{-0.9659258}

  \pgfmathsetmacro{\xa}{\Rx*\cosA}
  \pgfmathsetmacro{\ya}{\Ry*\sinA}
  \pgfmathsetmacro{\xb}{\Rx*\cosB}
  \pgfmathsetmacro{\yb}{\Ry*\sinB}
  \pgfmathsetmacro{\xc}{\Rx*\cosC}
  \pgfmathsetmacro{\yc}{\Ry*\sinC}
  \pgfmathsetmacro{\xd}{\Rx*\cosD}
  \pgfmathsetmacro{\yd}{\Ry*\sinD}

  \draw[] (\xa,\ya) -- (\xa,\ya+\H);
  \draw (\xd,\yd) -- (\xd,\yd+\H);
  \draw (\xc,\yc) -- (\xc,\yc+\H);
  
  %\draw[WL,white] (\xb,\yb) -- (\xd,\yd+\H);
  \draw[] (\xb,\yb) -- (\xb,\yb+\H);

  % 在交点（中点）标 alpha
  %\node at (-\xc,\H/2) {$\alpha$};
\end{scope}
}

\newcommand{\yuanzhuIDEMPTYTWO}[3]{% #1=Rx, #2=Ry, #3=H
%\tikz[baseline=(current bounding box.center), scale=0.7, line width=0.3mm] {
\begin{scope}[scale=.7]
  \pgfmathsetmacro{\Rx}{#1}
  \pgfmathsetmacro{\Ry}{#2}
  \pgfmathsetmacro{\H}{#3}

  \cylinderonly{\Rx}{\Ry}{\H}

  % 仍然画两侧四条竖线
  \pgfmathsetmacro{\cosA}{0.6427876}
  \pgfmathsetmacro{\cosB}{-0.6427876}
  \pgfmathsetmacro{\cosC}{0.2588190}
  \pgfmathsetmacro{\cosD}{-0.2588190}
  \pgfmathsetmacro{\sinA}{-0.7660444}
  \pgfmathsetmacro{\sinB}{-0.7660444}
  \pgfmathsetmacro{\sinC}{-0.9659258}
  \pgfmathsetmacro{\sinD}{-0.9659258}

  \pgfmathsetmacro{\xa}{\Rx*\cosA}
  \pgfmathsetmacro{\ya}{\Ry*\sinA}
  \pgfmathsetmacro{\xb}{\Rx*\cosB}
  \pgfmathsetmacro{\yb}{\Ry*\sinB}
  \pgfmathsetmacro{\xc}{\Rx*\cosC}
  \pgfmathsetmacro{\yc}{\Ry*\sinC}
  \pgfmathsetmacro{\xd}{\Rx*\cosD}
  \pgfmathsetmacro{\yd}{\Ry*\sinD}

 % \draw[] (\xa,\ya) -- (\xa,\ya+\H);
  \draw (\xd,\yd) -- (\xd,\yd+\H);
  \draw (\xc,\yc) -- (\xc,\yc+\H);
  
  %\draw[WL,white] (\xb,\yb) -- (\xd,\yd+\H);
 % \draw[] (\xb,\yb) -- (\xb,\yb+\H);

  % 在交点（中点）标 alpha
  %\node at (-\xc,\H/2) {$\alpha$};
\end{scope}
}

\newcommand{\yuanzhuHDG}[3]{% #1=Rx, #2=Ry, #3=H
%\tikz[baseline=(current bounding box.center), scale=0.7, line width=0.3mm] {
\begin{scope}[scale=.7]
  \pgfmathsetmacro{\Rx}{#1}
  \pgfmathsetmacro{\Ry}{#2}
  \pgfmathsetmacro{\H}{#3}

  \cylinderonly{\Rx}{\Ry}{\H}

  % 仍然画两侧四条竖线
  \pgfmathsetmacro{\cosA}{0.6427876}
  \pgfmathsetmacro{\cosB}{-0.6427876}
  \pgfmathsetmacro{\cosC}{0.2588190}
  \pgfmathsetmacro{\cosD}{-0.2588190}
  \pgfmathsetmacro{\sinA}{-0.7660444}
  \pgfmathsetmacro{\sinB}{-0.7660444}
  \pgfmathsetmacro{\sinC}{-0.9659258}
  \pgfmathsetmacro{\sinD}{-0.9659258}

  \pgfmathsetmacro{\xa}{\Rx*\cosA}
  \pgfmathsetmacro{\ya}{\Ry*\sinA}
  \pgfmathsetmacro{\xb}{\Rx*\cosB}
  \pgfmathsetmacro{\yb}{\Ry*\sinB}
  \pgfmathsetmacro{\xc}{\Rx*\cosC}
  \pgfmathsetmacro{\yc}{\Ry*\sinC}
  \pgfmathsetmacro{\xd}{\Rx*\cosD}
  \pgfmathsetmacro{\yd}{\Ry*\sinD}

  \draw (\xa,\ya) -- (\xa,\ya+\H);
  \draw (\xd,\yd) -- (\xb,\yb+\H/3);
  \draw (\xc,\yc) -- (\xc,\yc+\H/3);
  
  \draw[WL,white] (\xb,\yb) -- (\xd,\yd+\H/3);
  \draw (\xb,\yb) -- (\xd,\yd+\H/3);

    \draw (\xd,\yd+\H*2/3) -- (\xc,\yc+\H/3);
    \draw[WL,white] (\xc,\yc+\H*2/3) -- (\xd,\yd+\H/3);
    \draw (\xc,\yc+\H*2/3) -- (\xd,\yd+\H/3);

    \draw (\xc,\yc+\H*2/3) -- (\xc,\yc+\H);

     \draw (\xb,\yb+\H/3)--(\xb,\yb+\H*2/3);

       \draw(\xb,\yb+\H*2/3)-- (\xd,\yd+\H);

    %\draw[WL,white](\xb,\yb+\H*2/3)-- (\xd,\yd+\H);

    \draw (\xd,\yd+\H*2/3) -- (\xb,\yb+\H);
    \draw (\xd,\yd+\H*2/3) -- (\xb,\yb+\H);
    
    \draw[WL,white](\xb,\yb+\H*2/3)-- (\xd,\yd+\H);
    \draw(\xb,\yb+\H*2/3)-- (\xd,\yd+\H);

 % \draw (\xb,\yb+\H*2/3) -- (\xd,\yd+\H/3);
  % \draw (\xb,\yb+\H) -- (\xd,\yd+\H*2/3);

   % \draw (\xb,\yb+\H) -- (\xd,\yd+\H*2/3);

  %  \draw (\xb,\yb+\H*2/3) -- (\xd,\yd+\H/3);
  %\draw (\xd,\yd) -- (\xb,\yb+\H/3);

  % 在交点（中点）标 alpha
  %\node at (-\xc,\H/2) {$\alpha$};
\end{scope}
}

\newcommand{\yuanzhuHTRDG}[3]{% #1=Rx, #2=Ry, #3=H
%\tikz[baseline=(current bounding box.center), scale=0.7, line width=0.3mm] {
\begin{scope}[scale=.7]
  \pgfmathsetmacro{\Rx}{#1}
  \pgfmathsetmacro{\Ry}{#2}
  \pgfmathsetmacro{\H}{#3}

  \cylinderonly{\Rx}{\Ry}{\H}

  % 仍然画两侧四条竖线
  \pgfmathsetmacro{\cosA}{0.6427876}
  \pgfmathsetmacro{\cosB}{-0.6427876}
  \pgfmathsetmacro{\cosC}{0.2588190}
  \pgfmathsetmacro{\cosD}{-0.2588190}
  \pgfmathsetmacro{\sinA}{-0.7660444}
  \pgfmathsetmacro{\sinB}{-0.7660444}
  \pgfmathsetmacro{\sinC}{-0.9659258}
  \pgfmathsetmacro{\sinD}{-0.9659258}

  \pgfmathsetmacro{\xa}{\Rx*\cosA}
  \pgfmathsetmacro{\ya}{\Ry*\sinA}
  \pgfmathsetmacro{\xb}{\Rx*\cosB}
  \pgfmathsetmacro{\yb}{\Ry*\sinB}
  \pgfmathsetmacro{\xc}{\Rx*\cosC}
  \pgfmathsetmacro{\yc}{\Ry*\sinC}
  \pgfmathsetmacro{\xd}{\Rx*\cosD}
  \pgfmathsetmacro{\yd}{\Ry*\sinD}

  \draw (\xa,\ya+\H/3) -- (\xa,\ya+\H);
  \draw (\xd,\yd) -- (\xd,\yd+\H/3);
   \draw (\xb,\yb) -- (\xb,\yb+\H/3);
  %\draw (\xc,\yc) -- (\xc,\yc+\H/3);
  
  \draw (\xa,\ya) -- (\xc,\yc+\H/3);
\draw[white,WL] (\xa,\ya+\H/3) -- (\xc,\yc);
\draw (\xa,\ya+\H/3) -- (\xc,\yc);

    \draw (\xd,\yd+\H*2/3) -- (\xc,\yc+\H/3);
    \draw[WL,white] (\xc,\yc+\H*2/3) -- (\xd,\yd+\H/3);
    \draw (\xc,\yc+\H*2/3) -- (\xd,\yd+\H/3);

    \draw (\xc,\yc+\H*2/3) -- (\xc,\yc+\H);

     \draw (\xb,\yb+\H/3)--(\xb,\yb+\H*2/3);

    \draw(\xb,\yb+\H*2/3)-- (\xd,\yd+\H);

    %\draw[WL,white](\xb,\yb+\H*2/3)-- (\xd,\yd+\H);

    \draw (\xd,\yd+\H*2/3) -- (\xb,\yb+\H);
    \draw (\xd,\yd+\H*2/3) -- (\xb,\yb+\H);
    
    \draw[WL,white](\xb,\yb+\H*2/3)-- (\xd,\yd+\H);
    \draw(\xb,\yb+\H*2/3)-- (\xd,\yd+\H);

 % \draw (\xb,\yb+\H*2/3) -- (\xd,\yd+\H/3);
  % \draw (\xb,\yb+\H) -- (\xd,\yd+\H*2/3);

   % \draw (\xb,\yb+\H) -- (\xd,\yd+\H*2/3);

  %  \draw (\xb,\yb+\H*2/3) -- (\xd,\yd+\H/3);
  %\draw (\xd,\yd) -- (\xb,\yb+\H/3);

  % 在交点（中点）标 alpha
  %\node at (-\xc,\H/2) {$\alpha$};
\end{scope}
}

\newcommand{\yuanzhuH}[3]{% #1=Rx, #2=Ry, #3=H
%\tikz[baseline=(current bounding box.center), scale=0.7, line width=0.3mm] {
\begin{scope}[scale=.7]
  \pgfmathsetmacro{\Rx}{#1}
  \pgfmathsetmacro{\Ry}{#2}
  \pgfmathsetmacro{\H}{#3}

  \cylinderonly{\Rx}{\Ry}{\H}

  % 仍然画两侧四条竖线
  \pgfmathsetmacro{\cosA}{0.6427876}
  \pgfmathsetmacro{\cosB}{-0.6427876}
  \pgfmathsetmacro{\cosC}{0.2588190}
  \pgfmathsetmacro{\cosD}{-0.2588190}
  \pgfmathsetmacro{\sinA}{-0.7660444}
  \pgfmathsetmacro{\sinB}{-0.7660444}
  \pgfmathsetmacro{\sinC}{-0.9659258}
  \pgfmathsetmacro{\sinD}{-0.9659258}

  \pgfmathsetmacro{\xa}{\Rx*\cosA}
  \pgfmathsetmacro{\ya}{\Ry*\sinA}
  \pgfmathsetmacro{\xb}{\Rx*\cosB}
  \pgfmathsetmacro{\yb}{\Ry*\sinB}
  \pgfmathsetmacro{\xc}{\Rx*\cosC}
  \pgfmathsetmacro{\yc}{\Ry*\sinC}
  \pgfmathsetmacro{\xd}{\Rx*\cosD}
  \pgfmathsetmacro{\yd}{\Ry*\sinD}

  \draw (\xa,\ya) -- (\xa,\ya+\H);
  
  \draw (\xc,\yc) -- (\xc,\yc+\H/3);
  
  \draw (\xb,\yb) -- (\xd,\yd+\H/3);
  \draw[WL,white] (\xd,\yd) -- (\xb,\yb+\H/3);
  \draw (\xd,\yd) -- (\xb,\yb+\H/3);

    \draw (\xc,\yc+\H*2/3) -- (\xd,\yd+\H/3);
    \draw[WL,white] (\xd,\yd+\H*2/3) -- (\xc,\yc+\H/3);
    \draw (\xd,\yd+\H*2/3) -- (\xc,\yc+\H/3);

     \draw[WL,white](\xb,\yb+\H*2/3)-- (\xd,\yd+\H);
    \draw(\xb,\yb+\H*2/3)-- (\xd,\yd+\H);
    
     \draw (\xb,\yb+\H/3)--(\xb,\yb+\H*2/3);
     \draw (\xc,\yc+\H*2/3) -- (\xc,\yc+\H);

    \draw(\xb,\yb+\H*2/3)-- (\xd,\yd+\H);

    \draw(\xb,\yb+\H*2/3)-- (\xd,\yd+\H);
    \draw[WL,white](\xd,\yd+\H*2/3) -- (\xb,\yb+\H);
  \draw(\xd,\yd+\H*2/3) -- (\xb,\yb+\H);
    
   % 

 % \draw (\xb,\yb+\H*2/3) -- (\xd,\yd+\H/3);
  % \draw (\xb,\yb+\H) -- (\xd,\yd+\H*2/3);

   % \draw (\xb,\yb+\H) -- (\xd,\yd+\H*2/3);

  %  \draw (\xb,\yb+\H*2/3) -- (\xd,\yd+\H/3);
  %\draw (\xd,\yd) -- (\xb,\yb+\H/3);

  % 在交点（中点）标 alpha
  %\node at (-\xc,\H/2) {$\alpha$};
\end{scope}
}

\newcommand{\yuanzhuHTR}[3]{% #1=Rx, #2=Ry, #3=H
%\tikz[baseline=(current bounding box.center), scale=0.7, line width=0.3mm] {
\begin{scope}[scale=.7]
  \pgfmathsetmacro{\Rx}{#1}
  \pgfmathsetmacro{\Ry}{#2}
  \pgfmathsetmacro{\H}{#3}

  \cylinderonly{\Rx}{\Ry}{\H}

  % 仍然画两侧四条竖线
  \pgfmathsetmacro{\cosA}{0.6427876}
  \pgfmathsetmacro{\cosB}{-0.6427876}
  \pgfmathsetmacro{\cosC}{0.2588190}
  \pgfmathsetmacro{\cosD}{-0.2588190}
  \pgfmathsetmacro{\sinA}{-0.7660444}
  \pgfmathsetmacro{\sinB}{-0.7660444}
  \pgfmathsetmacro{\sinC}{-0.9659258}
  \pgfmathsetmacro{\sinD}{-0.9659258}

  \pgfmathsetmacro{\xa}{\Rx*\cosA}
  \pgfmathsetmacro{\ya}{\Ry*\sinA}
  \pgfmathsetmacro{\xb}{\Rx*\cosB}
  \pgfmathsetmacro{\yb}{\Ry*\sinB}
  \pgfmathsetmacro{\xc}{\Rx*\cosC}
  \pgfmathsetmacro{\yc}{\Ry*\sinC}
  \pgfmathsetmacro{\xd}{\Rx*\cosD}
  \pgfmathsetmacro{\yd}{\Ry*\sinD}

  \draw (\xd,\yd) -- (\xd,\yd+\H/3);
  
  \draw (\xb,\yb) -- (\xb,\yb+\H/3);
  
  \draw (\xc,\yc) -- (\xa,\ya+\H/3)-- (\xa,\ya+\H);
  \draw[WL,white] (\xa,\ya) -- (\xc,\yc+\H/3);
  \draw (\xa,\ya) -- (\xc,\yc+\H/3);

    \draw (\xc,\yc+\H*2/3) -- (\xd,\yd+\H/3);
    \draw[WL,white] (\xd,\yd+\H*2/3) -- (\xc,\yc+\H/3);
    \draw (\xd,\yd+\H*2/3) -- (\xc,\yc+\H/3);

     \draw[WL,white](\xb,\yb+\H*2/3)-- (\xd,\yd+\H);
    \draw(\xb,\yb+\H*2/3)-- (\xd,\yd+\H);
    
     \draw (\xb,\yb+\H/3)--(\xb,\yb+\H*2/3);
     \draw (\xc,\yc+\H*2/3) -- (\xc,\yc+\H);

    \draw(\xb,\yb+\H*2/3)-- (\xd,\yd+\H);

    \draw(\xb,\yb+\H*2/3)-- (\xd,\yd+\H);
    \draw[WL,white](\xd,\yd+\H*2/3) -- (\xb,\yb+\H);
  \draw(\xd,\yd+\H*2/3) -- (\xb,\yb+\H);
    
   % 

 % \draw (\xb,\yb+\H*2/3) -- (\xd,\yd+\H/3);
  % \draw (\xb,\yb+\H) -- (\xd,\yd+\H*2/3);

   % \draw (\xb,\yb+\H) -- (\xd,\yd+\H*2/3);

  %  \draw (\xb,\yb+\H*2/3) -- (\xd,\yd+\H/3);
  %\draw (\xd,\yd) -- (\xb,\yb+\H/3);

  % 在交点（中点）标 alpha
  %\node at (-\xc,\H/2) {$\alpha$};
\end{scope}
}

\newcommand{\yuanzhuX}[3]{% #1=Rx, #2=Ry, #3=H
\begin{scope}[scale=.7]
  \pgfmathsetmacro{\Rx}{#1}
  \pgfmathsetmacro{\Ry}{#2}
  \pgfmathsetmacro{\H}{#3}

  \cylinderonly{\Rx}{\Ry}{\H}

  % 仍然画两侧四条竖线
  \pgfmathsetmacro{\cosA}{0.6427876}
  \pgfmathsetmacro{\cosB}{-0.6427876}
  \pgfmathsetmacro{\cosC}{0.2588190}
  \pgfmathsetmacro{\cosD}{-0.2588190}
  \pgfmathsetmacro{\sinA}{-0.7660444}
  \pgfmathsetmacro{\sinB}{-0.7660444}
  \pgfmathsetmacro{\sinC}{-0.9659258}
  \pgfmathsetmacro{\sinD}{-0.9659258}

  \pgfmathsetmacro{\xa}{\Rx*\cosA}
  \pgfmathsetmacro{\ya}{\Ry*\sinA}
  \pgfmathsetmacro{\xb}{\Rx*\cosB}
  \pgfmathsetmacro{\yb}{\Ry*\sinB}
  \pgfmathsetmacro{\xc}{\Rx*\cosC}
  \pgfmathsetmacro{\yc}{\Ry*\sinC}
  \pgfmathsetmacro{\xd}{\Rx*\cosD}
  \pgfmathsetmacro{\yd}{\Ry*\sinD}

  \draw (\xa,\ya) -- (\xa,\ya+\H);
  \draw (\xb,\yb) -- (\xb,\yb+\H);

  % 画中间交叉线
  \draw (\xc,\yc) -- (\xc,\yc+\H);
  \draw (\xd,\yd) -- (\xd,\yd+\H);
  \draw [red,decorate, decoration={snake}](\xc,\yc+\H/3*2)--(\xd,\yd+\H/3*2);
  
  \fill (\xc,\yc+\H/3*2) circle(.07);
  \fill (\xd,\yd+\H/3*2) circle(.07);

  % 在交点（中点）标 alpha
  %\node at (-\xc,\H/2) {$\alpha$};
\end{scope}
}

\newcommand{\yuanzhuXX}[3]{% #1=Rx, #2=Ry, #3=H
\begin{scope}[scale=.7]
  \pgfmathsetmacro{\Rx}{#1}
  \pgfmathsetmacro{\Ry}{#2}
  \pgfmathsetmacro{\H}{#3}

  \cylinderonly{\Rx}{\Ry}{\H}

  % 仍然画两侧四条竖线
  \pgfmathsetmacro{\cosA}{0.6427876}
  \pgfmathsetmacro{\cosB}{-0.6427876}
  \pgfmathsetmacro{\cosC}{0.2588190}
  \pgfmathsetmacro{\cosD}{-0.2588190}
  \pgfmathsetmacro{\sinA}{-0.7660444}
  \pgfmathsetmacro{\sinB}{-0.7660444}
  \pgfmathsetmacro{\sinC}{-0.9659258}
  \pgfmathsetmacro{\sinD}{-0.9659258}

  \pgfmathsetmacro{\xa}{\Rx*\cosA}
  \pgfmathsetmacro{\ya}{\Ry*\sinA}
  \pgfmathsetmacro{\xb}{\Rx*\cosB}
  \pgfmathsetmacro{\yb}{\Ry*\sinB}
  \pgfmathsetmacro{\xc}{\Rx*\cosC}
  \pgfmathsetmacro{\yc}{\Ry*\sinC}
  \pgfmathsetmacro{\xd}{\Rx*\cosD}
  \pgfmathsetmacro{\yd}{\Ry*\sinD}
  
  \draw [red,decorate, decoration={snake}](\xa,\ya+\H/3*2)--(\xb,\yb+\H/3*2);

  \draw (\xa,\ya) -- (\xa,\ya+\H);
  \draw (\xb,\yb) -- (\xb,\yb+\H);

  \draw[white,WL] (\xc,\yc) -- (\xc,\yc+\H);
  \draw[white,WL] (\xd,\yd) -- (\xd,\yd+\H);
  \draw (\xc,\yc) -- (\xc,\yc+\H);
  \draw (\xd,\yd) -- (\xd,\yd+\H);
  
  \fill (\xa,\ya+\H/3*2) circle(.07);
  \fill (\xb,\yb+\H/3*2) circle(.07);

  % 在交点（中点）标 alpha
  %\node at (-\xc,\H/2) {$\alpha$};
\end{scope}
}

\newcommand{\yuanzhuY}[3]{% #1=Rx, #2=Ry, #3=H
\begin{scope}[scale=.7]
  \pgfmathsetmacro{\Rx}{#1}
  \pgfmathsetmacro{\Ry}{#2}
  \pgfmathsetmacro{\H}{#3}

  \cylinderonly{\Rx}{\Ry}{\H}

  % 仍然画两侧四条竖线
  \pgfmathsetmacro{\cosA}{0.6427876}
  \pgfmathsetmacro{\cosB}{-0.6427876}
  \pgfmathsetmacro{\cosC}{0.2588190}
  \pgfmathsetmacro{\cosD}{-0.2588190}
  \pgfmathsetmacro{\sinA}{-0.7660444}
  \pgfmathsetmacro{\sinB}{-0.7660444}
  \pgfmathsetmacro{\sinC}{-0.9659258}
  \pgfmathsetmacro{\sinD}{-0.9659258}

  \pgfmathsetmacro{\xa}{\Rx*\cosA}
  \pgfmathsetmacro{\ya}{\Ry*\sinA}
  \pgfmathsetmacro{\xb}{\Rx*\cosB}
  \pgfmathsetmacro{\yb}{\Ry*\sinB}
  \pgfmathsetmacro{\xc}{\Rx*\cosC}
  \pgfmathsetmacro{\yc}{\Ry*\sinC}
  \pgfmathsetmacro{\xd}{\Rx*\cosD}
  \pgfmathsetmacro{\yd}{\Ry*\sinD}

  \draw (\xa,\ya) -- (\xa,\ya+\H);
  \draw (\xb,\yb) -- (\xb,\yb+\H);

  % 画中间交叉线
  
  \draw [red,decorate, decoration={snake}](\xc,\yc+\H/3*2)--(\xb,\yb+\H/3*2);
    \draw[white,WL] (\xd,\yd) -- (\xd,\yd+\H);
  \draw (\xc,\yc) -- (\xc,\yc+\H);
  \draw (\xd,\yd) -- (\xd,\yd+\H);

  \fill (\xb,\yb+\H/3*2) circle(.1);
  \fill (\xc,\yc+\H/3*2) circle(.1);

  % 在交点（中点）标 alpha
  %\node at (-\xc,\H/2) {$\alpha$};
\end{scope}
}

\newcommand{\yuanzhuZ}[3]{% #1=Rx, #2=Ry, #3=H
\begin{scope}[scale=.7]
  \pgfmathsetmacro{\Rx}{#1}
  \pgfmathsetmacro{\Ry}{#2}
  \pgfmathsetmacro{\H}{#3}

  \cylinderonly{\Rx}{\Ry}{\H}

  % 仍然画两侧四条竖线
  \pgfmathsetmacro{\cosA}{0.6427876}
  \pgfmathsetmacro{\cosB}{-0.6427876}
  \pgfmathsetmacro{\cosC}{0.2588190}
  \pgfmathsetmacro{\cosD}{-0.2588190}
  \pgfmathsetmacro{\sinA}{-0.7660444}
  \pgfmathsetmacro{\sinB}{-0.7660444}
  \pgfmathsetmacro{\sinC}{-0.9659258}
  \pgfmathsetmacro{\sinD}{-0.9659258}

  \pgfmathsetmacro{\xa}{\Rx*\cosA}
  \pgfmathsetmacro{\ya}{\Ry*\sinA}
  \pgfmathsetmacro{\xb}{\Rx*\cosB}
  \pgfmathsetmacro{\yb}{\Ry*\sinB}
  \pgfmathsetmacro{\xc}{\Rx*\cosC}
  \pgfmathsetmacro{\yc}{\Ry*\sinC}
  \pgfmathsetmacro{\xd}{\Rx*\cosD}
  \pgfmathsetmacro{\yd}{\Ry*\sinD}

  \draw (\xa,\ya) -- (\xa,\ya+\H);
  \draw (\xb,\yb) -- (\xb,\yb+\H);

  % 画中间交叉线
  \draw (\xc,\yc) -- (\xc,\yc+\H);
  \draw (\xd,\yd) -- (\xd,\yd+\H);
  %\draw [red,decorate, decoration={snake}](\xb,\yb+\H/3*2)--(\xd,\yd+\H/3*2);

    \draw[red] (\xb,\yb+\H/3*2)[bend left=90]to(\xb/2+\xd/2,\yb/2+\yd/2+\H/3*2)[bend right=90]to(\xd,\yd+\H/3*2);

  \fill (\xb,\yb+\H/3*2) circle(.07);
  \fill (\xd,\yd+\H/3*2) circle(.07);

  % 在交点（中点）标 alpha
  %\node at (-\xc,\H/2) {$\alpha$};
\end{scope}
}

\newcommand{\yuanzhuZZZ}[3]{% #1=Rx, #2=Ry, #3=H
\begin{scope}[scale=.7]
  \pgfmathsetmacro{\Rx}{#1}
  \pgfmathsetmacro{\Ry}{#2}
  \pgfmathsetmacro{\H}{#3}

  \cylinderonly{\Rx}{\Ry}{\H}

  % 仍然画两侧四条竖线
  \pgfmathsetmacro{\cosA}{0.6427876}
  \pgfmathsetmacro{\cosB}{-0.6427876}
  \pgfmathsetmacro{\cosC}{0.2588190}
  \pgfmathsetmacro{\cosD}{-0.2588190}
  \pgfmathsetmacro{\sinA}{-0.7660444}
  \pgfmathsetmacro{\sinB}{-0.7660444}
  \pgfmathsetmacro{\sinC}{-0.9659258}
  \pgfmathsetmacro{\sinD}{-0.9659258}

  \pgfmathsetmacro{\xa}{\Rx*\cosA}
  \pgfmathsetmacro{\ya}{\Ry*\sinA}
  \pgfmathsetmacro{\xb}{\Rx*\cosB}
  \pgfmathsetmacro{\yb}{\Ry*\sinB}
  \pgfmathsetmacro{\xc}{\Rx*\cosC}
  \pgfmathsetmacro{\yc}{\Ry*\sinC}
  \pgfmathsetmacro{\xd}{\Rx*\cosD}
  \pgfmathsetmacro{\yd}{\Ry*\sinD}

  \draw (\xa,\ya) -- (\xa,\ya+\H);
  \draw (\xb,\yb) -- (\xb,\yb+\H);

  % 画中间交叉线
  \draw (\xc,\yc) -- (\xc,\yc+\H);
  \draw (\xd,\yd) -- (\xd,\yd+\H);
  %\draw [red,decorate, decoration={snake}](\xb,\yb+\H/3*2)--(\xd,\yd+\H/3*2);

    \draw[red] (\xb,\yb+\H/3*2)[bend left=90]to(\xb/2+\xd/2,\yb/2+\yd/2+\H/3*2)[bend right=90]to(\xd,\yd+\H/3*2);

  \fill (\xb,\yb+\H/3*2) circle(.1);
  \fill (\xd,\yd+\H/3*2) circle(.1);

  % 在交点（中点）标 alpha
  %\node at (-\xc,\H/2) {$\alpha$};
\end{scope}
}

\newcommand{\yuanzhuZZ}[3]{% #1=Rx, #2=Ry, #3=H
\begin{scope}[scale=.7]
  \pgfmathsetmacro{\Rx}{#1}
  \pgfmathsetmacro{\Ry}{#2}
  \pgfmathsetmacro{\H}{#3}

  \cylinderonly{\Rx}{\Ry}{\H}

  % 仍然画两侧四条竖线
  \pgfmathsetmacro{\cosA}{0.6427876}
  \pgfmathsetmacro{\cosB}{-0.6427876}
  \pgfmathsetmacro{\cosC}{0.2588190}
  \pgfmathsetmacro{\cosD}{-0.2588190}
  \pgfmathsetmacro{\sinA}{-0.7660444}
  \pgfmathsetmacro{\sinB}{-0.7660444}
  \pgfmathsetmacro{\sinC}{-0.9659258}
  \pgfmathsetmacro{\sinD}{-0.9659258}

  \pgfmathsetmacro{\xa}{\Rx*\cosA}
  \pgfmathsetmacro{\ya}{\Ry*\sinA}
  \pgfmathsetmacro{\xb}{\Rx*\cosB}
  \pgfmathsetmacro{\yb}{\Ry*\sinB}
  \pgfmathsetmacro{\xc}{\Rx*\cosC}
  \pgfmathsetmacro{\yc}{\Ry*\sinC}
  \pgfmathsetmacro{\xd}{\Rx*\cosD}
  \pgfmathsetmacro{\yd}{\Ry*\sinD}

  \draw (\xa,\ya) -- (\xa,\ya+\H);
  \draw (\xb,\yb) -- (\xb,\yb+\H);

  % 画中间交叉线
  \draw (\xc,\yc) -- (\xc,\yc+\H);
  \draw (\xd,\yd) -- (\xd,\yd+\H);
  %\draw [red,decorate, decoration={snake}](\xb,\yb+\H/3*2)--(\xd,\yd+\H/3*2);

    \draw[red] (\xa,\ya+\H/3*2)[bend left=90]to(\xa/2+\xc/2,\yc/2+\yc/2+\H/3*2)[bend right=90]to(\xc,\yc+\H/3*2);

  \fill (\xa,\ya+\H/3*2) circle(.07);
  \fill (\xc,\yc+\H/3*2) circle(.07);

  % 在交点（中点）标 alpha
  %\node at (-\xc,\H/2) {$\alpha$};
\end{scope}
}

\newcommand{\yuanzhuZL}[3]{% #1=Rx, #2=Ry, #3=H
\begin{scope}[scale=.7]
  \pgfmathsetmacro{\Rx}{#1}
  \pgfmathsetmacro{\Ry}{#2}
  \pgfmathsetmacro{\H}{#3}

  \cylinderonly{\Rx}{\Ry}{\H}

  % 仍然画两侧四条竖线
  \pgfmathsetmacro{\cosA}{0.6427876}
  \pgfmathsetmacro{\cosB}{-0.6427876}
  \pgfmathsetmacro{\cosC}{0.2588190}
  \pgfmathsetmacro{\cosD}{-0.2588190}
  \pgfmathsetmacro{\sinA}{-0.7660444}
  \pgfmathsetmacro{\sinB}{-0.7660444}
  \pgfmathsetmacro{\sinC}{-0.9659258}
  \pgfmathsetmacro{\sinD}{-0.9659258}

  \pgfmathsetmacro{\xa}{\Rx*\cosA}
  \pgfmathsetmacro{\ya}{\Ry*\sinA}
  \pgfmathsetmacro{\xb}{\Rx*\cosB}
  \pgfmathsetmacro{\yb}{\Ry*\sinB}
  \pgfmathsetmacro{\xc}{\Rx*\cosC}
  \pgfmathsetmacro{\yc}{\Ry*\sinC}
  \pgfmathsetmacro{\xd}{\Rx*\cosD}
  \pgfmathsetmacro{\yd}{\Ry*\sinD}

  \draw (\xa,\ya) -- (\xa,\ya+\H);
  \draw (\xb,\yb) -- (\xb,\yb+\H);

  % 画中间交叉线
  \draw (\xc,\yc) -- (\xc,\yc+\H);
  \draw (\xd,\yd) -- (\xd,\yd+\H);
  \draw [red,decorate, decoration={snake}](\xa,\ya+\H/3*2)--(\xb,\yb+\H/3*2);
  
  \fill (\xb,\yb+\H/3*2) circle(.1);
  \fill (\xa,\ya+\H/3*2) circle(.1);

  % 在交点（中点）标 alpha
  %\node at (-\xc,\H/2) {$\alpha$};
\end{scope}
}
\newcommand{\yuanzhuALEFT}[3]{% #1=Rx, #2=Ry, #3=H
\begin{scope}[scale=.7]
  \pgfmathsetmacro{\Rx}{#1}
  \pgfmathsetmacro{\Ry}{#2}
  \pgfmathsetmacro{\H}{#3}

  \cylinderonly{\Rx}{\Ry}{\H}

  % 仍然画两侧四条竖线
  \pgfmathsetmacro{\cosA}{0.6427876}
  \pgfmathsetmacro{\cosB}{-0.6427876}
  \pgfmathsetmacro{\cosC}{0.2588190}
  \pgfmathsetmacro{\cosD}{-0.2588190}
  \pgfmathsetmacro{\sinA}{-0.7660444}
  \pgfmathsetmacro{\sinB}{-0.7660444}
  \pgfmathsetmacro{\sinC}{-0.9659258}
  \pgfmathsetmacro{\sinD}{-0.9659258}

  \pgfmathsetmacro{\xa}{\Rx*\cosA}
  \pgfmathsetmacro{\ya}{\Ry*\sinA}
  \pgfmathsetmacro{\xb}{\Rx*\cosB}
  \pgfmathsetmacro{\yb}{\Ry*\sinB}
  \pgfmathsetmacro{\xc}{\Rx*\cosC}
  \pgfmathsetmacro{\yc}{\Ry*\sinC}
  \pgfmathsetmacro{\xd}{\Rx*\cosD}
  \pgfmathsetmacro{\yd}{\Ry*\sinD}

  \draw (\xa,\ya) -- (\xa,\ya+\H);
  \draw (\xb,\yb) -- (\xb,\yb+\H);

  % 画中间交叉线
  \draw (\xc,\yc) -- (\xc,\yc+\H);
  \draw (\xd,\yd) -- (\xd,\yd+\H);
   \filldraw[white,dashed] (\xb-1/5,\yb+\H/3*2)--(\xd+1/5,\yd+\H/3*2)--(\xd+1/5,\yd+\H/3*1)--(\xb-1/5,\yb+\H/3*1);
  \draw[dashed] (\xb-1/5,\yb+\H/3*2)--(\xd+1/5,\yd+\H/3*2)--(\xd+1/5,\yd+\H/3*1)--(\xb-1/5,\yb+\H/3*1)--(\xb-1/5,\yb+\H/3*2);% [red,decorate, decoration={snake}](\xb,\yb+\H/3*2)--(\xd,\yd+\H/3*2);
  
  \node at (\xb/2+\xd/2,\yb+\H/2){$A$};
 % \fill (\xb,\yb+\H/3*2) circle(.05);
 % \fill (\xd,\yd+\H/3*2) circle(.05);

  % 在交点（中点）标 alpha
  %\node at (-\xc,\H/2) {$\alpha$};
\end{scope}
}

\newcommand{\yuanzhuARIGHT}[3]{% #1=Rx, #2=Ry, #3=H
\begin{scope}[scale=.7]
  \pgfmathsetmacro{\Rx}{#1}
  \pgfmathsetmacro{\Ry}{#2}
  \pgfmathsetmacro{\H}{#3}

  \cylinderonly{\Rx}{\Ry}{\H}

  % 仍然画两侧四条竖线
  \pgfmathsetmacro{\cosA}{0.6427876}
  \pgfmathsetmacro{\cosB}{-0.6427876}
  \pgfmathsetmacro{\cosC}{0.2588190}
  \pgfmathsetmacro{\cosD}{-0.2588190}
  \pgfmathsetmacro{\sinA}{-0.7660444}
  \pgfmathsetmacro{\sinB}{-0.7660444}
  \pgfmathsetmacro{\sinC}{-0.9659258}
  \pgfmathsetmacro{\sinD}{-0.9659258}

  \pgfmathsetmacro{\xa}{\Rx*\cosA}
  \pgfmathsetmacro{\ya}{\Ry*\sinA}
  \pgfmathsetmacro{\xb}{\Rx*\cosB}
  \pgfmathsetmacro{\yb}{\Ry*\sinB}
  \pgfmathsetmacro{\xc}{\Rx*\cosC}
  \pgfmathsetmacro{\yc}{\Ry*\sinC}
  \pgfmathsetmacro{\xd}{\Rx*\cosD}
  \pgfmathsetmacro{\yd}{\Ry*\sinD}

  \draw (\xa,\ya) -- (\xa,\ya+\H);
  \draw (\xb,\yb) -- (\xb,\yb+\H);

 \draw (\xc,\yc) -- (\xc,\yc+\H);
  \draw (\xd,\yd) -- (\xd,\yd+\H);
   \filldraw[white,dashed] (\xc-1/5,\yc+\H/3*2)--(\xa+1/5,\ya+\H/3*2)--(\xa+1/5,\ya+\H/3*1)--(\xc-1/5,\yc+\H/3*1);
  \draw[dashed]  (\xc-1/5,\yc+\H/3*2)--(\xa+1/5,\ya+\H/3*2)--(\xa+1/5,\ya+\H/3*1)--(\xc-1/5,\yc+\H/3*1)--(\xc-1/5,\yc+\H/3*2);% [red,decorate, decoration={snake}](\xb,\yb+\H/3*2)--(\xd,\yd+\H/3*2);
  \node at (\xa/2+\xc/2,\ya+\H/2){$A$};
 % \fill (\xb,\yb+\H/3*2) circle(.05);
 % \fill (\xd,\yd+\H/3*2) circle(.05);
  % 在交点（中点）标 alpha
  %\node at (-\xc,\H/2) {$\alpha$};
\end{scope}
}

\newcommand{\yuanzhuCAPPLUS}[3]{% #1=Rx, #2=Ry, #3=H
\begin{scope}[scale=.7]
  \pgfmathsetmacro{\Rx}{#1}
  \pgfmathsetmacro{\Ry}{#2}
  \pgfmathsetmacro{\H}{#3}

  \caponly{\Rx}{\Ry}{\H}

  % 仍然画两侧四条竖线
  \pgfmathsetmacro{\cosA}{0.6427876}
  \pgfmathsetmacro{\cosB}{-0.6427876}
  \pgfmathsetmacro{\cosC}{0.2588190}
  \pgfmathsetmacro{\cosD}{-0.2588190}
  \pgfmathsetmacro{\sinA}{-0.7660444}
  \pgfmathsetmacro{\sinB}{-0.7660444}
  \pgfmathsetmacro{\sinC}{-0.9659258}
  \pgfmathsetmacro{\sinD}{-0.9659258}

  \pgfmathsetmacro{\xa}{\Rx*\cosA}
  \pgfmathsetmacro{\ya}{\Ry*\sinA}
  \pgfmathsetmacro{\xb}{\Rx*\cosB}
  \pgfmathsetmacro{\yb}{\Ry*\sinB}
  \pgfmathsetmacro{\xc}{\Rx*\cosC}
  \pgfmathsetmacro{\yc}{\Ry*\sinC}
  \pgfmathsetmacro{\xd}{\Rx*\cosD}
  \pgfmathsetmacro{\yd}{\Ry*\sinD}

  \draw (\xa,\ya) -- (\xa,\ya+\H/3)arc[start angle=180, end angle=360, radius=(\xb-\xa)/2]--(\xb,\yb);
  \draw (\xc,\yc) -- (\xc,\yc+\H/3)arc[start angle=0, end angle=180, radius=(\xc-\xd)/2]--(\xd,\yd);

  % 画中间交叉线
 % \draw (\xc,\yc) -- (\xc,\yc+\H);
  %\draw (\xd,\yd) -- (\xd,\yd+\H);
  %\draw [red,decorate, decoration={snake}](\xb,\yb+\H/5)--(\xd,\yd+\H/5);
  
  %\fill (\xb,\yb+\H/5) circle(.05);
 % \fill (\xd,\yd+\H/5) circle(.05);

  % 在交点（中点）标 alpha
  %\node at (-\xc,\H/2) {$\alpha$};
\end{scope}
}

\newcommand{\yuanzhuCAPPLUSBULK}[3]{% #1=Rx, #2=Ry, #3=H
\begin{scope}[scale=.7]
  \pgfmathsetmacro{\Rx}{#1}
  \pgfmathsetmacro{\Ry}{#2}
  \pgfmathsetmacro{\H}{#3}

  \caponlybulk{\Rx}{\Ry}{\H}

  % 仍然画两侧四条竖线
  \pgfmathsetmacro{\cosA}{0.6427876}
  \pgfmathsetmacro{\cosB}{-0.6427876}
  \pgfmathsetmacro{\cosC}{0.2588190}
  \pgfmathsetmacro{\cosD}{-0.2588190}
  \pgfmathsetmacro{\sinA}{-0.7660444}
  \pgfmathsetmacro{\sinB}{-0.7660444}
  \pgfmathsetmacro{\sinC}{-0.9659258}
  \pgfmathsetmacro{\sinD}{-0.9659258}

  \pgfmathsetmacro{\xa}{\Rx*\cosA}
  \pgfmathsetmacro{\ya}{\Ry*\sinA}
  \pgfmathsetmacro{\xb}{\Rx*\cosB}
  \pgfmathsetmacro{\yb}{\Ry*\sinB}
  \pgfmathsetmacro{\xc}{\Rx*\cosC}
  \pgfmathsetmacro{\yc}{\Ry*\sinC}
  \pgfmathsetmacro{\xd}{\Rx*\cosD}
  \pgfmathsetmacro{\yd}{\Ry*\sinD}

  \draw (\xa,\ya) -- (\xa,\ya+\H/3)arc[start angle=180, end angle=360, radius=(\xb-\xa)/2]--(\xb,\yb);
  \draw (\xc,\yc) -- (\xc,\yc+\H/3)arc[start angle=0, end angle=180, radius=(\xc-\xd)/2]--(\xd,\yd);

  % 画中间交叉线
 % \draw (\xc,\yc) -- (\xc,\yc+\H);
  %\draw (\xd,\yd) -- (\xd,\yd+\H);
  %\draw [red,decorate, decoration={snake}](\xb,\yb+\H/5)--(\xd,\yd+\H/5);
  
  %\fill (\xb,\yb+\H/5) circle(.05);
 % \fill (\xd,\yd+\H/5) circle(.05);

  % 在交点（中点）标 alpha
  %\node at (-\xc,\H/2) {$\alpha$};
\end{scope}
}

\newcommand{\yuanzhuCAPPLUSCOLORTIKZ}[3]{% #1=Rx, #2=Ry, #3=H
\tikz[baseline=(current bounding box.center), scale=0.7, line width=0.3mm] {
\begin{scope}[scale=.7]
  \pgfmathsetmacro{\Rx}{#1}
  \pgfmathsetmacro{\Ry}{#2}
  \pgfmathsetmacro{\H}{#3}

  \caponly{\Rx}{\Ry}{\H}

  % 仍然画两侧四条竖线
  \pgfmathsetmacro{\cosA}{0.6427876}
  \pgfmathsetmacro{\cosB}{-0.6427876}
  \pgfmathsetmacro{\cosC}{0.2588190}
  \pgfmathsetmacro{\cosD}{-0.2588190}
  \pgfmathsetmacro{\sinA}{-0.7660444}
  \pgfmathsetmacro{\sinB}{-0.7660444}
  \pgfmathsetmacro{\sinC}{-0.9659258}
  \pgfmathsetmacro{\sinD}{-0.9659258}

  \pgfmathsetmacro{\xa}{\Rx*\cosA}
  \pgfmathsetmacro{\ya}{\Ry*\sinA}
  \pgfmathsetmacro{\xb}{\Rx*\cosB}
  \pgfmathsetmacro{\yb}{\Ry*\sinB}
  \pgfmathsetmacro{\xc}{\Rx*\cosC}
  \pgfmathsetmacro{\yc}{\Ry*\sinC}
  \pgfmathsetmacro{\xd}{\Rx*\cosD}
  \pgfmathsetmacro{\yd}{\Ry*\sinD}

  \draw [magenta](\xa,\ya) -- (\xa,\ya+\H/3)arc[start angle=180, end angle=360, radius=(\xb-\xa)/2]--(\xb,\yb);
  \draw (\xc,\yc) -- (\xc,\yc+\H/3)arc[start angle=0, end angle=180, radius=(\xc-\xd)/2]--(\xd,\yd);

  % 画中间交叉线
 % \draw (\xc,\yc) -- (\xc,\yc+\H);
  %\draw (\xd,\yd) -- (\xd,\yd+\H);
  %\draw [red,decorate, decoration={snake}](\xb,\yb+\H/5)--(\xd,\yd+\H/5);
  
  %\fill (\xb,\yb+\H/5) circle(.05);
 % \fill (\xd,\yd+\H/5) circle(.05);

  % 在交点（中点）标 alpha
  %\node at (-\xc,\H/2) {$\alpha$};
\end{scope}
}
}

\newcommand{\yuanzhuCAPPLUSCOLOR}[3]{% #1=Rx, #2=Ry, #3=H
\begin{scope}[scale=.7]
  \pgfmathsetmacro{\Rx}{#1}
  \pgfmathsetmacro{\Ry}{#2}
  \pgfmathsetmacro{\H}{#3}

  \caponly{\Rx}{\Ry}{\H}

  % 仍然画两侧四条竖线
  \pgfmathsetmacro{\cosA}{0.6427876}
  \pgfmathsetmacro{\cosB}{-0.6427876}
  \pgfmathsetmacro{\cosC}{0.2588190}
  \pgfmathsetmacro{\cosD}{-0.2588190}
  \pgfmathsetmacro{\sinA}{-0.7660444}
  \pgfmathsetmacro{\sinB}{-0.7660444}
  \pgfmathsetmacro{\sinC}{-0.9659258}
  \pgfmathsetmacro{\sinD}{-0.9659258}

  \pgfmathsetmacro{\xa}{\Rx*\cosA}
  \pgfmathsetmacro{\ya}{\Ry*\sinA}
  \pgfmathsetmacro{\xb}{\Rx*\cosB}
  \pgfmathsetmacro{\yb}{\Ry*\sinB}
  \pgfmathsetmacro{\xc}{\Rx*\cosC}
  \pgfmathsetmacro{\yc}{\Ry*\sinC}
  \pgfmathsetmacro{\xd}{\Rx*\cosD}
  \pgfmathsetmacro{\yd}{\Ry*\sinD}

  \draw [magenta](\xa,\ya) -- (\xa,\ya+\H/3)arc[start angle=180, end angle=360, radius=(\xb-\xa)/2]--(\xb,\yb);
  \draw (\xc,\yc) -- (\xc,\yc+\H/3)arc[start angle=0, end angle=180, radius=(\xc-\xd)/2]--(\xd,\yd);

  % 画中间交叉线
 % \draw (\xc,\yc) -- (\xc,\yc+\H);
  %\draw (\xd,\yd) -- (\xd,\yd+\H);
  %\draw [red,decorate, decoration={snake}](\xb,\yb+\H/5)--(\xd,\yd+\H/5);
  
  %\fill (\xb,\yb+\H/5) circle(.05);
 % \fill (\xd,\yd+\H/5) circle(.05);

  % 在交点（中点）标 alpha
  %\node at (-\xc,\H/2) {$\alpha$};
\end{scope}
}

\newcommand{\yuanzhuCAPMINUS}[3]{% #1=Rx, #2=Ry, #3=H
\begin{scope}[scale=.7]
  \pgfmathsetmacro{\Rx}{#1}
  \pgfmathsetmacro{\Ry}{#2}
  \pgfmathsetmacro{\H}{#3}

  \caponly{\Rx}{\Ry}{\H}

  % 仍然画两侧四条竖线
  \pgfmathsetmacro{\cosA}{0.6427876}
  \pgfmathsetmacro{\cosB}{-0.6427876}
  \pgfmathsetmacro{\cosC}{0.2588190}
  \pgfmathsetmacro{\cosD}{-0.2588190}
  \pgfmathsetmacro{\sinA}{-0.7660444}
  \pgfmathsetmacro{\sinB}{-0.7660444}
  \pgfmathsetmacro{\sinC}{-0.9659258}
  \pgfmathsetmacro{\sinD}{-0.9659258}

  \pgfmathsetmacro{\xa}{\Rx*\cosA}
  \pgfmathsetmacro{\ya}{\Ry*\sinA}
  \pgfmathsetmacro{\xb}{\Rx*\cosB}
  \pgfmathsetmacro{\yb}{\Ry*\sinB}
  \pgfmathsetmacro{\xc}{\Rx*\cosC}
  \pgfmathsetmacro{\yc}{\Ry*\sinC}
  \pgfmathsetmacro{\xd}{\Rx*\cosD}
  \pgfmathsetmacro{\yd}{\Ry*\sinD}

  \draw (\xa,\ya) -- (\xa,\ya+\H/3)arc[start angle=180, end angle=360, radius=(\xb-\xa)/2]--(\xb,\yb);
  \draw (\xc,\yc) -- (\xc,\yc+\H/3)arc[start angle=0, end angle=180, radius=(\xc-\xd)/2]--(\xd,\yd);

  % 画中间交叉线
 % \draw (\xc,\yc) -- (\xc,\yc+\H);
  %\draw (\xd,\yd) -- (\xd,\yd+\H);
  \draw [red,decorate, decoration={snake}](\xb,\yb+\H/5)--(\xd,\yd+\H/5);
  
  \fill (\xb,\yb+\H/5) circle(.05);
  \fill (\xd,\yd+\H/5) circle(.05);

  % 在交点（中点）标 alpha
  %\node at (-\xc,\H/2) {$\alpha$};
\end{scope}
}

\newcommand{\yuanzhuCAPMINUSCOLOR}[3]{% #1=Rx, #2=Ry, #3=H
\begin{scope}[scale=.7]
  \pgfmathsetmacro{\Rx}{#1}
  \pgfmathsetmacro{\Ry}{#2}
  \pgfmathsetmacro{\H}{#3}

  \caponly{\Rx}{\Ry}{\H}

  % 仍然画两侧四条竖线
  \pgfmathsetmacro{\cosA}{0.6427876}
  \pgfmathsetmacro{\cosB}{-0.6427876}
  \pgfmathsetmacro{\cosC}{0.2588190}
  \pgfmathsetmacro{\cosD}{-0.2588190}
  \pgfmathsetmacro{\sinA}{-0.7660444}
  \pgfmathsetmacro{\sinB}{-0.7660444}
  \pgfmathsetmacro{\sinC}{-0.9659258}
  \pgfmathsetmacro{\sinD}{-0.9659258}

  \pgfmathsetmacro{\xa}{\Rx*\cosA}
  \pgfmathsetmacro{\ya}{\Ry*\sinA}
  \pgfmathsetmacro{\xb}{\Rx*\cosB}
  \pgfmathsetmacro{\yb}{\Ry*\sinB}
  \pgfmathsetmacro{\xc}{\Rx*\cosC}
  \pgfmathsetmacro{\yc}{\Ry*\sinC}
  \pgfmathsetmacro{\xd}{\Rx*\cosD}
  \pgfmathsetmacro{\yd}{\Ry*\sinD}

  \draw [magenta](\xa,\ya) -- (\xa,\ya+\H/3)arc[start angle=180, end angle=360, radius=(\xb-\xa)/2]--(\xb,\yb);
  \draw (\xc,\yc) -- (\xc,\yc+\H/3)arc[start angle=0, end angle=180, radius=(\xc-\xd)/2]--(\xd,\yd);

  % 画中间交叉线
 % \draw (\xc,\yc) -- (\xc,\yc+\H);
  %\draw (\xd,\yd) -- (\xd,\yd+\H);
  \draw [red,decorate, decoration={snake}](\xb,\yb+\H/5)--(\xd,\yd+\H/5);
  
  \fill (\xb,\yb+\H/5) circle(.05);
  \fill (\xd,\yd+\H/5) circle(.05);

  % 在交点（中点）标 alpha
  %\node at (-\xc,\H/2) {$\alpha$};
\end{scope}
}

\newcommand{\yuanzhuCAP}[3]{% #1=Rx, #2=Ry, #3=H
\begin{scope}[scale=.7]
  \pgfmathsetmacro{\Rx}{#1}
  \pgfmathsetmacro{\Ry}{#2}
  \pgfmathsetmacro{\H}{#3}

  \caponly{\Rx}{\Ry}{\H}

  % 仍然画两侧四条竖线
  \pgfmathsetmacro{\cosA}{0.6427876}
  \pgfmathsetmacro{\cosB}{-0.6427876}
  \pgfmathsetmacro{\cosC}{0.2588190}
  \pgfmathsetmacro{\cosD}{-0.2588190}
  \pgfmathsetmacro{\sinA}{-0.7660444}
  \pgfmathsetmacro{\sinB}{-0.7660444}
  \pgfmathsetmacro{\sinC}{-0.9659258}
  \pgfmathsetmacro{\sinD}{-0.9659258}

  \pgfmathsetmacro{\xa}{\Rx*\cosA}
  \pgfmathsetmacro{\ya}{\Ry*\sinA}
  \pgfmathsetmacro{\xb}{\Rx*\cosB}
  \pgfmathsetmacro{\yb}{\Ry*\sinB}
  \pgfmathsetmacro{\xc}{\Rx*\cosC}
  \pgfmathsetmacro{\yc}{\Ry*\sinC}
  \pgfmathsetmacro{\xd}{\Rx*\cosD}
  \pgfmathsetmacro{\yd}{\Ry*\sinD}

 % \draw (\xa,\ya) -- (\xa,\ya+\H/3)arc[start angle=180, end angle=360, radius=(\xb-\xa)/2]--(\xb,\yb);
 % \draw (\xc,\yc) -- (\xc,\yc+\H/3)arc[start angle=0, end angle=180, radius=(\xc-\xd)/2]--(\xd,\yd);

  % 画中间交叉线
 % \draw (\xc,\yc) -- (\xc,\yc+\H);
  %\draw (\xd,\yd) -- (\xd,\yd+\H);
  %\draw [red,decorate, decoration={snake}](\xb,\yb+\H/5)--(\xd,\yd+\H/5);
  
  %\fill (\xb,\yb+\H/5) circle(.05);
 % \fill (\xd,\yd+\H/5) circle(.05);

  % 在交点（中点）标 alpha
  %\node at (-\xc,\H/2) {$\alpha$};
\end{scope}
}

\newcommand{\yuanzhuCAPZERO}[3]{% #1=Rx, #2=Ry, #3=H
\begin{scope}[scale=.7]
  \pgfmathsetmacro{\Rx}{#1}
  \pgfmathsetmacro{\Ry}{#2}
  \pgfmathsetmacro{\H}{#3}

  \caponly{\Rx}{\Ry}{\H}

  % 仍然画两侧四条竖线
  \pgfmathsetmacro{\cosA}{0.6427876}
  \pgfmathsetmacro{\cosB}{-0.6427876}
  \pgfmathsetmacro{\cosC}{0.2588190}
  \pgfmathsetmacro{\cosD}{-0.2588190}
  \pgfmathsetmacro{\sinA}{-0.7660444}
  \pgfmathsetmacro{\sinB}{-0.7660444}
  \pgfmathsetmacro{\sinC}{-0.9659258}
  \pgfmathsetmacro{\sinD}{-0.9659258}

  \pgfmathsetmacro{\xa}{\Rx*\cosA}
  \pgfmathsetmacro{\ya}{\Ry*\sinA}
  \pgfmathsetmacro{\xb}{\Rx*\cosB}
  \pgfmathsetmacro{\yb}{\Ry*\sinB}
  \pgfmathsetmacro{\xc}{\Rx*\cosC}
  \pgfmathsetmacro{\yc}{\Ry*\sinC}
  \pgfmathsetmacro{\xd}{\Rx*\cosD}
  \pgfmathsetmacro{\yd}{\Ry*\sinD}

  \draw (\xa,\ya) -- (\xa,\ya+\H/5)arc[start angle=180, end angle=360, radius=(\xc-\xa)/2]--(\xc,\yc);
  \draw (\xd,\yd) -- (\xd,\yd+\H/4)arc[start angle=0, end angle=180, radius=(\xd-\xb)/2]--(\xb,\yb);

  % 画中间交叉线
 % \draw (\xc,\yc) -- (\xc,\yc+\H);
  %\draw (\xd,\yd) -- (\xd,\yd+\H);
  %\draw [red,decorate, decoration={snake}](\xc,\yc+\H/6)--(\xd,\yd+\H/6);
  
 % \fill (\xd,\yd+\H/6) circle(.05);
 % \fill (\xc,\yc+\H/6) circle(.05);

  % 在交点（中点）标 alpha
  %\node at (-\xc,\H/2) {$\alpha$};
\end{scope}
}
\newcommand{\yuanzhuCAPZEROBULK}[3]{% #1=Rx, #2=Ry, #3=H
\begin{scope}[scale=.7]
  \pgfmathsetmacro{\Rx}{#1}
  \pgfmathsetmacro{\Ry}{#2}
  \pgfmathsetmacro{\H}{#3}

  \caponlybulk{\Rx}{\Ry}{\H}

  % 仍然画两侧四条竖线
  \pgfmathsetmacro{\cosA}{0.6427876}
  \pgfmathsetmacro{\cosB}{-0.6427876}
  \pgfmathsetmacro{\cosC}{0.2588190}
  \pgfmathsetmacro{\cosD}{-0.2588190}
  \pgfmathsetmacro{\sinA}{-0.7660444}
  \pgfmathsetmacro{\sinB}{-0.7660444}
  \pgfmathsetmacro{\sinC}{-0.9659258}
  \pgfmathsetmacro{\sinD}{-0.9659258}

  \pgfmathsetmacro{\xa}{\Rx*\cosA}
  \pgfmathsetmacro{\ya}{\Ry*\sinA}
  \pgfmathsetmacro{\xb}{\Rx*\cosB}
  \pgfmathsetmacro{\yb}{\Ry*\sinB}
  \pgfmathsetmacro{\xc}{\Rx*\cosC}
  \pgfmathsetmacro{\yc}{\Ry*\sinC}
  \pgfmathsetmacro{\xd}{\Rx*\cosD}
  \pgfmathsetmacro{\yd}{\Ry*\sinD}

  \draw (\xa,\ya) -- (\xa,\ya+\H/5)arc[start angle=180, end angle=360, radius=(\xc-\xa)/2]--(\xc,\yc);
  \draw (\xd,\yd) -- (\xd,\yd+\H/4)arc[start angle=0, end angle=180, radius=(\xd-\xb)/2]--(\xb,\yb);

  % 画中间交叉线
 % \draw (\xc,\yc) -- (\xc,\yc+\H);
  %\draw (\xd,\yd) -- (\xd,\yd+\H);
  %\draw [red,decorate, decoration={snake}](\xc,\yc+\H/6)--(\xd,\yd+\H/6);
  
 % \fill (\xd,\yd+\H/6) circle(.05);
 % \fill (\xc,\yc+\H/6) circle(.05);

  % 在交点（中点）标 alpha
  %\node at (-\xc,\H/2) {$\alpha$};
\end{scope}
}

\newcommand{\yuanzhuCAPZEROTWO}[3]{% #1=Rx, #2=Ry, #3=H
\begin{scope}[scale=.7]
  \pgfmathsetmacro{\Rx}{#1}
  \pgfmathsetmacro{\Ry}{#2}
  \pgfmathsetmacro{\H}{#3}

  \caponly{\Rx}{\Ry}{\H}

  % 仍然画两侧四条竖线
  \pgfmathsetmacro{\cosA}{0.6427876}
  \pgfmathsetmacro{\cosB}{-0.6427876}
  \pgfmathsetmacro{\cosC}{0.2588190}
  \pgfmathsetmacro{\cosD}{-0.2588190}
  \pgfmathsetmacro{\sinA}{-0.7660444}
  \pgfmathsetmacro{\sinB}{-0.7660444}
  \pgfmathsetmacro{\sinC}{-0.9659258}
  \pgfmathsetmacro{\sinD}{-0.9659258}

  \pgfmathsetmacro{\xa}{\Rx*\cosA}
  \pgfmathsetmacro{\ya}{\Ry*\sinA}
  \pgfmathsetmacro{\xb}{\Rx*\cosB}
  \pgfmathsetmacro{\yb}{\Ry*\sinB}
  \pgfmathsetmacro{\xc}{\Rx*\cosC}
  \pgfmathsetmacro{\yc}{\Ry*\sinC}
  \pgfmathsetmacro{\xd}{\Rx*\cosD}
  \pgfmathsetmacro{\yd}{\Ry*\sinD}

  \draw (\xc,\yc) -- (\xc,\yc+\H/5)arc[start angle=0, end angle=180, radius=(\xc-\xd)/2]--(\xd,\yd);

  % 画中间交叉线
 % \draw (\xc,\yc) -- (\xc,\yc+\H);
  %\draw (\xd,\yd) -- (\xd,\yd+\H);
  %\draw [red,decorate, decoration={snake}](\xc,\yc+\H/6)--(\xd,\yd+\H/6);
  
 % \fill (\xd,\yd+\H/6) circle(.05);
 % \fill (\xc,\yc+\H/6) circle(.05);

  % 在交点（中点）标 alpha
  %\node at (-\xc,\H/2) {$\alpha$};
\end{scope}
}

\newcommand{\yuanzhuCAPZEROTWOBULK}[3]{% #1=Rx, #2=Ry, #3=H
\begin{scope}[scale=.7]
  \pgfmathsetmacro{\Rx}{#1}
  \pgfmathsetmacro{\Ry}{#2}
  \pgfmathsetmacro{\H}{#3}

  \caponlybulk{\Rx}{\Ry}{\H}

  % 仍然画两侧四条竖线
  \pgfmathsetmacro{\cosA}{0.6427876}
  \pgfmathsetmacro{\cosB}{-0.6427876}
  \pgfmathsetmacro{\cosC}{0.2588190}
  \pgfmathsetmacro{\cosD}{-0.2588190}
  \pgfmathsetmacro{\sinA}{-0.7660444}
  \pgfmathsetmacro{\sinB}{-0.7660444}
  \pgfmathsetmacro{\sinC}{-0.9659258}
  \pgfmathsetmacro{\sinD}{-0.9659258}

  \pgfmathsetmacro{\xa}{\Rx*\cosA}
  \pgfmathsetmacro{\ya}{\Ry*\sinA}
  \pgfmathsetmacro{\xb}{\Rx*\cosB}
  \pgfmathsetmacro{\yb}{\Ry*\sinB}
  \pgfmathsetmacro{\xc}{\Rx*\cosC}
  \pgfmathsetmacro{\yc}{\Ry*\sinC}
  \pgfmathsetmacro{\xd}{\Rx*\cosD}
  \pgfmathsetmacro{\yd}{\Ry*\sinD}

  \draw (\xc,\yc) -- (\xc,\yc+\H/5)arc[start angle=0, end angle=180, radius=(\xc-\xd)/2]--(\xd,\yd);

  % 画中间交叉线
 % \draw (\xc,\yc) -- (\xc,\yc+\H);
  %\draw (\xd,\yd) -- (\xd,\yd+\H);
  %\draw [red,decorate, decoration={snake}](\xc,\yc+\H/6)--(\xd,\yd+\H/6);
  
 % \fill (\xd,\yd+\H/6) circle(.05);
 % \fill (\xc,\yc+\H/6) circle(.05);

  % 在交点（中点）标 alpha
  %\node at (-\xc,\H/2) {$\alpha$};
\end{scope}
}

\newcommand{\yuanzhuCAPZEROBULKARB}[3]{% #1=Rx, #2=Ry, #3=H
\begin{scope}[scale=.7]
  \pgfmathsetmacro{\Rx}{#1}
  \pgfmathsetmacro{\Ry}{#2}
  \pgfmathsetmacro{\H}{#3}

  \caponlybulk{\Rx}{\Ry}{\H}

  % 仍然画两侧四条竖线
  \pgfmathsetmacro{\cosA}{0.6427876}
  \pgfmathsetmacro{\cosB}{-0.6427876}
  \pgfmathsetmacro{\cosC}{0.2588190}
  \pgfmathsetmacro{\cosD}{-0.2588190}
  \pgfmathsetmacro{\sinA}{-0.7660444}
  \pgfmathsetmacro{\sinB}{-0.7660444}
  \pgfmathsetmacro{\sinC}{-0.9659258}
  \pgfmathsetmacro{\sinD}{-0.9659258}

  \pgfmathsetmacro{\xa}{\Rx*\cosA}
  \pgfmathsetmacro{\ya}{\Ry*\sinA}
  \pgfmathsetmacro{\xb}{\Rx*\cosB}
  \pgfmathsetmacro{\yb}{\Ry*\sinB}
  \pgfmathsetmacro{\xc}{\Rx*\cosC}
  \pgfmathsetmacro{\yc}{\Ry*\sinC}
  \pgfmathsetmacro{\xd}{\Rx*\cosD}
  \pgfmathsetmacro{\yd}{\Ry*\sinD}

  \draw (\xa,\ya) -- (\xa,\ya+\H/5)arc[start angle=180, end angle=360, radius=(\xd-\xa)/2]--(\xd,\yd);
  \draw (\xb,\yb) -- (\xb,\yb+\H/4)arc[start angle=0, end angle=-180, radius=(\xb-\xc)/2]--(\xc,\yc);
 % \filldraw[draw=blue, fill=blue!20](\xd,0) rectangle(0.3,0.7);
  \node at (\xd/2+\xc/2+.2,1){\small{$v$}};

  % 画中间交叉线
 % \draw (\xc,\yc) -- (\xc,\yc+\H);
  %\draw (\xd,\yd) -- (\xd,\yd+\H);
  %\draw [red,decorate, decoration={snake}](\xc,\yc+\H/6)--(\xd,\yd+\H/6);
  
 % \fill (\xd,\yd+\H/6) circle(.05);
 % \fill (\xc,\yc+\H/6) circle(.05);

  % 在交点（中点）标 alpha
  %\node at (-\xc,\H/2) {$\alpha$};
\end{scope}
}

\newcommand{\yuanzhuCAPONE}[3]{% #1=Rx, #2=Ry, #3=H
\begin{scope}[scale=.7]
  \pgfmathsetmacro{\Rx}{#1}
  \pgfmathsetmacro{\Ry}{#2}
  \pgfmathsetmacro{\H}{#3}

  \caponly{\Rx}{\Ry}{\H}

  % 仍然画两侧四条竖线
  \pgfmathsetmacro{\cosA}{0.6427876}
  \pgfmathsetmacro{\cosB}{-0.6427876}
  \pgfmathsetmacro{\cosC}{0.2588190}
  \pgfmathsetmacro{\cosD}{-0.2588190}
  \pgfmathsetmacro{\sinA}{-0.7660444}
  \pgfmathsetmacro{\sinB}{-0.7660444}
  \pgfmathsetmacro{\sinC}{-0.9659258}
  \pgfmathsetmacro{\sinD}{-0.9659258}

  \pgfmathsetmacro{\xa}{\Rx*\cosA}
  \pgfmathsetmacro{\ya}{\Ry*\sinA}
  \pgfmathsetmacro{\xb}{\Rx*\cosB}
  \pgfmathsetmacro{\yb}{\Ry*\sinB}
  \pgfmathsetmacro{\xc}{\Rx*\cosC}
  \pgfmathsetmacro{\yc}{\Ry*\sinC}
  \pgfmathsetmacro{\xd}{\Rx*\cosD}
  \pgfmathsetmacro{\yd}{\Ry*\sinD}

  \draw (\xa,\ya) -- (\xa,\ya+\H/5)arc[start angle=180, end angle=360, radius=(\xc-\xa)/2]--(\xc,\yc);
  \draw (\xd,\yd) -- (\xd,\yd+\H/4)arc[start angle=0, end angle=180, radius=(\xd-\xb)/2]--(\xb,\yb);

  % 画中间交叉线
 % \draw (\xc,\yc) -- (\xc,\yc+\H);
  %\draw (\xd,\yd) -- (\xd,\yd+\H);
  \draw [red,decorate, decoration={snake}](\xc,\yc+\H/6)--(\xd,\yd+\H/6);
  
  \fill (\xd,\yd+\H/6) circle(.07);
  \fill (\xc,\yc+\H/6) circle(.07);

  % 在交点（中点）标 alpha
  %\node at 
  % 在交点（中点）标 alpha
  %\node at (-\xc,\H/2) {$\alpha$};
\end{scope}
}

\newcommand{\yuanzhuCAPONEBULK}[3]{% #1=Rx, #2=Ry, #3=H
\begin{scope}[scale=.7]
  \pgfmathsetmacro{\Rx}{#1}
  \pgfmathsetmacro{\Ry}{#2}
  \pgfmathsetmacro{\H}{#3}

  \caponlybulk{\Rx}{\Ry}{\H}

  % 仍然画两侧四条竖线
  \pgfmathsetmacro{\cosA}{0.6427876}
  \pgfmathsetmacro{\cosB}{-0.6427876}
  \pgfmathsetmacro{\cosC}{0.2588190}
  \pgfmathsetmacro{\cosD}{-0.2588190}
  \pgfmathsetmacro{\sinA}{-0.7660444}
  \pgfmathsetmacro{\sinB}{-0.7660444}
  \pgfmathsetmacro{\sinC}{-0.9659258}
  \pgfmathsetmacro{\sinD}{-0.9659258}

  \pgfmathsetmacro{\xa}{\Rx*\cosA}
  \pgfmathsetmacro{\ya}{\Ry*\sinA}
  \pgfmathsetmacro{\xb}{\Rx*\cosB}
  \pgfmathsetmacro{\yb}{\Ry*\sinB}
  \pgfmathsetmacro{\xc}{\Rx*\cosC}
  \pgfmathsetmacro{\yc}{\Ry*\sinC}
  \pgfmathsetmacro{\xd}{\Rx*\cosD}
  \pgfmathsetmacro{\yd}{\Ry*\sinD}

  \draw (\xa,\ya) -- (\xa,\ya+\H/5)arc[start angle=180, end angle=360, radius=(\xc-\xa)/2]--(\xc,\yc);
  \draw (\xd,\yd) -- (\xd,\yd+\H/4)arc[start angle=0, end angle=180, radius=(\xd-\xb)/2]--(\xb,\yb);

  % 画中间交叉线
 % \draw (\xc,\yc) -- (\xc,\yc+\H);
  %\draw (\xd,\yd) -- (\xd,\yd+\H);
  \draw [red,decorate, decoration={snake}](\xc,\yc+\H/6)--(\xd,\yd+\H/6);
  
  \fill (\xd,\yd+\H/6) circle(.07);
  \fill (\xc,\yc+\H/6) circle(.07);

  % 在交点（中点）标 alpha
  %\node at 
  % 在交点（中点）标 alpha
  %\node at (-\xc,\H/2) {$\alpha$};
\end{scope}
}

\newcommand{\yuanzhuCUPPLUS}[3]{% #1=Rx, #2=Ry, #3=H
\begin{scope}[scale=.7]
  \pgfmathsetmacro{\Rx}{#1}
  \pgfmathsetmacro{\Ry}{#2}
  \pgfmathsetmacro{\H}{#3}

  \cuponly{\Rx}{\Ry}{\H}

  % 仍然画两侧四条竖线
  \pgfmathsetmacro{\cosA}{0.6427876}
  \pgfmathsetmacro{\cosB}{-0.6427876}
  \pgfmathsetmacro{\cosC}{0.2588190}
  \pgfmathsetmacro{\cosD}{-0.2588190}
  \pgfmathsetmacro{\sinA}{-0.7660444}
  \pgfmathsetmacro{\sinB}{-0.7660444}
  \pgfmathsetmacro{\sinC}{-0.9659258}
  \pgfmathsetmacro{\sinD}{-0.9659258}

  \pgfmathsetmacro{\xa}{\Rx*\cosA}
  \pgfmathsetmacro{\ya}{\Ry*\sinA}
  \pgfmathsetmacro{\xb}{\Rx*\cosB}
  \pgfmathsetmacro{\yb}{\Ry*\sinB}
  \pgfmathsetmacro{\xc}{\Rx*\cosC}
  \pgfmathsetmacro{\yc}{\Ry*\sinC}
  \pgfmathsetmacro{\xd}{\Rx*\cosD}
  \pgfmathsetmacro{\yd}{\Ry*\sinD}

  \draw (\xa,\ya)..controls(\xc+1/3,\yc-\H/3)and(\xd-1/3,\yd-\H/3)..(\xb,\yb);
  \draw (\xc,\yc)arc[start angle=360, end angle=180, radius=(\xc-\xd)/2]--(\xd,\yd);

  % 画中间交叉线
 % \draw (\xc,\yc) -- (\xc,\yc+\H);
  %\draw (\xd,\yd) -- (\xd,\yd+\H);
  %\draw [red,decorate, decoration={snake}](\xb,\yb+\H/5)--(\xd,\yd+\H/5);
  
  %\fill (\xb,\yb+\H/5) circle(.05);
 % \fill (\xd,\yd+\H/5) circle(.05);

  % 在交点（中点）标 alpha
  %\node at (-\xc,\H/2) {$\alpha$};
\end{scope}
}
\newcommand{\yuanzhuCUPPLUSBULKARB}[3]{% #1=Rx, #2=Ry, #3=H
\begin{scope}[scale=.7]
  \pgfmathsetmacro{\Rx}{#1}
  \pgfmathsetmacro{\Ry}{#2}
  \pgfmathsetmacro{\H}{#3}

  \cuponlybulk{\Rx}{\Ry}{\H}

  % 仍然画两侧四条竖线
  \pgfmathsetmacro{\cosA}{0.6427876}
  \pgfmathsetmacro{\cosB}{-0.6427876}
  \pgfmathsetmacro{\cosC}{0.2588190}
  \pgfmathsetmacro{\cosD}{-0.2588190}
  \pgfmathsetmacro{\sinA}{-0.7660444}
  \pgfmathsetmacro{\sinB}{-0.7660444}
  \pgfmathsetmacro{\sinC}{-0.9659258}
  \pgfmathsetmacro{\sinD}{-0.9659258}

  \pgfmathsetmacro{\xa}{\Rx*\cosA}
  \pgfmathsetmacro{\ya}{\Ry*\sinA}
  \pgfmathsetmacro{\xb}{\Rx*\cosB}
  \pgfmathsetmacro{\yb}{\Ry*\sinB}
  \pgfmathsetmacro{\xc}{\Rx*\cosC}
  \pgfmathsetmacro{\yc}{\Ry*\sinC}
  \pgfmathsetmacro{\xd}{\Rx*\cosD}
  \pgfmathsetmacro{\yd}{\Ry*\sinD}
  
\cuponlybulk{\Rx}{\Ry}{\H}
  \draw (\xa,\ya)..controls(\xc+1/3,\yc-\H/3)and(\xd-1/3,\yd-\H/3)..(\xb,\yb);
  \draw (\xc,\yc)arc[start angle=360, end angle=180, radius=(\xc-\xd)/2]--(\xd,\yd);

  % 画中间交叉线
 % \draw (\xc,\yc) -- (\xc,\yc+\H);
  %\draw (\xd,\yd) -- (\xd,\yd+\H);
  %\draw [red,decorate, decoration={snake}](\xb,\yb+\H/5)--(\xd,\yd+\H/5);
  
  %\fill (\xb,\yb+\H/5) circle(.05);
 % \fill (\xd,\yd+\H/5) circle(.05);

  % 在交点（中点）标 alpha
  %\node at (-\xc,\H/2) {$\alpha$};
\end{scope}
}

\newcommand{\yuanzhuCUPPLUSCOLOR}[3]{% #1=Rx, #2=Ry, #3=H
\begin{scope}[scale=.7]
  \pgfmathsetmacro{\Rx}{#1}
  \pgfmathsetmacro{\Ry}{#2}
  \pgfmathsetmacro{\H}{#3}

  \cuponly{\Rx}{\Ry}{\H}

  % 仍然画两侧四条竖线
  \pgfmathsetmacro{\cosA}{0.6427876}
  \pgfmathsetmacro{\cosB}{-0.6427876}
  \pgfmathsetmacro{\cosC}{0.2588190}
  \pgfmathsetmacro{\cosD}{-0.2588190}
  \pgfmathsetmacro{\sinA}{-0.7660444}
  \pgfmathsetmacro{\sinB}{-0.7660444}
  \pgfmathsetmacro{\sinC}{-0.9659258}
  \pgfmathsetmacro{\sinD}{-0.9659258}

  \pgfmathsetmacro{\xa}{\Rx*\cosA}
  \pgfmathsetmacro{\ya}{\Ry*\sinA}
  \pgfmathsetmacro{\xb}{\Rx*\cosB}
  \pgfmathsetmacro{\yb}{\Ry*\sinB}
  \pgfmathsetmacro{\xc}{\Rx*\cosC}
  \pgfmathsetmacro{\yc}{\Ry*\sinC}
  \pgfmathsetmacro{\xd}{\Rx*\cosD}
  \pgfmathsetmacro{\yd}{\Ry*\sinD}

  \draw[magenta] (\xa,\ya)..controls(\xc+1/3,\yc-\H/3)and(\xd-1/3,\yd-\H/3)..(\xb,\yb);
  \draw (\xc,\yc)arc[start angle=360, end angle=180, radius=(\xc-\xd)/2]--(\xd,\yd);

  % 画中间交叉线
 % \draw (\xc,\yc) -- (\xc,\yc+\H);
  %\draw (\xd,\yd) -- (\xd,\yd+\H);
  %\draw [red,decorate, decoration={snake}](\xb,\yb+\H/5)--(\xd,\yd+\H/5);
  
  %\fill (\xb,\yb+\H/5) circle(.05);
 % \fill (\xd,\yd+\H/5) circle(.05);

  % 在交点（中点）标 alpha
  %\node at (-\xc,\H/2) {$\alpha$};
\end{scope}
}

\newcommand{\yuanzhuCUPPLUSCOLORTIKZ}[3]{% #1=Rx, #2=Ry, #3=H

\begin{scope}[scale=.7]
  \pgfmathsetmacro{\Rx}{#1}
  \pgfmathsetmacro{\Ry}{#2}
  \pgfmathsetmacro{\H}{#3}

  \cuponly{\Rx}{\Ry}{\H}

  % 仍然画两侧四条竖线
  \pgfmathsetmacro{\cosA}{0.6427876}
  \pgfmathsetmacro{\cosB}{-0.6427876}
  \pgfmathsetmacro{\cosC}{0.2588190}
  \pgfmathsetmacro{\cosD}{-0.2588190}
  \pgfmathsetmacro{\sinA}{-0.7660444}
  \pgfmathsetmacro{\sinB}{-0.7660444}
  \pgfmathsetmacro{\sinC}{-0.9659258}
  \pgfmathsetmacro{\sinD}{-0.9659258}

  \pgfmathsetmacro{\xa}{\Rx*\cosA}
  \pgfmathsetmacro{\ya}{\Ry*\sinA}
  \pgfmathsetmacro{\xb}{\Rx*\cosB}
  \pgfmathsetmacro{\yb}{\Ry*\sinB}
  \pgfmathsetmacro{\xc}{\Rx*\cosC}
  \pgfmathsetmacro{\yc}{\Ry*\sinC}
  \pgfmathsetmacro{\xd}{\Rx*\cosD}
  \pgfmathsetmacro{\yd}{\Ry*\sinD}

  \draw[magenta] (\xa,\ya)..controls(\xc+1/3,\yc-\H/3)and(\xd-1/3,\yd-\H/3)..(\xb,\yb);
  \draw (\xc,\yc)arc[start angle=360, end angle=180, radius=(\xc-\xd)/2]--(\xd,\yd);

  % 画中间交叉线
 % \draw (\xc,\yc) -- (\xc,\yc+\H);
  %\draw (\xd,\yd) -- (\xd,\yd+\H);
  %\draw [red,decorate, decoration={snake}](\xb,\yb+\H/5)--(\xd,\yd+\H/5);
  
  %\fill (\xb,\yb+\H/5) circle(.05);
 % \fill (\xd,\yd+\H/5) circle(.05);

  % 在交点（中点）标 alpha
  %\node at (-\xc,\H/2) {$\alpha$};
\end{scope}
}

\newcommand{\yuanzhuCUPMINUS}[3]{% #1=Rx, #2=Ry, #3=H
\begin{scope}[scale=.7]
  \pgfmathsetmacro{\Rx}{#1}
  \pgfmathsetmacro{\Ry}{#2}
  \pgfmathsetmacro{\H}{#3}

  \cuponly{\Rx}{\Ry}{\H}

  % 仍然画两侧四条竖线
  \pgfmathsetmacro{\cosA}{0.6427876}
  \pgfmathsetmacro{\cosB}{-0.6427876}
  \pgfmathsetmacro{\cosC}{0.2588190}
  \pgfmathsetmacro{\cosD}{-0.2588190}
  \pgfmathsetmacro{\sinA}{-0.7660444}
  \pgfmathsetmacro{\sinB}{-0.7660444}
  \pgfmathsetmacro{\sinC}{-0.9659258}
  \pgfmathsetmacro{\sinD}{-0.9659258}

  \pgfmathsetmacro{\xa}{\Rx*\cosA}
  \pgfmathsetmacro{\ya}{\Ry*\sinA}
  \pgfmathsetmacro{\xb}{\Rx*\cosB}
  \pgfmathsetmacro{\yb}{\Ry*\sinB}
  \pgfmathsetmacro{\xc}{\Rx*\cosC}
  \pgfmathsetmacro{\yc}{\Ry*\sinC}
  \pgfmathsetmacro{\xd}{\Rx*\cosD}
  \pgfmathsetmacro{\yd}{\Ry*\sinD}

  \draw (\xa,\ya)..controls(\xc+1/3,\yc-\H/3)and(\xd-1/3,\yd-\H/3)..(\xb,\yb);
  \draw (\xc,\yc)arc[start angle=360, end angle=180, radius=(\xc-\xd)/2]--(\xd,\yd);

  % 画中间交叉线
 % \draw (\xc,\yc) -- (\xc,\yc+\H);
  %\draw (\xd,\yd) -- (\xd,\yd+\H);
  %\draw [red,decorate, decoration={snake}](\xb,\yb+\H/5)--(\xd,\yd+\H/5);
  
  \fill (\xb+.1,\yb-\H/7) circle(.05);
  \fill (\xd+.04,\yd-.15) circle(.05);
  \draw[decorate,decoration={snake},red] (\xb+.1,\yb-\H/7)--(\xd+.04,\yd-.15);

  % 在交点（中点）标 alpha
  %\node at (-\xc,\H/2) {$\alpha$};
\end{scope}
}

\newcommand{\yuanzhuCUPMINUSCOLOR}[3]{% #1=Rx, #2=Ry, #3=H
\begin{scope}[scale=.7]
  \pgfmathsetmacro{\Rx}{#1}
  \pgfmathsetmacro{\Ry}{#2}
  \pgfmathsetmacro{\H}{#3}

  \cuponly{\Rx}{\Ry}{\H}

  % 仍然画两侧四条竖线
  \pgfmathsetmacro{\cosA}{0.6427876}
  \pgfmathsetmacro{\cosB}{-0.6427876}
  \pgfmathsetmacro{\cosC}{0.2588190}
  \pgfmathsetmacro{\cosD}{-0.2588190}
  \pgfmathsetmacro{\sinA}{-0.7660444}
  \pgfmathsetmacro{\sinB}{-0.7660444}
  \pgfmathsetmacro{\sinC}{-0.9659258}
  \pgfmathsetmacro{\sinD}{-0.9659258}

  \pgfmathsetmacro{\xa}{\Rx*\cosA}
  \pgfmathsetmacro{\ya}{\Ry*\sinA}
  \pgfmathsetmacro{\xb}{\Rx*\cosB}
  \pgfmathsetmacro{\yb}{\Ry*\sinB}
  \pgfmathsetmacro{\xc}{\Rx*\cosC}
  \pgfmathsetmacro{\yc}{\Ry*\sinC}
  \pgfmathsetmacro{\xd}{\Rx*\cosD}
  \pgfmathsetmacro{\yd}{\Ry*\sinD}

  \draw[magenta] (\xa,\ya)..controls(\xc+1/3,\yc-\H/3)and(\xd-1/3,\yd-\H/3)..(\xb,\yb);
  \draw (\xc,\yc)arc[start angle=360, end angle=180, radius=(\xc-\xd)/2]--(\xd,\yd);

  % 画中间交叉线
 % \draw (\xc,\yc) -- (\xc,\yc+\H);
  %\draw (\xd,\yd) -- (\xd,\yd+\H);
  %\draw [red,decorate, decoration={snake}](\xb,\yb+\H/5)--(\xd,\yd+\H/5);
  
  \fill (\xb+.1,\yb-\H/7) circle(.05);
  \fill (\xd+.04,\yd-.15) circle(.05);
  \draw[decorate,decoration={snake},red] (\xb+.1,\yb-\H/7)--(\xd+.04,\yd-.15);

  % 在交点（中点）标 alpha
  %\node at (-\xc,\H/2) {$\alpha$};
\end{scope}
}
\newcommand{\yuanzhuCUP}[3]{% #1=Rx, #2=Ry, #3=H
\begin{scope}[scale=.7]
  \pgfmathsetmacro{\Rx}{#1}
  \pgfmathsetmacro{\Ry}{#2}
  \pgfmathsetmacro{\H}{#3}

  \cuponly{\Rx}{\Ry}{\H}

  % 仍然画两侧四条竖线
  \pgfmathsetmacro{\cosA}{0.6427876}
  \pgfmathsetmacro{\cosB}{-0.6427876}
  \pgfmathsetmacro{\cosC}{0.2588190}
  \pgfmathsetmacro{\cosD}{-0.2588190}
  \pgfmathsetmacro{\sinA}{-0.7660444}
  \pgfmathsetmacro{\sinB}{-0.7660444}
  \pgfmathsetmacro{\sinC}{-0.9659258}
  \pgfmathsetmacro{\sinD}{-0.9659258}

  \pgfmathsetmacro{\xa}{\Rx*\cosA}
  \pgfmathsetmacro{\ya}{\Ry*\sinA}
  \pgfmathsetmacro{\xb}{\Rx*\cosB}
  \pgfmathsetmacro{\yb}{\Ry*\sinB}
  \pgfmathsetmacro{\xc}{\Rx*\cosC}
  \pgfmathsetmacro{\yc}{\Ry*\sinC}
  \pgfmathsetmacro{\xd}{\Rx*\cosD}
  \pgfmathsetmacro{\yd}{\Ry*\sinD}

 % \draw (\xa,\ya)..controls(\xc+1/3,\yc-\H/3)and(\xd-1/3,\yd-\H/3)..(\xb,\yb);
 % \draw (\xc,\yc)arc[start angle=360, end angle=180, radius=(\xc-\xd)/2]--(\xd,\yd);

  % 画中间交叉线
 % \draw (\xc,\yc) -- (\xc,\yc+\H);
  %\draw (\xd,\yd) -- (\xd,\yd+\H);
  %\draw [red,decorate, decoration={snake}](\xb,\yb+\H/5)--(\xd,\yd+\H/5);
  
  %\fill (\xb,\yb+\H/5) circle(.05);
 % \fill (\xd,\yd+\H/5) circle(.05);

  % 在交点（中点）标 alpha
  %\node at (-\xc,\H/2) {$\alpha$};
\end{scope}
}

\newcommand{\yuanzhuCUPZERO}[3]{% #1=Rx, #2=Ry, #3=H
\begin{scope}[scale=.7]
  \pgfmathsetmacro{\Rx}{#1}
  \pgfmathsetmacro{\Ry}{#2}
  \pgfmathsetmacro{\H}{#3}

  \cuponly{\Rx}{\Ry}{\H}

  % 仍然画两侧四条竖线
  \pgfmathsetmacro{\cosA}{0.6427876}
  \pgfmathsetmacro{\cosB}{-0.6427876}
  \pgfmathsetmacro{\cosC}{0.2588190}
  \pgfmathsetmacro{\cosD}{-0.2588190}
  \pgfmathsetmacro{\sinA}{-0.7660444}
  \pgfmathsetmacro{\sinB}{-0.7660444}
  \pgfmathsetmacro{\sinC}{-0.9659258}
  \pgfmathsetmacro{\sinD}{-0.9659258}

  \pgfmathsetmacro{\xa}{\Rx*\cosA}
  \pgfmathsetmacro{\ya}{\Ry*\sinA}
  \pgfmathsetmacro{\xb}{\Rx*\cosB}
  \pgfmathsetmacro{\yb}{\Ry*\sinB}
  \pgfmathsetmacro{\xc}{\Rx*\cosC}
  \pgfmathsetmacro{\yc}{\Ry*\sinC}
  \pgfmathsetmacro{\xd}{\Rx*\cosD}
  \pgfmathsetmacro{\yd}{\Ry*\sinD}

  \draw (\xa,\ya) -- (\xa,\ya-\H/11)arc[start angle=180, end angle=0, radius=(\xc-\xa)/2]--(\xc,\yc);
  \draw (\xd,\yd) -- (\xd,\yd-\H/15)arc[start angle=360, end angle=180, radius=(\xd-\xb)/2]--(\xb,\yb);

  % 画中间交叉线
 % \draw (\xc,\yc) -- (\xc,\yc+\H);
  %\draw (\xd,\yd) -- (\xd,\yd+\H);
%  \draw [red,decorate, decoration={snake}](\xc,\yc-.2)--(\xd,\yd-.2);
  
 % \fill (\xd,\yd-.2) circle(.05);
 % \fill (\xc,\yc-.2) circle(.05);

\end{scope}
}

\newcommand{\yuanzhuCUPZEROBULK}[3]{% #1=Rx, #2=Ry, #3=H
\begin{scope}[scale=.7]
  \pgfmathsetmacro{\Rx}{#1}
  \pgfmathsetmacro{\Ry}{#2}
  \pgfmathsetmacro{\H}{#3}

  \cuponlybulk{\Rx}{\Ry}{\H}

  % 仍然画两侧四条竖线
  \pgfmathsetmacro{\cosA}{0.6427876}
  \pgfmathsetmacro{\cosB}{-0.6427876}
  \pgfmathsetmacro{\cosC}{0.2588190}
  \pgfmathsetmacro{\cosD}{-0.2588190}
  \pgfmathsetmacro{\sinA}{-0.7660444}
  \pgfmathsetmacro{\sinB}{-0.7660444}
  \pgfmathsetmacro{\sinC}{-0.9659258}
  \pgfmathsetmacro{\sinD}{-0.9659258}

  \pgfmathsetmacro{\xa}{\Rx*\cosA}
  \pgfmathsetmacro{\ya}{\Ry*\sinA}
  \pgfmathsetmacro{\xb}{\Rx*\cosB}
  \pgfmathsetmacro{\yb}{\Ry*\sinB}
  \pgfmathsetmacro{\xc}{\Rx*\cosC}
  \pgfmathsetmacro{\yc}{\Ry*\sinC}
  \pgfmathsetmacro{\xd}{\Rx*\cosD}
  \pgfmathsetmacro{\yd}{\Ry*\sinD}

  \draw (\xa,\ya) -- (\xa,\ya-\H/11)arc[start angle=180, end angle=0, radius=(\xc-\xa)/2]--(\xc,\yc);
  \draw (\xd,\yd) -- (\xd,\yd-\H/15)arc[start angle=360, end angle=180, radius=(\xd-\xb)/2]--(\xb,\yb);

  % 画中间交叉线
 % \draw (\xc,\yc) -- (\xc,\yc+\H);
  %\draw (\xd,\yd) -- (\xd,\yd+\H);
%  \draw [red,decorate, decoration={snake}](\xc,\yc-.2)--(\xd,\yd-.2);
  
 % \fill (\xd,\yd-.2) circle(.05);
 % \fill (\xc,\yc-.2) circle(.05);

\end{scope}
}

\newcommand{\yuanzhuCUPZEROTWO}[3]{% #1=Rx, #2=Ry, #3=H
\begin{scope}[scale=.7]
  \pgfmathsetmacro{\Rx}{#1}
  \pgfmathsetmacro{\Ry}{#2}
  \pgfmathsetmacro{\H}{#3}

  \cuponly{\Rx}{\Ry}{\H}

  % 仍然画两侧四条竖线
  \pgfmathsetmacro{\cosA}{0.6427876}
  \pgfmathsetmacro{\cosB}{-0.6427876}
  \pgfmathsetmacro{\cosC}{0.2588190}
  \pgfmathsetmacro{\cosD}{-0.2588190}
  \pgfmathsetmacro{\sinA}{-0.7660444}
  \pgfmathsetmacro{\sinB}{-0.7660444}
  \pgfmathsetmacro{\sinC}{-0.9659258}
  \pgfmathsetmacro{\sinD}{-0.9659258}

  \pgfmathsetmacro{\xa}{\Rx*\cosA}
  \pgfmathsetmacro{\ya}{\Ry*\sinA}
  \pgfmathsetmacro{\xb}{\Rx*\cosB}
  \pgfmathsetmacro{\yb}{\Ry*\sinB}
  \pgfmathsetmacro{\xc}{\Rx*\cosC}
  \pgfmathsetmacro{\yc}{\Ry*\sinC}
  \pgfmathsetmacro{\xd}{\Rx*\cosD}
  \pgfmathsetmacro{\yd}{\Ry*\sinD}

  \draw (\xc,\yc) -- (\xc,\yc-\H/13)arc[start angle=0, end angle=-180, radius=(\xc-\xd)/2]--(\xd,\yd);
  %\draw (\xd,\yd) -- (\xd,\yd-\H/15)arc[start angle=360, end angle=180, radius=(\xd-\xb)/2]--(\xb,\yb);

  % 画中间交叉线
 % \draw (\xc,\yc) -- (\xc,\yc+\H);
  %\draw (\xd,\yd) -- (\xd,\yd+\H);
%  \draw [red,decorate, decoration={snake}](\xc,\yc-.2)--(\xd,\yd-.2);
  
 % \fill (\xd,\yd-.2) circle(.05);
 % \fill (\xc,\yc-.2) circle(.05);

\end{scope}
}

\newcommand{\yuanzhuCUPZEROBULKARB}[3]{% #1=Rx, #2=Ry, #3=H
\begin{scope}[scale=.7]
  \pgfmathsetmacro{\Rx}{#1}
  \pgfmathsetmacro{\Ry}{#2}
  \pgfmathsetmacro{\H}{#3}

  \cuponlybulk{\Rx}{\Ry}{\H}

  % 仍然画两侧四条竖线
  \pgfmathsetmacro{\cosA}{0.6427876}
  \pgfmathsetmacro{\cosB}{-0.6427876}
  \pgfmathsetmacro{\cosC}{0.2588190}
  \pgfmathsetmacro{\cosD}{-0.2588190}
  \pgfmathsetmacro{\sinA}{-0.7660444}
  \pgfmathsetmacro{\sinB}{-0.7660444}
  \pgfmathsetmacro{\sinC}{-0.9659258}
  \pgfmathsetmacro{\sinD}{-0.9659258}

  \pgfmathsetmacro{\xa}{\Rx*\cosA}
  \pgfmathsetmacro{\ya}{\Ry*\sinA}
  \pgfmathsetmacro{\xb}{\Rx*\cosB}
  \pgfmathsetmacro{\yb}{\Ry*\sinB}
  \pgfmathsetmacro{\xc}{\Rx*\cosC}
  \pgfmathsetmacro{\yc}{\Ry*\sinC}
  \pgfmathsetmacro{\xd}{\Rx*\cosD}
  \pgfmathsetmacro{\yd}{\Ry*\sinD}

  \draw (\xa,\ya)[bend left=120]to(\xd,\yd);
  \draw (\xb,\yb)[bend right=120]to(\xc,\yc);
 % \filldraw[draw=blue,fill=blue!20](\xd,\yd-.4)--(\xd+.5,\yd-.4)--(\xd+.5,\yd-.2)--(\xd,\yd-.2)--(\xd,\yd-.4);

  % 画中间交叉线
 % \draw (\xc,\yc) -- (\xc,\yc+\H);
  %\draw (\xd,\yd) -- (\xd,\yd+\H);
%  \draw [red,decorate, decoration={snake}](\xc,\yc-.2)--(\xd,\yd-.2);
\node at (\xd/2+\xc/2,0){$v^*$};
  
 % \fill (\xd,\yd-.2) circle(.05);
 % \fill (\xc,\yc-.2) circle(.05);

\end{scope}
}

\newcommand{\yuanzhuCUPONE}[3]{% #1=Rx, #2=Ry, #3=H
\begin{scope}[scale=.7]
  \pgfmathsetmacro{\Rx}{#1}
  \pgfmathsetmacro{\Ry}{#2}
  \pgfmathsetmacro{\H}{#3}

  \cuponly{\Rx}{\Ry}{\H}

  % 仍然画两侧四条竖线
  \pgfmathsetmacro{\cosA}{0.6427876}
  \pgfmathsetmacro{\cosB}{-0.6427876}
  \pgfmathsetmacro{\cosC}{0.2588190}
  \pgfmathsetmacro{\cosD}{-0.2588190}
  \pgfmathsetmacro{\sinA}{-0.7660444}
  \pgfmathsetmacro{\sinB}{-0.7660444}
  \pgfmathsetmacro{\sinC}{-0.9659258}
  \pgfmathsetmacro{\sinD}{-0.9659258}

  \pgfmathsetmacro{\xa}{\Rx*\cosA}
  \pgfmathsetmacro{\ya}{\Ry*\sinA}
  \pgfmathsetmacro{\xb}{\Rx*\cosB}
  \pgfmathsetmacro{\yb}{\Ry*\sinB}
  \pgfmathsetmacro{\xc}{\Rx*\cosC}
  \pgfmathsetmacro{\yc}{\Ry*\sinC}
  \pgfmathsetmacro{\xd}{\Rx*\cosD}
  \pgfmathsetmacro{\yd}{\Ry*\sinD}

   \draw (\xa,\ya) -- (\xa,\ya-\H/11)arc[start angle=180, end angle=0, radius=(\xc-\xa)/2]--(\xc,\yc);
  \draw (\xd,\yd) -- (\xd,\yd-\H/15)arc[start angle=360, end angle=180, radius=(\xd-\xb)/2]--(\xb,\yb);

  % 画中间交叉线
 % \draw (\xc,\yc) -- (\xc,\yc+\H);
  %\draw (\xd,\yd) -- (\xd,\yd+\H);
  \draw [red,decorate, decoration={snake}](\xc,\yc-.2)--(\xd,\yd-.2);
  
  \fill (\xd,\yd-.2) circle(.07);
  \fill (\xc,\yc-.2) circle(.07);

\end{scope}
}
\newcommand{\diskup}[6]{% #1=x #2=y #3=rx #4=ry #5=rotation角度
    \begin{scope}[shift={(#1+#3,#2)}, rotate=#5]
        \draw[blue, ->] (0,0) ellipse[x radius=#3, y radius=#4];
        \draw[blue,->] (0,-#4)--(0.001,-#4);
    \end{scope}
    %标记点
    \foreach \i/\angle in {0/180,1/240,2/260,3/280,4/300,5/360}{
        \coordinate (#6\i) at ($
            ({#1+#3},{#2}) + 
            ({
                #3*cos(\angle)*cos(#5) - #4*sin(\angle)*sin(#5)
            },{
                #3*cos(\angle)*sin(#5) + #4*sin(\angle)*cos(#5)
            })
        $);
        % \node[above right, blue] at (#6\i) {#6\i};
    }
}

\newcommand{\diskuprev}[6]{% #1=x #2=y #3=rx #4=ry #5=rotation角度
    \begin{scope}[shift={(#1+#3,#2)}, rotate=#5]
        \draw[blue] (0,0) ellipse[x radius=#3, y radius=#4];
        \draw[blue,->] (0,-#4)--(-0.001,-#4);
    \end{scope}
    %标记点
    \foreach \i/\angle in {0/180,1/240,2/260,3/280,4/300,5/360}{
        \coordinate (#6\i) at ($
            ({#1+#3},{#2}) + 
            ({
                #3*cos(\angle)*cos(#5) - #4*sin(\angle)*sin(#5)
            },{
                #3*cos(\angle)*sin(#5) + #4*sin(\angle)*cos(#5)
            })
        $);
        % \node[above right, blue] at (#6\i) {#6\i};
    }
}

\newcommand{\diskupcolor}[6]{% #1=x #2=y #3=rx #4=ry #5=rotation角度
    \begin{scope}[shift={(#1+#3,#2)}, rotate=#5]
     \fill[blue!15] (0,0) ellipse[x radius=#3, y radius=#4];
        \draw[blue, ->] (0,0) ellipse[x radius=#3, y radius=#4];
        \draw[blue,->] (0,-#4)--(0.001,-#4);
    \end{scope}
    %标记点
    \foreach \i/\angle in {0/180,1/240,2/260,3/280,4/300,5/360}{
        \coordinate (#6\i) at ($
            ({#1+#3},{#2}) + 
            ({
                #3*cos(\angle)*cos(#5) - #4*sin(\angle)*sin(#5)
            },{
                #3*cos(\angle)*sin(#5) + #4*sin(\angle)*cos(#5)
            })
        $);
        % \node[above right, blue] at (#6\i) {#6\i};
    }
}

%%%%%%%%%%%%%%%%%%%%%%%%%%%%%%%%%%%%%%%%%%
\newcommand{\diskdown}[6]{% #1=x #2=y #3=rx #4=ry #5=rotation角度
    \begin{scope}[shift={(#1+#3,#2)}, rotate=#5]
        \draw[blue] (-#3,0) arc[start angle=180,end angle=360,x radius=#3, y radius=#4];
    \draw[blue,dashed] (#3,0) arc[start angle=0,end angle=180,x radius=#3, y radius=#4];
    \draw[blue,->] (0,-#4)--(0.001,-#4);
    \end{scope}
    %标记点
    \foreach \i/\angle in {0/180,1/240,2/260,3/280,4/300,5/360}{
        \coordinate (#6\i) at ($
            ({#1+#3},{#2}) + 
            ({
                #3*cos(\angle)*cos(#5) - #4*sin(\angle)*sin(#5)
            },{
                #3*cos(\angle)*sin(#5) + #4*sin(\angle)*cos(#5)
            })
        $);
        % \node[above right, blue] at (#6\i) {#6\i};
    }
}
\newcommand{\diskupcolorrev}[6]{% #1=x #2=y #3=rx #4=ry #5=rotation角度
    \begin{scope}[shift={(#1+#3,#2)}, rotate=#5]
     \fill[blue!15] (0,0) ellipse[x radius=#3, y radius=#4];
        \draw[blue] (0,0) ellipse[x radius=#3, y radius=#4];
        \draw[blue,->] (0,-#4)--(-0.001,-#4);
    \end{scope}
    %标记点
    \foreach \i/\angle in {0/180,1/240,2/260,3/280,4/300,5/360}{
        \coordinate (#6\i) at ($
            ({#1+#3},{#2}) + 
            ({
                #3*cos(\angle)*cos(#5) - #4*sin(\angle)*sin(#5)
            },{
                #3*cos(\angle)*sin(#5) + #4*sin(\angle)*cos(#5)
            })
        $);
        % \node[above right, blue] at (#6\i) {#6\i};
    }
}

\newcommand{\diskdowncolor}[6]{% #1=x #2=y #3=rx #4=ry #5=rotation角度
    \begin{scope}[shift={(#1+#3,#2)}, rotate=#5]
    \fill[blue!25] (-#3,0) arc[start angle=180,end angle=360,x radius=#3, y radius=#4];
    \fill[blue!25] (#3,0) arc[start angle=0,end angle=180,x radius=#3, y radius=#4];
        \draw[blue] (-#3,0) arc[start angle=180,end angle=360,x radius=#3, y radius=#4];
    \draw[blue,dashed] (#3,0) arc[start angle=0,end angle=180,x radius=#3, y radius=#4];
    \draw[blue,->] (0,-#4)--(0.001,-#4);
    \end{scope}
    %标记点
    \foreach \i/\angle in {0/180,1/240,2/260,3/280,4/300,5/360}{
        \coordinate (#6\i) at ($
            ({#1+#3},{#2}) + 
            ({
                #3*cos(\angle)*cos(#5) - #4*sin(\angle)*sin(#5)
            },{
                #3*cos(\angle)*sin(#5) + #4*sin(\angle)*cos(#5)
            })
        $);
        % \node[above right, blue] at (#6\i) {#6\i};
    }
}
%%%%%%%%%%%%%%%%%%%%%%
\newcommand{\twoqubitgate}[3]{% #1=rx, #2=ry, #3=h

\begin{scope}[scale=1]

% 参数
\pgfmathsetmacro{\rx}{#1}
\pgfmathsetmacro{\ry}{#2}
\pgfmathsetmacro{\h}{#3}
\pgfmathsetmacro{\gap}{2*\rx}  % 圆柱直径
\pgfmathsetmacro{\offset}{3*\rx} % 两边圆柱中心水平间距
\pgfmathsetmacro{\hy}{\h*0.2} %

% 大圆盘
    \diskup{0}{\h}{\rx}{\ry}{0}{P1}
    \foreach \i in {1,2,3,4}{
    \node[above left, green] at (P1\i) {P1\i};
    \fill[green] (P1\i) circle (2pt);
    }

    \diskdown{0}{0}{\rx}{\ry}{0}{P2}
    \foreach \i in {1,2,3,4}{
    \node[above left, green] at (P2\i) {P2\i};
    \fill[green] (P2\i) circle (2pt);
    }
    
    \diskup{\offset}{\h}{\rx}{\ry}{0}{P3}
    \foreach \i in {1,2,3,4}{
    \node[above left, green] at (P3\i) {P3\i};
    \fill[green] (P3\i) circle (2pt);
    }
    
    \diskdown{\offset}{0}{\rx}{\ry}{0}{P4}
    \foreach \i in {1,2,3,4}{
    \node[above left, green] at (P4\i) {P4\i};
    \fill[green] (P4\i) circle (2pt);
    }

% 外侧竖直柱子
\draw[blue] (0,0) -- (0,\h);
\draw[blue] (\offset+\gap,0) -- (\offset+\gap,\h);

% 中间小弧线连接
\draw[blue] (\gap,\h)--++(0,-\hy) arc[start angle=180,end angle=360,x radius=(\offset-\gap)*0.5, y radius=\ry]
--++(0,\hy);
\draw[blue] (\gap,0)--++(0,\hy) arc[start angle=180,end angle=0,x radius=(\offset-\gap)*0.5, y radius=\ry]
--++(0,-\hy);

%标记点

\end{scope}

}

\newcommand{\twoqubitgateRZZ}[3]{% #1=rx, #2=ry, #3=h

\begin{scope}[scale=1]
\node at (-1.2,1.2){\textit{$RZZ(\frac{\pi}{2})=$}};
% 参数
\pgfmathsetmacro{\rx}{#1}
\pgfmathsetmacro{\ry}{#2}
\pgfmathsetmacro{\h}{#3}
\pgfmathsetmacro{\gap}{2*\rx}  % 圆柱直径
\pgfmathsetmacro{\offset}{3*\rx} % 两边圆柱中心水平间距
\pgfmathsetmacro{\hy}{\h*0.1} %

% 大圆盘
    \diskup{0}{\h}{\rx}{\ry}{0}{P1}
    \foreach \i in {1,2,3,4}{
  %  \node[above left, green] at (P1\i) {P1\i};
  %  \fill[green] (P1\i) circle (2pt);
    }

    \diskdown{0}{0}{\rx}{\ry}{0}{P2}
    \foreach \i in {1,2,3,4}{
   % \node[above left, green] at (P2\i) {P2\i};
   % \fill[green] (P2\i) circle (2pt);
    }
    
    \diskup{\offset}{\h}{\rx}{\ry}{0}{P3}
    \foreach \i in {1,2,3,4}{
   % \node[above left, green] at (P3\i) {P3\i};
   % \fill[green] (P3\i) circle (2pt);
    }
    
    \diskdown{\offset}{0}{\rx}{\ry}{0}{P4}
    \foreach \i in {1,2,3,4}{
   % \node[above left, green] at (P4\i) {P4\i};
   % \fill[green] (P4\i) circle (2pt);
    }

% 外侧竖直柱子
\draw[blue] (0,0) -- (0,\h);
\draw[blue] (\offset+\gap,0) -- (\offset+\gap,\h);

% 中间小弧线连接
\draw[blue] (\gap,\h)--++(0,-\hy) arc[start angle=180,end angle=360,x radius=(\offset-\gap)*0.5, y radius=\ry]
--++(0,\hy);
\draw[blue] (\gap,0)--++(0,\hy) arc[start angle=180,end angle=0,x radius=(\offset-\gap)*0.5, y radius=\ry]
--++(0,-\hy);

%标记点
\draw(P14)arc[start angle=180,end angle=360,radius=.612];

\draw(P24)--++(0,.3)arc[start angle=180,end angle=0,radius=.612]--++(0,-.3);

\draw(P23)--++(0,.7)--++(1.58,1.3)--(P32);
\draw[white,WL](P13)--++(0,-.3)--++(1.58,-1.4)--(P42);
\draw(P13)--++(0,-.3)--++(1.58,-1.4)--(P42);
\draw(P11)--(P21);
\draw(P12)--(P22);
\draw(P33)--(P43);
\draw(P34)--(P44);
\end{scope}

}

\newcommand{\twoqubitgateSWAP}[3]{% #1=rx, #2=ry, #3=h

\begin{scope}[scale=1]
\node at (-1,1.3){\textit{SWAP=}};
% 参数
\pgfmathsetmacro{\rx}{#1}
\pgfmathsetmacro{\ry}{#2}
\pgfmathsetmacro{\h}{#3}
\pgfmathsetmacro{\gap}{2*\rx}  % 圆柱直径
\pgfmathsetmacro{\offset}{3*\rx} % 两边圆柱中心水平间距
\pgfmathsetmacro{\hy}{\h*0.1} %

% 大圆盘
    \diskup{0}{\h}{\rx}{\ry}{0}{P1}
    \foreach \i in {1,2,3,4}{
    %\node[above left, green] at (P1\i) {P1\i};
   % \fill[green] (P1\i) circle (2pt);
    }

    \diskdown{0}{0}{\rx}{\ry}{0}{P2}
    \foreach \i in {1,2,3,4}{
   % \node[above left, green] at (P2\i) {P2\i};
   % \fill[green] (P2\i) circle (2pt);
    }
    
    \diskup{\offset}{\h}{\rx}{\ry}{0}{P3}
    \foreach \i in {1,2,3,4}{
   % \node[above left, green] at (P3\i) {P3\i};
   % \fill[green] (P3\i) circle (2pt);
    }
    
    \diskdown{\offset}{0}{\rx}{\ry}{0}{P4}
    \foreach \i in {1,2,3,4}{
  %  \node[above left, green] at (P4\i) {P4\i};
  %  \fill[green] (P4\i) circle (2pt);
    }

% 外侧竖直柱子
\draw[blue] (0,0) -- (0,\h);
\draw[blue] (\offset+\gap,0) -- (\offset+\gap,\h);

% 中间小弧线连接
\draw[blue] (\gap,\h)--++(0,-\hy) arc[start angle=180,end angle=360,x radius=(\offset-\gap)*0.5, y radius=\ry]
--++(0,\hy);
\draw[blue] (\gap,0)--++(0,\hy) arc[start angle=180,end angle=0,x radius=(\offset-\gap)*0.5, y radius=\ry]
--++(0,-\hy);

%标记点
\foreach \i in {1,2,3,4}{
\draw(P3\i)--++(0,-.3)--++(-1.8,-1.6)--(P2\i);
}

\foreach \i in {1,2,3,4}{
\draw[white,WL](P1\i)--++(0,-.3)--++(1.8,-1.6)--(P4\i);
\draw(P1\i)--++(0,-.3)--++(1.8,-1.6)--(P4\i);
}

\end{scope}

}

%%%%%%%%%%%%%%%%%%%%%%%%%%%%%%%%%%%%%%%%%

\newcommand{\threequbitgate}[3]{% #1=rx, #2=ry, #3=h
\begin{scope}[scale=1]

% 参数
\pgfmathsetmacro{\rx}{#1}
\pgfmathsetmacro{\ry}{#2}
\pgfmathsetmacro{\h}{#3}
\pgfmathsetmacro{\gap}{2*\rx}  % 圆柱直径
\pgfmathsetmacro{\offset}{3*\rx} % 两边圆柱中心水平间距
\pgfmathsetmacro{\hy}{\h*0.2} %

% 大圆盘
    \diskup{0}{\h}{\rx}{\ry}{0}{P1}
    \foreach \i in {1,2,3,4}{
    \node[above left, red] at (P1\i) {P1\i};
    \fill[red] (P1\i) circle (2pt);
    }

    \diskdown{0}{0}{\rx}{\ry}{0}{P2}
    \foreach \i in {1,2,3,4}{
    \node[above left, red] at (P2\i) {P2\i};
    \fill[red] (P2\i) circle (2pt);
    }
    
    \diskup{\offset}{\h}{\rx}{\ry}{0}{P3}
    \foreach \i in {1,2,3,4}{
    \node[above left, red] at (P3\i) {P3\i};
    \fill[red] (P3\i) circle (2pt);
    }
    
    \diskdown{\offset}{0}{\rx}{\ry}{0}{P4}
    \foreach \i in {1,2,3,4}{
    \node[above left, red] at (P4\i) {P4\i};
    \fill[red] (P4\i) circle (2pt);
    }

    \diskup{2*\offset}{\h}{\rx}{\ry}{0}{P5}
    \foreach \i in {1,2,3,4}{
    \node[above left, red] at (P5\i) {P5\i};
    \fill[red] (P5\i) circle (2pt);
    }
    
    \diskdown{2*\offset}{0}{\rx}{\ry}{0}{P6}
    \foreach \i in {1,2,3,4}{
    \node[above left, red] at (P6\i) {P6\i};
    \fill[red] (P6\i) circle (2pt);
    }

% 外侧竖直柱子
\draw[blue] (0,0) -- (0,\h);
\draw[blue] (2*\offset+\gap,0) -- (2*\offset+\gap,\h);

% 中间小弧线连接
\foreach \a in {0,\offset}
{
\draw[blue] (\a+\gap,\h)--++(0,-\hy) arc[start angle=180,end angle=360,x radius=(\offset-\gap)*0.5, y radius=\ry]
--++(0,\hy);
\draw[blue] (\a+\gap,0)--++(0,\hy) arc[start angle=180,end angle=0,x radius=(\offset-\gap)*0.5, y radius=\ry]
--++(0,-\hy);
}

\end{scope}
}

\newcommand{\fourqubitstate}[3]{% #1=rx, #2=ry, #3=h

\begin{scope}[scale=1]

% 参数
\pgfmathsetmacro{\rx}{#1}
\pgfmathsetmacro{\ry}{#2}
\pgfmathsetmacro{\h}{#3}
\pgfmathsetmacro{\gap}{2*\rx}  % 圆柱直径
\pgfmathsetmacro{\offset}{3*\rx} % 两边圆柱中心水平间距
\pgfmathsetmacro{\hy}{\h*0.2} %

% 大圆盘
    \diskdown{0}{\h}{\rx}{\ry}{0}{P1}
    \foreach \i in {1,2,3,4}{
    %\node[above left, green] at (P1\i) {P1\i};
    %\fill[green] (P1\i) circle (2pt);
    }
    
    \diskdown{\offset}{\h}{\rx}{\ry}{0}{P2}
    \foreach \i in {1,2,3,4}{
    %\node[above left, green] at (P2\i) {P2\i};
    %\fill[green] (P2\i) circle (2pt);
    }

    \diskdown{2*\offset}{\h}{\rx}{\ry}{0}{P3}
    \foreach \i in {1,2,3,4}{
   % \node[above left, green] at (P3\i) {P3\i};
    %\fill[green] (P3\i) circle (2pt);
    }

    \diskdown{3*\offset}{\h}{\rx}{\ry}{0}{P4}
    \foreach \i in {1,2,3,4}{
    %\node[above left, green] at (P4\i) {P4\i};
    %\fill[green] (P4\i) circle (2pt);
    }

% 外侧竖直柱子
\draw[blue](0,\h)  arc[start angle=180, end angle=0, x radius=(3*\offset+\gap)*0.5, y radius=\h];

% 中间小弧线连接
\draw[blue] (\gap,\h) arc[start angle=180,end angle=0,x radius=(\offset-\gap)*0.5, y radius=\ry];
\draw[blue] (\offset+\gap,\h) arc[start angle=180,end angle=0,x radius=(\offset-\gap)*0.5, y radius=\ry];
\draw[blue] (2*\offset+\gap,\h) arc[start angle=180,end angle=0,x radius=(\offset-\gap)*0.5, y radius=\ry];

%标记点
\draw[magenta!80](P11)[bend left=90]to(P44);

\draw[magenta!80](P14)arc[start angle=180,end angle=0,radius=1];
\draw[magenta!80](P24)arc[start angle=180,end angle=0,radius=1];
\draw[magenta!80](P34)arc[start angle=180,end angle=0,radius=1];
\end{scope}

}

\newcommand{\threequbitgateCARTHREE}[3]{% #1=rx, #2=ry, #3=h
\begin{scope}[scale=1]

% 参数
\pgfmathsetmacro{\rx}{#1}
\pgfmathsetmacro{\ry}{#2}
\pgfmathsetmacro{\h}{#3}
\pgfmathsetmacro{\gap}{2*\rx}  % 圆柱直径
\pgfmathsetmacro{\offset}{3*\rx} % 两边圆柱中心水平间距
\pgfmathsetmacro{\hy}{\h*0.1} %

% 大圆盘
    \diskup{0}{\h}{\rx}{\ry}{0}{P1}
    \foreach \i in {1,2,3,4}{
   % \node[above left, red] at (P1\i) {P1\i};
   % \fill[red] (P1\i) circle (2pt);
    }

    \diskdown{0}{0}{\rx}{\ry}{0}{P2}
    \foreach \i in {1,2,3,4}{
   % \node[above left, red] at (P2\i) {P2\i};
   % \fill[red] (P2\i) circle (2pt);
    }
    
    \diskup{\offset}{\h}{\rx}{\ry}{0}{P3}
    \foreach \i in {1,2,3,4}{
  %  \node[above left, red] at (P3\i) {P3\i};
  %  \fill[red] (P3\i) circle (2pt);
    }
    
    \diskdown{\offset}{0}{\rx}{\ry}{0}{P4}
    \foreach \i in {1,2,3,4}{
   % \node[above left, red] at (P4\i) {P4\i};
   % \fill[red] (P4\i) circle (2pt);
    }

    \diskup{2*\offset}{\h}{\rx}{\ry}{0}{P5}
    \foreach \i in {1,2,3,4}{
   % \node[above left, red] at (P5\i) {P5\i};
   % \fill[red] (P5\i) circle (2pt);
    }
    
    \diskdown{2*\offset}{0}{\rx}{\ry}{0}{P6}
    \foreach \i in {1,2,3,4}{
  %  \node[above left, red] at (P6\i) {P6\i};
   % \fill[red] (P6\i) circle (2pt);
    }
    
     \draw[red,decorate,decoration={snake}](P42)++(0,+\h*0.6)--++(-3.35,0);
    
    \draw[magenta!80](P11)--(P21);
    \draw[white,WL](P12)--(P22);
    \draw(P12)--(P22);
    \draw[white,WL](P13)--(P23);
    \draw(P13)--(P23);
    \draw[magenta!80](P14)arc[start angle=180, end angle=360,radius=1];
     \draw[magenta!80](P24)--++(0,.5)arc[start angle=180, end angle=0,radius=1]--++(0,-.5);
     \draw(P32)--(P42);
     \draw(P33)--(P43);
      \draw[magenta!80](P34)arc[start angle=180, end angle=360,radius=1];
     \draw[magenta!80](P44)--++(0,.5)arc[start angle=180, end angle=0,radius=1]--++(0,-.5);
      \draw(P52)--(P62);
     \draw(P53)--(P63);
     \draw[magenta!80](P54)--(P64);

% 外侧竖直柱子
\draw[blue] (0,0) -- (0,\h);
\draw[blue] (2*\offset+\gap,0) -- (2*\offset+\gap,\h);

% 中间小弧线连接
\foreach \a in {0,\offset}
{
\draw[blue] (\a+\gap,\h)--++(0,-\hy) arc[start angle=180,end angle=360,x radius=(\offset-\gap)*0.5, y radius=\ry]
--++(0,\hy);
\draw[blue] (\a+\gap,0)--++(0,\hy) arc[start angle=180,end angle=0,x radius=(\offset-\gap)*0.5, y radius=\ry]
--++(0,-\hy);
}

      \fill(P21)++(0,+\h*0.58) circle(.1);
     \fill(P42)++(0,+\h*0.6) circle(.1);
\end{scope}
}

\newcommand{\threequbitgateCARSIX}[3]{% #1=rx, #2=ry, #3=h
\begin{scope}[scale=1]

% 参数
\pgfmathsetmacro{\rx}{#1}
\pgfmathsetmacro{\ry}{#2}
\pgfmathsetmacro{\h}{#3}
\pgfmathsetmacro{\gap}{2*\rx}  % 圆柱直径
\pgfmathsetmacro{\offset}{3*\rx} % 两边圆柱中心水平间距
\pgfmathsetmacro{\hy}{\h*0.1} %

% 大圆盘
    \diskup{0}{\h}{\rx}{\ry}{0}{P1}
    \foreach \i in {1,2,3,4}{
   % \node[above left, red] at (P1\i) {P1\i};
   % \fill[red] (P1\i) circle (2pt);
    }

    \diskdown{0}{0}{\rx}{\ry}{0}{P2}
    \foreach \i in {1,2,3,4}{
   % \node[above left, red] at (P2\i) {P2\i};
   % \fill[red] (P2\i) circle (2pt);
    }
    
    \diskup{\offset}{\h}{\rx}{\ry}{0}{P3}
    \foreach \i in {1,2,3,4}{
  %  \node[above left, red] at (P3\i) {P3\i};
  %  \fill[red] (P3\i) circle (2pt);
    }
    
    \diskdown{\offset}{0}{\rx}{\ry}{0}{P4}
    \foreach \i in {1,2,3,4}{
   % \node[above left, red] at (P4\i) {P4\i};
   % \fill[red] (P4\i) circle (2pt);
    }

    \diskup{2*\offset}{\h}{\rx}{\ry}{0}{P5}
    \foreach \i in {1,2,3,4}{
   % \node[above left, red] at (P5\i) {P5\i};
   % \fill[red] (P5\i) circle (2pt);
    }
    
    \diskdown{2*\offset}{0}{\rx}{\ry}{0}{P6}
    \foreach \i in {1,2,3,4}{
  %  \node[above left, red] at (P6\i) {P6\i};
   % \fill[red] (P6\i) circle (2pt);
    }
    
     \draw[red,decorate,decoration={snake}](P63)++(0,+\h*0.6)--++(-6.7,0);

    \draw[magenta!80](P11)--(P21);
    \draw[white,WL](P12)--(P22);
    \draw(P12)--(P22);
    \draw[white,WL](P13)--(P23);
    \draw(P13)--(P23);
    \draw[magenta!80](P14)arc[start angle=180, end angle=360,radius=1];
     \draw[magenta!80](P24)--++(0,.5)arc[start angle=180, end angle=0,radius=1]--++(0,-.5);
     \draw[white,WL](P32)--(P42);
     \draw(P32)--(P42);
      \draw[white,WL](P33)--(P43);
     \draw(P33)--(P43);
      \draw[magenta!80](P34)arc[start angle=180, end angle=360,radius=1];
     \draw[magenta!80](P44)--++(0,.5)arc[start angle=180, end angle=0,radius=1]--++(0,-.5);
      \draw[white,WL](P52)--(P62);
      \draw(P52)--(P62);
       \draw[white,WL](P53)--(P63);
     \draw(P53)--(P63);
     \draw[magenta!80](P54)--(P64);

% 外侧竖直柱子
\draw[blue] (0,0) -- (0,\h);
\draw[blue] (2*\offset+\gap,0) -- (2*\offset+\gap,\h);

% 中间小弧线连接
\foreach \a in {0,\offset}
{
\draw[blue] (\a+\gap,\h)--++(0,-\hy) arc[start angle=180,end angle=360,x radius=(\offset-\gap)*0.5, y radius=\ry]
--++(0,\hy);
\draw[blue] (\a+\gap,0)--++(0,\hy) arc[start angle=180,end angle=0,x radius=(\offset-\gap)*0.5, y radius=\ry]
--++(0,-\hy);
}

 \fill(P21)++(0,+\h*0.58) circle(.1);
    \fill(P63)++(0,+\h*0.6) circle(.1);

\end{scope}
}

\newcommand{\threequbitgateCARTHREEPAULI}[3]{% #1=rx, #2=ry, #3=h
\begin{scope}[scale=1]

% 参数
\pgfmathsetmacro{\rx}{#1}
\pgfmathsetmacro{\ry}{#2}
\pgfmathsetmacro{\h}{#3}
\pgfmathsetmacro{\gap}{2*\rx}  % 圆柱直径
\pgfmathsetmacro{\offset}{3*\rx} % 两边圆柱中心水平间距
\pgfmathsetmacro{\hy}{\h*0.1} %

% 大圆盘
    \diskup{0}{\h}{\rx}{\ry}{0}{P1}
    \foreach \i in {1,2,3,4}{
   % \node[above left, red] at (P1\i) {P1\i};
   % \fill[red] (P1\i) circle (2pt);
    }

    \diskdown{0}{0}{\rx}{\ry}{0}{P2}
    \foreach \i in {1,2,3,4}{
   % \node[above left, red] at (P2\i) {P2\i};
   % \fill[red] (P2\i) circle (2pt);
    }
    
    \diskup{\offset}{\h}{\rx}{\ry}{0}{P3}
    \foreach \i in {1,2,3,4}{
  %  \node[above left, red] at (P3\i) {P3\i};
  %  \fill[red] (P3\i) circle (2pt);
    }
    
    \diskdown{\offset}{0}{\rx}{\ry}{0}{P4}
    \foreach \i in {1,2,3,4}{
   % \node[above left, red] at (P4\i) {P4\i};
   % \fill[red] (P4\i) circle (2pt);
    }

    \diskup{2*\offset}{\h}{\rx}{\ry}{0}{P5}
    \foreach \i in {1,2,3,4}{
   % \node[above left, red] at (P5\i) {P5\i};
   % \fill[red] (P5\i) circle (2pt);
    }
    
    \diskdown{2*\offset}{0}{\rx}{\ry}{0}{P6}
    \foreach \i in {1,2,3,4}{
  %  \node[above left, red] at (P6\i) {P6\i};
   % \fill[red] (P6\i) circle (2pt);
    }
    
     %\draw[red,decorate,decoration={snake}](P42)++(0,+\h*0.6)--++(-3.35,0);

     \draw[red,decorate,decoration={snake}](P42)++(0,+\h*0.6)--++(-1,0.35);
     \draw[red,decorate,decoration={snake}](P21)++(0,+\h*0.6)--++(1.5,0.35);

    \draw[magenta!80](P11)--(P21);
    \draw[white,WL](P12)--(P22);
    \draw(P12)--(P22);
    \draw[white,WL](P13)--(P23);
    \draw(P13)--(P23);
    \draw[magenta!80](P14)arc[start angle=180, end angle=360,radius=1];
     \draw[magenta!80](P24)--++(0,.5)arc[start angle=180, end angle=0,radius=1]--++(0,-.5);
     \draw(P32)--(P42);
     \draw(P33)--(P43);
      \draw[magenta!80](P34)arc[start angle=180, end angle=360,radius=1];
     \draw[magenta!80](P44)--++(0,.5)arc[start angle=180, end angle=0,radius=1]--++(0,-.5);
      \draw(P52)--(P62);
     \draw(P53)--(P63);
     \draw[magenta!80](P54)--(P64);

% 外侧竖直柱子
\draw[blue] (0,0) -- (0,\h);
\draw[blue] (2*\offset+\gap,0) -- (2*\offset+\gap,\h);

% 中间小弧线连接
\foreach \a in {0,\offset}
{
\draw[blue] (\a+\gap,\h)--++(0,-\hy) arc[start angle=180,end angle=360,x radius=(\offset-\gap)*0.5, y radius=\ry]
--++(0,\hy);
\draw[blue] (\a+\gap,0)--++(0,\hy) arc[start angle=180,end angle=0,x radius=(\offset-\gap)*0.5, y radius=\ry]
--++(0,-\hy);
}

     \fill(P21)++(1.5,.3+\h*0.6) circle(.1);

    \fill(P42)++(-1,+\h*0.6+.3)circle(.1);
     
     \fill(P21)++(0,+\h*0.58) circle(.1);
     \fill(P42)++(0,+\h*0.6) circle(.1);
\end{scope}
}

\newcommand{\threequbitgateCARSIXPAULI}[3]{% #1=rx, #2=ry, #3=h
\begin{scope}[scale=1]

% 参数
\pgfmathsetmacro{\rx}{#1}
\pgfmathsetmacro{\ry}{#2}
\pgfmathsetmacro{\h}{#3}
\pgfmathsetmacro{\gap}{2*\rx}  % 圆柱直径
\pgfmathsetmacro{\offset}{3*\rx} % 两边圆柱中心水平间距
\pgfmathsetmacro{\hy}{\h*0.1} %

% 大圆盘
    \diskup{0}{\h}{\rx}{\ry}{0}{P1}
    \foreach \i in {1,2,3,4}{
   % \node[above left, red] at (P1\i) {P1\i};
   % \fill[red] (P1\i) circle (2pt);
    }

    \diskdown{0}{0}{\rx}{\ry}{0}{P2}
    \foreach \i in {1,2,3,4}{
   % \node[above left, red] at (P2\i) {P2\i};
   % \fill[red] (P2\i) circle (2pt);
    }
    
    \diskup{\offset}{\h}{\rx}{\ry}{0}{P3}
    \foreach \i in {1,2,3,4}{
  %  \node[above left, red] at (P3\i) {P3\i};
  %  \fill[red] (P3\i) circle (2pt);
    }
    
    \diskdown{\offset}{0}{\rx}{\ry}{0}{P4}
    \foreach \i in {1,2,3,4}{
   % \node[above left, red] at (P4\i) {P4\i};
   % \fill[red] (P4\i) circle (2pt);
    }

    \diskup{2*\offset}{\h}{\rx}{\ry}{0}{P5}
    \foreach \i in {1,2,3,4}{
   % \node[above left, red] at (P5\i) {P5\i};
   % \fill[red] (P5\i) circle (2pt);
    }
    
    \diskdown{2*\offset}{0}{\rx}{\ry}{0}{P6}
    \foreach \i in {1,2,3,4}{
  %  \node[above left, red] at (P6\i) {P6\i};
   % \fill[red] (P6\i) circle (2pt);
    }
    
     %\draw[red,decorate,decoration={snake}](P42)++(0,+\h*0.6)--++(-3.35,0);

     %\draw[red,decorate,decoration={snake}](P42)++(0,+\h*0.6)--++(-1,0.35);
     \draw[red,decorate,decoration={snake}](P21)++(0,+\h*0.6)--++(1.5,0.35);
        \draw[red,decorate, decoration={snake}](P42)++(-1+2.4,+\h*0.6+.3)--++(-2.4,0);
         \draw[red,decorate, decoration={snake}](P63)++(0,\h*.6)--++(-1.3,0.33);

    \draw[magenta!80](P11)--(P21);
    \draw[white,WL](P12)--(P22);
    \draw(P12)--(P22);
    \draw[white,WL](P13)--(P23);
    \draw(P13)--(P23);
    \draw[magenta!80](P14)arc[start angle=180, end angle=360,radius=1];
     \draw[magenta!80](P24)--++(0,.5)arc[start angle=180, end angle=0,radius=1]--++(0,-.5);
      \draw[white,WL](P32)--(P42);
     \draw(P32)--(P42);
     \draw[white,WL](P33)--(P43);
     \draw(P33)--(P43);
      \draw[magenta!80](P34)arc[start angle=180, end angle=360,radius=1];
     \draw[magenta!80](P44)--++(0,.5)arc[start angle=180, end angle=0,radius=1]--++(0,-.5);
     \draw[white,WL](P52)--(P62);
     \draw(P52)--(P62);
      \draw[white,WL](P53)--(P63);
     \draw(P53)--(P63);
     \draw[white,WL](P54)--(P64);
     \draw[magenta!80](P54)--(P64);

% 外侧竖直柱子
\draw[blue] (0,0) -- (0,\h);
\draw[blue] (2*\offset+\gap,0) -- (2*\offset+\gap,\h);

% 中间小弧线连接
\foreach \a in {0,\offset}
{
\draw[blue] (\a+\gap,\h)--++(0,-\hy) arc[start angle=180,end angle=360,x radius=(\offset-\gap)*0.5, y radius=\ry]
--++(0,\hy);
\draw[blue] (\a+\gap,0)--++(0,\hy) arc[start angle=180,end angle=0,x radius=(\offset-\gap)*0.5, y radius=\ry]
--++(0,-\hy);
}

  \fill(P21)++(1.5,.3+\h*0.6) circle(.1);

    \fill(P42)++(-1,+\h*0.6+.3)circle(.1);
     \fill(P42)++(-1+2.4,+\h*0.6+.3)circle(.1);

     \fill(P21)++(0,+\h*0.58) circle(.1);

    \fill(P63)++(0,+\h*0.6) circle(.1);
     \fill(P63)++(-1.3,+\h*0.6+.33) circle(.1);
     
\end{scope}

}

%%%%%%%%%%%%%%%%%%%%%%%%%%%%%%%%%%%%%%%%%%%%%%%%%%%%

\newcommand{\test}[6]{% #1=x #2=y #3=rx #4=ry #5=rotation角度
\begin{tikzpicture}
    \begin{scope}[shift={(#1+#3,#2)}, rotate=#5]
        \draw[blue, ->] (0,0) ellipse[x radius=#3, y radius=#4];
        \twoqubitgate{1}{0.5}{4};
        \draw[thick,red] (P23)--(P33);
        \draw[thick,red] (P24)--(P31);
    \end{scope}
\end{tikzpicture}
}

\newcommand{\testtwo}[6]{% #1=x #2=y #3=rx #4=ry #5=rotation角度
\begin{tikzpicture}
    \begin{scope}[shift={(#1+#3,#2)}, rotate=#5]
        \twoqubitgate{1}{0.5}{4};
        \draw[thick,red] (P23)--(P33);
        \draw[thick,red] (P24)--(P31);
    \end{scope}
\end{tikzpicture}
}

\newcommand{\threequbitgatetest}[3]{% #1=rx, #2=ry, #3=h
\begin{tikzpicture}
\begin{scope}[scale=1]

% 参数
\pgfmathsetmacro{\rx}{#1}
\pgfmathsetmacro{\ry}{#2}
\pgfmathsetmacro{\h}{#3}
\pgfmathsetmacro{\gap}{2*\rx}  % 圆柱直径
\pgfmathsetmacro{\offset}{3*\rx} % 两边圆柱中心水平间距
\pgfmathsetmacro{\hy}{\h*0.1} %

% 左右两个大圆盘
\foreach \x in {0,\offset,2*\offset}{
    \draw[white,WL] (\x,\h) arc[start angle=180,end angle=540,x radius=\rx, y radius=\ry];
    \draw[blue,->] (\x,\h) arc[start angle=180,end angle=540,x radius=\rx, y radius=\ry];
    \draw[blue,dashed] (\x,0) arc[start angle=180,end angle=0,x radius=\rx, y radius=\ry];
    \draw[blue,->] (\x,0) arc[start angle=180,end angle=360,x radius=\rx, y radius=\ry];
}

% 外侧竖直柱子
\draw[blue] (0,0) -- (0,\h);
\draw[blue] (2*\offset+\gap,0) -- (2*\offset+\gap,\h);

% 中间小弧线连接
\foreach \a in {0,\offset}
{
\draw[blue] (\a+\gap,\h)--++(0,-\hy) arc[start angle=180,end angle=360,x radius=(\offset-\gap)*0.5, y radius=\ry]
--++(0,\hy);
\draw[blue] (\a+\gap,0)--++(0,\hy) arc[start angle=180,end angle=0,x radius=(\offset-\gap)*0.5, y radius=\ry]
--++(0,-\hy);
}

%标记点
\foreach \i/\angle in {1/240,2/260,3/280,4/300} {
    \coordinate (P\i) at ({\gap/2+\rx*cos(\angle)}, {\h+\ry*sin(\angle)});
  %  \fill[red] (P\i) circle (2pt);
   % \node[above right, red] at (P\i) {$P_{\i}$};
}
\foreach \i/\angle in {5/240,6/260,7/280,8/300} {
    \coordinate (P\i) at ({\gap/2+\rx*cos(\angle)}, {\ry*sin(\angle)});
   % \fill[red] (P\i) circle (2pt);
   % \node[above right, red] at (P\i) {$P_{\i}$};
  %  \begin{scope}[shift={(P\i)}]
   % \draw(0,0)--(0,\H);
   % \end{scope}
   % \draw(P\i)--++(0,\H);
   
}

\foreach \i/\angle in {9/240,10/260,11/280,12/300} {
    \coordinate (P\i) at ({\gap/2+\offset+\rx*cos(\angle)}, {\h+\ry*sin(\angle)});
   % \fill[red] (P\i) circle (2pt);
   % \node[above right, red] at (P\i) {$P_{\i}$};
}
\foreach \i/\angle in {13/240,14/260,15/280,16/300} {
    \coordinate (P\i) at ({\gap/2+\offset+\rx*cos(\angle)}, {\ry*sin(\angle)});
   % \fill[red] (P\i) circle (2pt);
   % \node[above right, red] at (P\i) {$P_{\i}$};
}
\foreach \i/\angle in {17/240,18/260,19/280,20/300} {
    \coordinate (P\i) at ({\gap/2+2*\offset+\rx*cos(\angle)}, {\h+\ry*sin(\angle)});
    %\fill[red] (P\i) circle (2pt);
   % \node[above right, red] at (P\i) {$P_{\i}$};
}
\foreach \i/\angle in {21/240,22/260,23/280,24/300} {
    \coordinate (P\i) at ({\gap/2+2*\offset+\rx*cos(\angle)}, {\ry*sin(\angle)});
   % \fill[red] (P\i) circle (2pt);
    %\node[above right, red] at (P\i) {$P_{\i}$};
}

\draw(P1)--(P5);
\draw(P2)--(P6);
\draw(P3)--(P7);
\draw(P10)--(P14);
\draw(P11)--(P15);
\draw(P18)--(P22);
\draw(P19)--(P23);
\draw(P20)--(P24);

\draw(P4)arc[start angle=180,end angle=360,radius=\offset-\gap];
\draw(P8)--++(0,.5)arc[start angle=180,end angle=0,radius=\offset-\gap]--++(0,-.5);

\draw(P12)arc[start angle=180,end angle=360,radius=\offset-\gap];
\draw(P16)--++(0,.5)arc[start angle=180,end angle=0,radius=\offset-\gap]--++(0,-.5);

\end{scope}
\end{tikzpicture}
}

\newcommand{\copytensorone}[3]{% #1=rx, #2=ry, #3=h

\begin{scope}[scale=1]

% 参数
\pgfmathsetmacro{\rx}{#1}
\pgfmathsetmacro{\ry}{#2}
\pgfmathsetmacro{\h}{#3}
\pgfmathsetmacro{\gap}{2*\rx}  % 圆柱直径
\pgfmathsetmacro{\offset}{3*\rx} % 两边圆柱中心水平间距
\pgfmathsetmacro{\hy}{\h*0.2} %

% 大圆盘
    \diskup{0}{\h}{\rx}{\ry}{0}{P1}
    \foreach \i in {0,1,2,3,4,5}{
   % \node[above left, green] at (P1\i) {P1\i};
   % \fill[green] (P1\i) circle (2pt);
    }

    \diskdown{0}{0}{\rx}{\ry}{0}{P2}
    \foreach \i in {0,1,2,3,4,5}{
   % \node[above left, green] at (P2\i) {P2\i};
   % \fill[green] (P2\i) circle (2pt);
    }
    
    \diskuprev{\offset}{\h*0.5}{\rx}{\ry}{-90}{P3}
    \foreach \i in {0,1,2,3,4,5}{
  %  \node[above left, green] at (P3\i) {P3\i};
  %  \fill[green] (P3\i) circle (2pt);
    }

% 外侧竖直柱子
\draw[blue] (P20) -- (P10);

% 中间小弧线连接

\draw[blue] (P15) .. controls +(270:\h*0.2) and +(180:\h*0.2) .. (P30);
\draw[blue] (P25) .. controls +(90:\h*0.2) and +(180:\h*0.2) .. (P35);

\draw(P11)--(P21);
\draw(P12)[bend right=45]to(P33);
\draw(P13)[bend left=30]to(P23);
\draw(P14)[bend right=45]to(P31);
\draw(P24)[bend left=45]to(P34);

\draw(P22)[bend left=45]to(P32);

%标记点

\end{scope}

}

\newcommand{\copytensorcopy}[3]{% #1=rx, #2=ry, #3=h

\begin{scope}[scale=1]

% 参数
\pgfmathsetmacro{\rx}{#1}
\pgfmathsetmacro{\ry}{#2}
\pgfmathsetmacro{\h}{#3}
\pgfmathsetmacro{\gap}{2*\rx}  % 圆柱直径
\pgfmathsetmacro{\offset}{3*\rx} % 两边圆柱中心水平间距
\pgfmathsetmacro{\hy}{\h*0.2} %

%.. controls +(270:\h*0.2) and +(180:\h*0.2) .. (P30)--(P35);
% 外侧竖直柱子

% 大圆盘
    \diskupcolor{0}{\h}{\rx}{\ry}{0}{P1}

    \diskdowncolor{0}{0}{\rx}{\ry}{0}{P2}

    \diskupcolorrev{\offset}{\h*0.5}{\rx}{\ry}{-90}{P3}

\fill[blue!25] (P20)--(P10)--(P15) .. controls +(270:\h*0.2) and +(180:\h*0.2) .. (P30);
\fill[blue!25] (P20)--(P25) .. controls +(90:\h*0.2) and +(180:\h*0.2) .. (P35)--(P30);

    \diskupcolor{0}{\h}{\rx}{\ry}{0}{P1}
    
    \diskupcolorrev{\offset}{\h*0.5}{\rx}{\ry}{-90}{P3}
\draw[blue] (P20) -- (P10);
% 外侧竖直柱子
\draw[blue] (P20) -- (P10);

% 中间小弧线连接

\draw[blue] (P15) .. controls +(270:\h*0.2) and +(180:\h*0.2) .. (P30);
\draw[blue] (P25) .. controls +(90:\h*0.2) and +(180:\h*0.2) .. (P35);

%\diskupcolor{0}{\h}{\rx}{\ry}{0}{P1}
%\diskdowncolor{0}{0}{\rx}{\ry}{0}{P2}
%\diskupcolor{\offset}{\h*0.5}{\rx}{\ry}{-90}{P3}

    \node at (0.3,4.3){T.B.};
     \node at (0.3,-.8){T.B.};

     \node at (5.2,2){T.B.};
     \node at (3,1){S.B.};

\draw(P11)--(P21);
\draw(P12)--(P22);
\draw(P13)[bend right=30]to(P32);

\draw(P14)[bend right=30]to(P31);
\draw(P24)[bend left=45]to(P34);
\draw(P23)[bend left=45]to(P33);

%标记点

\end{scope}

}

\newcommand{\copytensorpure}[3]{% #1=rx, #2=ry, #3=h

\begin{scope}[scale=1]

% 参数
\pgfmathsetmacro{\rx}{#1}
\pgfmathsetmacro{\ry}{#2}
\pgfmathsetmacro{\h}{#3}
\pgfmathsetmacro{\gap}{2*\rx}  % 圆柱直径
\pgfmathsetmacro{\offset}{3*\rx} % 两边圆柱中心水平间距
\pgfmathsetmacro{\hy}{\h*0.2} %
% 外侧竖直柱子

% 大圆盘
    \diskup{0}{\h}{\rx}{\ry}{0}{P1}
    \foreach \i in {0,1,2,3,4,5}{
   % \node[above left, green] at (P1\i) {P1\i};
   % \fill[green] (P1\i) circle (2pt);
    }
    \diskdown{0}{0}{\rx}{\ry}{0}{P2}
    \foreach \i in {0,1,2,3,4,5}{
   % \node[above left, green] at (P2\i) {P2\i};
   % \fill[green] (P2\i) circle (2pt);
    }
    
    \diskuprev{\offset}{\h*0.5}{\rx}{\ry}{-90}{P3}
    \foreach \i in {0,1,2,3,4,5}{
  %  \node[above left, green] at (P3\i) {P3\i};
  %  \fill[green] (P3\i) circle (2pt);
    }
% 外侧竖直柱子
\draw[blue] (P20) -- (P10);

% 中间小弧线连接

\draw[blue] (P15) .. controls +(270:\h*0.2) and +(180:\h*0.2) .. (P30);
\draw[blue] (P25) .. controls +(90:\h*0.2) and +(180:\h*0.2) .. (P35);

%\diskupcolor{0}{\h}{\rx}{\ry}{0}{P1}
%\diskdowncolor{0}{0}{\rx}{\ry}{0}{P2}
%\diskupcolor{\offset}{\h*0.5}{\rx}{\ry}{-90}{P3}

\draw(P11)--(P21);
\draw(P12)--(P22);
\draw(P13)[bend right=30]to(P32);

\draw(P14)[bend right=30]to(P31);
\draw(P24)[bend left=45]to(P34);
\draw(P23)[bend left=45]to(P33);

%标记点

\end{scope}

}

\newcommand{\genus}[2]{ % #1 = 缩放比例
  \begin{scope}
  \pgfmathsetmacro{\r}{0.5*#1};
  \pgfmathsetmacro{\rr}{0.8*#1};
  \pgfmathsetmacro{\x}{\r*0.866};
  \pgfmathsetmacro{\xx}{\rr*0.866};
    \draw[blue, thick] 
      (-\x, {-0.1*#1}) arc[start angle=150, end angle=30, radius={\r}];
    \draw[blue, thick] 
      (-\xx, 0.1*#1) arc[start angle=-150, end angle=-30, radius={\rr}];
      \node at (0,0.6*\rr) {#2};
  \end{scope}
}

\newcommand{\hexcell}[1]{ % #1 = radius of hexagon
  \begin{scope}
    % 六边形（竖直朝上）
    \draw[thick] (30:#1) \foreach \a in {90,150,210,270,330} { -- (\a:#1)} -- cycle;

    % 圆盘（中心缩放）
    \draw[magenta, thick] (0,0) circle({0.6*#1});

    % 蓝色“眼睛”形状
    \genus{#1*0.6}

    % 中心 -> 六条边的中点方向
    \foreach \a in {30,90,...,330} {
  \path
    coordinate (A) at ($({#1*cos(\a)},{#1*sin(\a)})!0.5!({#1*cos(\a-60)},{#1*sin(\a-60)})$)
    coordinate (B) at ($({#1*cos(\a)},{#1*sin(\a)})!0.5!({#1*cos(\a+60)},{#1*sin(\a+60)})$);
  \draw[ thick] ($ (A)!-0.5!(B) $) -- ($ (A)!1.5!(B) $);
}

  \end{scope}
}

%%%%%%%%%%%%%%%%%%%%%%%%%%%%%%%%%%%%%%%

\newcommand{\hexcellonlyblack}[1]{ % #1 = radius of hexagon
  \begin{scope}
    \draw[line width=5pt, black!40] (30:#1) \foreach \a in {90,150,210,270,330} { -- (\a:#1)} -- cycle;

  \end{scope}
}

\newcommand{\hexcellwithoutblack}[1]{ % #1 = radius of hexagon
  \begin{scope}
    % 六边形（竖直朝上）
    % \draw[thick] (30:#1) \foreach \a in {90,150,210,270,330} { -- (\a:#1)} -- cycle;

    % 圆盘（中心缩放）
    \draw[magenta, thick] (0,0) circle({0.6*#1});

    % 蓝色“眼睛”形状
    \genus{#1*0.6}

    % 中心 -> 六条边的中点方向
    \foreach \a in {30,90,...,330} {
  \path
    coordinate (A) at ($({#1*cos(\a)},{#1*sin(\a)})!0.5!({#1*cos(\a-60)},{#1*sin(\a-60)})$)
    coordinate (B) at ($({#1*cos(\a)},{#1*sin(\a)})!0.5!({#1*cos(\a+60)},{#1*sin(\a+60)})$);
  \draw[ thick] ($ (A)!-0.5!(B) $) -- ($ (A)!1.5!(B) $);
}

  \end{scope}
}

\newcommand{\hexcellreduced}[1]{ % #1 = radius of hexagon
  \begin{scope}
    % 六边形（竖直朝上）
    % \draw[thick] (30:#1) \foreach \a in {90,150,210,270,330} { -- (\a:#1)} -- cycle;

    % 圆盘（中心缩放）
    % \draw[magenta, thick] (0,0) circle({0.6*#1});

    % 蓝色“眼睛”形状
    % \genus{#1*0.6}

    % 中心 -> 六条边的中点方向
    \foreach \a in {30,90,...,330} {
  \path
    coordinate (A) at ($({#1*cos(\a)},{#1*sin(\a)})!0.5!({#1*cos(\a-60)},{#1*sin(\a-60)})$)
    coordinate (B) at ($({#1*cos(\a)},{#1*sin(\a)})!0.5!({#1*cos(\a+60)},{#1*sin(\a+60)})$);
  \draw[ thick] ($ (A)!-0.5!(B) $) -- ($ (A)!1.5!(B) $);
}

  \end{scope}
}

%%%%%%%%%%%%%%%%%%%%%%%%%%%%%%%%%%%%%%%%%%%%%%
\newcommand{\isingonlyblack}[1]{
  \begin{scope}
  \pgfmathsetmacro{\dy}{sqrt(3)*#1/2} % 垂直方向的中心间距

  % 底部两个六边形
  \begin{scope}[shift={(-\dy,0)}]
    \hexcellonlyblack{#1}
  \end{scope}

  \begin{scope}[shift={(\dy,0)}]
    \hexcellonlyblack{#1}
  \end{scope}

  % 顶部一个六边形
  \begin{scope}[shift={(0,#1*1.5)}]
    \hexcellonlyblack{#1}
  \end{scope}

  \end{scope}
}

\newcommand{\isingonlyblackopen}[1]{
  \begin{scope}[scale=#1]
  \pgfmathsetmacro{\dy}{sqrt(3)*0.6/2} % 垂直方向的中心间距

  % 底部两个六边形
  \begin{scope}[shift={(-\dy,0)}]
    \hexcellonlyblack{0.6}
  \end{scope}

  \begin{scope}[shift={(\dy,0)}]
    \hexcellonlyblack{0.6}
  \end{scope}

  % 顶部一个六边形
  \begin{scope}[shift={(0,0.6*1.5)}]
    \hexcellonlyblack{0.6}
  \end{scope}
  \draw[black!40,line width=5pt](-.5,1.2)--(-1,1.6);
  \draw[black!40,line width=5pt](.5,1.2)--(1,1.6);
  \draw[black!40,line width=5pt](-.5,-1)--(-.5,-.6);
  \draw[black!40,line width=5pt](.5,-1)--(.5,-.6);
  \draw[black!40,line width=5pt](-1,-.3)--(-1.5,-.6);
  \draw[black!40,line width=5pt](1,-.3)--(1.5,-.6);
  \draw[black!40,line width=5pt](-1,.3)--(-1.5,.6);
  \draw[black!40,line width=5pt](1,.3)--(1.5,.6);
  \draw[black!40,line width=5pt](0,1.4)--(0,1.8);
  \end{scope}
}

\newcommand{\isingMGoriginal}[1]{
  \begin{scope}
  \pgfmathsetmacro{\dy}{sqrt(3)*#1/2} % 垂直方向的中心间距

  % 底部两个六边形
  \begin{scope}[shift={(-\dy,0)}]
    \hexcellwithoutblack{#1}
  \end{scope}

  \begin{scope}[shift={(\dy,0)}]
    \hexcellwithoutblack{#1}
  \end{scope}

  % 顶部一个六边形
  \begin{scope}[shift={(0,#1*1.5)}]
    \hexcellwithoutblack{#1}
  \end{scope}

  \end{scope}
}

\newcommand{\disctwod}[3]{
\pgfmathsetmacro{\x}{#1}
\pgfmathsetmacro{\y}{#2}
\pgfmathsetmacro{\r}{#3}
\begin{scope}[shift={(\x,\y)}, scale=\r]
\fill[white](1,0)arc[start angle=0,end angle=360,radius=1];
  \draw[blue,line width=.7pt](1,0)arc[start angle=0,end angle=360,radius=1];
  \draw[blue,->](1,0)--(1,0.01);
\end{scope}
}

\newcommand{\isingMGoriginalopen}[1]{
  \begin{scope}
  \pgfmathsetmacro{\dy}{sqrt(3)*0.6/2} % 垂直方向的中心间距

  % 底部两个六边形
  \begin{scope}[shift={(-\dy,0)}]
    \hexcellwithoutblack{.6}
  \end{scope}

  \begin{scope}[shift={(\dy,0)}]
    \hexcellwithoutblack{.6}
  \end{scope}

  % 顶部一个六边形
  \begin{scope}[shift={(0,0.6*1.5)}]
    \hexcellwithoutblack{0.6}
  \end{scope}

  \draw[blue,line width=.7pt](0.25,1.7)[bend right=30]to(.55,1.45);
  \draw[blue,line width=.7pt](-0.25,1.7)[bend left=30]to(-.55,1.45);
  \draw[magenta,line width=.8pt](0.2,1.6)[bend right=30]to(.5,1.4);
  \draw[magenta,line width=.8pt](-0.2,1.6)[bend left=30]to(-.5,1.4);

  \draw[magenta,line width=.8pt](0.65,1.2)[bend right=30]to(1,.5);

  \draw[magenta,line width=.8pt](-0.65,1.2)[bend left=30]to(-1,.5);

   \draw[blue,line width=.7pt](0.75,1.2)[bend right=30]to(1.05,.6);
   \draw[blue,line width=.7pt](-0.75,1.2)[bend left=30]to(-1.05,.6);

    \draw[magenta,line width=.8pt](1.3,.2)[bend right=30]to(1.3,-.2);
   \draw[blue,line width=.7pt](1.4,.2)[bend right=30]to(1.4,-.2);

   \draw[magenta,line width=.8pt](-1.3,.2)[bend left=30]to(-1.3,-.2);
   \draw[blue,line width=.7pt](-1.4,.2)[bend left=30]to(-1.4,-.2);

   \draw[magenta,line width=.8pt](.4,-.8)[bend right=30]to(-.4,-.8);
   \draw[blue,line width=.7pt](.4,-.9)[bend right=30]to(-.4,-.9);

    \draw[magenta,line width=.8pt](.75,-.8)[bend left=30]to(1.2,-.5);
   \draw[blue,line width=.7pt](.75,-.9)[bend left=30]to(1.3,-.6);

    \draw[magenta,line width=.8pt](-.75,-.8)[bend right=30]to(-1.2,-.5);
   \draw[blue,line width=.7pt](-.75,-.9)[bend right=30]to(-1.3,-.6);

  \disctwod{0}{1.8}{.3};
  \disctwod{0.7}{1.3}{.3};
  \disctwod{-0.7}{1.3}{.3};

  \disctwod{1.25}{.4}{.3};
  \disctwod{-1.25}{.4}{.3};
  \disctwod{-1.3}{-.4}{.3};
  \disctwod{1.3}{-.4}{.3};

    \disctwod{.5}{-.8}{.3};
    \disctwod{-.5}{-.8}{.3};

  \end{scope}
}

\newcommand{\isingMGreducedopen}[1]{
  \begin{scope}
  \pgfmathsetmacro{\dy}{sqrt(3)*0.6/2} % 垂直方向的中心间距

  % 底部两个六边形
  \begin{scope}[shift={(-\dy,0)}]
    \hexcellreduced{.6}
  \end{scope}

  \begin{scope}[shift={(\dy,0)}]
    \hexcellreduced{.6}
  \end{scope}

  % 顶部一个六边形
  \begin{scope}[shift={(0,0.6*1.5)}]
    \hexcellreduced{0.6}
  \end{scope}

  \draw[blue,line width=.7pt](0.25,1.7)[bend right=30]to(.55,1.45);
  \draw[blue,line width=.7pt](-0.25,1.7)[bend left=30]to(-.55,1.45);
  \draw[magenta,line width=.8pt](0.2,1.6)[bend right=30]to(.5,1.4);
  \draw[magenta,line width=.8pt](-0.2,1.6)[bend left=30]to(-.5,1.4);

  \draw[magenta,line width=.8pt](0.65,1.2)[bend right=30]to(1,.5);

  \draw[magenta,line width=.8pt](-0.65,1.2)[bend left=30]to(-1,.5);

   \draw[blue,line width=.7pt](0.75,1.2)[bend right=30]to(1.05,.6);
   \draw[blue,line width=.7pt](-0.75,1.2)[bend left=30]to(-1.05,.6);

    \draw[magenta,line width=.8pt](1.3,.2)[bend right=30]to(1.3,-.2);
   \draw[blue,line width=.7pt](1.4,.2)[bend right=30]to(1.4,-.2);

   \draw[magenta,line width=.8pt](-1.3,.2)[bend left=30]to(-1.3,-.2);
   \draw[blue,line width=.7pt](-1.4,.2)[bend left=30]to(-1.4,-.2);

   \draw[magenta,line width=.8pt](.4,-.8)[bend right=30]to(-.4,-.8);
   \draw[blue,line width=.7pt](.4,-.9)[bend right=30]to(-.4,-.9);

    \draw[magenta,line width=.8pt](.75,-.8)[bend left=30]to(1.2,-.5);
   \draw[blue,line width=.7pt](.75,-.9)[bend left=30]to(1.3,-.6);

    \draw[magenta,line width=.8pt](-.75,-.8)[bend right=30]to(-1.2,-.5);
   \draw[blue,line width=.7pt](-.75,-.9)[bend right=30]to(-1.3,-.6);

  \disctwod{0}{1.8}{.3};
  \disctwod{0.7}{1.3}{.3};
  \disctwod{-0.7}{1.3}{.3};

  \disctwod{1.25}{.4}{.3};
  \disctwod{-1.25}{.4}{.3};
  \disctwod{-1.3}{-.4}{.3};
  \disctwod{1.3}{-.4}{.3};

    \disctwod{.5}{-.8}{.3};
    \disctwod{-.5}{-.8}{.3};

  \end{scope}
}

\newcommand{\isingMGreduced}[1]{
  \begin{scope}
  \pgfmathsetmacro{\dy}{sqrt(3)*#1/2} % 垂直方向的中心间距

  % 底部两个六边形
  \begin{scope}[shift={(-\dy,0)}]
    \hexcellreduced{#1}
  \end{scope}

  \begin{scope}[shift={(\dy,0)}]
    \hexcellreduced{#1}
  \end{scope}

  % 顶部一个六边形
  \begin{scope}[shift={(0,#1*1.5)}]
    \hexcellreduced{#1}
  \end{scope}

  \end{scope}
}

\newcommand{\threequbitroundtensor}[3]{% 参数: #1=rx, #2=ry, #3=整体scale
\begin{scope}[scale=#3]

% 基础参数
\pgfmathsetmacro{\R}{2} % 椭圆中心之间的距离

% 先算出三个椭圆中心的位置
\pgfmathsetmacro{\xa}{0}
\pgfmathsetmacro{\ya}{\R}

\pgfmathsetmacro{\xb}{\R*cos(210)}
\pgfmathsetmacro{\yb}{\R*sin(210)}

\pgfmathsetmacro{\xc}{\R*cos(-30)}
\pgfmathsetmacro{\yc}{\R*sin(-30)}

% 画上方的椭圆（不旋转）
\diskuprev{\xa-#1}{\ya}{#1}{#2}{0}{up};
\foreach \i in {0,1,2,3,4,5}{
% \fill[red] (up\i) circle (2pt);
}

% 画左下的椭圆（旋转120度）
\diskuprev{\xb-#1}{\yb}{#1}{#2}{120}{leftdown};
\foreach \i in {0,1,2,3,4,5}{
% \fill[red] (leftdown\i) circle (2pt);
}

% 画右下的椭圆（旋转-120度）
\diskuprev{\xc-#1}{\yc}{#1}{#2}{-120}{rightdown};
\foreach \i in {0,1,2,3,4,5}{
% \fill[red] (rightdown\i) circle (2pt);
}

% 连接曲线
\pgfmathsetmacro{\xbb}{\xb+#1*cos(120)}
\pgfmathsetmacro{\ybb}{\yb+#1*sin(120)}
\draw[blue] (\xa-#1,\ya) to[out=-90,in=30] (\xbb,\ybb);

\pgfmathsetmacro{\xbb}{\xb-#1*cos(120)}
\pgfmathsetmacro{\ybb}{\yb-#1*sin(120)}
\pgfmathsetmacro{\xcc}{\xc+#1*cos(-120)}
\pgfmathsetmacro{\ycc}{\yc+#1*sin(-120)}
\draw[blue] (\xbb,\ybb) to[out=30,in=150] (\xcc,\ycc);

\pgfmathsetmacro{\xcc}{\xc+#1*cos(60)}
\pgfmathsetmacro{\ycc}{\yb+#1*sin(60)}
\draw[blue] (\xcc,\ycc) to[out=150,in=-90] (\xa+#1,\ya);

%标记点
\end{scope}
}

%\begin{tikzpicture}
%    \threequbitroundtensor{1}{0.5}{3}
%\end{tikzpicture}

%%%%%%%%%%%%%%%%%%%%%%%%%%%%%%%%%%%%%%%%%%%%%%%%%%
\newcommand{\maxstate}[3]{% 参数: #1=rx, #2=ry, #3=整体scale
\tikz[baseline=(current bounding box.center), line width=0.3mm]{
\begin{scope}[scale=.7]
\threequbitroundtensor{#1}{#2}{#3};
\draw[magenta] (up1) to[out=270, in=30] (leftdown4);
\draw[magenta] (up4) to[out=270, in=150] (rightdown1);
\draw[magenta] (leftdown1) to[out=30, in=150] (rightdown4);

\draw (up2) to[out=270, in=30] (leftdown3);
\draw (up3) to[out=270, in=150] (rightdown2);
\draw (leftdown2) to[out=30, in=150] (rightdown3);
\end{scope}
}
}

\newcommand{\wstate}[3]{% 参数: #1=rx, #2=ry, #3=整体scale
\tikz[baseline=(current bounding box.center), line width=0.3mm] {
\begin{scope}[scale=1]
\threequbitroundtensor{#1}{#2}{#3};
\draw[magenta] (up1) to[out=270, in=30] (leftdown4);
\draw[magenta] (up4) to[out=270, in=150] (rightdown1);
\draw[magenta] (leftdown1) to[out=30, in=150] (rightdown4);

\draw (up3) to[out=300, in=0] (leftdown2);
\draw (up2) to[out=240, in=180] (rightdown3);
\draw (leftdown3) to[out=60, in=120] (rightdown2);

\path ($ (leftdown1)!0.5!(up4) $)++(-.3,0.25) node {\small $e^{i\frac{\pi}{3}}$};
\path ($ (leftdown4)!0.5!(rightdown1) $)++(0,-.35) node {\small $e^{i\frac{\pi}{3}}$};
\path ($ (up1)!0.5!(rightdown4) $)++(.35,0.25) node {\small $e^{i\frac{\pi}{3}}$};
\end{scope}
}
}

%%%%%%%%%%%%%%%%%%%%%%%%%%%%%%%%%%%%%%%%%%%%%%%%%%%%%%%%%%%%%%%%
\newcommand{\chargerelations}[2]{%第一个是node参数，第二个是charge位置
\tikz[baseline=(current bounding box.center), line width=0.3mm] {
\begin{scope}[yscale=.8]
    \pgfmathsetmacro{\x}{0.5}
    \pgfmathsetmacro{\y}{1}
    \pgfmathsetmacro{\xx}{0.3}

     \ifthenelse{\equal{#2}{leftup}}{
     \draw[red, decorate,decoration={snake}](-\x*0.5,\y*0.5)--++(-1,0);
      \fill[red] (-\x*0.5,\y*0.5) circle (2pt);
    }{
      \ifthenelse{\equal{#2}{leftdown}}{
      \draw[red, decorate,decoration={snake}](-\x*0.5,-\y*0.5)--++(-1,0);
        \fill[red] (-\x*0.5,-\y*0.5) circle (2pt);
      }{
        \ifthenelse{\equal{#2}{rightup}}{
        \draw[red, decorate,decoration={snake}](\x*0.5,\y*0.5)--++(-1,0);
          \fill[red] (\x*0.5,\y*0.5) circle (2pt);
        }{
          \ifthenelse{\equal{#2}{rightdown}}{
          \draw[red, decorate,decoration={snake}](\x*0.5,-\y*0.5)--++(-1,0);
            \fill[red] (\x*0.5,-\y*0.5) circle (2pt);
          }{
            % else: 未知位置，不画
          }
        }
        }
        }
    \draw[white,WLL] (-\x,\y) -- (\x,-\y);
    \draw[white,WLL] (\x,\y) -- (-\x,-\y);
    \draw (-\x,\y) -- (\x,-\y);
    \draw (\x,\y) -- (-\x,-\y);
    \node at (-\xx,0) {$#1$};
    
    \ifthenelse{\equal{#2}{leftup}}{
      \fill (-\x*0.5,\y*0.5) circle (2pt);
    }{
      \ifthenelse{\equal{#2}{leftdown}}{
        \fill (-\x*0.5,-\y*0.5) circle (2pt);
      }{
        \ifthenelse{\equal{#2}{rightup}}{
          \fill (\x*0.5,\y*0.5) circle (2pt);
        }{
          \ifthenelse{\equal{#2}{rightdown}}{
            \fill (\x*0.5,-\y*0.5) circle (2pt);
          }{
            % else: 未知位置，不画
          }
        }
        }
        }
\end{scope}
}
}

%%%%%%%%%%%%%%%%%%%%%%%%%%%%%%%%%%%%%%%%%%%%%%%%%%%%%%%%%%%%%%%%%%%%%%%%%%%%%%%%%%%%

\newcommand{\maxTN}{
\tikz[baseline=(current bounding box.center), line width=0.3mm] {
    \node[draw, circle, thick, minimum size=1.2cm] (max) at (0,0) {$Max$};
    \draw[thick, blue] (max) -- ++(90:2);
    \draw[thick, blue] (max) -- ++(210:2);
    \draw[thick, blue] (max) -- ++(-30:2);

}
}

\newcommand{\threewTN}{
\tikz[baseline=(current bounding box.center), line width=0.3mm] {
    % 右边三个 w-state 圆圈
    \node[draw, circle, thick, minimum size=1.2cm] (top) at (0,1) {$w$};
    \node[draw, circle, thick, minimum size=1.2cm] (left) at (-1*0.866, -1*0.5) {$w$};
    \node[draw, circle, thick, minimum size=1.2cm] (right) at (1*0.866, -1*0.5) {$w$};
    
    % 内部连线
    \draw[thick, blue] (top) -- (left);
    \draw[thick, blue] (top) -- (right);
    \draw[thick, blue] (left) -- (right);
    
    % 外部连线
    \draw[thick, blue] (top) -- ++(90:1.2);
    \draw[thick, blue] (left) -- ++(210:1.2);
    \draw[thick, blue] (right) -- ++(330:1.2);
}
}

\newcommand{\wTN}{
\tikz[baseline=(current bounding box.center), line width=0.3mm] {
    \node[draw, circle, thick, minimum size=1.2cm] (w) at (0,0) {$w$};
    \draw[thick, blue] (w) -- ++(90:2);
    \draw[thick, blue] (w) -- ++(210:2);
    \draw[thick, blue] (w) -- ++(-30:2);

}
}

\newcommand{\threemaxTN}{
\tikz[baseline=(current bounding box.center), line width=0.3mm] {
  % Max 节点
  \node[draw, circle, thick, minimum size=1.2cm] (top) at (0,2) {$Max$};
  \node[draw, circle, thick, minimum size=1.2cm] (left) at (-2*0.866,-2*0.5) {$Max$};
  \node[draw, circle, thick, minimum size=1.2cm] (right) at (2*0.866,-2*0.5) {$Max$};

  % e^{iaZ} 节点
  \path ($ (left)!0.5!(top) $) node[draw, circle, thick, minimum size=0.9cm](tl) {$e^{i\frac{\pi}{6}Z}$};
  \path ($ (right)!0.5!(top) $) node[draw, circle, thick, minimum size=0.9cm](tr) {$e^{i\frac{\pi}{6}Z}$};
  \path ($ (left)!0.5!(right) $) node[draw, circle, thick, minimum size=0.9cm](bottom) {$e^{i\frac{\pi}{6}Z}$};

  % 连线
  \draw[blue, thick] (top) -- (tl);
  \draw[blue, thick] (top) -- (tr);
  \draw[blue, thick] (tl) -- (left);
  \draw[blue, thick] (tr) -- (right);
  \draw[blue, thick] (left) -- (bottom);
  \draw[blue, thick] (right) -- (bottom);

  % 外部腿
  \draw[blue, thick] (top) -- ++(90:1.2);
  \draw[blue, thick] (left) -- ++(210:1.2);
  \draw[blue, thick] (right) -- ++(330:1.2);
}
}

%%%%%%%%%%%%%%%%%%%%%%%%%%%%%%%%%%%%%%%%%%%%%%%%%%%%%%

\newcommand{\MGket}[5]{% #1=Rx, #2=Ry, #3=H, #4=shiftX, #5=shiftY
  % 参数
  \pgfmathsetmacro{\rx}{#1}
  \pgfmathsetmacro{\ry}{#2}
  \pgfmathsetmacro{\h}{#3}

  \begin{scope}[scale=1, shift={(#4,#5)}]
    \diskup{0}{0}{\rx}{\ry}{0}{P}
    \coordinate(P1up) at ($(P1)+(0,\ry)$);
    \coordinate(P2up) at ($(P2)+(0,\ry)$);
    \coordinate(P3up) at ($(P3)+(0,\ry)$);
    \coordinate(P4up) at ($(P4)+(0,\ry)$);
    \draw[magenta](P1)--(P1up)[bend left=90]to(P4up)--(P4);
    \draw(P2)--(P2up)[bend left=90]to(P3up)--(P3);

    \draw[blue](0,0)arc[start angle=180,end angle=0,radius=\rx];

  \end{scope}
}

\newcommand{\CLketzero}[5]{% #1=Rx, #2=Ry, #3=H, #4=shiftX, #5=shiftY
  % 参数
  \pgfmathsetmacro{\rx}{#1}
  \pgfmathsetmacro{\ry}{#2}
  \pgfmathsetmacro{\h}{#3}

  \begin{scope}[scale=1, shift={(#4,#5)}]
    \diskup{0}{0}{\rx}{\ry}{0}{P}
    \coordinate(P1up) at ($(P1)+(0,\ry)$);
    \coordinate(P2up) at ($(P2)+(0,\ry)$);
    \coordinate(P3up) at ($(P3)+(0,\ry)$);
    \coordinate(P4up) at ($(P4)+(0,\ry)$);
    \draw(P1)--(P1up)[bend left=90]to(P2up)--(P2);
    \draw(P3)--(P3up)[bend left=90]to(P4up)--(P4);

    \draw[blue](0,0)arc[start angle=180,end angle=0,radius=\rx];

  \end{scope}
}

\newcommand{\CLketone}[5]{% #1=Rx, #2=Ry, #3=H, #4=shiftX, #5=shiftY
  % 参数
  \pgfmathsetmacro{\rx}{#1}
  \pgfmathsetmacro{\ry}{#2}
  \pgfmathsetmacro{\h}{#3}

  \begin{scope}[scale=1, shift={(#4,#5)}]
    \diskup{0}{0}{\rx}{\ry}{0}{P}
    \coordinate(P1up) at ($(P1)+(0,\ry)$);
    \coordinate(P2up) at ($(P2)+(0,\ry)$);
    \coordinate(P3up) at ($(P3)+(0,\ry)$);
    \coordinate(P4up) at ($(P4)+(0,\ry)$);
    \draw(P1)--(P1up)[bend left=90]to(P2up)--(P2);
    \draw(P3)--(P3up)[bend left=90]to(P4up)--(P4);
    \draw[red, decorate, decoration={snake}](P2up)--(P3up);
    \fill(P2up) circle(0.05);
    \fill(P3up) circle(0.05);

    \draw[blue](0,0)arc[start angle=180,end angle=0,radius=\rx];

  \end{scope}
}

\newcommand{\MGbra}[5]{% #1=Rx, #2=Ry, #3=H, #4=shiftX, #5=shiftY
  % 参数
  \pgfmathsetmacro{\rx}{#1}
  \pgfmathsetmacro{\ry}{#2}
  \pgfmathsetmacro{\h}{#3}

  \begin{scope}[scale=1, shift={(#4,#5)}]
    \diskup{0}{0}{\rx}{\ry}{0}{P}
    \coordinate(P1down) at ($(P1)+(0,-0.3*\ry)$);
    \coordinate(P2down) at ($(P2)+(0,-0.3*\ry)$);
    \coordinate(P3down) at ($(P3)+(0,-0.3*\ry)$);
    \coordinate(P4down) at ($(P4)+(0,-0.3*\ry)$);
    \draw[magenta](P1)--(P1down)[bend right=90]to(P4down)--(P4);
    \draw(P2)--(P2down)[bend right=90]to(P3down)--(P3);

    \draw[blue](0,0)arc[start angle=180,end angle=360,radius=\rx];

  \end{scope}
}

\newcommand{\CLbra}[5]{% #1=Rx, #2=Ry, #3=H, #4=shiftX, #5=shiftY
  % 参数
  \pgfmathsetmacro{\rx}{#1}
  \pgfmathsetmacro{\ry}{#2}
  \pgfmathsetmacro{\h}{#3}

  \begin{scope}[scale=1, shift={(#4,#5)}]
    \diskup{0}{0}{\rx}{\ry}{0}{P}
    \coordinate(P1down) at ($(P1)+(0,-0.3*\ry)$);
    \coordinate(P2down) at ($(P2)+(0,-0.3*\ry)$);
    \coordinate(P3down) at ($(P3)+(0,-0.3*\ry)$);
    \coordinate(P4down) at ($(P4)+(0,-0.3*\ry)$);
    \draw(P1)--(P1down)[bend right=90]to(P2down)--(P2);
    \draw(P3)--(P3down)[bend right=90]to(P4down)--(P4);

    \draw[blue](0,0)arc[start angle=180,end angle=360,radius=\rx];

    \draw[red,decorate,decoration={snake}](P3down)--(P2down);
    \fill(P3down) circle(0.05);
    \fill(P2down) circle(0.05);

  \end{scope}
}

\newcommand{\MGonequbitgate}[5]{% #1=Rx, #2=Ry, #3=H
% 参数
\pgfmathsetmacro{\rx}{#1}
\pgfmathsetmacro{\ry}{#2}
\pgfmathsetmacro{\h}{#3}
\pgfmathsetmacro{\gap}{2*\rx}  % 圆柱直径
\begin{scope}[scale=1, shift={(#4,#5)}]
    \diskup{0}{\h}{\rx}{\ry}{0}{P1}

    \diskdown{0}{0}{\rx}{\ry}{0}{P2}

    % 外侧竖直柱子
    \draw[blue] (0,0) -- (0,\h);
    \draw[blue] (\gap,0) -- (\gap,\h); 

    \draw (P11) [magenta]--(P21);
    \draw (P12)--(P23);
    \draw (P13)--(P22);
    \draw(P14)[magenta]--(P24);
\end{scope}
}

\newcommand{\CLrxgate}[5]{% #1=Rx, #2=Ry, #3=H
% 参数
\pgfmathsetmacro{\rx}{#1}
\pgfmathsetmacro{\ry}{#2}
\pgfmathsetmacro{\h}{#3}
\pgfmathsetmacro{\gap}{2*\rx}  % 圆柱直径
\begin{scope}[scale=1, shift={(#4,#5)}]
    \diskup{0}{\h}{\rx}{\ry}{0}{P1}

    \diskdown{0}{0}{\rx}{\ry}{0}{P2}

    % 外侧竖直柱子
    \draw[blue] (0,0) -- (0,\h);
    \draw[blue] (\gap,0) -- (\gap,\h); 

    \draw (P11)--(P21);
    
    \draw (P13)--(P22);
    \draw[white,WL] (P12)--(P23);
    \draw (P12)--(P23);
    \draw(P14)--(P24);
\end{scope}
}
\newcommand{\CLrxgatetwo}[5]{% #1=Rx, #2=Ry, #3=H
% 参数
\pgfmathsetmacro{\rx}{#1}
\pgfmathsetmacro{\ry}{#2}
\pgfmathsetmacro{\h}{#3}
\pgfmathsetmacro{\gap}{2*\rx}  % 圆柱直径
\begin{scope}[scale=1, shift={(#4,#5)}]
    \diskup{0}{\h}{\rx}{\ry}{0}{P1}

    \diskdown{0}{0}{\rx}{\ry}{0}{P2}

    % 外侧竖直柱子
    \draw[blue] (0,0) -- (0,\h);
    \draw[blue] (\gap,0) -- (\gap,\h); 

    \draw (P11)--(P21);
    
    \draw (P13)--(P22);
    \draw[white,WL] (P12)--(P23);
    \draw (P12)--(P23);
    \draw[red](P14)--(P24);
\end{scope}
}

\newcommand{\CLIDgate}[5]{% #1=Rx, #2=Ry, #3=H
% 参数
\pgfmathsetmacro{\rx}{#1}
\pgfmathsetmacro{\ry}{#2}
\pgfmathsetmacro{\h}{#3}
\pgfmathsetmacro{\gap}{2*\rx}  % 圆柱直径
\begin{scope}[scale=1, shift={(#4,#5)}]
    \diskup{0}{\h}{\rx}{\ry}{0}{P1}

    \diskdown{0}{0}{\rx}{\ry}{0}{P2}

    % 外侧竖直柱子
    \draw[blue] (0,0) -- (0,\h);
    \draw[blue] (\gap,0) -- (\gap,\h); 

    \draw (P11)--(P21);
    
    \draw (P13)--(P23);
    \draw (P12)--(P22);
    \draw(P14)--(P24);
\end{scope}
}
\newcommand{\CLIDgatefour}[5]{% #1=Rx, #2=Ry, #3=H
% 参数
\pgfmathsetmacro{\rx}{#1}
\pgfmathsetmacro{\ry}{#2}
\pgfmathsetmacro{\h}{#3}
\pgfmathsetmacro{\gap}{2*\rx}  % 圆柱直径
\begin{scope}[scale=1, shift={(#4,#5)}]
    \diskup{0}{\h}{\rx}{\ry}{0}{P1}

    \diskdown{0}{0}{\rx}{\ry}{0}{P2}

    % 外侧竖直柱子
    \draw[blue] (0,0) -- (0,\h);
    \draw[blue] (\gap,0) -- (\gap,\h); 

    \draw (P11)--(P21);
    
    \draw[red] (P13)--(P23);
    \draw (P12)--(P22);
    \draw(P14)--(P24);
\end{scope}
}

\newcommand{\CLHgate}[5]{% #1=Rx, #2=Ry, #3=H
% 参数
\pgfmathsetmacro{\rx}{#1}
\pgfmathsetmacro{\ry}{#2}
\pgfmathsetmacro{\h}{#3}
\pgfmathsetmacro{\gap}{2*\rx}  % 圆柱直径
\begin{scope}[scale=1, shift={(#4,#5)}]
    \diskup{0}{\h}{\rx}{\ry}{0}{P1}

    \diskdown{0}{0}{\rx}{\ry}{0}{P2}

    % 外侧竖直柱子
    \draw[blue] (0,0) -- (0,\h);
    \draw[blue] (\gap,0) -- (\gap,\h); 

    \draw (P11)--(P21);
    
    \draw (P13)--(P22);
    \draw[white,WL] (P12)--(P23);
    \draw (P12)--(P23);
    \draw(P14)--(P24);
\end{scope}
}

\newcommand{\MGtwoqubitgate}[5]{% #1=rx, #2=ry, #3=h

\begin{scope}[scale=1, shift={(#4,#5)}]
% 参数
\pgfmathsetmacro{\rx}{#1}
\pgfmathsetmacro{\ry}{#2}
\pgfmathsetmacro{\h}{#3}
\pgfmathsetmacro{\gap}{2*\rx}  % 圆柱直径
\pgfmathsetmacro{\offset}{3*\rx} % 两边圆柱中心水平间距
\pgfmathsetmacro{\hy}{\h*0.1} %

% 大圆盘
    \diskup{0}{\h}{\rx}{\ry}{0}{P1}

    \diskdown{0}{0}{\rx}{\ry}{0}{P2}
    
    \diskup{\offset}{\h}{\rx}{\ry}{0}{P3}
    
    \diskdown{\offset}{0}{\rx}{\ry}{0}{P4}

    % 外侧竖直柱子
    \draw[blue] (0,0) -- (0,\h);
    \draw[blue] (\offset+\gap,0) -- (\offset+\gap,\h);
    
    % 中间小弧线连接
    \draw[blue] (\gap,\h)--++(0,-\hy) arc[start angle=180,end angle=360,x radius=(\offset-\gap)*0.5, y radius=\ry]
    --++(0,\hy);
    \draw[blue] (\gap,0)--++(0,\hy) arc[start angle=180,end angle=0,x radius=(\offset-\gap)*0.5, y radius=\ry]
    --++(0,-\hy);
    
    %标记点
    \draw(P14)[magenta]arc[start angle=180,end angle=360,radius=.6];
    
    \draw(P24)[magenta]--++(0,.3)arc[start angle=180,end angle=0,radius=.6]--++(0,-.3);

    \draw(P23)--++(0,.7)--++(1.58,1.3)--(P32);
    \draw(P13)--++(0,-.3)--++(1.58,-1.4)--(P42);
    \draw(P11)[magenta]--(P21);
    \draw(P12)--(P22);
    \draw(P33)--(P43);
    \draw(P34)[magenta]--(P44);
\end{scope}
}

\newcommand{\CLtwoqubitgate}[5]{% #1=rx, #2=ry, #3=h

\begin{scope}[scale=1, shift={(#4,#5)}]
% 参数
\pgfmathsetmacro{\rx}{#1}
\pgfmathsetmacro{\ry}{#2}
\pgfmathsetmacro{\h}{#3}
\pgfmathsetmacro{\gap}{2*\rx}  % 圆柱直径
\pgfmathsetmacro{\offset}{3*\rx} % 两边圆柱中心水平间距
\pgfmathsetmacro{\hy}{\h*0.1} %

% 大圆盘
    \diskup{0}{\h}{\rx}{\ry}{0}{P1}

    \diskdown{0}{0}{\rx}{\ry}{0}{P2}
    
    \diskup{\offset}{\h}{\rx}{\ry}{0}{P3}
    
    \diskdown{\offset}{0}{\rx}{\ry}{0}{P4}

    % 外侧竖直柱子
    \draw[blue] (0,0) -- (0,\h);
    \draw[blue] (\offset+\gap,0) -- (\offset+\gap,\h);
    
    % 中间小弧线连接
    \draw[blue] (\gap,\h)--++(0,-\hy) arc[start angle=180,end angle=360,x radius=(\offset-\gap)*0.5, y radius=\ry]
    --++(0,\hy);
    \draw[blue] (\gap,0)--++(0,\hy) arc[start angle=180,end angle=0,x radius=(\offset-\gap)*0.5, y radius=\ry]
    --++(0,-\hy);
    
    %标记点
    \draw(P14)arc[start angle=180,end angle=360,radius=.6];
    
    \draw(P24)--++(0,.3)arc[start angle=180,end angle=0,radius=.6]--++(0,-.3);
    
    \coordinate(P42P) at ($(P42)+(0,2*\ry)$);
    \coordinate(P32P) at ($(P32)+(0,-\ry)$);
    
    \draw(P23)--++(0,.7)--(P32P)--(P32);
    \draw[white,WL](P13)--++(0,-.3)--(P42P)--(P42);
    \draw(P13)--++(0,-.3)--(P42P)--(P42);
    \draw(P11)--(P21);
    \draw(P12)--(P22);
    \draw(P33)--(P43);
    \draw(P34)--(P44);
\end{scope}
}

\newcommand{\CLtwoqubitgatefour}[5]{% #1=rx, #2=ry, #3=h

\begin{scope}[scale=1, shift={(#4,#5)}]
% 参数
\pgfmathsetmacro{\rx}{#1}
\pgfmathsetmacro{\ry}{#2}
\pgfmathsetmacro{\h}{#3}
\pgfmathsetmacro{\gap}{2*\rx}  % 圆柱直径
\pgfmathsetmacro{\offset}{3*\rx} % 两边圆柱中心水平间距
\pgfmathsetmacro{\hy}{\h*0.1} %

% 大圆盘
    \diskup{0}{\h}{\rx}{\ry}{0}{P1}

    \diskdown{0}{0}{\rx}{\ry}{0}{P2}
    
    \diskup{\offset}{\h}{\rx}{\ry}{0}{P3}
    
    \diskdown{\offset}{0}{\rx}{\ry}{0}{P4}

    % 外侧竖直柱子
    \draw[blue] (0,0) -- (0,\h);
    \draw[blue] (\offset+\gap,0) -- (\offset+\gap,\h);
    
    % 中间小弧线连接
    \draw[blue] (\gap,\h)--++(0,-\hy) arc[start angle=180,end angle=360,x radius=(\offset-\gap)*0.5, y radius=\ry]
    --++(0,\hy);
    \draw[blue] (\gap,0)--++(0,\hy) arc[start angle=180,end angle=0,x radius=(\offset-\gap)*0.5, y radius=\ry]
    --++(0,-\hy);
    
    %标记点
    \draw(P14)arc[start angle=180,end angle=360,radius=.6];
    
    \draw(P24)--++(0,.3)arc[start angle=180,end angle=0,radius=.6]--++(0,-.3);
    
    \coordinate(P42P) at ($(P42)+(0,2*\ry)$);
    \coordinate(P32P) at ($(P32)+(0,-\ry)$);
    
    \draw(P23)--++(0,.7)--(P32P)--(P32);
    \draw[white,WL](P13)--++(0,-.3)--(P42P)--(P42);
    \draw(P13)--++(0,-.3)--(P42P)--(P42);
    \draw(P11)--(P21);
    \draw[red](P12)--(P22);
    \draw(P33)--(P43);
    \draw(P34)--(P44);
\end{scope}
}

\newcommand{\CLtwoqubitgatethree}[5]{% #1=rx, #2=ry, #3=h

\begin{scope}[scale=1, shift={(#4,#5)}]
% 参数
\pgfmathsetmacro{\rx}{#1}
\pgfmathsetmacro{\ry}{#2}
\pgfmathsetmacro{\h}{#3}
\pgfmathsetmacro{\gap}{2*\rx}  % 圆柱直径
\pgfmathsetmacro{\offset}{3*\rx} % 两边圆柱中心水平间距
\pgfmathsetmacro{\hy}{\h*0.1} %

% 大圆盘
    \diskup{0}{\h}{\rx}{\ry}{0}{P1}

    \diskdown{0}{0}{\rx}{\ry}{0}{P2}
    
    \diskup{\offset}{\h}{\rx}{\ry}{0}{P3}
    
    \diskdown{\offset}{0}{\rx}{\ry}{0}{P4}

    % 外侧竖直柱子
    \draw[blue] (0,0) -- (0,\h);
    \draw[blue] (\offset+\gap,0) -- (\offset+\gap,\h);
    
    % 中间小弧线连接
    \draw[blue] (\gap,\h)--++(0,-\hy) arc[start angle=180,end angle=360,x radius=(\offset-\gap)*0.5, y radius=\ry]
    --++(0,\hy);
    \draw[blue] (\gap,0)--++(0,\hy) arc[start angle=180,end angle=0,x radius=(\offset-\gap)*0.5, y radius=\ry]
    --++(0,-\hy);
    
    %标记点
    \draw(P14)arc[start angle=180,end angle=360,radius=.6];
    
    \draw(P24)--++(0,.3)arc[start angle=180,end angle=0,radius=.6]--++(0,-.3);
    
    \coordinate(P42P) at ($(P42)+(0,2*\ry)$);
    \coordinate(P32P) at ($(P32)+(0,-\ry)$);
    
    \draw[red](P23)--++(0,.7)--(P32P)--(P32);
    \draw[white,WL](P13)--++(0,-.3)--(P42P)--(P42);
    \draw[red](P13)--++(0,-.3)--(P42P)--(P42);
    \draw(P11)--(P21);
    \draw(P12)--(P22);
    \draw(P33)--(P43);
    \draw(P34)--(P44);
\end{scope}
}

\newcommand{\CLtwoqubitgatetwo}[5]{% #1=rx, #2=ry, #3=h

\begin{scope}[scale=1, shift={(#4,#5)}]
% 参数
\pgfmathsetmacro{\rx}{#1}
\pgfmathsetmacro{\ry}{#2}
\pgfmathsetmacro{\h}{#3}
\pgfmathsetmacro{\gap}{2*\rx}  % 圆柱直径
\pgfmathsetmacro{\offset}{3*\rx} % 两边圆柱中心水平间距
\pgfmathsetmacro{\hy}{\h*0.1} %

% 大圆盘
    \diskup{0}{\h}{\rx}{\ry}{0}{P1}

    \diskdown{0}{0}{\rx}{\ry}{0}{P2}
    
    \diskup{\offset}{\h}{\rx}{\ry}{0}{P3}
    
    \diskdown{\offset}{0}{\rx}{\ry}{0}{P4}

    % 外侧竖直柱子
    \draw[blue] (0,0) -- (0,\h);
    \draw[blue] (\offset+\gap,0) -- (\offset+\gap,\h);
    
    % 中间小弧线连接
    \draw[blue] (\gap,\h)--++(0,-\hy) arc[start angle=180,end angle=360,x radius=(\offset-\gap)*0.5, y radius=\ry]
    --++(0,\hy);
    \draw[blue] (\gap,0)--++(0,\hy) arc[start angle=180,end angle=0,x radius=(\offset-\gap)*0.5, y radius=\ry]
    --++(0,-\hy);
    
    %标记点
    \draw(P14)arc[start angle=180,end angle=360,radius=.6];
    
    \draw(P24)--++(0,.3)arc[start angle=180,end angle=0,radius=.6]--++(0,-.3);
    
    \coordinate(P42P) at ($(P42)+(0,2*\ry)$);
    \coordinate(P32P) at ($(P32)+(0,-\ry)$);
    
    \draw(P23)--++(0,.7)--(P32P)--(P32);
    \draw[white,WL](P13)--++(0,-.3)--(P42P)--(P42);
    \draw(P13)--++(0,-.3)--(P42P)--(P42);
    \draw[red](P11)--(P21);
    \draw(P12)--(P22);
    \draw(P33)--(P43);
    \draw(P34)--(P44);
\end{scope}
}

\newcommand{\CLtwoqubitgateone}[5]{% #1=rx, #2=ry, #3=h

\begin{scope}[scale=1, shift={(#4,#5)}]
% 参数
\pgfmathsetmacro{\rx}{#1}
\pgfmathsetmacro{\ry}{#2}
\pgfmathsetmacro{\h}{#3}
\pgfmathsetmacro{\gap}{2*\rx}  % 圆柱直径
\pgfmathsetmacro{\offset}{3*\rx} % 两边圆柱中心水平间距
\pgfmathsetmacro{\hy}{\h*0.1} %

% 大圆盘
    \diskup{0}{\h}{\rx}{\ry}{0}{P1}

    \diskdown{0}{0}{\rx}{\ry}{0}{P2}
    
    \diskup{\offset}{\h}{\rx}{\ry}{0}{P3}
    
    \diskdown{\offset}{0}{\rx}{\ry}{0}{P4}

    % 外侧竖直柱子
    \draw[blue] (0,0) -- (0,\h);
    \draw[blue] (\offset+\gap,0) -- (\offset+\gap,\h);
    
    % 中间小弧线连接
    \draw[blue] (\gap,\h)--++(0,-\hy) arc[start angle=180,end angle=360,x radius=(\offset-\gap)*0.5, y radius=\ry]
    --++(0,\hy);
    \draw[blue] (\gap,0)--++(0,\hy) arc[start angle=180,end angle=0,x radius=(\offset-\gap)*0.5, y radius=\ry]
    --++(0,-\hy);
    
    %标记点
    \draw(P14)arc[start angle=180,end angle=360,radius=.6];
    
    \draw[red](P24)--++(0,.3)arc[start angle=180,end angle=0,radius=.6]--++(0,-.3);
    
    \coordinate(P42P) at ($(P42)+(0,2*\ry)$);
    \coordinate(P32P) at ($(P32)+(0,-\ry)$);
    
    \draw(P23)--++(0,.7)--(P32P)--(P32);
    \draw[white,WL](P13)--++(0,-.3)--(P42P)--(P42);
    \draw(P13)--++(0,-.3)--(P42P)--(P42);
    \draw(P11)--(P21);
    \draw(P12)--(P22);
    \draw(P33)--(P43);
    \draw(P34)--(P44);
\end{scope}
}

\newcommand{\CLczgate}[5]{% #1=rx, #2=ry, #3=h

\begin{scope}[scale=1, shift={(#4,#5)}]
% 参数
\pgfmathsetmacro{\rx}{#1}
\pgfmathsetmacro{\ry}{#2}
\pgfmathsetmacro{\h}{#3}
\pgfmathsetmacro{\gap}{2*\rx}  % 圆柱直径
\pgfmathsetmacro{\offset}{3*\rx} % 两边圆柱中心水平间距
\pgfmathsetmacro{\hy}{\h*0.1} %

% 大圆盘
    \diskup{0}{\h}{\rx}{\ry}{0}{P1}

    \diskdown{0}{0}{\rx}{\ry}{0}{P2}
    
    \diskup{\offset}{\h}{\rx}{\ry}{0}{P3}
    
    \diskdown{\offset}{0}{\rx}{\ry}{0}{P4}

    % 外侧竖直柱子
    \draw[blue] (0,0) -- (0,\h);
    \draw[blue] (\offset+\gap,0) -- (\offset+\gap,\h);
    
    % 中间小弧线连接
    \draw[blue] (\gap,\h)--++(0,-\hy) arc[start angle=180,end angle=360,x radius=(\offset-\gap)*0.5, y radius=\ry]
    --++(0,\hy);
    \draw[blue] (\gap,0)--++(0,\hy) arc[start angle=180,end angle=0,x radius=(\offset-\gap)*0.5, y radius=\ry]
    --++(0,-\hy);
    
    %标记点
    \coordinate(P23P) at ($(P23)+(0,3*\ry)$);
    \coordinate(P14P) at ($(P14)+(0,-1.5*\ry)$);
    \coordinate(P42P) at ($(P42)+(0,3*\ry)$);
    \coordinate(P31P) at ($(P31)+(0,-1.5*\ry)$);

    \draw(P14)--(P14P)--(P42P)--(P42);
    
    \draw(P11)--(P21);
    \draw(P12)--(P22);
    \draw(P33)--(P43);
    \draw(P34)--(P44);

    \draw[white,WL](P13)[bend right=60]to(P32);
    \draw(P13)[bend right=60]to(P32);

    \draw[white,WL](P23)--(P23P)--(P31P)--(P31);
    \draw(P23)--(P23P)--(P31P)--(P31);
    
   % \draw[white,WL](P24)--++(0,.3)arc[start angle=180,end angle=0,radius=.6]--++(0,-.3);
    \draw(P24)--++(0,.3)arc[start angle=180,end angle=0,radius=.6]--++(0,-.3);
\end{scope}
}

\newcommand{\CLczgatethree}[5]{% #1=rx, #2=ry, #3=h

\begin{scope}[scale=1, shift={(#4,#5)}]
% 参数
\pgfmathsetmacro{\rx}{#1}
\pgfmathsetmacro{\ry}{#2}
\pgfmathsetmacro{\h}{#3}
\pgfmathsetmacro{\gap}{2*\rx}  % 圆柱直径
\pgfmathsetmacro{\offset}{3*\rx} % 两边圆柱中心水平间距
\pgfmathsetmacro{\hy}{\h*0.1} %

% 大圆盘
    \diskup{0}{\h}{\rx}{\ry}{0}{P1}

    \diskdown{0}{0}{\rx}{\ry}{0}{P2}
    
    \diskup{\offset}{\h}{\rx}{\ry}{0}{P3}
    
    \diskdown{\offset}{0}{\rx}{\ry}{0}{P4}

    % 外侧竖直柱子
    \draw[blue] (0,0) -- (0,\h);
    \draw[blue] (\offset+\gap,0) -- (\offset+\gap,\h);
    
    % 中间小弧线连接
    \draw[blue] (\gap,\h)--++(0,-\hy) arc[start angle=180,end angle=360,x radius=(\offset-\gap)*0.5, y radius=\ry]
    --++(0,\hy);
    \draw[blue] (\gap,0)--++(0,\hy) arc[start angle=180,end angle=0,x radius=(\offset-\gap)*0.5, y radius=\ry]
    --++(0,-\hy);
    
    %标记点
    \coordinate(P23P) at ($(P23)+(0,3*\ry)$);
    \coordinate(P14P) at ($(P14)+(0,-1.5*\ry)$);
    \coordinate(P42P) at ($(P42)+(0,3*\ry)$);
    \coordinate(P31P) at ($(P31)+(0,-1.5*\ry)$);

    \draw[red](P14)--(P14P)--(P42P)--(P42);
    
    \draw(P11)--(P21);
    \draw(P12)--(P22);
    \draw(P33)--(P43);
    \draw(P34)--(P44);

    \draw[white,WL](P13)[bend right=60]to(P32);
    \draw(P13)[bend right=60]to(P32);

    \draw[white,WL](P23)--(P23P)--(P31P)--(P31);
    \draw[red](P23)--(P23P)--(P31P)--(P31);
    
   % \draw[white,WL](P24)--++(0,.3)arc[start angle=180,end angle=0,radius=.6]--++(0,-.3);
    \draw(P24)--++(0,.3)arc[start angle=180,end angle=0,radius=.6]--++(0,-.3);
\end{scope}
}

\newcommand{\CLczgatefour}[5]{% #1=rx, #2=ry, #3=h

\begin{scope}[scale=1, shift={(#4,#5)}]
% 参数
\pgfmathsetmacro{\rx}{#1}
\pgfmathsetmacro{\ry}{#2}
\pgfmathsetmacro{\h}{#3}
\pgfmathsetmacro{\gap}{2*\rx}  % 圆柱直径
\pgfmathsetmacro{\offset}{3*\rx} % 两边圆柱中心水平间距
\pgfmathsetmacro{\hy}{\h*0.1} %

% 大圆盘
    \diskup{0}{\h}{\rx}{\ry}{0}{P1}

    \diskdown{0}{0}{\rx}{\ry}{0}{P2}
    
    \diskup{\offset}{\h}{\rx}{\ry}{0}{P3}
    
    \diskdown{\offset}{0}{\rx}{\ry}{0}{P4}

    % 外侧竖直柱子
    \draw[blue] (0,0) -- (0,\h);
    \draw[blue] (\offset+\gap,0) -- (\offset+\gap,\h);
    
    % 中间小弧线连接
    \draw[blue] (\gap,\h)--++(0,-\hy) arc[start angle=180,end angle=360,x radius=(\offset-\gap)*0.5, y radius=\ry]
    --++(0,\hy);
    \draw[blue] (\gap,0)--++(0,\hy) arc[start angle=180,end angle=0,x radius=(\offset-\gap)*0.5, y radius=\ry]
    --++(0,-\hy);
    
    %标记点
    \coordinate(P23P) at ($(P23)+(0,3*\ry)$);
    \coordinate(P14P) at ($(P14)+(0,-1.5*\ry)$);
    \coordinate(P42P) at ($(P42)+(0,3*\ry)$);
    \coordinate(P31P) at ($(P31)+(0,-1.5*\ry)$);

    \draw(P14)--(P14P)--(P42P)--(P42);
    
    \draw(P11)--(P21);
    \draw[red](P12)--(P22);
    \draw(P33)--(P43);
    \draw(P34)--(P44);

    \draw[white,WL](P13)[bend right=60]to(P32);
    \draw(P13)[bend right=60]to(P32);

    \draw[white,WL](P23)--(P23P)--(P31P)--(P31);
    \draw(P23)--(P23P)--(P31P)--(P31);
    
   % \draw[white,WL](P24)--++(0,.3)arc[start angle=180,end angle=0,radius=.6]--++(0,-.3);
    \draw(P24)--++(0,.3)arc[start angle=180,end angle=0,radius=.6]--++(0,-.3);
\end{scope}
}

\newcommand{\CLczgatefive}[5]{% #1=rx, #2=ry, #3=h

\begin{scope}[scale=1, shift={(#4,#5)}]
% 参数
\pgfmathsetmacro{\rx}{#1}
\pgfmathsetmacro{\ry}{#2}
\pgfmathsetmacro{\h}{#3}
\pgfmathsetmacro{\gap}{2*\rx}  % 圆柱直径
\pgfmathsetmacro{\offset}{3*\rx} % 两边圆柱中心水平间距
\pgfmathsetmacro{\hy}{\h*0.1} %

% 大圆盘
    \diskup{0}{\h}{\rx}{\ry}{0}{P1}

    \diskdown{0}{0}{\rx}{\ry}{0}{P2}
    
    \diskup{\offset}{\h}{\rx}{\ry}{0}{P3}
    
    \diskdown{\offset}{0}{\rx}{\ry}{0}{P4}

    % 外侧竖直柱子
    \draw[blue] (0,0) -- (0,\h);
    \draw[blue] (\offset+\gap,0) -- (\offset+\gap,\h);
    
    % 中间小弧线连接
    \draw[blue] (\gap,\h)--++(0,-\hy) arc[start angle=180,end angle=360,x radius=(\offset-\gap)*0.5, y radius=\ry]
    --++(0,\hy);
    \draw[blue] (\gap,0)--++(0,\hy) arc[start angle=180,end angle=0,x radius=(\offset-\gap)*0.5, y radius=\ry]
    --++(0,-\hy);
    
    %标记点
    \coordinate(P23P) at ($(P23)+(0,3*\ry)$);
    \coordinate(P14P) at ($(P14)+(0,-1.5*\ry)$);
    \coordinate(P42P) at ($(P42)+(0,3*\ry)$);
    \coordinate(P31P) at ($(P31)+(0,-1.5*\ry)$);

    \draw(P14)--(P14P)--(P42P)--(P42);
    
    \draw(P11)--(P21);
    \draw(P12)--(P22);
    \draw(P33)--(P43);
    \draw(P34)--(P44);

    \draw[white,WL](P13)[bend right=60]to(P32);
    \draw[red](P13)[bend right=60]to(P32);

    \draw[white,WL](P23)--(P23P)--(P31P)--(P31);
    \draw(P23)--(P23P)--(P31P)--(P31);
    
   % \draw[white,WL](P24)--++(0,.3)arc[start angle=180,end angle=0,radius=.6]--++(0,-.3);
    \draw(P24)--++(0,.3)arc[start angle=180,end angle=0,radius=.6]--++(0,-.3);
\end{scope}
}

\tableofcontents
\subsection{Introduction}

%\emph{Introduction}.---
Classical computing is approaching the limits imposed by Moore’s law \cite{waldrop2016chips}. Quantum computing has emerged as a promising alternative, inspired by the concept of quantum simulation introduced by Manin and Feynman \cite{Manin1980,feynman2018simulating}. A landmark breakthrough was Shor’s discovery of a quantum algorithm for integer factorization, which offers exponential speedup over known classical algorithms \cite{shor1994algorithms,shor1999polynomial}. In the Noisy Intermediate Scale Quantum (NISQ) era \cite{preskill2018quantum,bharti2022noisy,chen2023complexity}, several experimental demonstrations have claimed quantum advantage \cite{arute2019quantum,zhong2020quantum,Ebadi_2022,wu2021strong,huang2022quantum,madsen2022quantum,kim2023evidence,gao2025establishing}. On the other hand, numerous research groups have developed classical simulation techniques to challenge and benchmark these experiments using high-performance classical computation \cite{clifford2018classical,moylett2019classically,zhou2020limits,huang2020classical,pan2021simulating,gray2021hyper,wu2021strong,pan2022solving,oh2022classical,anand2023classical,beguvsic2024fast,tindall2024efficient,shao2024simulating,beguvsic2024fast,gao2024automateddiscoverybranchingrules}. %\textcolor{red}{Recent development includes error \cite{Bluvstein_2023}}

Understanding limitations of classical computation relative to quantum computation is a central question in quantum information science. Two well-known classes of quantum circuits admitting efficient classical simulation are Clifford circuits, which can be simulated using stabilizer formalism \cite{gottesman1997stabilizer,gottesman1998heisenberg,aaronson2004improved, NielsenChuang}, 
and planar Matchgates (sometimes referred to simply as Matchgates), which correspond to noninteracting fermionic systems, and can be translated to perfect matching of planar graphs and efficiently computed via Pfaffian \cite{valiant2002quantum,terhal2002classical}.
(The convention of efficiency means computations in a polynomial space and time.)
A further powerful approach involves classical simulation of states with low entanglement in a tensor network. When the entanglement entropy scales according to an area law, as is typical for gapped ground states in one and two dimensions, tensor network methods such as Matrix Product States (MPS) and Projected Entangled Pair States (PEPS) enable efficient simulation by reducing the computational cost to the system's boundary degrees of freedom \cite{eisert2010colloquium,verstraete2008matrix,orus2014practical}.

Valiant introduced the holographic reduction method to study the complexity of the counting problem in \cite{valiant2004holographic}. Motivated by this, Cai et al. developed the Holant framework to classify generating sets of low-degree tensors, allowing efficient classical simulation of the corresponding tensor networks \cite{cai2007holographic,cai2007valiant,cai2011computational}. 
Within this classification, Clifford circuits and planar Matchgates emerge as two major tractable families \cite{cai2007valiant, cai2018clifford}. 
In particular, the union of Clifford gates and planar Matchgates forms a universal gate set for quantum computation, making it crucial to understand the limitations of classical simulation methods for hybrid Clifford–Matchgate circuits.
Considerable progress has been made in extending classical simulation techniques for Clifford circuits to include a limited number of non-Clifford gates, often framed in terms of “Magic” resources relative to the Clifford group \cite{aaronson2004improved,bravyi2005universal,howard2014contextuality, amy2014polynomial,bravyi2016trading,heyfron2018efficient,seddon2019quantifying,bravyi2019simulation,kissinger2020reducing,bnb2020fast,leone2022stabilizer,liu2022many, fux2024disentangling,qian2024augmenting, ruiz2025quantum}. 
Parallel efforts have sought to generalize planar-Matchgate-based simulation methods beyond the regime of non-interacting fermionic systems \cite{little1974extension,vazirani1989nc,cimasoni2007dimers,jozsa2008matchgates, cimasoni2008dimers,cimasoni2009dimers,dijkgraaf2009dimer,bravyi2009contraction,curticapean2014counting,straub2016counting,brod2016efficient,shi2018variational,hebenstreit2019all,likhosherstov2020tractable,cudby2023gaussian,liao2023simulation,bampounis2024matchgate}, 
and to enable efficient simulation of hybrid Clifford-Matchgate circuits with particular structural constraints \cite{mishmash2023hierarchical,projansky2024entanglement,projansky2025gaussianity,huang2025augmenting}.
These developments collectively motivate a comprehensive study of classical simulation strategies in the presence of both Clifford and Matchgate operations.

A longstanding open problem in quantum computation is the development of a classical simulation framework that handles hybrid Clifford-Matchgate circuits, each of which individually admits efficient classical simulation. 
In this letter, we accomplish this goal by establishing a unified classical simulation framework for hybrid Clifford-Matchgate quantum circuits and their associated tensor networks. Our approach builds upon the three-dimensional Quon language introduced in \cite{liu2017quon}, and we refer to our framework as \textbf{Quon Classical Simulation (QCS)}.

Within the QCS framework, we compute the value of a tensor network $\Gamma$ with $n$ internal spins as

    \begin{equation}
   % \label{maineq}
    Z(\Gamma) =\left(\frac{1}{\sqrt{2}}\right)^{C(\Gamma)}\sum_{j=1}^{2^k}
    \operatorname{Pfaffian}(M_j(\Gamma)),
    \end{equation}
where $k$ is the number of \textbf{Magic holes} in QCS, a new notion in QCS that captures the essential exponential computational complexity. 
Here $C(\Gamma)$ is an integer (as a result of hole reductions), and $\text{Pfaffian}(M_i(\Gamma))$ is the Pfaffian of a certain matrix associated with $\Gamma$ (coming from crossing reductions), which can all be computed in polynomial time \cite{fisher1961statistical,kasteleyn1961statistics,temperley1961dimer}. 
When $k$ is $O(\log n)$, the values can be computed in polynomial time.
Both Cliffords and Matchgates have no Magic holes in QCS, namely $k=0$, therefore, both admit efficient classical simulations. More precisely, Clifford reductions mainly contribute to $\left(\frac{1}{\sqrt{2}}\right)^{C(\Gamma)}$; and Matchgate reductions mainly contribute to $\text{Pfaffian}(M_1(\Gamma))$.
These results are summarized in Fig.~\ref{fig:QCS complexity}.

\begin{figure}[H]
    \centering
\begin{tikzpicture}
\begin{scope}[scale=.9]
  % QCS 外圈
 % \node at (0,6) {\large Universal Quantum Computation(BQP)};
  \fill[blue!10, fill opacity=0.3] (0,1.5) circle (4);
  \draw[blue, thick] (0,1.4) circle (4);
  \node at (0,5.1) {QCS-EXP};
  \node at (0,4.65){(Universal Quantum Computation)};

  % QCS in poly time
  \fill[green!35, fill opacity=0.3] (0,1) circle (3.5);
  \draw[blue, thick] (0,1) circle (3.5);
  \node at (0,3.5) {QCS in polynomial time};

  % k = O(log n)
  \fill[yellow!25, fill opacity=0.3] (0,0.6) ellipse (3 and 2.5);
  \draw[blue, thick] (0,0.6) ellipse (3 and 2.5);
  \node at (0,2) {$\# Magic\ holes = O(\log(n))$};

  % k = 0
  \fill[orange!60, fill opacity=0.3] (0,0.3) ellipse (2 and 1.5);
  \draw[blue, thick] (0,0.3) ellipse (2 and 1.5);
  \node at (0,.8) {$\#Magic\ hole  = 0$};

  % Clifford 区域
  \fill[green!30, fill opacity=0.4] (-0.5,0) ellipse (1.1 and 0.6);
  \draw[blue, thick] (-0.5,0) ellipse (1.1 and 0.6);
  \node at (-.85,0) {\small Clifford};

  % Matchgate 区域
  \fill[magenta!30, fill opacity=0.4] (0.5,0) ellipse (1.1 and 0.6);
  \draw[blue, thick] (0.5,0) ellipse (1.1 and 0.6);
  \node at (0.68,0) {\small Matchgates};
\end{scope}
\end{tikzpicture}
    \caption{QCS complexity}
    \label{fig:QCS complexity}
\end{figure}

\emph{Quon language}.---
The Quon language is a 2+1D topological quantum field theory (TQFT) \cite{witten1988topological,atiyah1988topological} with space-time boundaries and defects.
TQFT can be regarded as a computational tool as suggested by Witten \cite{witten1988topological}. 
Fruitful computational tools in the Quon language can be derived from a simple partition function which has value $\sqrt{2}^{m}$ on the 3-disc with $m$ non-intersecting defect circles on the space boundary sphere. 
Following the Alterfold theory developed in \cite{liu20233,liu2023alterfold,liu2024functional,liu2024alterfold},
we extend the partition function uniquely to any 3-manifold with space time boundary and defects lines on the space boundary, satisfying four types of handle moves.
These handle moves allow for sewing and factoring of 3-manifolds, enabling a consistent and topologically invariant construction of the partition function.
Given any triangulation or handle decomposition of a 3-manifold with boundary and defects, we systematically remove neighborhoods of $k$-simplices via $k$-handle moves, ultimately obtaining a state-sum expression for the partition function.

Tensor networks have been widely applied in quantum information and quantum computing \cite{schollwock2011density,verstraete2008matrix,markov2008simulating}. In Quon language, we represent a tensor network as a \textbf{Crossing-Decorated 3-Manifold (CDM)} in the TQFT with space-time boundary and defects. The black box in the tensor network becomes a white box in CDM, and the value of the tensor network becomes the partition function of the CDM in the TQFT. More precisely, in QCS a tensor network in the 3D space is represented by its neighboring 3-manifold $M$ with crossing-decorated 2D boundary $\partial M$, see Fig.~\ref{fig:copytensor} as an example of CDM representation of COPY tensor.

\begin{figure}[H]
    \centering
    \begin{tikzpicture}
        
\begin{scope}[scale=.5]
  \fill[fill=gray] (0,0) circle (15pt);
  \draw[line width=5pt, black!50](0,.5) --(0,.5+2);
  \draw[line width=5pt, black!50](0,-.5) --(0,-.5-1.5);
  \draw[line width=5pt, black!50](.5,0) --(.5+2,0);
 
 %\node at(3.1,0) {=};

 \begin{scope}[scale=1,shift={(0,0)}]
\node at (4,0) {=};
 
\node at (14,0) {$=\ket{00}\bra{0}+\ket{11}\bra{1}.$};
  
\end{scope}

\end{scope}
\begin{scope}[scale=.5,shift={(5.5,-1.5)}]
\node at 
    \copytensorcopy{1}{0.5}{4};
\end{scope}
    \end{tikzpicture}
    \caption{Quon representation of the Copy tensor:
    The boundary surface of the 3-manifold are divided into time boundary (TB) and space boundary (SB). The space boundary is decorated by strings. The time boundary is the union of three input/output discs. Every boundary disc has four decorated points corresponds to a two-dimensional Hilbert space in TQFT, namely a single qubit space. In general, a disc with 2m boundary points corresponds to a $2^{m-1}$ dimensional vector space.}
    \label{fig:copytensor}
\end{figure}

Parameterized crossing is defined as a linear sum of two non-intersecting pairs of strings,
$$
\crossingleftnode := \alpha\  \doublestrings + \frac{1-\alpha}{\sqrt{2}} \updowncircle. 
$$
Moreover, a crossing is called \textbf{Clifford}, if the parameter is $\alpha=1,-1,i,-i$. They have better representation in terms of braids and pair of charges illustrated as follows with more efficient computational rules:

%\begin{tikzequation}
%\label{}
\begin{center}
    \begin{tikzpicture}[line width=.3mm]
        \begin{scope}
            \draw(0,.5)--(0,-.5);
            \draw(.5,.5)--(.5,-.5);
        \end{scope}
        \begin{scope}[shift={(1.5,0)}]
            \draw(0,.5)--(0,-.5);
            \draw(.5,.5)--(.5,-.5);
            \draw[red,decorate,decoration={snake}](0,0)--(.5,0);
            \fill(0,0)circle(.06);
            \fill(.5,0)circle(.06);
        \end{scope}
        \begin{scope}[shift={(3.5,0)}]
        \node at (-.5,0){$e^{\frac{-i\pi}{8}}$};
            \draw(0,.5)--(.5,-.5);
            \draw[white,WLL](.5,.5)--(0,-.5);
            \draw(.5,.5)--(0,-.5);
        \end{scope}

        \begin{scope}[shift={(5.5,0)}]
           \node at (-.5,0){$e^{\frac{i\pi}{8}}$};
            \draw(.5,.5)--(0,-.5);
             \draw[white,WLL](0,.5)--(.5,-.5);
             \draw(0,.5)--(.5,-.5);
        \end{scope}
    \node at (1,-0.5){$,$};
    \node at (2.5,-0.5){$,$};
    \node at (4.5,-.5){$,$};
    \node at (6.5,-.5){$.$};
    \end{tikzpicture}
\end{center}

Strings are world lines of braided anyons $\sigma$ satisfying Ising type fusion rule $\sigma\otimes\sigma=1\oplus\epsilon$ \cite{kitaev2006anyons, nayak2008non,ahlbrecht2009implementation}, where world lines of $\epsilon$ are denoted as red wires. Strings can also be viewed as world lines of Majoranas \cite{majorana, bravyi2002fermionic,sarma2015majorana}, and intersections of strings and wires are charges which behave like Majorana fermions as $\epsilon_i^2=1,\ \epsilon_i\epsilon_j+\epsilon_j\epsilon_i=2\delta_{ij}$, used in topological quantum computation \cite{kitaev2003fault,freedman2003topological, nayak2008non,bravyi2006universal,wang2010topological}.

A tensor network only has 1-skeletons, so the corresponding CDM is a handle body, which is homeomorphic to a solid torus. By isotopy, we move all decorations to the half side of the solid torus, so that we can read full information of the CDM from its 2D projection. This notion of \textbf{holes} are transparent in the 2D projection. In general, a hole is an element in the first homology $H_1$ of the 3-manifold $M$. 
A layer induces an element in the first cohomology $H^1(M,\mathbb{Z}_2)$ according to the parity of the winding number of a layer moving around holes.

\emph{Quon Classical Simulation}.--- Based on parameterized crossings, we represent all 3-qubit tensors and 2-qubit gates as CDM without referring to linear sums in Section 2.A of supplementary, which is important for our classical simulation and efficient computation.
There are elementary ingredients in CDM: (1) braided/charged strings, (2) parameterized
crossings, (3) holes, as shown in Fig.~\ref{fig:Topological Complexity}.
Moreover, in Theorem 1 of supplementary, we prove that any planar Matchgate tensor network can be presented by a CDM with only (1) and (2); and
in Theorem 2 of supplementary, we prove that any Clifford tensor network can be presented by a CDM with only (1) and (3). The converse statements are proved in the theorems as well.

\begin{figure}[H]
\begin{center}
\begin{tikzpicture}[line width=.3mm]
\begin{scope}
\node at (1.8, 1){Clifford};
\node at (3.6, 1){Matchgates};
\draw[purple] (1.5,.5) circle (2 and 1);
\draw[red] (3.5,.5) circle (2 and 1);
\begin{scope}[scale=.5,shift={(.7,2)}]
\hole{1}{0}{0};
\end{scope}
\begin{scope}[shift={(2.2,0)},scale=.6]
\draw[line width=.6pt] (1,0) -- (0,1);
\draw[white,WLL] (0,0) -- (1,1);
\draw[line width=.6pt] (0,0) -- (1,1);
\draw[domain=6:24,smooth,variable=\x,red] plot (\x/30, {sin(\x r)/16+0.2});
\fill (.8,.2) circle (.065);
\fill (.2,.2) circle (.065);
\end{scope}
\begin{scope}[shift={(4,0)}, scale=.6]
\draw (0,1)--(1,0);
\draw (0,0) -- (1,1);
\node at (.2,.5) {$\theta$};
\end{scope}
\end{scope}
\end{tikzpicture}
\end{center}
    \caption{Topological Complexity for Cliffords and Matchgates.}
    \label{fig:Topological Complexity}
\end{figure}

Motivated by these necessary and sufficient topological characterizations of Cliffords and Matchgates, we introduce a novel \textbf{Topological Complexity} of CDM to characterize computation complexity of its partition function. 
Charges have lowest computational complexity. Charges can be re-paired and moved freely along strings up to a phase determined by winding numbers. Braids have the second lowest complexity. One can switch a positive braid to a negative braid by adding a pair of charges:
\begin{equation}
\raisebox{0pt}{
\begin{tikzpicture}[baseline=(current bounding box.center),line width=.3mm]
\begin{scope}[scale=.7]
\draw (0,0) -- (1,1);
\draw[white,WLL] (1,0) -- (0,1);
\draw (1,0) -- (0,1);
\node at (1.75,.5) {$=$};
\begin{scope}[shift={(2.5,0)}]
\draw (1,0) -- (0,1);
\draw[white,WLL] (0,0) -- (1,1);
\draw (0,0) -- (1,1);
\draw[domain=6:24,smooth,variable=\x,red] plot (\x/30, {sin(\x r)/16+0.2});
\fill (.8,.2) circle (.065);
\fill (.2,.2) circle (.065);
\node at (1.5,.5) {.};
\end{scope}
\end{scope}
\end{tikzpicture}
}
\end{equation}
This relation reduces a link to a union of unknots in different \textbf{layers}. Moreover, crossing-decorated connected components can be splited into different layers. 
Holes and crossings have parallel third lowest complexity. 
From the view of resource theory \cite{gour2008resource,veitch2014resource,coecke2016mathematical} where states and operations are evaluated to be free or magic (i.e. resources are limited and costly): if Clifford (1)+(3) are considered to be free, then (2) crossings are 
magic; if Matchgate (1)+(2) are considered to be free, then (3) holes are magic. In both cases, (1) is free.

To realize an efficient unified classical simulation, we show that the source of exponential computational complexity is neither (2) crossings nor (3) holes, but how crossings move around the holes. 
The simplest feature is illustrated as follows, which we call a \textbf{topological spin}:
\begin{center}
\begin{tikzpicture}[line width=0.3mm]
\begin{scope}[scale=.6]
\hole{0.6}{1}{1};
\node at (2.2,0.15){$\alpha$};
\draw(2,0.4)arc[start angle=15,end angle=345,radius=1]--(1.6,0.4)arc[start angle=30,end angle=330,radius=.6]--(2,.4);
\end{scope}
\end{tikzpicture}
\end{center}

%Both Cliffords and Matchgates have zero magic holes and are therefore free. If the number of magic holes in CDM is $O(\log n)$, then it has efficient classical simulation. 

\emph{QCS Algorithm}.---
Now let us design an algorithm to efficiently eliminate crossings and holes in a CDM. 

First, the parameterized crossings obey the Yang-Baxter relations \cite{liu2016yangbaxterrelationplanaralgebras}, which generalize Reidemeister moves \cite{alexander1926types,reidemeister1927elementare} and the Yang-Baxter equations \cite{yang1967some,baxter1972partition}, and allow systematic simplification of crossing defects. In particular, this simplification ensures that the boundary surface $\partial M$ contains no contractible bigons—regions bounded by two strands between vertices $u$ and 
$v$ \cite{steinitzvorlesungen}, enclosing no genus but potentially involving multiple crossings.  
Here is an example of the bigon reduction process.

\begin{center}
    \begin{tikzpicture}[line width=.3mm]
        \begin{scope}[xscale=.8]
            \draw(0,.8)[bend right=60]to(2,.8);
            \draw(0,0)[bend left=60]to(2,0);
            \draw(1.4,-.2)--(1-.4,1);
            \draw(1-.4,-.2)--(1+.4,1);
            \fill (0.45,.4)circle(.03);
            \fill (1.55,.4)circle(.03);
            \node at (0,.4){$u$};
            \node at (2,.4){$v$};
        \end{scope}
        
        \begin{scope}[shift={(3,0)},xscale=.8]
    \node at (-.8,0.4){$\xrightarrow{\operatorname{YB}_3}$};
            \draw(0,.8)[bend right=60]to(2,.8);
            \draw(0,0)[bend left=60]to(2,0);
            
            \draw(1.4,-.2)[bend right=60]to(1-.4,1);
            \draw(1-.4,-.2)--(1+.4,1);
        \end{scope}
        \begin{scope}[shift={(6,0)},xscale=.8]
    \node at (-.8,0.4){$\xrightarrow{\operatorname{YB}_3}$};
            \draw(0,.8)[bend right=60]to(2,.8);
            \draw(0,0)[bend left=60]to(2,0);
            
            \draw(1.5,-.2)[bend right=60]to(1.5,.7)[bend right=10]to(1-.4,1);
            \draw(1-.4,-.2)--(1+.4,1);
        \end{scope}
        \begin{scope}[shift={(2,-2)},xscale=.8]
    \node at (-.8,0.4){$\xrightarrow{\operatorname{YB}_3}$};
    
            \draw(0,.8)[bend right=60]to(2,.8);
            \draw(0,0)[bend left=60]to(2,0);

            \draw(1.5,-.2)[bend right=60]to(1.5,.7)[bend right=10]to(1-.4,1);
            \draw(1-.5,-.2)[bend left=60]to(.5,.7)[bend left=10]to(1+.4,1);
        \end{scope}
        \begin{scope}[shift={(5,-2)},xscale=.8]
    \node at (-.8,0.4){$\xrightarrow{\operatorname{YB}_2}$};
            \draw(0,.8)--(2,0);
            \draw(0,0)--(2,.8);
            \draw(1.4,-.2)[bend right=60]to(1-.4,1);
            \draw(1-.4,-.2)[bend left=60]to(1+.4,1);
        \end{scope}
    \end{tikzpicture}
\end{center}

Secondly, the string-genus relation in Eq.~\eqref{string_genus-1} eliminates a hole with a surrounding closed string. For the CDM representation of a planar Matchgate tensor network, every hole is surrounded by a closed string, so all holes can be eliminated. 
The resulting CDM is a 3-disc with crossing-decorations on the boundary sphere, whose partition function can be computed efficiently using the Yang-Baxter relation. 
It is an alternative method of computing planar Matchgate tensor network through Pfaffians. Conversely, one can translate such 3-disc CDM backwards to Matchgate tensor network and compute them efficiently through Pfaffians. 
%Comparing to handle moves, which are linear relations and cause exponential computational complexity, the string-genus relation is efficient to reduce a hole and thereby changing the shape of the 3-manifold. 
\begin{align}
\begin{tikzpicture}[baseline=(current bounding box.center),line width=.3mm]
\label{string_genus-1}    
    \begin{scope}[shift={(-1.3,.7)},scale=.8]
    \draw[blue] (1.25,-1)arc[start angle=150,end angle=30, radius=.5];
    \draw[blue] (1,-.8)arc[start angle=-150,end angle=-30, radius=.8];
\end{scope}
        \begin{scope}[yscale=.5]
            \draw(1,0) arc[start angle=0,end angle=360,radius=1];
             %\draw[blue](.4,-.2) arc[start angle=0,end angle=180,radius=.4];
            % \draw[blue](.6,0) arc[start angle=0,end angle=-180,radius=.6];
        \end{scope}
\end{tikzpicture}
=\frac{1}{\sqrt{2}}.
\end{align}

The 3-disc CDM can be computed efficiently through the Yang-Baxter relations or the Pfaffians. The partition function is multiplicative, namely the partition function of multiple 3-discs is the product of individual ones. 
For a general 3-manifold, one can apply the four handle moves as Proposition 3 in supplementary to decompose the 3-manifold into 3-discs. However, the handle move 2 involves a linear sum which causes exponential computational complexity. 
To ensure efficient computations, we can only apply move 2 in logarithm times.
We need other efficient relations to change shape of 3-manifold, particularly to eliminate holes, just as the string-genus relation.

Motivated by the fact that Cliffords are represented by CDM with only (1) Clifford crossings and (3) holes, which encodes Clifford CDM as classical bits,
Biamonte asked for a natural algorithm in the Quon language to efficiently compute Clifford.
The string-genus relation and the Yang-Baxter relation implies all tensor network relations in ZX-calculus \cite{van2020zx,ng2018completeness,hadzihasanovic2018two,jeandel2020completeness,wang2022completeness,backens2016completeness,backens2014zx}. In particular, one can efficiently compute Cliffords by translating the corresponding reduction algorithms in \cite{backens2014zx}.

Here we present a more conceptual algorithm to efficiently compute Clifford CDM. The key point is to introduce better hole-elimination relations, beyond the string-genus relation, which works for a general CDM from hybrid-Clifford-Matchgate tensor networks. 

A hole is called odd, if there is a layer with an odd winding number around the hole. Otherwise the hole is called even. 
We introduce an \textbf{odd hole handle slide} relation to remove the odd hole, see Lemma 3 of Supplementary.
This relation extensively generalizes the string-genus relation, which is an odd hole and a layer of a closed string with winding number one. 
This odd hole handle slide relation can be generalized from qubits to qudits. 

A generating set of representative holes is a basis of $H_1(M)$. 
All layers form a set of vectors in $H^1(M,\mathbb{Z}_2)$. 
After removing odd holes, all layers are zero in $H^1(M,\mathbb{Z}_2)$. 
The remaining even holes are those elements in $H_1(M)$ orthogonal to the layers, which is an intrinsic concept, independent of the odd hole reduction process.

An (even) hole is called Clifford, if all crossings in all layers containing the hole are Clifford. We introduce an \textbf{Clliford hole elimination} relation to remove a Clifford hole, see Theorem 3 of supplementary. This relation is based on the arithmetic properties of the quadratic forms over the binary field, which has been studied as affine signatures by Cai in \cite{cai2018clifford}. It is not known whether such a relation can be generalized from qubits to qudits. 

These hole elimination relations, derived from parity constraints and arithmetic properties over the binary field $\mathbb{F}_2$ extend the simplification toolkit of TQFT beyond conventional handle moves, enabling efficient topological surgery moves of 3-manifolds. In particular, they allow for the elimination of topological holes in the CDM representations of tensor networks, dramatically reducing computational overhead.

The remaining holes are even and non-Clifford, which we call \textbf{Magic Holes}. They capture the global topological entanglement \cite{KitaevTEE} and the long range magic \cite{wei2025long}. Both Cliffords and Matchgates have zero Magic holes in QCS, thus both can be computed efficiently.   
Each (Magic) hole can be eliminated as the cost of a factor 2. So if the number of Magics holes has a logarithm bound, then the partition function of the CDM can be computed efficiently. 
The total number of Magic holes is governed by topological entanglement entropy \cite{KitaevTEE}, rather than conventional entanglement entropy.

\emph{Topological Tensor Networks}.---
If an even hole is surrounded by crossing-decorations without boundary, then it reduces to a topological spin. The essential reason is that in the 2+1D Alterfold TQFT, the vector space associated to such time-boundary is two dimensional and it consists of topological spins. 
This reduction method is a topological entanglement analogue to the low entanglement reduction of tensor network.

The topological insights in QCS motivate us to study the \textbf{Topological Complexity} of the quantum system, which encodes essential computation complexity and quantum advantages.
To further elaborate topological complexity of Magic holes, we propose the notion of \textbf{Topological Tensor Networks} (in Section IV of supplementary), in which we consider a Magic hole (or generally a topological spin) as its spin and a crossing-decorated layer sounding Magic holes as their interactions.

In addition, we introduce a novel global skein relation \textbf{free move of topological spins} (in Section IV B of supplementary) to further reduce the number of Magic holes globally.  
In particular, if two topological spins cannot be distinguished by any layer, then they can merge into one topological spin. This provides a mathematical foundation to establish topological tensor networks.

With this method, there are families of quantum circuits with $O(n)$ width and depth, $O({n^2})$ Clifford gates and Matchgates, and the corresponding CDM has $O(n^2)$ Magic holes; see Section V of supplementary for details. The outcome sampling process with efficient classical simulation or not depending on the choice of measurement basis.

\emph{Summary}.--- Together, the above relations significantly reduce the computational complexity of the partition function in the Quon language. As a result, a broad class of tensor networks—including all Clifford circuits and Matchgate circuits—can be evaluated in polynomial time within the QCS framework.

We highlight that in QCS, Matchgates are free fermions and have efficient reduction on the 2D boundary by Yang-Baxter relations; and Cliffords are bulk 3-manifold with boundary braiding decorations and have efficient reduction by generalized handle slides and handle moves.
The 3D bulk reduction and the 2D boundary reduction are parallel, therefore QCS takes advantages of both Cliffords and Matchgates methods to reduce the computational complexity. It unifies efficient computational tools in both the bulk 2+1 TQFT and the boundary 2D Ising theory, which is a reminiscent of the holographic principle of Quantum gravity \cite{maldacena1999large,witten1998anti,hooft2001holographic,bousso2002holographic}.

Tensor network has been a widely used picture language in simulation of physics systems and engineering projects, while Quon contains all information of tensor networks with finer structure and more efficient computational rules. A 3-manifold arising as a neighborhood of a tensor network naturally forms a handlebody, which can be interpreted as a "fractional" tensor network with refined topological and algebraic decorations. 
This refined structure provides new insights into various foundational concepts in quantum information. Moreover, tensor network representation reveals only 1-dimensional skeleton of the manifolds compared with Quon, and more general 3-manifolds extend beyond conventional tensor networks, offering a broader mathematical framework for simulating quantum systems. These structures have been applied in the study of communication protocols~\cite{jaffe2017constructive}, and quantum error correction~\cite{liu2019quantized,shao2024variational}.
%topological phases and orders~\cite{embedding,entropy},
Please refer to \cite{kang20252d}, which presents other mathematical properties of Quon diagrams and applications to classical simulation.

\emph{Acknowledgements}.---
Zhengewei Liu thanks Arthur Jaffe, Xun Gao and Yunxiang Ren for early discussions. The article is partially based on lectures of Zhengwei Liu in Fall 2021 at Tsinghua University with notes taken by Fan Lu, and undergraduate thesis of Ningfeng Wang and Zixuan Feng in June 2022.
Zhengwei Liu was supported by Beijing Natural Science Foundation Key Program (Grant No. Z220002) and by Templeton Religion Trust (TRT159). All authors were supported by Beijing Natural Science Foundation (Grant No. Z221100002722017).

\section{Mathematical foundations}
\subsection{2D Relations in Quon}

In this section, we interpret Quon Language introduced in \cite{liu2017quon} as a 2+1D TQFT \cite{atiyah1988topological} with space-time boundaries and defects, from the viewpoint of Alterfold theory developed in \cite{liu20233,liu2023alterfold,liu2024functional,liu2024alterfold}. In Quon, a decorated 3-manifold with space-time boundaries corresponds to a multi-linear map, which we apply to simulate multi-qubit transformations. 

We first consider the decorations to be non-intersecting strings and established a complete set of linear relations to compute the partition function in TQFT. Any multi-qubit transformation can be represented as a linear sum of decorated 3-manifolds. 
To obtain an efficient classical simulation, we prefer to represent the transformation using a single 3-manifold, without a linear sum. This requires additional decorations of parameterized crossings besides strings. We will show that those crossing-decorated 3-Manifolds (CDM) are enough to represent all 3-qubit tensors and 2-qubit gates and their tensor networks, including all hybird-Clifford-Matchgate quantum circuits.

\begin{figure}[H]
    \centering
    \begin{tikzpicture}
        
\begin{scope}[scale=.5]

  \fill[fill=gray] (0,0) circle (15pt);
  \draw[line width=5pt, black!50](0,.5) --(0,.5+2);
  \draw[line width=5pt, black!50](0,-.5) --(0,-.5-1.5);
  \draw[line width=5pt, black!50](.5,0) --(.5+2,0);
 
 \begin{scope}[scale=1,shift={(0,0)}]

\node at (4.5,0) {=};

\end{scope}

\end{scope}
\begin{scope}[scale=.5,shift={(6.5,-1.5)}]
\node at 
    \copytensorcopy{1}{0.5}{4};
\end{scope}
    \end{tikzpicture}
    \label{fig:copy tensor_text}
    \caption{CDM representation of the COPY tensor}
\end{figure}

\input{Quon}
\section{QCS of tensor networks}

In this section, we demonstrate that all generating tensors of Matchgates and Clifford can be represented by Quon diagrams. Based on this observation, we propose a Quon Classical Simulation (QCS) framework for Matchgates and Cliffords.

\subsection{Quon dictionary of qubit tensors}\label{Sec: QCS-3}

In this subsection, we are going to prove that all 2-qubit gates and all 3-qubit tensors can be represented in Quon without linear sum.

\subsubsection{1-qubit states and 1-qubit gates}
In a qubit space, every disc has four points on its boundary. In general, a disc with $2m$ boundary points corresponds to a $2^{m-1}$ dimensional vector space. We simulate the qubit basis vectors as:
\begin{center}
    \begin{tikzpicture}[line width=.3mm]
    \begin{scope}
        \node at (-1.5,0){$\ket{0}=\frac{1}{\sqrt{2}}$};
        \yuanzhuCAPZERO{1}{0.5}{2};
    \end{scope}
    \begin{scope}[shift={(3.5,0)}]
        \node at (-1.5,0){$\ket{1}=\frac{1}{\sqrt{2}}$};
        \yuanzhuCAPONE{1}{0.5}{2};
    \end{scope}
    \begin{scope}[shift={(0,-1.2)}]
        \node at (-1.5,0){$\bra{0}=\frac{1}{\sqrt{2}}$};
        \yuanzhuCUPZERO{1}{0.5}{2};
    \end{scope}
    \begin{scope}[shift={(3.5,-1.2)}]
        \node at (-1.5,0){$\bra{1}=\frac{1}{\sqrt{2}}$};
        \yuanzhuCUPONE{1}{0.5}{2};
    \end{scope}
    \node at (1,0){$,$};
    \node at (4.5,0){$.$};
    \node at (1,-1.5){$,$};
    \node at (4.5,-1.5){$.$};
\end{tikzpicture}
\end{center}

Now we have basic Pauli gates represented by a pair of charges:
\begin{center}
    \begin{tikzpicture}[line width=.3mm]
    
    \begin{scope}[shift={(0,0)}]
    \node at (-1.3,0.5){$X=$};
        \yuanzhuX{1}{0.5}{2};
    \end{scope}
    \begin{scope}[shift={(2.6,0)}]
    \node at (-1.3,0.5){$=$};
        \yuanzhuXX{1}{0.5}{2};
    \end{scope}

    \begin{scope}[shift={(5.2,0)}]
    \node at (-1.3,0.5){$;Z=$};
        \yuanzhuZ{1}{0.5}{2};
    \end{scope}
    \begin{scope}[shift={(7.8,0)}]
    \node at (-1.3,0.5){$=$};
        \yuanzhuZZ{1}{0.5}{2};
    \end{scope}
    \node at (9,0.5){$.$};
\end{tikzpicture}
\end{center}

We can represent $S$ and $H$ gate with braids:
\begin{center}
\begin{tikzpicture}[line width=.3mm]
    \begin{scope}[scale=1]
        \begin{scope}[shift={(0,-3)}]
        \node at (-1.3,0.5){$S=$};
            \yuanzhuS{1}{0.5}{2};
        \end{scope}
    
        \begin{scope}[shift={(2.6,-3)}]
        \node at (-1.3,0.5){$=$};
            \yuanzhuSS{1}{0.5}{2};
        \end{scope}
    
        \begin{scope}[shift={(5.2,-3)}]
        \node at (-1.3,0.5){$;S^\dagger=$};
            \yuanzhuSDG{1}{0.5}{2};
        \end{scope}
        \begin{scope}[shift={(7.8,-3)}]
        \node at (-1.3,0.5){$=$};
            \yuanzhuSSDG{1}{0.5}{2};
        \end{scope}
    \end{scope}

    \begin{scope}[scale=1,shift={(0,-6)}]
        \begin{scope}
        \node at (-1.3,0.5){$H=$};
            \yuanzhuH{1}{0.5}{2};
        \end{scope}
        \begin{scope}[shift={(2.6,0)}]
        \node at (-1.3,0.5){$=$};
            \yuanzhuHDG{1}{0.5}{2};
        \end{scope}
        \begin{scope}[shift={(5.2,0)}]
        \node at (-1.3,0.5){$=$};
            \yuanzhuHTR{1}{0.5}{2};
        \end{scope}
        \begin{scope}[shift={(7.8,0)}]
        \node at (-1.3,0.5){$=$};
            \yuanzhuHTRDG{1}{0.5}{2};
        \end{scope}
    \end{scope}
    
    \node at (9,-3+.5){$.$};
    \node at (9,-6+.5){$.$};
\end{tikzpicture}
\end{center}

More generally, parameterized single qubit gates can be represented by parameterized braids:
\begin{center}
    \begin{tikzpicture}[line width=.3mm]
    \begin{scope}[scale=.7]
    \node at (-4,0.6){$R_X(\theta)=e^{i\theta X}=\sqrt{2}\cos\theta\times$};
    \yuanzhualphablack{1}{0.5}{2};
    \end{scope}

    \begin{scope}[shift={(8.5,0)},scale=.7]
    
    \node at (-4.3,0.6){$R_Z(\theta)=e^{i\theta Z}=\sqrt{2}\cos\theta\times$};
    
    \yuanzhualphaleft{1}{0.5}{2};
     \node at (-0.4,1.3) {$\alpha$};
    \end{scope}
        
    \node at (2,0.4){$,\alpha=i\tan\theta;$};
    \node at (10.6,0.4){$,\alpha=i\tan\theta.$};
    
    \end{tikzpicture}
\end{center}
By Euler decomposition $U=R_X(\theta_1)R_Z(\theta_2)R_X(\theta_3)=R_Z(\theta_4)R_X(\theta_5)R_Z(\theta_6)$, all 1-qubit unitaries are clearly Quon-representable without using linear sums.

\subsubsection{3-qubit tensors}

3-qubit pure states are known (\cite{dur2000three}) to be classified into 6 equivalence classes under local unitary transformations:
\begin{prop} 
    Under local unitary transformations:
    $$
    \ket{\psi}_{ABC} \sim (U_A\otimes U_B \otimes U_C)\ket{\psi}_{ABC},\quad
    U_A,U_B,U_C\in U(2),
    $$
    all three qubits tensors are divided into 6 equivalence classes:
\begin{itemize}
\item GHZ-state and its equivalent states
    $$
    \ket{GHZ} :=\frac{1}{\sqrt{2}}( \ket{000} + \ket{111}).
    $$
\item W-state and its equivalent states
    $$
    \ket{w}:=\frac{1}{\sqrt{2}}(\ket{011} +\ket{101} +\ket{110}).
    $$
    \item 
    A-B entangled and C is separated.
    \item 
    B-C entangled and A is separated.
    \item 
    C-A entangled and B is separated.
    \item 
    A,B,C are separated.
\end{itemize}
\end{prop}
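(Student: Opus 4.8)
\emph{Proof proposal.}
The plan is to follow the strategy of D\"ur, Vidal and Cirac \cite{dur2000three}, organizing the argument around the ranks of the single-qubit reduced density matrices and, in the genuinely tripartite case, a single local-operation invariant; the six ``classes'' then appear as unions of local-unitary orbits.

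First I would record the elementary fact that for any $\ket{\psi}_{ABC}$ each reduced state $\rho_A=\operatorname{Tr}_{BC}\ketbra{\psi}{\psi}$ (and likewise $\rho_B,\rho_C$) has rank $1$ or $2$, and that $\operatorname{rank}\rho_A=1$ holds exactly when $\ket{\psi}=\ket{\varphi}_A\otimes\ket{\chi}_{BC}$. From this the non-generic cases follow immediately. If all three reductions have rank $1$, the state is fully product, giving the ``$A,B,C$ separated'' class. If exactly one reduction has rank $1$, say $\rho_A$, then $\ket{\chi}_{BC}$ is a two-qubit state whose own reductions coincide with $\rho_B,\rho_C$ and hence have rank $2$, so $\ket{\chi}_{BC}$ is entangled: this is the ``$B$--$C$ entangled, $A$ separated'' class, and the two relabelings produce the other bipartite classes. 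A short consistency check rules out ``exactly two reductions of rank $1$'': $\operatorname{rank}\rho_A=1$ forces $\rho_B$ and $\rho_C$ to share the spectrum of the pure bipartite state $\ket{\chi}_{BC}$, so they are either both rank $1$ or both rank $2$.

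The substantive step is the genuinely tripartite case, where $\rho_A,\rho_B,\rho_C$ all have rank $2$. Here I would write $\ket{\psi}=\ket{0}_A\ket{a}_{BC}+\ket{1}_A\ket{b}_{BC}$ with $\ket{a},\ket{b}$ nonzero and linearly independent, encode them as $2\times2$ matrices $T_0,T_1$ (so $\ket{a}=\sum_{jk}(T_0)_{jk}\ket{jk}$, and similarly for $T_1$), and consider the binary homogeneous quadratic $p(x,y)=\det(xT_0+yT_1)$. A local unitary on $A$ recombines $(T_0,T_1)$ invertibly, hence acts on $(x,y)$ through the dual $GL(2)$ action, while $U_B\otimes U_C$ sends $T_j\mapsto U_BT_jU_C^{\mathsf T}$ and multiplies $p$ by the nonzero scalar $\det U_B\det U_C$; thus the projective root pattern of $p$ is a local-operation invariant. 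Genuine tripartite entanglement forces $p\not\equiv 0$, leaving exactly two possibilities: two distinct roots, which one normalizes to $\ket{000}+\ket{111}$ by invertible local operations and then to $\ket{GHZ}$ by rescaling plus a further local unitary; or one double root, which is the $W$ branch and reduces analogously to $\ket{w}$. I would finish by exhibiting an invariant that separates the two branches, e.g. the non-vanishing versus vanishing of the $2\times2\times2$ hyperdeterminant (the three-tangle), which is nonzero on $\ket{GHZ}$ and zero on $\ket{w}$.

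The main obstacle is precisely this last case: making rigorous that the root pattern of $\det(xT_0+yT_1)$ is a \emph{complete} invariant on genuinely entangled states, that $p\not\equiv0$ there, and that the double-root branch does not silently degenerate into one of the bipartite classes. The cleanest route is to couple the hyperdeterminant (three-tangle) as the GHZ/W separator with a direct rank/Schmidt argument to exclude degenerations; alternatively one may invoke the canonical form of Ac\'in et al.\ and read off which of its parameters vanish in each of the six cases. I do not expect this bookkeeping to be deep, only that enumerating the degenerate sub-cases without omission requires care.
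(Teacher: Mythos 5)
The paper contains no proof of this proposition at all: it is quoted as a known result with a citation to D\"ur--Vidal--Cirac, so there is no internal argument to compare against, and your proposal is essentially a reconstruction of the standard proof from that reference. The rank case analysis for the separable and biseparable classes is correct and complete (including the observation that exactly two rank-one reductions is impossible), and the pencil $p(x,y)=\det(xT_0+yT_1)$ together with its discriminant (equivalently the three-tangle/hyperdeterminant) is exactly the DVC mechanism separating the GHZ and W branches. The one ingredient you leave implicit --- that all three local ranks equal to $2$ forces $p\not\equiv 0$ --- closes in one line: a two-dimensional space of singular $2\times 2$ matrices necessarily has a common kernel or a common cokernel vector, which would force $\rho_C$ or $\rho_B$ to have rank $1$; so this flagged obstacle is not a real one.

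The genuine gap is the group. What your argument establishes (and what the cited classification actually is) is equivalence under \emph{invertible} local operations (SLOCC), not under local unitaries: your normalization of the two-distinct-root case to $\ket{000}+\ket{111}$ uses non-unitary maps on $A$, $B$ and $C$, and under $U(2)^{\otimes 3}$ alone there is no six-class classification at all --- for instance $\cos\theta\,\ket{000}+\sin\theta\,\ket{111}$ runs through continuously many LU-inequivalent states inside the GHZ class, since LU equivalence has continuous invariants. So as a proof of the proposition as literally worded (``under local unitary transformations \ldots divided into 6 equivalence classes'') the argument does not close; it proves the correct SLOCC statement, and the literal statement inherits the same imprecision from the paper's phrasing. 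To repair it you should either state the classification for invertible local operations, or recast the six ``classes'' as entanglement types (which subsystems are entangled, with GHZ and W the two genuinely tripartite SLOCC classes) rather than as single local-unitary orbits; as written, neither your proof nor any proof can deliver the LU version.
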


In order for all 3-qubit pure states to be Quon-representable, we only need to consider the representatives of the above 6 equivalence classes. For the 4 separable classes, the result is clear. 
For the representatives of the fully entangled 2 classes, we consider the following X-basis:
\begin{center}
    \begin{tikzpicture}[line width=0.3mm]
    \begin{scope}[scale=.8]
    \node at (-1.9,0){$\ket{0}_X=\frac{1}{\sqrt{2}}$};
        \yuanzhuCAPPLUSCOLOR{1}{0.5}{2};
    \end{scope}

    \begin{scope}[shift={(4,0)},scale=.8]
    \node at (-1.8,0){$\ket{1}_X=\frac{1}{\sqrt{2}}$};
        \yuanzhuCAPMINUSCOLOR{1}{0.5}{2};
    \end{scope}
    
    \begin{scope}[shift={(0,-1.5)},scale=.8]
    \node at (-1.9,0){$\bra{0}_X=\frac{1}{\sqrt{2}}$};
        \yuanzhuCUPPLUSCOLOR{1}{0.5}{2};
    \end{scope}

    \begin{scope}[shift={(4,-1.5)},scale=.8]
    \node at (-1.8,0){$\bra{1}_X=\frac{1}{\sqrt{2}}$};
        \yuanzhuCUPMINUSCOLOR{1}{0.5}{2};
    \end{scope}
    
    \node at (1-.1,0){$,$};
    \node at (5-.1,0){$.$};
    \node at (1-.1,-1.5){$,$};
    \node at (5-.1,-1.5){$.$};
    \end{tikzpicture}
\end{center}
and we have: (up to a scalar)
$$
\ket{w}:=\frac{1}{\sqrt{2}}(\ket{011}_X +\ket{101}_X +\ket{110}_X)=
\wstate{1.3}{0.5}{0.8}.
$$
$$
\ket{Max}_3=\frac{1}{\sqrt{2}}(\ket{000}_X+\ket{011}_X+\ket{101}_X+\ket{110}_X)
=\maxstate{1.3}{0.5}{0.8}
,
$$ 
and similarly for any $n$-qubit $\ket{Max}_n=\frac{1}{\sqrt{2}}\sum_{a_1+...+a_n\equiv 0(mod2)}\ket{a_1\dots a_n}_X$.
Thus,
$$
\ket{GHZ} = H^{\otimes 3} \ket{Max}_3,
$$
is Quon-representable without using linear sum.

\subsubsection{2-qubit gates}

One can represent the control-Z gate $CZ$ as:
\begin{tikzequation}
\begin{scope}[shift={(5.5,0)}]
\node at (-1,1.25){$CZ=$};
    \CLczgate{.6}{.3}{2.5}{0}{0};
\end{scope}
    \node at (9,1.25){$.$};
\end{tikzequation}
Since all 2-qubit gates are generated by $\{R_X(\alpha),S,H,CZ\}$, they can be represented in Quon without linear sum. 

\subsection{Quon representation of CAR algebra}
\label{CARalg}
As discussed in the 2-string model in \cite{jaffe2018holographic}, the charges are Majorana fermions, which can be seen as the generators of the CAR algebra, a $C^*$-algebra with $2n$ generators $\{c_1, c_2, ..., c_{2n}\}$ satisfying 
$$
c_{i}^{2} =1,\ c_{i}^{*}=c_{i},\ 
c_{i}^{} c_{j}^{}=-c_{j}^{}  c_{i}\ (i\ne j) .
$$

The operator $c_i$ is also called a Majorana operator \cite{majorana}. The CAR algebra can be simulated as CDMs in Quon, which illustrates the Jordan-Wigner transformation \cite{JWtrans}. This embedding map of diagrammatic conventions of 2-strings into Quon is a special case of the general embedding map from subfactor planar para algebras \cite{jaffe2017planar} into graph planar algebras.

\begin{figure}[H]
    \centering
    \begin{tikzpicture}[line width=0.3mm]
    \begin{scope}[scale=.7]

    \node at (-1,1){$c_3=$};
    \foreach \i in {0,1,2,3,4,5}{
    \draw(\i/2,2)--(\i/2,0);}
    \fill(2/2,1)circle(.1);
    \node at (3,1){$\longrightarrow$};
    
    \begin{scope}[shift={(4,0)},scale=.6]
    
        \threequbitgateCARTHREE{1}{.5}{3};
    \end{scope}

     \begin{scope}[shift={(-2,-3)},scale=.6]
     \node at (-1.2,1){$=$};
     \node at (9,1){$=$};
    \threequbitgateCARTHREEPAULI{1}{.5}{3};
    \end{scope}
    \begin{scope}[shift={(4.5,-3)},scale=.8]
    \yuanzhuZL{1}{.5}{3};
    \end{scope}
     \begin{scope}[shift={(6.5,-3)},scale=.8]
    \yuanzhuZZZ{1}{.5}{3};
    \end{scope}
    \begin{scope}[shift={(8.5,-3)},scale=.8]
    \yuanzhuIDEMPTY{1}{.5}{3};
    \end{scope}

    \end{scope}
        \node at (6.7,-1.5){$.$};
    \end{tikzpicture}
\end{figure}

\begin{figure}[H]
    \centering
    \begin{tikzpicture}[line width=0.3mm]
    \begin{scope}[scale=.7]

    \node at (-1,1){$c_6=$};
    \foreach \i in {0,1,2,3,4,5}{
    \draw(\i/2,2)--(\i/2,0);}
    \fill(5/2,1)circle(.1);
    \node at (3,1){$\longrightarrow$};
    
    \begin{scope}[shift={(4,0)},scale=.6]
    
        \threequbitgateCARSIX{1}{.5}{3};
    \end{scope}
    
   \begin{scope}[shift={(-2,-3)},scale=.6]
     \node at (-1.2,1){$=$};
     \node at (9,1){$=$};
        \threequbitgateCARSIXPAULI{1}{.5}{3};
    \end{scope}
    \begin{scope}[shift={(4.5,-3)},scale=.8]
    \yuanzhuZL{1}{.5}{3};
    \end{scope}
     \begin{scope}[shift={(6.5,-3)},scale=.8]
    \yuanzhuZL{1}{.5}{3};
    \end{scope}
    \begin{scope}[shift={(8.5,-3)},scale=.8]
    \yuanzhuY{1}{.5}{3};
    \end{scope}

    \end{scope}
        \node at (6.7,-1.5){$.$};
    \end{tikzpicture}
\end{figure}

The circuits closely related to the CAR algebra are Matchgates.
Matchgates arise naturally from the physics of free fermions on one-dimensional lattices and constitute a class of quantum circuits that admit efficient classical simulation \cite{terhal2002classical}.
Originally introduced by Valiant \cite{valiant2002quantum}, Matchgates are closely connected to the enumeration of weighted perfect matchings in graphs, which can be efficiently computed using Pfaffian techniques, most notably the FKT algorithm by Fisher, Kasteleyn, and Temperley~\cite{fisher1961statistical,kasteleyn1961statistics,temperley1961dimer}.
So Matchgates have long been regarded as being at the boundary between classically simulable and fully quantum computational models \cite{jozsa2008matchgates,bravyi2002fermionic}.

Planar Matchgates constitute a subgroup of the CAR algebra which are unitaries associated to quadratic Hamiltonians \cite{bravyi2002fermionic,bravyi2004lagrangian,jozsa2008matchgates}:
$$
U = \exp{itH},\quad H = i\sum_{\mu\nu} h_{\mu\nu} c_\mu c_\nu,\quad \text{$h$ is antisymmetric.}
$$
These are called Gaussian operations. 

By Yang-Baxter equations in the CAR algebra and results in \cite{jozsa2008matchgates}, the Matchgate group has a set of generators which can be represented by:
$$
    e^{t c_j c_{j+1}} = \cos t\times I + \sin t\times c_j c_{j+1} = \sqrt{2}\cos t\times
    \tikz[baseline=(current bounding box.center),line width=.3mm]{
    \MGonequbitgate{0.7}{0.3}{1.5}{0}{-2};
    \node at (0.7,-.95){$\alpha$};
    }
    \ \ ,\quad \text{$j$ is odd, $\alpha=\tan t$.}
$$

$$
    e^{t c_j c_{j+1}} = \cos t\times I + \sin t\times c_j c_{j+1} = \sqrt{2}\cos t\times
    \tikz[baseline=(current bounding box.center),line width=.3mm]{ 
    \MGtwoqubitgate{.6}{0.3}{2.5}{0}{-2};
    \node at (1.5,-0.6){$\alpha$};
    }
  \ \   ,\quad \text{$j$ is even, $\alpha=\tan t$.}
$$
which corresponds, under the Jordan-Wigner transformation, exactly to the generating set in circuits under the $X$-basis $\ket{0}_X,\ket{1}_X$:
$$
\quad e^{i\theta Z} , \quad e^{i\theta X\otimes X}.
$$

When they are composed into a circuit, pink lines kill holes by the string-genus relation (Eq.~\eqref{string_genus-1}), resulting in a planar diagram. 
This shows that Matchgate circuits are represented in Quon by planar diagrams. A more general result considering Matchgate tensor networks, and also its converse, is proved in the next section.

\newcommand{\zerodegreenode}[1]{
\tikz[baseline=(current bounding box.center), line width=0.3mm] {
  \draw[thick] (0,0.8) -- (0,-0.8) node[pos=0.5, right] {$#1$};
}
}

\newcommand{\twodegreenode}{
\tikz[baseline=(current bounding box.center), line width=0.3mm] {
  \draw[thick] (0,0.8) -- (0,-0.8);
  \fill (0,0) circle(0.1);
}
}

\newcommand{\onedegreenode}{
\tikz[baseline=(current bounding box.center), line width=0.3mm] {
  \draw[thick] (0,0.4) -- (0,-0.4);
  \fill (0,0.4) circle(0.1);
}
}

\newcommand{\threedegreenode}{
\tikz[baseline=(current bounding box.center), line width=0.3mm] {
  \draw[thick] (0,0.8) -- (0,0);
  \draw[thick] (0,0) -- (-0.6,-0.3);
  \draw[thick] (0,0) -- (0.6,-0.3);
  \fill (0,0) circle(0.1);
}
}

\subsection{QCS of Matchgates and Clifford}
\label{QCS of Matchgates and Clifford}

\subsubsection{QCS of Matchgates}

Now we are going to prove the equivalence between planar Matchgate tensor networks \cite{cai2007valiant,jozsa2008matchgates,bravyi2002fermionic} and Matchgate Quon diagrams. 
This correspondence was first observed by Xun Gao.

\begin{definition}
A tensor network is a planar Matchgate if it is generated by: $\{\ket{0},\bra{0},\ket{Max}, e^{i\theta Z}, X, Y, Z\}$.
\end{definition}

The main result we want to show in this section is that there are three equivalent descriptions of planar Matchgate tensor networks:
\begin{align*}
&\text{planar tensor network generated by }\{\ket{0},\bra{0},\ket{w},\, e^{i\theta Z},\, X,\, Y,\, Z\}
\\
&\Longleftrightarrow
\text{planar tensor network generated by }\{\ket{0},\bra{0},\ket{Max},\, e^{i\theta Z},\, X,\, Y,\, Z\}
\\
&\Longleftrightarrow
\text{genus-zero 2d-CDM in which there is an }
\text{outermost Hamiltonian cycle (colored pink).}
\end{align*}
An illustrative example of this equivalence is shown in Fig.~\ref{fig:matchgate_all} below.

\begin{figure}[H]
    \centering
\subfigure[Matchgate tensor network\label{fig:matchgate_TN}]{
    \begin{tikzpicture}
        \isingonlyblackopen{0.8}
    \end{tikzpicture}
}
\hspace{.5em}
\subfigure[Original Quon diagram\label{fig:matchgate_Quon}]{
    \begin{tikzpicture}
        \isingMGoriginalopen{0.5}
    \end{tikzpicture}
}
\hspace{.5em}
\subfigure[Quon Matchgate diagram\label{fig:matchgate_Quon_reduced}]{
    \begin{tikzpicture}
        \isingMGreducedopen{0.5}
    \end{tikzpicture}
}

\caption{Quon representation of Matchgates}
\label{fig:matchgate_all}
\end{figure}

The main idea is to eliminate topological holes using the string-genus relation. Conversely, we insert string-genus back in one set of alternative regions. 
For example, the Matchgate tensor network in Fig.~\ref{fig:matchgate_TN} produces Matchgate Quon diagram Fig.~\ref{fig:matchgate_Quon}, which, by applying the string-genus relation (Eq.~\eqref{string_genus-1}), is reduced to the planar Quon diagram shown in Fig.~\ref{fig:matchgate_Quon_reduced}.

We begin by proving the first two:
\begin{prop}
The planar tensor networks generated by $\{\ket{0},\bra{0},\ket{w}, e^{i\theta Z}, X, Y, Z\}$ $=$ The planar tensor networks generated by $\{\ket{0},\bra{0},\ket{Max}, e^{i\theta Z}, X, Y, Z\}$, under the identification induced solely by the string-genus relation.
\end{prop}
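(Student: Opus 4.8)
\textit{Proof strategy.}
The plan is to exploit the fact that the two generating sets coincide except for $\ket{w}$ versus $\ket{Max}$: all of $\{\ket{0},\bra{0},e^{i\theta Z},X,Y,Z\}$ are common. Hence the proposition reduces to two containments, namely that $\ket{w}$ lies in the family of planar Quon tensor networks built from $\{\ket{0},\bra{0},\ket{Max},e^{i\theta Z},X,Y,Z\}$, and conversely that $\ket{Max}_3$ lies in the family built from $\{\ket{0},\bra{0},\ket{w},e^{i\theta Z},X,Y,Z\}$, with every intermediate equality witnessed purely by the string-genus relation \eqref{string_genus-1} together with the already established 2D relations of \S II.A (in particular Ising fusion \eqref{fusion}, the evaluation rules \eqref{evaluation}, and free re-pairing of charges). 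No linear combinations are allowed at any step, which is exactly why the string-genus relation — rather than Move 2 — must do all the work.

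First I would record the explicit three-qubit round-tensor Quon diagrams of $\ket{w}$ and $\ket{Max}_3$ as displayed in \S III: both carry the same outer magenta Hamiltonian cycle, they differ in the routing of the three inner black strings, and the $\ket{w}$ diagram additionally carries three parameterized crossings with parameter fixed by the recorded $e^{i\pi/3}$ scalars. Using the Quon dictionary for $R_Z$ (a single parameterized crossing inserted along one strand is $\sqrt{2}\cos\theta$ times the diagram representing $e^{i\theta Z}$, with $\alpha=i\tan\theta$), I would peel those three crossings off the $\ket{w}$ diagram and absorb them into single-qubit $e^{i\theta Z}$ generators, reducing the problem to re-routing the inner black strings between the $\ket{w}$-pattern and the $\ket{Max}_3$-pattern. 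The key diagrammatic move is to insert, via \eqref{string_genus-1} read backwards, an auxiliary closed string around the relevant topological hole (at the cost of a scalar $\tfrac{1}{\sqrt2}$), resolve it against the inner black strings by Ising fusion \eqref{fusion} to switch their connectivity to the other pattern, re-pair and slide the resulting charges freely along the strings, and finally remove the leftover loop again by \eqref{string_genus-1}; the reverse direction is the identical argument run backwards, with $\ket{0}$ and $\bra{0}$ supplying the caps and cups needed to open and close the auxiliary loop inside a disc neighborhood. Running this in both directions yields the two containments, hence equality of the two classes of planar tensor networks.

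The main obstacle I expect is the scalar-and-decoration bookkeeping: \eqref{string_genus-1} and \eqref{fusion} each emit factors of $\tfrac1{\sqrt2}$ together with red wires and charge pairs, and one must verify that the entire accumulated residue can always be collected into \emph{legitimate} generators of the shared set — concretely, that the phase left on each output leg is of the diagonal form $e^{i\theta Z}$ (as the $e^{i\pi/3}$ factors indicate) and that every residual charge pair is either annihilated by \eqref{charge_annihilation} or absorbed into a Pauli $X$, $Y$, $Z$, never into anything outside $\{\ket{0},\bra{0},e^{i\theta Z},X,Y,Z\}$. A secondary point requiring care is that planarity must be preserved throughout: the auxiliary loop and its fusion resolution must be localized so that the only black-string crossings ever created are precisely the ones already identified with $e^{i\theta Z}$ gates, so that the final diagram is a genuine genus-zero 2d-CDM. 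Once these checks are dispatched the proposition follows, and it also sets up the equivalence with the genus-zero outermost-Hamiltonian-cycle description stated immediately afterward.
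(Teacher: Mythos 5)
There is a genuine gap, and it is conceptual rather than a matter of bookkeeping. Your plan terminates in the claim that $\ket{w}$ equals (up to scalar) a \emph{single} $\ket{Max}_3$ tensor dressed by single-qubit generators on its three legs: you peel the three decorations off the $\ket{w}$ diagram as $e^{i\theta Z}$ gates, re-route the inner strings to the $\ket{Max}$ pattern, and absorb all residue into $e^{i\theta Z}$, Pauli factors and scalars. But $\ket{Max}_3=H^{\otimes 3}\ket{GHZ}$ lies in the GHZ class while $\ket{w}$ lies in the W class, and these are inequivalent under local (even invertible, let alone unitary or diagonal $e^{i\theta Z}$) single-qubit operations — this is exactly the classification the paper restates just before the proposition. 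So no amount of careful scalar/charge bookkeeping can make your endpoint true: expressing $\ket{w}$ over the $\ket{Max}$ set necessarily uses \emph{several contracted copies} of the generator, not one copy with local dressing. This is what the paper's proof does: it directly verifies the two small tensor-network identities $\ket{w}\propto$ (triangle of three $\ket{Max}$ tensors with $e^{i\frac{\pi}{6}Z}$ on the three internal edges) and $\ket{Max}\propto$ (triangle of three $\ket{w}$ tensors), and then notes that $\ket{Max}_n$ for $n\ge 4$ is a contraction of lower-qubit Max states — a case your proposal also leaves untreated, since the generating set contains the whole Max family.

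There is also a mechanical failure in the proposed re-routing step. The fusion relation Eq.~\eqref{fusion} is a two-term linear identity (cup--cap $=\tfrac{1}{\sqrt2}\,\mathrm{id}+\tfrac{1}{\sqrt2}\,$charged id), so using it to ``switch the connectivity'' of the inner strings unavoidably produces a linear sum of diagrams, which is precisely what you forbid; and the string-genus relation Eq.~\eqref{string_genus-1} by itself only inserts or deletes a closed string encircling a hole — it never changes which boundary points the inner strings join. Hence the toolkit you allow cannot carry out the re-routing within a single diagram, and even if it could, the target identity is false for the entanglement-class reason above. The fix is to abandon the ``single generator plus local gates'' reduction and instead exhibit explicit multi-copy networks, as in the paper's proof.
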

\begin{proof}
One can directly verify that in tensor networks:
$$
\wTN=\threemaxTN\ .
$$
up to a scalar.
Conversely, it can be directly verified that 
$$
\maxTN=\threewTN\ .
$$
up to s scalar, and the $n$-qubit Max state for $n\ge 4$ can be expressed as a contraction of lower-qubit Max states.
\end{proof}

The equivalent description of planar Matchgates in Quon language was first observed by Xun Gao:

\begin{theorem}
\label{MatchgateThm}
A tensor network is planar Matchgate iff it can be represented as a genus-zero 2d-CDM in which there is an outermost Hamiltonian cycle (colored pink).
\end{theorem}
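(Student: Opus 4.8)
The plan is to prove both directions of the equivalence, using the string-genus relation (Eq.~\eqref{string_genus-1}) as the bridge between the tensor-network picture and the 2d-CDM picture. First I would set up the forward direction: given a planar Matchgate tensor network built from $\{\ket{0},\bra{0},\ket{Max},e^{i\theta Z},X,Y,Z\}$, I would substitute the CDM representation of each generator from Section~\ref{Sec: QCS-3} and Section~\ref{CARalg}. Each generator's CDM has a distinguished pink (magenta) boundary arc, and the key observation (already used in the discussion of Matchgate circuits at the end of Section~\ref{CARalg}) is that when these tiles are glued along the planar structure of the tensor network, the pink arcs concatenate into a single closed curve that runs along the outer boundary of the planar diagram, i.e.\ an outermost Hamiltonian cycle through all the tiles. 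The black strings and crossings of the generators fill in the interior. I would then verify that the holes appearing in the individual generator tiles (the $\ket{Max}$, $e^{i\theta Z}$ tiles carry holes) are each encircled by the pink closed string, so applying Eq.~\eqref{string_genus-1} in all the alternating regions bounded by the pink cycle eliminates every hole, leaving a genus-zero 2d-CDM. Planarity of the tensor network forces genus zero; this needs a short argument that no handle is created in the gluing, which follows because the neighborhood 3-manifold of a planar graph is a handlebody whose genus equals the first Betti number, and all those generators of $H_1$ are exactly the holes we just killed.

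For the converse, I would start from a genus-zero 2d-CDM with an outermost Hamiltonian pink cycle and run the string-genus relation \emph{backwards}: in one choice of the alternating (shaded/unshaded) regions cut out by the pink cycle, I insert a string-genus factor $\sqrt{2}$ times a hole-with-encircling-string into each region. This reintroduces holes in a controlled, planar pattern. I then cut the resulting diagram along the pink cycle into elementary tiles, each bounded by a pink arc, and identify each tile with one of the CDM representations of $\{\ket{0},\bra{0},\ket{Max},e^{i\theta Z},X,Y,Z\}$ — here I would invoke the Quon dictionary of Section~\ref{Sec: QCS-3} together with the fact, established there, that $2$-qubit gates and $3$-qubit tensors are generated by Clifford-plus-Matchgate data, and restrict attention to the tiles that contain no crossings other than parameterized $e^{i\theta Z}$-type braids and charge pairs (this is what "Matchgate diagram" means: only ingredients (1) and (2), with the holes reintroduced). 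The arithmetic identities $\wTN=\threemaxTN$ and $\maxTN=\threewTN$ from the previous proposition let me freely trade between $\ket{w}$- and $\ket{Max}$-generated presentations, so I only need to hit one of the two generating sets.

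The main obstacle I expect is the tile-decomposition step in the converse direction: showing that \emph{every} genus-zero 2d-CDM with an outermost Hamiltonian pink cycle, after reinserting string-genus, decomposes into the \emph{specific} finite generating set — i.e.\ that there are no "exotic" local pictures along the pink cycle that fail to match a Matchgate generator tile. This requires a careful case analysis of the possible boundary patterns of strings and crossings crossing the pink cycle, reducing each via the Yang-Baxter relations (Proposition~\ref{yangbaxter}) and the charge-crossing relations (Eq.~\eqref{charge_crossing_equation}) to a normal form, and then recognizing the normal form as a product of generators. A secondary technical point is bookkeeping the scalar factors: each string-genus application contributes $\frac{1}{\sqrt{2}}$ (or its inverse when run backwards), and the braid normalizations carry phases $e^{\pm i\pi/8}$, so I would track these so that the equivalence is "up to a global scalar," consistent with the way the earlier propositions in this section are phrased. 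Finally, I would note that combining this theorem with the FKT / Pfaffian evaluation of planar Matchgates gives the promised efficient evaluation of the resulting genus-zero CDM, closing the loop with the complexity statement $Z(\Gamma)=(\tfrac{1}{\sqrt{2}})^{C(\Gamma)}\operatorname{Pfaffian}(M_1(\Gamma))$ in the $k=0$ case.
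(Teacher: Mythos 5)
Your overall strategy is the same as the paper's: for necessity, substitute the generator tiles and eliminate the internal holes with the string-genus relation; for sufficiency, reinsert string-genus factors in an alternating family of faces and recognize the result as a contraction of Matchgate generators. But there are two concrete gaps. In the forward direction you never address the charges that the Pauli $X$ generators deposit on the pink loops: the string-genus relation Eq.~\eqref{string_genus-1} removes a hole only when the encircling closed string is free of charges, so before applying it one must move charges with Eq.~\eqref{fourstringschange} so that each pink loop carries an even number, annihilate them, and dispose of the odd-parity case separately (the paper notes that if the total number of Pauli $X$'s is odd the network evaluates to zero, so one may assume even parity). Without this charge bookkeeping the hole elimination you describe simply does not go through.

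In the converse direction you have located the difficulty in the wrong place. The tile-recognition step you plan to attack with a Yang--Baxter normal-form case analysis is immediate in the paper's construction: once a string-genus is inserted in every inner shaded face (and pink arcs are attached along shaded boundary faces), each crossing, charge and string junction literally is one of the generator tiles ($e^{i\theta Z}$, $X$, $\ket{Max}$ or $\ket{w}$, $\ket{0}$), so no case analysis of ``exotic'' local pictures is required. What your plan silently assumes, and what the paper actually proves in Steps 2--3 of its argument, is that the alternating checkerboard two-coloring of the faces of the string diagram exists at all in the presence of crossings: the earlier parity coloring is stated only for non-intersecting strings, so one must show that the dual graph of the $4$-valent planar string graph contains no odd cycles and is therefore properly $2$-colorable. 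Supplying that combinatorial lemma, rather than a crossing normal form, is the missing step that makes the sufficiency direction complete.
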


\begin{proof}
We first prove the necessity. The Quon diagrams of the generating tensors are as follows:
$$
\ket{Max}=\maxstate{1.3}{0.5}{0.8}\ ,
\ 
e^{i\theta Z}=e^{i\theta}\times\yuanzhualpha{1}{0.5}{2}\ ,
\quad
X=\yuanzhurightcharge{1}{0.5}{2}\ .
$$
where $\alpha = e^{-2i\theta}$.
Thus, when contracting the generating tensors, each boundary surface line is accompanied by a surrounding pink string. 

We can apply Eq.~\eqref{fourstringschange} to move charges so that each pink loop carries an even number of charges, and can therefore be neutralized. 
Importantly, during this procedure, the total number of charges on all pink lines is conserved modulo 2.

Apparently, all charges arise from Pauli $X$ operators, and the total number of charges on pink lines equals the total number of Pauli $X$ in the network. 
Therefore, if the number of Pauli $X$ is even, we can always use the string-genus and charge-crossing relations to eliminate all holes. 
If the number is odd, one can easily show that the value of the network must be zero. Hence, without loss of generality, we may assume that the total number of charges on all pink loops is even.
This completes the necessity part.

We now prove the sufficiency.
We show how to add genera, along with surrounding pink lines, to convert a planar Quon diagram with crossings into a tensor network generated by $\{\ket{0},\bra{0},\ket{Max}, e^{i\theta Z}, X, Y, Z\}$. 

\textbf{Step 1.} Suppose we already have a black-and-white coloring of the faces of the planar graph such that any two adjacent faces (i.e., those sharing a common boundary line) have different colors. Then we add a string-genus to each inner black face, pink lines adjacent to the corresponding boundary surface line for each boundary black face, and leave all white faces unchanged.
In this way, it is clear that the resulting tensor network is generated by $\{\ket{0},\bra{0},\ket{Max}, e^{i\theta Z}, X, Y, Z\}$ is produced. 

\textbf{Step 2.} It remains to verify the existence of such a black-and-white coloring. This is equivalent to properly 2-coloring the nodes of the dual graph $H$ such that adjacent nodes receive different colors. 

Suppose $H$ contains no cycle of odd length. 
Starting from an arbitrary node with a chosen color, we traverse $H$. 
At each step, the coloring of the next node is uniquely determined, unless we return to a previously visited node. 
In the absence of odd-length cycles, these constraints are consistent. Thus, a proper 2-coloring exists.

\textbf{Step 3.} We must show that the dual graph $H$ contains no odd-length cycles. 
An edge in $H$ corresponds to a pair of adjacent faces in $G$.
Each step in $H$ represents passing into or out of exactly one cycle in $G$.
Consequently, any cycle in $H$ must have even length.

\end{proof}

As a result, we have two methods to compute a genus-zero 2D-CDM:
\begin{itemize}
    \item Use local Yang-Baxter relations in Quon diagrams.
    \item 
    Convert it into a Matchgate tensor network and reduce it to a perfect matching counting problem.
\end{itemize}

Now we introduce the first method. The idea is to find minimal 1-polygons and 2-polygons. The idea basically originates in \cite{steinitzvorlesungen}.

\begin{prop}\label{thm:Matchgate_YBR}
    Using Yang-Baxter relations, one can evaluate a genus zero 2d-CDM in polynomial time $O(n^3)$, except for a probability zero case where the singularity of the Yang-Baxter equation is encountered. Here n is the number of crossings and charges in the 2d-CDM. 
\end{prop}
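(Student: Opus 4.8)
The statement to prove is that a genus-zero 2d-CDM with $n$ crossings and charges can be evaluated in time $O(n^3)$ via Yang-Baxter relations, outside a measure-zero singular set. My plan is to set up an iterative reduction that at each round strictly decreases a topological/combinatorial complexity measure of the diagram, and to bound both the number of rounds and the cost per round. Following the hint that "the idea basically originates in \cite{steinitzvorlesungen}", the natural complexity measure is the number of \emph{faces} (or equivalently, by Euler's formula on the genus-zero surface, the number of crossings) of the 4-valent planar graph underlying the 2d-CDM, together with an auxiliary count of bigons and monogons. The key structural input I would invoke is a classical fact about arrangements of curves on the sphere: any non-trivial 4-valent planar diagram with at least one crossing contains either a \textbf{monogon} (a 1-polygon: a face bounded by a single strand returning to a crossing, i.e.\ a Reidemeister-I-type loop) or a \textbf{bigon} (a 2-polygon: a face bounded by two strands between two crossings $u$ and $v$). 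This is precisely the Steinitz-style lemma, and it is exactly the configuration for which \eqref{R1}, its parameterized generalization \eqref{R0}, and \eqref{R2}, \eqref{R3}, \eqref{R32} are designed.

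\textbf{Key steps, in order.} First I would reduce to the case where all decorations are crossings and charges on the genus-zero space boundary, and recall that by Eq.~\eqref{charge_crossing_equation} and the charge-movement relations \eqref{relative2}, all charges can be slid to a canonical location and paired up, contributing at most an $O(n)$-time preprocessing and at most an $O(n^2)$ bookkeeping of signs; after this one may assume the diagram is a genuine crossing diagram on $S^2$ with parameters $\alpha_i$. Second, invoke the Steinitz lemma: if the diagram has a crossing, it has an innermost monogon or bigon. Third, case analysis on the innermost polygon:
\begin{itemize}\item a monogon is removed by \eqref{R1} (or \eqref{R0} when the parameter permits), eliminating one crossing at the cost of a scalar factor $\tfrac{1+\alpha}{\sqrt 2}$;
\item a bigon with no interior crossings is removed by \eqref{R2}, eliminating two crossings and multiplying parameters;
\item a bigon with interior crossings is handled by first pushing interior strands out through its two sides using \eqref{R3}/\eqref{R32} (the bigon-reduction process illustrated in the excerpt), which either keeps the crossing count fixed while strictly decreasing the number of strands inside the bigon, or introduces the extra linear term of \eqref{R32} only on the measure-zero locus where \eqref{R3} fails.\end{itemize}
Fourth, iterate: each monogon/bigon removal strictly decreases the crossing count, and each bigon-emptying step strictly decreases an inner-complexity potential without increasing the crossing count, so the process terminates. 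Fifth, once no crossings remain the diagram is a disjoint union of simple closed strings on $S^2$, evaluated by \eqref{evaluation} in $O(n)$ time, and the accumulated scalar is the answer. Sixth, the complexity accounting: $O(n)$ crossing-removal rounds; locating an innermost monogon/bigon and performing the local surgery (including re-routing strands out of a bigon) costs $O(n)$ per elementary move and there are $O(n)$ elementary moves between two crossing removals, for an overall $O(n)\cdot O(n)\cdot O(n)=O(n^3)$.

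\textbf{The main obstacle.} The delicate point is the termination and cost of \textbf{Step four in the presence of nested bigons with interior crossings}: one must be sure that the process of emptying a bigon by \eqref{R3} moves does not re-create crossings or bigons elsewhere in a way that undoes progress, so the right potential function is essential — I would take something like the lexicographic pair (number of crossings, total number of strand-crossing incidences lying inside some innermost bigon) and verify each \eqref{R3} application strictly decreases it. A second, more routine obstacle is making the "except for a probability-zero case" clause precise: each application of \eqref{R3} requires the parameters $c_1,c_2,c_3$ of the involved crossings to avoid the zero set of a fixed polynomial (the determinant/denominator in the Yang-Baxter solution), and since only polynomially many such applications occur, the union of bad loci is still measure zero in the parameter space; on the complement one always uses \eqref{R3} rather than the branching \eqref{R32}. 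I expect the bigon-emptying potential argument to be where the real work lies; the rest is careful but standard surface combinatorics plus the complexity bookkeeping.
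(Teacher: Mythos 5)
Your proposal follows essentially the same route as the paper's own (sketched) proof: locate a minimal 1-polygon or 2-polygon, empty it via \eqref{R3} when it is not pure, reduce it by \eqref{R1} or \eqref{R2}, and handle charges separately via Eq.~\eqref{charge_crossing_equation} before (or after) the crossing reduction, with the measure-zero caveat coming from the singular locus of the Yang-Baxter equation. Your additional care about a termination potential for nested bigons and the union of singular loci only fleshes out details the paper leaves implicit, so the argument is correct and matches the paper's approach.
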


\begin{proof}
We sketch the proof. 
Firstly, suppose that there are no charges. In the diagram, one can always find a minimal 1-polygon or a 2-polygon. If it is not pure (i.e., contains internal strings), then Eq.~\eqref{R3} can be used to move the internal strings out. Once the 1-polygon or 2-polygon is pure, one can apply Eq.~\eqref{R1} or Eq.~\eqref{R2} upon it to reduce the number of crossings. 

Next, we consider the case with charges.
Using Eq.~\eqref{charge_crossing_equation} and Eq.\eqref{fourstringschange}, one can remove all charges, reducing the diagram to a genus-zero 2D-CDM without charges, and thereby returning to the previous case.
\end{proof}

Applying Thm.~\ref{MatchgateThm}, we obtain the following corollary.
\begin{coro}
Using Yang-Baxter relations, one can evaluate planar Matchgate tensor networks in polynomial time $O(n^3)$, except for a probability zero case.
\end{coro}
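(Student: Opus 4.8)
The statement to establish is Corollary~1: planar Matchgate tensor networks can be evaluated in time $O(n^3)$ up to a measure-zero exception. The natural strategy is to chain together the two results immediately preceding it. By Theorem~\ref{MatchgateThm}, any planar Matchgate tensor network is equivalent to a genus-zero 2d-CDM carrying an outermost (pink) Hamiltonian cycle; and by Proposition~\ref{thm:Matchgate_YBR}, any genus-zero 2d-CDM can be evaluated in time $O(n^3)$, where $n$ counts the crossings and charges, except on a probability-zero set where a Yang-Baxter singularity is met. So the proof is essentially a translation-and-bookkeeping argument: convert the tensor network to the genus-zero 2d-CDM, apply Proposition~\ref{thm:Matchgate_YBR}, and check that the conversion is itself polynomial and that the relevant size parameter stays linear in the size of the original network.

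The first step is to invoke the ``necessity'' direction of Theorem~\ref{MatchgateThm} constructively: replacing each generating tensor $\ket{0},\bra{0},\ket{Max},e^{i\theta Z},X,Y,Z$ by its Quon diagram (as displayed in the proof of that theorem), contracting along the network, and then using Eq.~\eqref{fourstringschange} and the charge-crossing relations to move all charges onto the pink loops so that each loop carries an even number and can be neutralized by the string-genus relation Eq.~\eqref{string_genus-1}. This yields a genus-zero 2d-CDM. I would note that each generating tensor contributes a bounded number of crossings and charges, so the resulting 2d-CDM has $n = O(m)$ crossings and charges, where $m$ is the number of tensors (equivalently the number of edges/vertices) in the original network; the charge-rearrangement itself is a linear-time diagrammatic normalization. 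The measure-zero caveat is inherited verbatim from Proposition~\ref{thm:Matchgate_YBR}: the Yang-Baxter equation Eq.~\eqref{R3} fails to admit a solution $(k,b_1,b_2,b_3)$ only on a set of measure zero in the space of parameters $(c_1,c_2,c_3)$, so for generic weights $\theta$ in the Matchgate circuit the reduction goes through.

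Then the second step is simply to feed the genus-zero 2d-CDM into Proposition~\ref{thm:Matchgate_YBR}, obtaining the $O(n^3) = O(m^3)$ bound and hence polynomial time in the size of the Matchgate tensor network. Since the partition function is multiplicative over disjoint components, one may also handle disconnected networks componentwise without affecting the complexity class. I would close by remarking that this gives an alternative to the Pfaffian/FKT route: the same quantity can be obtained either by the local Yang-Baxter simplification in the Quon picture or, via the converse direction of Theorem~\ref{MatchgateThm}, by translating back to a weighted perfect-matching count.

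The only genuine obstacle is making sure the size parameter is controlled after the string-genus eliminations and charge migrations, i.e.\ that collapsing the pink Hamiltonian cycle and cancelling charge pairs does not blow up the crossing count — but this is immediate because each of Eq.~\eqref{string_genus-1}, Eq.~\eqref{fourstringschange}, and the charge-crossing relations either removes structure or relocates a bounded amount of it, never introducing new crossings. Everything else is a direct appeal to the two quoted results, so the corollary follows at once.
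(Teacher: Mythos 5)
Your proposal is correct and matches the paper's route: the corollary there is obtained exactly by combining Theorem~\ref{MatchgateThm} (the Quon representation of planar Matchgates as genus-zero 2d-CDMs) with Proposition~\ref{thm:Matchgate_YBR} (Yang--Baxter evaluation of genus-zero 2d-CDMs in $O(n^3)$ outside a measure-zero set). Your added bookkeeping about the linear size of the translated diagram and the origin of the measure-zero caveat is consistent with, and only elaborates on, what the paper leaves implicit.
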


This algorithm is illustrated through an example in  \S\ref{MG_algorithm}.

\subsubsection{Graph-theorectic Matchgate}
\label{mg_and_graph}

We now present the correspondence between graph-theoretic structures and planar Matchgates.
Specifically, we provide an explicit, polynomial-time method to transform between planar graphs in the graph-theoretic sense and genus zero 2d-CDMs such that the relevant data in both settings coincide (i.e., weighted perfect matchings in the former and the partition function in the latter).

A genus-zero 2D-CDM $G$ can be transformed (via black-and-white coloring) into a ``standard form" $\widehat{G}$, which decomposes into elementary planar Matchgate tensors: $\ket{0}$, $\ket{w}, e^{i\theta Z}, X$.
We now give a dictionary $\mathfrak{F}$ that converts a genus-zero 2d-CDM into a weighted planar graph:

\begin{equation}
\label{dictionary_1}
\mathfrak{F}\left(
\yuanzhuCAPPLUSCOLORTIKZ{1}{0.5}{2}
\right)=\sqrt{2} \times\quad
\onedegreenode
,\quad
\mathfrak{F}\left(
\yuanzhurightcharge{0.7}{0.2}{1.9}
\right)= \quad
\twodegreenode
\quad
,\quad
\mathfrak{F}\left(
\yuanzhualpha{0.7}{0.2}{1.8}
\right)
=
\quad
\alpha\times
\zerodegreenode{\frac{1}{\alpha}},
\end{equation}

\begin{equation}
\label{dictionary_2}
\mathfrak{F}\left(
\wstate{1.2}{0.5}{0.8}
\right)=
\frac{\sqrt{3}}{2}i\times
\threedegreenode.
\end{equation}
where the rightmost of Eq.~\eqref{dictionary_1} denotes an edge of weight $\frac{1}{\alpha}$ with a global coefficient $\alpha$; in the remaining equations, each dot represents a vertex in the graph.
We formulate it in the following proposition:
\begin{prop}
We transform a ``standard" Quon diagram $\widehat{G}$ into a degree $\le3$ graph-theoretic planar graph $\Gamma$ by the previous dictionary $\mathfrak{F}$, then we have $\text{Perfect-matching}(\Gamma) = Z(\widehat{G})$. 
\end{prop}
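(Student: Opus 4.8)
The plan is to establish the identity $\mathrm{Perfect\text{-}matching}(\Gamma)=Z(\widehat G)$ by checking it on each elementary tensor of the standard decomposition and then verifying that the translation dictionary $\mathfrak F$ is compatible with the gluing/contraction operations on both sides. Since every genus-zero 2d-CDM in standard form $\widehat G$ decomposes into the planar Matchgate generators $\ket{0}$, $\ket{w}$ (equivalently, up to the relations in the previous proposition, $\ket{Max}$), $e^{i\theta Z}$, and $X$, it suffices to (i) verify the identity for each generator, and (ii) show that contracting two tensors in the Quon picture corresponds exactly to identifying the corresponding edge-endpoints/vertices in the graph picture, with the product of prefactors matching.

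First I would record the elementary verifications underlying the dictionary $\mathfrak F$ in Eqs.~\eqref{dictionary_1}--\eqref{dictionary_2}. The cap $\ket{0}_X$ becomes a degree-one node (a ``pendant vertex'') whose single incident edge, when matched, contributes the factor $\sqrt2$; this matches $\ket{0}_X=\frac{1}{\sqrt 2}(\text{cap})$ after accounting for the normalization conventions already fixed in \S\ref{Sec: QCS-3}. The charge $X$ becomes a degree-two node (equivalently a subdivided edge), which in a perfect matching forces the two half-edges to pair with their outside partners; this is precisely the action of $X$ in the $X$-basis. The $e^{i\theta Z}$ tensor becomes a weighted edge of weight $1/\alpha$ with global coefficient $\alpha$, where $\alpha=e^{-2i\theta}$; I would check this against Eq.~\eqref{crossing2} by expanding the parameterized crossing as a sum of the identity (weight-zero, i.e. the edge present in the matching) and the ``cup-cap'' term (edge absent), reading off the two perfect-matching contributions. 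Finally $\ket{w}$ becomes a degree-three vertex with prefactor $\tfrac{\sqrt3}{2}i$; this is checked by expanding $\ket{w}=\frac{1}{\sqrt2}(\ket{011}_X+\ket{101}_X+\ket{110}_X)$ and matching the three perfect matchings of the local ``claw'' against the three basis terms. Each of these is a finite, routine computation using the evaluation rules \eqref{evaluation}, the fusion relation \eqref{fusion}, and the crossing definitions.

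Next I would prove the \emph{gluing compatibility}: if $\widehat G=\widehat G_1\cup_e \widehat G_2$ is obtained by contracting an output leg of $\widehat G_1$ with an input leg of $\widehat G_2$, then on the graph side this corresponds to gluing $\Gamma_1$ and $\Gamma_2$ along the half-edge pair, and $\mathrm{PM}(\Gamma_1\cup\Gamma_2)=\sum_{\text{matching status of }e}\mathrm{PM}(\Gamma_1\mid e)\cdot\mathrm{PM}(\Gamma_2\mid \bar e)$, which is exactly how the partition function of a tensor network factors under leg contraction. The key structural point is that the standard form $\widehat G$ is built so that the pink Hamiltonian cycle organizes all the legs consistently with a planar embedding of $\Gamma$; I would invoke Thm.~\ref{MatchgateThm} and the black-and-white coloring argument there to guarantee that the local pieces assemble into a genuine planar graph with no sign ambiguity (the planar structure is what makes the perfect-matching sum, computed by Pfaffian/FKT, equal to the tensor-network value without orientation obstructions). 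Then an induction on the number of generators in $\widehat G$ finishes the argument: the base case is the single-generator check above, and the inductive step is the gluing identity.

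The main obstacle I expect is bookkeeping of the scalar prefactors and normalization conventions consistently across the dictionary --- in particular reconciling the $\sqrt2$ normalizations in $\ket0,\bra0$, the $\alpha$ versus $1/\alpha$ convention in the weighted-edge translation of $e^{i\theta Z}$, and the $\tfrac{\sqrt3}{2}i$ factor for $\ket w$ --- so that after contracting an entire network all the loose $\sqrt2$'s, phases, and the $\cos\theta$-type factors combine into exactly the weighted perfect-matching count with no residual scalar. A secondary subtlety is the degree-two-node (subdivided edge) convention for $X$: one must make sure that a vertex of degree two behaves correctly in perfect matchings (it must be covered by exactly one of its two edges, forcing the complementary status on the far ends), which is the graph-theoretic shadow of $X^2=I$. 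Neither of these is conceptually deep, but getting every constant right is where the real care lies; once that is done, the planarity (hence the validity of the FKT/Pfaffian evaluation, already cited in the excerpt) and the local checks make the proposition follow.
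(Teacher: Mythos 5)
Your proposal is correct and follows essentially the same route as the paper: decompose $\widehat G$ into the generators $\ket{0},\ket{w},e^{i\theta Z},X$, observe that a contraction index records whether the corresponding edge is in the matching so that the nonzero terms of the contraction are exactly the perfect matchings, and match the scalar prefactors by direct computation. Two small cautions: in the gluing step the shared edge must have the \emph{same} status at both of its endpoints (matched at both or at neither), not the complementary status suggested by your ``$\bar e$''; and planarity plays no role in the equality $\mathrm{Perfect\text{-}matching}(\Gamma)=Z(\widehat G)$ itself---it is needed only afterwards, for the Pfaffian/FKT evaluation.
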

\begin{proof}
The value of the tensor network is defined as a contraction.
In the contraction sum, choosing 0 for a given index can be interpreted as ``choosing this edge," while choosing 1 corresponds to ``not choosing this edge."
Note that the nonzero components of the tensor $e^{i\alpha Z}$ occur when the two sides share the same basis element, those of $X$ occur when the bases differ, and those of $\ket{w}$ occur only when exactly one entry is 0.
Thus, the potentially nonzero terms that appear in the contraction have a one-to-one correspondence with the perfect matchings of the original graph.
Moreover, the weights agree via a straightforward computation.
\end{proof}

As a result, computing the value of a genus-zero 2D-CDM reduces to counting weighted perfect matchings of a planar graph, which can be efficiently evaluated via the FKT algorithm:
$$
Z\left(G \right)
=
\text{Perfect-matching}\left(\mathfrak{F}(\widehat{G}) \right)
=
\text{Pfaffian}\left(K[\mathfrak{F}(\widehat{G})] \right),
$$
where $K$ is the skew-symmetric adjacency matrix under a Pfaffian orientation.

\subsubsection{QCS of Clifford}

Stabilizer states and Clifford gates in quantum computations are useful to design error correction codes. They can be efficiently simulated on classical computers, as established by the 
Gottesman-Knill theorem \cite{gottesman1997stabilizer,gottesman1998heisenberg,aaronson2004improved}. Stabilizer states have been studied with previous diagrammatic approach of tensor network and ZX-calculus \cite{backens2014zx}. Quon language has topological structure of strings, charges and braids to represent tensor network diagrammatically \cite{liu2017quon}. Biomante has proposed an open question in \cite{biamonte2017charged} whether one can give a topological proof of the Gottesman-Knill theorem with graphical language.

In Thm.~\ref{Cliffordhole}, we provide a classical simulation to a larger class of quantum circuits based on general rules of hole eliminations, giving an answer to this question.
Now we establish QCS of Clifford tensors with generators:
\begin{definition}
A crossing is called \textbf{Clifford}, if its crossing coefficient is $\alpha=1,-1,i,-i$, corresponding to the following Quon diagrams:
\begin{center}
    \begin{tikzpicture}[line width=.3mm]
        \begin{scope}
            \draw(0,.5)--(0,-.5);
            \draw(.5,.5)--(.5,-.5);
        \end{scope}
        \begin{scope}[shift={(1.5,0)}]
            \draw(0,.5)--(0,-.5);
            \draw(.5,.5)--(.5,-.5);
            \draw[red,decorate,decoration={snake}](0,0)--(.5,0);
            \fill(0,0)circle(.06);
            \fill(.5,0)circle(.06);
        \end{scope}
        \begin{scope}[shift={(3.5,0)}]
        \node at (-.5,0){$e^{\frac{-i\pi}{8}}$};
            \draw(0,.5)--(.5,-.5);
            \draw[white,WLL](.5,.5)--(0,-.5);
            \draw(.5,.5)--(0,-.5);
        \end{scope}

        \begin{scope}[shift={(5.5,0)}]
           \node at (-.5,0){$e^{\frac{i\pi}{8}}$};
            \draw(.5,.5)--(0,-.5);
             \draw[white,WLL](0,.5)--(.5,-.5);
             \draw(0,.5)--(.5,-.5);
        \end{scope}
    \node at (1,-0.5){$,$};
    \node at (2.5,-0.5){$,$};
    \node at (4.5,-.5){$,$};
    \node at (6.5,-.5){$.$};
    \end{tikzpicture}
\end{center}
\end{definition}

\begin{theorem}
\label{CliffordThm}
A tensor network is Clifford iff it is represented as a crossing-decorated 3-Manifold (CDM) in which all crossings are Clifford crossings.
\end{theorem}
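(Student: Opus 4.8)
The plan is to prove both directions of the equivalence, mirroring the structure of the proof of Theorem~\ref{MatchgateThm} but with the roles of holes and crossings interchanged. For the necessity direction, I would start from a Clifford tensor network and choose a convenient generating set for the Clifford group, say $\{\ket{0}, \bra{0}, H, S, CZ\}$ (together with the Pauli $X$, $Z$), all of which have already been given explicit Quon representations in \S\ref{Sec: QCS-3}: the states $\ket{0},\bra{0}$ as caps/cups, $H$ and $S$ as braids (which are Clifford crossings with $\alpha = \pm i$ up to the $e^{\mp i\pi/8}$ phases), $CZ$ via the diagram in \eqref{eq} (the $CLczgate$ picture), and $X,Z$ as charge pairs. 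The key observation is that every one of these building blocks is a CDM whose only decorations are (1) braided/charged strings and, after composition, (3) holes --- crucially, the braids appearing are exactly the Clifford crossings. So when these are composed into a tensor network, the resulting CDM has all crossings Clifford; the holes appear precisely because composition of disc-boundary pieces creates genus, exactly as in the Matchgate case but without the neutralizing pink Hamiltonian cycle. I would remark that the charge-crossing relations \eqref{charge_crossing_equation} preserve the Clifford property of a crossing (the parameter $b$ goes to $-b$, $-1/b$, or $1/b$, and $\{1,-1,i,-i\}$ is closed under these maps), and that Reidemeister moves R1--R3 keep braids as Clifford crossings, so no simplification step can destroy this property.

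For the sufficiency direction, I would take a CDM all of whose crossings are Clifford and show it represents a Clifford tensor network. The natural approach is to first translate the CDM into a tensor network at all (a general CDM built from braided/charged strings, crossings, and holes on a handlebody defines a multilinear map by the Alterfold partition function, and reading off its legs against the standard qubit bases of Proposition~\ref{prop:dim} gives a tensor network), and then verify that each local feature contributes only Clifford gates. A hole, by Lemma~\ref{stringgenusn} and Equations~\eqref{two holes meet}--\eqref{fourstringschange}, can be absorbed/expanded using the string-genus and genus-cut relations; concretely, a hole with $2k$ strings passing through it is (up to normalization) the tensor $\ket{Max}_{2k}$-type gadget, which is a stabilizer state. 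A Clifford crossing expands via \eqref{crossing1}--\eqref{crossing2} as a linear combination of the identity double-string and a charged double-string with coefficient in $\{1,-1,i,-i\}$ --- but more efficiently, as a braid it is literally $H$, $S$, or their inverses up to Pauli/phase, hence Clifford. Charges are Paulis. So every generator translates to a Clifford gate, and the tensor network is Clifford. I would also invoke the earlier remark that the string-genus relation plus Yang--Baxter imply all ZX-calculus relations, so one can alternatively cite the stabilizer completeness of the ZX fragment of \cite{backens2014zx} as a cross-check, but I would prefer the direct translation to keep the proof self-contained.

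The main obstacle I anticipate is the sufficiency direction's bookkeeping: one must be careful that an arbitrary Clifford CDM --- which may have holes threaded by arbitrarily many strings and arbitrarily tangled Clifford crossings with charges moving around the holes --- genuinely reduces, using only the established local relations (Yang--Baxter, string-genus, genus-cut, charge-crossing, and the bar-operator moves \eqref{bar_even}--\eqref{bar_odd}), to a composition of the named stabilizer generators without introducing any non-Clifford parameter or any linear combination that falls outside the stabilizer formalism. The subtlety is precisely the "topological spin" phenomenon flagged in the introduction: when a Clifford crossing winds around a hole, the odd/even distinction of the hole matters, and one needs the odd-hole handle slide (Lemma 3 of the supplement, referenced as Lemma~\ref{Cliffordhole} territory) to reduce odd holes, after which the remaining even Clifford holes are eliminated by the Clifford hole elimination relation (Theorem 3 of the supplement) --- so this theorem's sufficiency direction really leans on those two later results. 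I would therefore organize the argument to first reduce all crossings to a canonical braid form via Yang--Baxter, then eliminate holes using the hole-elimination toolkit, arriving at a 3-disc CDM with Clifford boundary decorations, which manifestly evaluates as a stabilizer tensor; the honest statement is that the full proof is deferred to the supplement, and here I would give the reduction strategy and the generator-by-generator dictionary, exactly parallel to how Theorem~\ref{MatchgateThm} was handled.
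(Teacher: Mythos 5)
Your necessity direction is fine and coincides with the paper's: the generating Clifford tensors ($H$, $S$, $GHZ$/$Max$, Paulis, $CZ$) are CDMs whose only crossings are Clifford, and composition preserves this. The gap is in sufficiency. You never actually prove that a CDM all of whose crossings are Clifford decomposes into stabilizer generators: for the crucial 3-disc case you call the stabilizer property ``manifest'' and say the full proof is ``deferred to the supplement,'' but this theorem \emph{is} the supplementary result, so nothing is left to defer to. Your fallback route --- reduce crossings by Yang--Baxter, then eliminate holes via the odd-hole handle slide (Lem.~\ref{handleslideslem}) and Clifford hole elimination (Thm.~\ref{Cliffordhole}) --- does not close the gap for two reasons. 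First, those results are formulated for closed 2d-CDMs organized into layers with a parity circle; their key steps (deriving a parity constraint by shrinking a layer to a single circle, annihilating variables pairwise along closed strings $S_i$) break down when layers terminate on output discs, which is exactly the situation for an open tensor with time boundary. Second, and more fundamentally, they are evaluation tools for the scalar partition function: even if all holes were removed, you would be left with precisely the unproved structural claim that a genus-zero CDM decorated only by Clifford crossings is a stabilizer tensor, which is the heart of the theorem, not a triviality.

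The paper's proof supplies the missing construction, and it does not invoke Lem.~\ref{handleslideslem} or Thm.~\ref{Cliffordhole} at all. After gluing handles (Move~1 of Prop.~\ref{alterfold}) to reach a standard solid torus and trading charge pairs for double braids, every vertical cylinder is normalized to exactly four strings: cylinders with $\le 2$ strings are removed by Eq.~\eqref{cuttwo}, and cylinders with $2k\ge 6$ strings are split into $k-1$ four-string cylinders by inserting string-genus relations. The torus is then cut into an upper and a lower 3-disc half, and for each half one checkerboard-shades the regions of the connected 4-valent braid graph (possible by parity), plugs a string-genus into each shaded region at the cost of $\sqrt2$ factors, and performs a pair of parallel genus cuts flanking every braid and every output disc. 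This severs the diagram into components that are literally the elementary Clifford tensors: unshaded regions give $k$-qubit $GHZ$, output discs give the 3-qubit $Max$, and braids give $S$ (up to conjugation by $H$). That explicit surgery-and-identification step --- the analogue of the shading argument you correctly anticipated from Thm.~\ref{MatchgateThm}, but carried out so that the \emph{pieces themselves} are recognized as stabilizer generators --- is what your proposal is missing; without it, your sufficiency argument either assumes the conclusion for 3-disc CDMs or rests on hole-elimination lemmas that do not apply to open diagrams and in any case do not yield the structural statement.
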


\begin{proof}
As shown in \S \ref{Sec: QCS-3}, the generating Clifford tensors $H$, $S$, $GHZ$ are all CDM with Clifford crossings. Therefore, any Clifford tensor network is a CDM with Clifford crossings.

Conversely, we prove that any CDM $M$ decorated by Clifford crossings and output discs can be decomposed into generating Clifford tensors through tensor and contraction. (Thus it is a stabilizer state.) 
The union of two CDM corresponds to the tensor product of the two tensors, thus we only need to show every connected component is Clifford.

First, we glue handles by Move 1 in Prop.~\ref{alterfold} to $M$, so that the result is a connected handlebody, which is homeomorphic to a solid torus $M'$.
Then $M$ and $M'$ represent the same tensor. So we only need to prove $M'$ to be Clifford. 
A pair of charges can be replaced as a double braid, thus we may assume that $M'$ has no charges for simplicity.

In order to depict the CDM on 2D plane, we arrange the solid torus $H$ as $[0,2g+1]\times [0,3] \times [-1,0] \setminus \cup_{j=1}^g(2j-1,2j)\times (1,2)\times [-1,0]$, for example:

\begin{center}
    \begin{tikzpicture}[line width=.3mm]
        \begin{scope}
            \holenew{.4}{-2}{-.1};
            \draw(-1.25,-1)--(-1.25,0);
            \draw(-1+.25,-1)--(-1+.25,0);
            \holenew{.4}{-.5}{-.1};
            \begin{scope}[shift={(-.3,0)}]
            \draw(.5,-1)--(1,0);
            \draw[white,WLL](1,-1)--(.5,0);
            \draw(1,-1)--(.5,0);
            \draw(1.5,-1)--(1.5,0);
            \draw(2,-1)--(2,0);
             \end{scope}
            \holenew{.4}{1.9}{-.1};
            \begin{scope}[shift={(2.3,0)}]
                \draw(.5,-1)--(.5,0);
            \draw(1,-1)--(1,0);
            \draw(1.5,-1)--(1.5,0);
            \draw(2,-1)--(2,0);
        
            \draw(2.5,-1)--(2.5,0);
            \draw[red,decorate,decoration={snake}](2,-.5)--(2.5,-.5);
            \fill(2,-.5) circle(.06);
            \fill(2.5,-.5) circle(.06);
            \draw(3,-1)--(3,0);
            \end{scope}
             \holenew{.4}{5.5}{-.1};
        \end{scope}
    \end{tikzpicture}
\end{center}

By parity, each vertical cylinder contains even number $2k$ of vertical strings. 
If $2k\leq 2$, then we replace the cylinder through the relations as in ~\ref{cuttwo}. 
If $2k \geq 6$, then we separate cylinder into $k-1$ cylinders with four vertical 
strings through the string genus relation as follows:
\begin{center}
    \begin{tikzpicture}[line width=.3mm]
        \begin{scope}[xscale=.5]
            \draw(.5,0)--(.5,-1);
            \draw(1.5,0)--(1.5,-1);
            \draw(-.5,0)--(-.5,-1);
            \draw(-1.5,0)--(-1.5,-1);
            \draw(2.5,0)--(2.5,-1);
            \draw(-2.5,0)--(-2.5,-1);
            
        \end{scope}
       \holenew{.4}{-2}{-.1};
       \holenew{.4}{1.5}{-.1};

        \node at (2.5,-.5){$\longrightarrow$};
        \begin{scope}[shift={(.5,0)}]
            
       \holenew{.4}{6.6}{-.1};
        \holenew{.4}{2.8}{-.1};
        
        \holestring{.2}{4.9}{0};
        
        \begin{scope}[shift={(5,0)},xscale=.5]
            \draw(1,0)--(1,-1);
            \draw(2,0)--(2,-1);
            \draw(-1,0)--(-1,-1);
            \draw(-2,0)--(-2,-1);
            \draw(3,0)--(3,-1);
            \draw(-3,0)--(-3,-1);
           
        \end{scope}
        \end{scope}
    \end{tikzpicture}
\end{center}
Then every vertical cylinder has four verticle strings. 
We decompose the solid torus as a contraction of the upper half and the lower half.
We need to show each half, a 3-disc CDM decorated by braids, is Clifford.

Without loss of generality, we assume the braid-decorated 4-valent graph of the 3-disc CDM to be connected and deform the graph into the half 2-disc of the boundary surface.
By connectness, each region is a polygon. By parity, the regions of the 4-valent graph admit a check-board shading. We plug in a string-genus in each shaded region and multiply by $\sqrt{2}$. 
For every braid or output disc, we apply a pair of parallel genus cut between the two holes in its nearest shaded regions and the pair of cuts are on two sides of the braid or the output disc.

\begin{center}
    \begin{tikzpicture}[line width=.3mm,scale=.5]
        \begin{scope}
            \draw(0,1)--(1,-1);
             \draw[white,WLL](0,-1)--(1,1);
            \draw(0,-1)--(1,1);
        \end{scope}
         \begin{scope}[shift={(3,0)}]
         \node at (-1,0){$\rightarrow$};
            \draw(0,1)--(1,-1);
             \draw[white,WLL](0,-1)--(1,1);
            \draw(0,-1)--(1,1);
            \fill[black!20](-.3,-1)--(-0.05,-1)--(0.45,0)--(-0.05,1)--(-.3,1);

            \fill[black!20](1+.3,-1)--(1+.05,-1)--(0.55,0)--(1+.05,1)--(1+.3,1);
        \end{scope}

        \begin{scope}[shift={(7,0)}]
        \node at (-1.8,0){$\rightarrow$};
        \holestring{.4}{-1}{1};
            \draw(0,1)--(1,-1);
             \draw[white,WLL](0,-1)--(1,1);
            \draw(0,-1)--(1,1);
         \holestring{.4}{1.5}{1};
         
        \end{scope}

        \begin{scope}[shift={(12,0)}]
        \node at (-1.8,0){$\rightarrow$};
        \holestring{.4}{-1}{1};
            \draw(0,1)--(1,-1);
             \draw[white,WLL](0,-1)--(1,1);
            \draw(0,-1)--(1,1);
         \holestring{.4}{1.5}{1};
         \draw[dashed, line width=.1mm](-.8,0)[bend left=90]to(1.8,0);
         \draw[dashed, line width=.1mm](-.8,0)[bend right=90]to(1.8,0);
        \end{scope}

         \begin{scope}[shift={(17,0)}]
        \node at (-1.8,0){$\rightarrow$};
        \noholestring{.4}{-1}{1};
            \draw(0,1)--(1,-1);
             \draw[white,WLL](0,-1)--(1,1);
            \draw(0,-1)--(1,1);
         \noholestring{.4}{1.5}{1};
        \filldraw[white](-.8,0.7)--(1.8,0.7)--(1.8,-.7)--(-.8,-0.7);
        \draw[blue](-.8,0.7)--(1.8,0.7)--(1.8,.2)--(-.8,0.2)--(-.8,0.7);
        \draw[blue,->](0.5,0.7)--(0.51,0.7);

         \draw[blue](-.8,-0.7)--(1.8,-0.7)--(1.8,-.2)--(-.8,-0.2)--(-.8,-0.7);
        \draw[blue,->](0.5,-0.7)--(0.51,-0.7);
        \begin{scope}[shift={(4.7,0)}]
        \node at (-1.5,0){$*$};
        
        \draw(-0.1,0.7)--(-0.1,-.7);
        \draw(0.3,0.7)--(0.7,-.7);
        
        \draw[white,WLL](0.7,0.7)--(0.3,-.7);
        \draw(0.7,0.7)--(0.3,-.7);
        \draw(1.1,0.7)--(1.1,-.7);

         \draw[blue](-.6,1.2-.5)--(-.6,-1.2+.5);
          \draw[blue](1.6,1.2-.5)--(1.6,-1.2+.5);
           \draw[blue](-.8,1.7-.5)--(1.8,1.7-.5)--(1.8,1.2-.5)--(-.8,1.2-.5)--(-.8,1.7-.5);
        \draw[blue,->](0.5,1.7-.5)--(0.5-.01,1.7-.5);

         \draw[blue](-.8,-1.7+.5)--(1.8,-1.7+.5)--(1.8,-1.2+.5)--(-.8,-1.2+.5)--(-.8,-1.7+.5);
        \draw[blue,->](0.5,-1.7+.5)--(0.5-.01,-1.7+.5); 
        \end{scope}
        
        \end{scope}

    \end{tikzpicture}
\end{center}

\begin{center}
    \begin{tikzpicture}[line width=.3mm,scale=.5]
        \begin{scope}
            \draw(0,1)--(1,-1);
             \draw[white,WLL](0,-1)--(1,1);
            \draw(0,-1)--(1,1);
            \fill(0.5,0)[white] circle(.5);
            \draw(1,0)[blue,->]arc[start angle=0,end angle=360,radius=.5];
        \end{scope}
         \begin{scope}[shift={(3,0)}]
         \node at (-1,0){$\rightarrow$};
            \draw(0,1)--(1,-1);
             \draw[white,WLL](0,-1)--(1,1);
            \draw(0,-1)--(1,1);
            \fill[black!20](-.3,-1)--(-0.05,-1)--(0.45,0)--(-0.05,1)--(-.3,1);

            \fill[black!20](1+.3,-1)--(1+.05,-1)--(0.55,0)--(1+.05,1)--(1+.3,1);
            
            \fill(0.5,0)[white] circle(.5);
            \draw(1,0)[blue,->]arc[start angle=0,end angle=360,radius=.5];
        \end{scope}

        \begin{scope}[shift={(7,0)}]
        \node at (-1.8,0){$\rightarrow$};
        \holestring{.4}{-1}{1};
            \draw(0,1)--(1,-1);
             \draw[white,WLL](0,-1)--(1,1);
            \draw(0,-1)--(1,1);
         \holestring{.4}{1.5}{1};
         
            \fill(0.5,0)[white] circle(.5);
            \draw(1,0)[blue,->]arc[start angle=0,end angle=360,radius=.5];
         
        \end{scope}

        \begin{scope}[shift={(12,0)}]
        \node at (-1.8,0){$\rightarrow$};
        \holestring{.4}{-1}{1};
            \draw(0,1)--(1,-1);
             \draw[white,WLL](0,-1)--(1,1);
            \draw(0,-1)--(1,1);
         \holestring{.4}{1.5}{1};
         \draw[dashed, line width=.1mm](-.8,0)[bend left=90]to(1.8,0);
         \draw[dashed, line width=.1mm](-.8,0)[bend right=90]to(1.8,0);
         
            \fill(0.5,0)[white] circle(.5);
            \draw(1,0)[blue,->]arc[start angle=0,end angle=360,radius=.5];
        \end{scope}

        \begin{scope}[shift={(17,0)}]
        \node at (-1.8,0){$\rightarrow$};
        \noholestring{.4}{-1}{1};
            \draw(0,1)--(1,-1);
             \draw[white,WLL](0,-1)--(1,1);
            \draw(0,-1)--(1,1);
         \noholestring{.4}{1.5}{1};
        \filldraw[white](-.8,0.7)--(1.8,0.7)--(1.8,-.7)--(-.8,-0.7);
        \draw[blue](-.8,0.7)--(1.8,0.7)--(1.8,.2)--(-.8,0.2)--(-.8,0.7);
        \draw[blue,->](0.5,0.7)--(0.51,0.7);

         \draw[blue](-.8,-0.7)--(1.8,-0.7)--(1.8,-.2)--(-.8,-0.2)--(-.8,-0.7);
        \draw[blue,->](0.5,-0.7)--(0.51,-0.7);
        \begin{scope}[shift={(4.7,0)}]
        \node at (-1.5,0){$*$};
        
        \draw(-0.1,0.7)--(-0.1,-.7);
        \draw(0.3,0.7)--(0.3,-.7);

        \draw(0.7,0.7)--(0.7,-.7);
        \draw(1.1,0.7)--(1.1,-.7);

        \fill(.5,0)[white]circle(.4);
        \draw(.5,0)[blue]circle(.4);
        \draw[blue,->](.9,0)--(.9,.01);

         \draw[blue](-.6,1.2-.5)--(-.6,-1.2+.5);
          \draw[blue](1.6,1.2-.5)--(1.6,-1.2+.5);
           \draw[blue](-.8,1.7-.5)--(1.8,1.7-.5)--(1.8,1.2-.5)--(-.8,1.2-.5)--(-.8,1.7-.5);
        \draw[blue,->](0.5,1.7-.5)--(0.5-.01,1.7-.5);

         \draw[blue](-.8,-1.7+.5)--(1.8,-1.7+.5)--(1.8,-1.2+.5)--(-.8,-1.2+.5)--(-.8,-1.7+.5);
        \draw[blue,->](0.5,-1.7+.5)--(0.5-.01,-1.7+.5); 
        \end{scope}
        
        \end{scope}

    \end{tikzpicture}
\end{center}
Then the result is a union of three elementary types of CDM: The components containing an unshaded region, an output disc or a braid correspond to three elementary Clifford tensors $k$-qubit $GHZ$, $3$-qubit $Max$ or $S$-gate respectively up to the change of basis by $H$.
Therefore, the CDM $M'$ is Clifford, so is $M$.
\end{proof}

 \color{black}

When the CDM is a 3-disc decorated by Clifford crossings and $n$-output discs, it is a $n$-qubit stabilizer state. Its stabilizer group consists of stabilizers given by {\it cycle operators} on the decorated 4-valent graph on the boundary sphere as shown in \cite{liu2019quantized}.
This observation has been applied to design stabilizer codes \cite{liu2019quantized} and general variational ones \cite{shao2024variational}. Now both Clifford and Matchgates are simulated in QCS, as illustrated in Fig.~\ref{fig:Clifford and Matchgates in QCS}.

\begin{figure}[H]
\begin{center}
\begin{tikzpicture}[line width=.3mm]
\begin{scope}
\node at (1.8, 1){Clifford};
\node at (3.6, 1){Matchgates};

\draw[purple] (1.5,.5) circle (2 and 1);
\draw[red] (3.5,.5) circle (2 and 1);

\begin{scope}[scale=.5,shift={(.7,2)}]
\hole{1}{0}{0};
\end{scope}
\begin{scope}[shift={(2.2,0)},scale=.6]
\draw[line width=.6pt] (1,0) -- (0,1);
\draw[white,WLL] (0,0) -- (1,1);
\draw[line width=.6pt] (0,0) -- (1,1);
\draw[domain=6:24,smooth,variable=\x,red] plot (\x/30, {sin(\x r)/16+0.2});
\fill (.8,.2) circle (.065);
\fill (.2,.2) circle (.065);
\end{scope}
\begin{scope}[shift={(4,0)}, scale=.6]
\draw (0,1)--(1,0);
\draw (0,0) -- (1,1);
\node at (.2,.5) {$\theta$};
\end{scope}
\end{scope}
\end{tikzpicture}
\end{center}
\caption{Clifford and Matchgates in QCS}
\label{fig:Clifford and Matchgates in QCS}
\end{figure}

\section{QCS algorithm}
Clifford gates and Matchgates together form a universal gate set, so any quantum circuit or qubit tensor network can be simulated as a CDM in Quon, which has the shape of a handlebody. In this section, we establish the Quon Classical Simulation, in which both Clifford circuits and Matchgates circuits admit efficient classical simulation. The exponential computational complexity comes from a new concept called \textbf{Magic holes}, which combines parameterized crossings and topological holes to capture long-range entanglements.

\subsection{Layers and holes in the 2d-CDM}

We define a connected component of strings as a set of strings that are connected in the following sense ($\alpha\notin \{\pm1,\pm i\}$):
\begin{figure}[H]
\centering
\begin{tikzpicture}[line width=.3mm]
    \begin{scope}
        \draw(0,.5)--(0,-.5);
        \draw(.5,.5)--(.5,-.5);
        \node at (0,.7){$1$};
        \node at (0,-.7){$1$};

        \node at (.5,.7){$2$};
        \node at (.5,-.7){$2$};
    \end{scope}
    \begin{scope}[shift={(1.5,0)}]
        \draw(0,.5)--(0,-.5);
        \draw(.5,.5)--(.5,-.5);
        \draw[red,decorate,decoration={snake}](0,0)--(.5,0);
        \fill(0,0)circle(.06);
        \fill(.5,0)circle(.06);
        \node at (0,.7){$1$};
        \node at (0,-.7){$1$};
        
        \node at (.5,.7){$2$};
        \node at (.5,-.7){$2$};
        
    \end{scope}
    \begin{scope}[shift={(3,0)}]
        \draw(0,.5)--(.5,-.5);
        \draw[white,WLL](.5,.5)--(0,-.5);
        \draw(.5,.5)--(0,-.5);
        \node at (0,.7){$1$};
        \node at (0,-.7){$2$};
        
        \node at (.5,.7){$2$};
        \node at (.5,-.7){$1$};
    \end{scope}

    \begin{scope}[shift={(4.5,0)}]
        \draw(0,.5)--(.5,-.5);
        \draw(.5,.5)--(0,-.5);
        
        \node at (0.5,0){$\alpha$};
        \node at (0,.7){$1$};
        \node at (0,-.7){$1$};
        
        \node at (.5,.7){$1$};
        \node at (.5,-.7){$1$};
    \end{scope}

    \begin{scope}[shift={(6,0)}]
        \draw(0,.5)--(.5,-.5);
        \draw(.5,.5)--(0,-.5);
        
        \fill[white](.5,0)arc[start angle=0,end angle=360,radius=.25];
        \draw[blue,line width=.6pt](.5,0)arc[start angle=0,end angle=360,radius=.25];
        \draw[blue,->](.5,0)--(.5,.01);
        \node at (0,.7){$1$};
        \node at (0,-.7){$1$};
        \node at (.5,.7){$1$};
        \node at (.5,-.7){$1$};
    \end{scope}
\node at (1,-0.5){$,$};
\node at (2.5,-0.5){$,$};
\node at (4,-.5){$,$};
\node at (5.5,-.5){$,$};
\node at (7,-.5){$.$};
\end{tikzpicture}
\caption{Connected component}
\label{connected component}
\end{figure}

Here strings in the same connected component are labeled with the same number. In the presence of boundaries, all boundary points on an output disc are considered connected.
Since there are braids, different connected components may be linked.
We apply the relation given in Eq.~\eqref{switch} to switch positive and negative braids and to resolve all links at the price of introducing possibly additional charges in this process. 

We now deal with the charges on each connected component. For each component $L$, we re-pair the charges so that all are paired within $L$, except when $L$ contains an odd number of charges—in which case one charge is paired with another outside $L$.
We then apply the charge annihilation relation (Eq.~\eqref{charge_annihilation}) to reduce the number of charges in each layer to at most one.

This process may introduce additional phases, as given in Eqs.~\eqref{relative0}, \eqref{relative1}, \eqref{relative2}, and Eq.~\eqref{charge_crossing_equation}, as well as relative phases from Eq.~\eqref{relativegenus}.
It may also generate wire circles (Eq.~\eqref{wirecircle}) enclosing a hole, as illustrated in Fig.~\ref{bolangquan}.
\begin{figure}[H]
    \centering
    \begin{tikzpicture}[line width=.3mm]
        \begin{scope}
        \draw[red,decorate,decoration={snake}](-.5,0)arc[start angle=180,end angle=540,radius=1];
            \hole{0.6}{0.2}{0.8};
        \end{scope}
    \end{tikzpicture}
    \caption{A wire circle enclosing a hole may arise during the charge re-pairing process.}
    \label{bolangquan}
\end{figure}

There remains at most one charge on each connected component. 
To deal with them, we add one auxiliary circle, called the \textbf{parity circle}, and then re-pair charges by Eq.~\eqref{repair}, so that the remaining charges are attached to the parity circle as follows:

%The pair of charges are not connected and charges can be re-paired. 
%For simplicity, we demand that each layer have at most one charge. 
%Otherwise, we can move a pair of charges on a layer and cancel them. 
%

\begin{tikzequation}
\label{paritycircle}
\begin{scope}[scale=.5]
\draw[domain=-6*pi:0,smooth,variable=\x,red] plot (\x/10, {-sin(\x r)/5});
%\draw[domain=-6*pi:0,smooth,variable=\x,red] plot (\x/10, {sin(\x r)/5+1});
\draw[white,WL] (-.6*pi+.5,-.4-.5)--(-.6*pi+.5,0-.1-.5) arc (0:180:.5)--++(0,-.3);
\draw (-.6*pi+.5,-.4-.5)--(-.6*pi+.5,0-.1-.5) arc (0:180:.5)--++(0,-.3);
%\draw[white,WLL] (-.6*pi+.5,.4+1+.5)--(-.6*pi+.5,0+1.1+.5) arc (0:-180:.5)--++(0,+.3);
%\draw (-.6*pi+.5,.4+1+.5)--(-.6*pi+.5,0+1.1+.5) arc (0:-180:.5)--++(0,+.3);

\draw (.5,-.4-.5)--(.5,0-.1-.5) arc (0:180:.5)--++(0,-.3);
%\draw (.5,.4+1+.5)--(.5,0+1.1+.5) arc (0:-180:.5)--++(0,+.3);

\draw[white,WLL](0+.5,1.1)--(-.6*pi-.5,1.1);
\draw(-.9,3)arc[start angle=90,end angle=90+360,radius=1];

\draw[white,WL](0+.5,.8)--(-.6*pi-.5,.8);
\draw[white,WL](0+.5,.5)--(-.6*pi-.5,.5);
\draw(0+.5,.8)--(-.6*pi-.5,.8);
\draw(0+.5,.5)--(-.6*pi-.5,.5);
\node at (-4,1.6){Parity circle};
\node at (0.1,-1.2) {2};
\node at (-.6*pi-.8,-1.2) {Layer 1};
\node at (1.5,.5) {$=$};

\fill (0,-0.1) circle (.1);
\fill (-.6*pi,-0.1) circle (.1);
\end{scope}
\begin{scope}[scale=.5, shift={(5,0)}]

\draw (-.6*pi+.5,-.4-.5)--(-.6*pi+.5,0-.1-.5) arc (0:180:.5)--++(0,-.3);
\draw (.5,-.4-.5)--(.5,0-.1-.5) arc (0:180:.5)--++(0,-.3);
\draw[white,WLL](0+.5,1.1)--(-.6*pi-.5,1.1);
\draw(-.9,3)arc[start angle=90,end angle=90+360,radius=1];

\draw[red,decorate,decoration={snake}](-.6*pi,-0.1)--(-.6*pi+.45,1.1);
\draw[red,decorate,decoration={snake}](0,-0.1)--(-.5,1.1);

\fill (0,-0.1) circle (.1);
\fill (-0.5,1.1) circle (.1);
\fill (-.6*pi,-0.1) circle (.1);
\fill (-.6*pi+.45,1.1) circle (.1);

\draw[white,WL](0+.5,.8)--(-.6*pi-.5,.8);
\draw[white,WL](0+.5,.5)--(-.6*pi-.5,.5);
\draw(0+.5,.8)--(-.6*pi-.5,.8);
\draw(0+.5,.5)--(-.6*pi-.5,.5);
\end{scope}
\node at (3.1,0){$.$};
\end{tikzequation}

After this procedure, we call each connected component a \textbf{Layer}. The wire circle is also considered as a \textbf{Layer}.

\begin{definition}
A hole $g$ is said to be \textbf{not involved} in a given layer $L$ if there exists a curve from $g$ to infinity that does not intersect $L$.
Otherwise $g$ is said to be \textbf{involved} in $L$.
\end{definition}

\begin{figure}[H]
\centering
\subfigure[]{
\begin{tikzpicture}[line width=.3mm]
\begin{scope}[scale=.7]
\begin{scope}[shift={(0,0)},scale=.25]
\hole{2}{0.4}{1};
\node at (2,2){$g$};
\node at (-3.7,4){$L$};
\end{scope}
    \begin{scope}
    \draw[white,WLL](.5,1)arc[start angle=-90,end angle=90,radius=.5];
    \draw[white,WLL](.5,-1)arc[start angle=90,end angle=-90,radius=.5];
    \draw[white,WLL](.5,2)arc[start angle=90, end angle=270,radius=2];
    \draw[white,WLL](.5,1)arc[start angle=90,end angle=270,radius=1];
    \draw(.5,1)arc[start angle=-90,end angle=90,radius=.5];%up left half circle
    % \draw(.5,-1)arc[start angle=90,end angle=-90,radius=.5];%down right half circle
    \draw(.5,2)arc[start angle=90, end angle=270,radius=2];%left most
    \draw(.5,1)arc[start angle=90,end angle=270,radius=1];%left second
    \end{scope}
    
    \begin{scope}[shift={(1,0)},xscale=-1]
    \draw[white,WLL](.5,1)arc[start angle=-90,end angle=90,radius=.5];
    % \draw[white,WLL](.5,-1)arc[start angle=90,end angle=-90,radius=.5];
    \draw[white,WLL](.5,2)arc[start angle=90, end angle=270,radius=2];
    \draw[white,WLL](.5,1)arc[start angle=90,end angle=270,radius=1];
    \draw(.5,1)arc[start angle=-90,end angle=90,radius=.5];
    % \draw(.5,-1)arc[start angle=90,end angle=-90,radius=.5];%down left half circle
    \draw(.5,2)arc[start angle=90, end angle=270,radius=2];
    \draw(.5,1)arc[start angle=90,end angle=270,radius=1];
    \end{scope}
\end{scope}
\end{tikzpicture}
}
\subfigure[]{
    \begin{tikzpicture}[line width=.3mm]
        \begin{scope}[shift={(0,3)}]
        \node at (0.55,0.35){$g$};
        \node at (-0.5,0.8){$L$};
        \draw[red,decorate,decoration={snake}](-.5,0)arc[start angle=180,end angle=540,radius=1];
        \hole{0.6}{0.2}{0.8};
        \end{scope}
    \end{tikzpicture}
}

\caption{Examples of a hole $g$ INVOLVED in a layer $L$. 
% In (a), though the hole is involved, it can be moved out by Reidemeister moves.
}
\end{figure}

We then introduce a key notion called the odd hole.
\begin{definition}
A hole $g$ is called an \textbf{odd hole}, if there exists a layer $L$ and a curve $r$ from $g$ to infinity (also denoted by a hole), which does not pass through any crossings, such that $r$ intersects $L$ transversally on strings odd times. 
In this situation, we say that $g$ is an odd hole with respect to the layer $L$.
Otherwise, $g$ is called an \textbf{even hole}.
\end{definition}
Fig.~\ref{fig:odd} gives an example of an odd hole.
\begin{figure}[H]
    \centering
\begin{tikzpicture}[line width=.3mm]
    \begin{scope}[scale=.6]
    \begin{scope}[scale=.55,shift={(-6,1)}]%genus
       % \node at (.6,0){$g_0$};
        
            \draw[blue](.25,-1)arc[start angle=150,end angle=30, radius=.5];
            \draw[blue] (0,-.8)arc[start angle=-150,end angle=-30, radius=.8];
           %  \node at (2.3,-1){$...$};
        \end{scope}
    
    % \draw[blue,line width=.6pt](-2.5,-1.5)--(-2.5,1.5);
     \node at (-2.9,1)[blue]{$\infty$};
    \draw(-2,-1.5)--(-2,1.5);
    
        \draw(-1.5,-1.5)--(-1.5,1.5);
        \node at (-1,0){$...$};
        \draw(-.5,-1.5)--(-.5,1.5);
        \draw(0,-1.5)--(0,1.5);
    \begin{scope}[shift={(2,1)}]
    \begin{scope}[shift={(-.3,-.6)}]
        \draw(-.5,0.5)..controls (.8,2)and(2.7,-2)..(3.5,0);
        \draw(3.5,0)..controls (4.5,2)and(-1.5,-2)..(-.5,.5);
    \end{scope}
    \begin{scope}[shift={(1.6,-1.3)}]
        \draw(-.9,-1)..controls(-1,1)and(0,1)..(2,1.3);
        %\draw(-1,-1.5)..controls(-1,0)and(0,1)..(2,1.5);
        \end{scope}

     \begin{scope}[shift={(-.5,-.6)}]
     \draw[white,WL](-1,0)..controls (-.5,2)and(1.5,2)..(1.5,-1);
        \draw[white,WL](1.5,-1)..controls(2,-4)and(4,2)..(1,-1);
        \draw[white,WL](1,-1)..controls(-.5,-2)and(-1.5,-2)..(-1,0);
        \draw(-1,0)..controls (-.5,2)and(1.5,2)..(1.5,-1);
        \draw(1.5,-1)..controls(2,-4)and(4,2)..(1,-1);
        \draw(1,-1)..controls(-.5,-2)and(-1.5,-2)..(-1,0);
       % \draw(2.8,-1)..controls (2.4,2)and(1.2,-2)..(-1,0);
        
    \end{scope}
    \begin{scope}[shift={(3,-1.5)}]
        %\draw(0,0)..controls(-2,1)and(-1,2)..(1,.6);
    \end{scope}
        \begin{scope}[scale=.55,shift={(-.5,0.5)}]%genus
        \node at (.5,-.4){$g_0$};
            \draw[blue](.25,-1)arc[start angle=150,end angle=30, radius=.5];
                \draw[blue] (0,-.8)arc[start angle=-150,end angle=-30, radius=.8];
        \end{scope}
        \begin{scope}[scale=.55,shift={(4,0)}]%genus
        \node at (2.6,-.7){$g_{2}$};
            \draw[blue] (.25,-1)arc[start angle=150,end angle=30, radius=.5];
                \draw[blue] (0,-.8)arc[start angle=-150,end angle=-30, radius=.8];
        \end{scope}
        \begin{scope}[scale=.55,shift={(2.1,-1.5)}]%genus
        \node at (.9,-1.67){$g_1$};
            \draw[blue] (.25,-1)arc[start angle=150,end angle=30, radius=.5];
                \draw[blue] (0,-.8)arc[start angle=-150,end angle=-30, radius=.8];
        \end{scope}
    \end{scope}
    \end{scope}
\end{tikzpicture}
    \caption{An example of an odd hole $g_0$}
    \label{fig:odd}
\end{figure}

We introduce a notation to deal with holes.
For a 2d-CDM $G$, we can arbitrarily choose a set of holes $h_1,\dots,h_m$, and write each of them into a sum of two terms as in Eq.~\eqref{genus_is_a_variable}, and the partition function is written as a sum of $2^m$ terms:
\begin{equation}
\label{notation_of_genus_as_a_variable}
Z(G) = \frac{1}{2^m}\sum_{g_1,\dots,g_m\in\{0,1\}} Z(G_{h_1=g_1,\dots,h_m=g_m}),
\end{equation}
where the notation $G_{h_1=g_1,\dots,h_m=g_m}$ denotes the 2d-CDM constructed by replacing the hole $h_i$ by one of the two diagrams in Eq.~\eqref{genus_is_a_variable} depending on the value of $g_i\in\{0,1\}$. 
In the rest of the letter we may abuse the notation to write
$
Z(G) = \frac{1}{2^m}\sum_{g_1,\dots,g_m\in\{0,1\}} Z(G_{g_1,\dots,g_m})
$
if no confusions arise.

\begin{definition}
    We say that there is a \textbf{constraint} $g_1+\dots+g_m=x \ (\operatorname{mod}\ 2),x\in\{0,1\}$ in the 2d-CDM $G$, if 
    $$
    Z(G_{g_1,\dots,g_m}) = 0,\quad \text{whenever } g_1+\dots+g_m\ne x.
    $$
\end{definition}
\begin{prop}
Suppose $g_1,...,g_m$ are all odd holes with respect to a layer $L$, then there is a constraint $g_1+...+g_m=x$ for a $x\in\{0,1\}$. 
\end{prop}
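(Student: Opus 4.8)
The plan is to show that the parity $g_1+\dots+g_m$ of the odd holes (with respect to the layer $L$) is forced to a fixed value by a topological obstruction coming from the wire/charge that we can push onto $L$. First I would recall the mechanism behind the genus-cut and the bar/charge relations: expanding each hole $g_i$ as in Eq.~\eqref{genus_is_a_variable} replaces $h_i$ by either the plain annular neighbourhood or one carrying a red wire with a charge pair (label $g_i=1$); the odd-hole hypothesis says there is a curve $r_i$ from $h_i$ to infinity meeting $L$ transversally in an odd number of string points. The key algebraic fact is Eq.~\eqref{relativegenus} (together with Eqs.~\eqref{bar_odd} and \eqref{bolangquan_genus_cut}): when a charged red wire associated to a hole of label $g$ passes a crossing of $L$ it picks up a sign $(-1)^{gv}$ depending on the wire-label $v$ on $L$, and more to the point a red wire threading an odd number of strings of $L$ cannot be absorbed into the diagram without cost.

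The main step is to transport all the red wires (one for each $h_i$ with $g_i=1$) along the curves $r_i$ to a common point at infinity and fuse them. I would argue as follows: isotope the union $\bigsqcup_i r_i$ so that they leave their holes, run out to infinity, and can be joined there into a single closed red loop $W$; by the re-pairing and re-connection rules for red wires (Eqs.~\eqref{chargesymmetric}, \eqref{repair}, \eqref{wirecircle}), this fusion is legitimate up to a global phase that does not depend on the individual $g_i$. Now $W$ is a red loop whose algebraic intersection number with the strings of $L$ is $\sum_{i:\,g_i=1}(\text{odd})\equiv \sum_i g_i \pmod 2$ plus a fixed contribution $c$ from the parts of $W$ that do not come from the $r_i$ (the piece at infinity, plus whatever fixed crossings of other layers $W$ must cross). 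By the neutrality/bar relations, a red loop meeting $L$ in an odd total number of string points annihilates the diagram — this is exactly the content of Eq.~\eqref{evaluation} applied to the wire circle once it has been isotoped off $L$, using Eq.~\eqref{add_bar} / Prop.~\ref{addbar} to see that the only surviving configuration is the even one. Hence $Z(G_{g_1,\dots,g_m})=0$ unless $\sum_i g_i + c \equiv 0$, i.e. there is a constraint $g_1+\dots+g_m = x$ with $x = c \bmod 2 \in\{0,1\}$.

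The hard part will be making the isotopy of the curves $r_i$ to infinity rigorous and checking that the ``fixed contribution'' $c$ is genuinely independent of the choices: $W$ may be forced to cross crossings of $L$ belonging to a non-Clifford parameter, or to cross other layers, and one must verify (using the Yang--Baxter relations of Prop.~\ref{yangbaxter} and the charge-crossing relations Eq.~\eqref{charge_crossing_equation}) that all such passages only multiply $Z$ by nonzero scalars and flip the surviving parity by a quantity depending solely on the ambient diagram, not on $(g_1,\dots,g_m)$. I would handle this by first using Eq.~\eqref{switch} to resolve $L$ into unknots in layers so that ``intersection number of $W$ with $L$'' is well defined mod $2$ as a homological pairing, then invoking the $\mathbb{F}_2$-valued intersection pairing on $H_1$ of the 2d-CDM (the cohomology picture sketched around Eq.~\eqref{notation_of_genus_as_a_variable}): the layer $L$ defines a class in $H^1(M,\mathbb{Z}_2)$, each odd hole $h_i$ pairs to $1$ with it, and the vanishing of $Z$ on the ``wrong'' parity is the statement that only the sublattice of $H_1$ orthogonal to $[L]$ contributes. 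Once this is set up, the proposition is the mod-$2$ linearity of that pairing, and $x$ is identified as the value of $[L]$ on the fixed part of the diagram.
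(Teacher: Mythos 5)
There is a genuine gap at the decisive step of your argument. The overall skeleton is the same as the paper's (cut each odd hole to infinity along its odd curve, collect the resulting $g_i$-labelled data on the layer $L$, and read off a parity constraint), but the vanishing claim that actually produces the constraint is not established by what you cite. You assert that ``a red loop meeting $L$ in an odd total number of string points annihilates the diagram'' and justify it by Eq.~\eqref{evaluation} ``once it has been isotoped off $L$'' together with Prop.~\ref{addbar}. This does not work as stated: a closed red wire that merely crosses strings transversally carries no charges, and such crossings can be switched at the cost of signs only (Eq.~\eqref{relative0}), so the geometric intersection parity of a free wire loop with $L$ by itself forces nothing — and if the loop could be isotoped off $L$, there would be no obstruction at all. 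The obstruction lives in the \emph{charges} (wire endpoints) that the genus cut of Lemma~\ref{stringgenusn} deposits on the strings of $L$: an odd number of $g_i$-labelled charges per odd hole, which after re-pairing and annihilation leave a single $g_i$-charge on $L$ for each $i$, plus possibly one genuine charge. To conclude $Z=0$ when the total parity is wrong, the paper then reduces the whole layer $L$ to a single circle using the Yang--Baxter relations (Prop.~\ref{yangbaxter}), so that Eq.~\eqref{evaluation} (a circle carrying one charge evaluates to $0$) applies and fixes $x$ by whether the extra charge is present. Your proposal never performs this reduction of $L$, and nothing you invoke substitutes for it.

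Your fallback paragraph does not repair this: the statement that ``only the sublattice of $H_1$ orthogonal to $[L]$ contributes'' is precisely (a reformulation of) the proposition to be proved, so invoking it is circular, and the ``fixed contribution $c$'' from the piece of $W$ at infinity is exactly the quantity the paper pins down concretely as the optional extra charge surviving on the reduced circle. If you replace the wire-loop intersection heuristic by the genus-cut bars on $L$, the pairwise charge annihilation, and the Yang--Baxter reduction of $L$ to a charged circle, your argument becomes the paper's proof.
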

\begin{proof}
We perform a genus cut from each $g_i$ to infinity such that $g_i$ lies on each string for odd times. We re-pair, move and annihilate these variables such that there is only one $g_i$ on the layer. Then we use the Yang-Baxter relations to reduce this diagram to a single circle, with variables attached $g_i$ and possibly an additional charge. This circle gives an equation $g_0+...+g_m=x$ where $x=0$ or $1$.
\end{proof}

It is crucial that we can represent the constraint diagrammatically by a double circle surrounding the holes as follows:
\begin{prop}
\label{addtwostrings}
If there is a constraint $g_0+g_1+...+g_m=x$, for $x=0,1$, then:

\begin{center}
    
\begin{tikzpicture}[line width=.3mm]
    \begin{scope}[scale=.6]
        \node at (-.25,0){$...$};
        \draw(-.5,-1)--(-.5,1);
        \draw(0,-1)--(0,1);

    \begin{scope}
    \begin{scope}[scale=.55,shift={(-3,1)}]
        \node at (.6,0){$a$};
        
            \draw[blue](.25,-1)arc[start angle=150,end angle=30, radius=.5];
            \draw[blue] (0,-.8)arc[start angle=-150,end angle=-30, radius=.8];
        \end{scope}
        \begin{scope}[scale=.55,shift={(.9,1)}]
        \node at (.6,0){$g_0$};
        
            \draw[blue](.25,-1)arc[start angle=150,end angle=30, radius=.5];
            \draw[blue] (0,-.8)arc[start angle=-150,end angle=-30, radius=.8];
             \node at (2.3,-1){$...$};
        \end{scope}

    \end{scope}
        
     \begin{scope}[shift={(4,0)}]

        \begin{scope}[scale=.55,shift={(.9,1)}]
        \node at (.6,0){$g_m$};
            \draw[blue](.25,-1)arc[start angle=150,end angle=30, radius=.5];
            \draw[blue] (0,-.8)arc[start angle=-150,end angle=-30, radius=.8];
            
        \draw(3,-2.8)--(3,.8);
        \draw(2,-2.8)--(2,.8);
             \node at (3.6,-1){$...$};
        \end{scope}

        \begin{scope}[scale=.55,shift={(5.5,1)}]
        \node at (.6,0){$b$};
            \draw[blue](.25,-1)arc[start angle=150,end angle=30, radius=.5];
            \draw[blue] (0,-.8)arc[start angle=-150,end angle=-30, radius=.8];
        \end{scope}
        
    \end{scope}

     \begin{scope}[shift={(2,0)}]

        \begin{scope}[scale=.55,shift={(.9,1)}]
        \node at (.6,0){$g_1$};
            \draw[blue](.25,-1)arc[start angle=150,end angle=30, radius=.5];
            \draw[blue] (0,-.8)arc[start angle=-150,end angle=-30, radius=.8];
             \node at (2.3,-1){$...$};
        \end{scope}
        
        \draw(2,-1)--(2,1);
        \draw(1.5,-1)--(1.5,1);
    \end{scope}

     \begin{scope}

        \draw(2,-1)--(2,1);
        \draw(1.5,-1)--(1.5,1);
    \end{scope}
    \end{scope}

     \begin{scope}[scale=.6,shift={(13,0)}]
      \node at (-3,0){$=\frac{1}{2}$};
         \node at (8.5,0){$.$};

    \begin{scope}[shift={(.2,0)},yscale=0.5]
        \draw(0,0)to[bend left=90](5.5,0);\draw(0,0)to[bend right=90](5.5,0);
    \end{scope}
    \begin{scope}[shift={(.43,0)},yscale=0.4,xscale=.9]
        \draw(0,0)to[bend left=90](5.5,0);\draw(0,0)to[bend right=90](5.5,0);
    \end{scope}
       \node at (-.25,0){$...$};
        \draw(-.5,-1)--(-.5,1);
        \draw(0,-1)--(0,1);

    \begin{scope}[scale=.55,shift={(-3,1)}]
        \node at (.6,0){$a$};
        
            \draw[blue](.25,-1)arc[start angle=150,end angle=30, radius=.5];
            \draw[blue] (0,-.8)arc[start angle=-150,end angle=-30, radius=.8];
        \end{scope}

    \begin{scope}

        \begin{scope}[scale=.55,shift={(.9,1)}]
        \node at (.6,0){$g_0$};
            \draw[blue](.25,-1)arc[start angle=150,end angle=30, radius=.5];
            \draw[blue] (0,-.8)arc[start angle=-150,end angle=-30, radius=.8];
             \node at (2.3,-1){$...$};
        \end{scope}
         \draw[white,WL](2,-1)--(2,1);
        \draw[white,WL](1.5,-1)--(1.5,1);
    \end{scope}
        
     \begin{scope}[shift={(4,0)}]

        \begin{scope}[scale=.55,shift={(.9,1)}]
        \node at (.6,0){$g_m$};
            \draw[blue](.25,-1)arc[start angle=150,end angle=30, radius=.5];
            \draw[blue] (0,-.8)arc[start angle=-150,end angle=-30, radius=.8];
            \draw(3,-3)--(3,1);
        \draw(3.8,-3)--(3.8,1);
        
             \node at (4.6,-1){$...$};
        
        \draw[red](1.2,-.5)..controls(1.4,-.1)and(1.6,-.8)..(1.8,-.4);
        \node[red] at (2.5,-.5) {$x$}; 
        \fill(1.2,-.5) circle(.1);
        \fill(1.8,-.4) circle(.1);
        \end{scope}

        \begin{scope}[scale=.55,shift={(6.5,1)}]
        \node at (.6,0){$b$};
            \draw[blue](.25,-1)arc[start angle=150,end angle=30, radius=.5];
            \draw[blue] (0,-.8)arc[start angle=-150,end angle=-30, radius=.8];
        \end{scope}
        
    \end{scope}

     \begin{scope}[shift={(2,0)}]

        \begin{scope}[scale=.55,shift={(.9,1)}]
        \node at (.6,0){$g_1$};
            \draw[blue](.25,-1)arc[start angle=150,end angle=30, radius=.5];
            \draw[blue] (0,-.8)arc[start angle=-150,end angle=-30, radius=.8];
             \node at (2.3,-1){$...$};
        \end{scope}
        \draw[white,WL](2,-1)--(2,1);
        \draw[white,WL](1.5,-1)--(1.5,1);
        
        \draw(2,-1)--(2,1);
        \draw(1.5,-1)--(1.5,1);
    \end{scope}

     \begin{scope}

        \draw(2,-1)--(2,1);
        \draw(1.5,-1)--(1.5,1);
    \end{scope}
        
    \end{scope}

\end{tikzpicture}
\end{center}

\end{prop}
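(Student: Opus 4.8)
The plan is to prove Proposition~\ref{addtwostrings} by showing that inserting the double circle (two nested strings) around the holes $g_0,\dots,g_m$ implements exactly the projection onto the subspace where the constraint $g_0+\dots+g_m=x$ holds. The key observation is that the double circle is a realization of a \emph{bar operator} (fermionic parity operator) acting transversally on all the strings it crosses: by Eq.~\eqref{switch} and the string-genus relation~\eqref{string_genus-1}, a pair of nested closed strings, when slid across a collection of strings, is equivalent (up to the scalar $\tfrac{1}{\sqrt2}$ appearing in~\eqref{cuttwo}) to a bar operator together with a bounding sphere that can be removed. So the right-hand side is, diagrammatically, the left-hand side with a bar operator $\bar v$ of label $x$ inserted on all the strings passing between the holes, and the $x$-labeled charged wire running through the holes region, which is precisely the configuration appearing in Eqs.~\eqref{bar_odd} and~\eqref{bolangquan_genus_cut}.

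First I would set up the reduction: write $Z(G)=\tfrac{1}{2^{m+1}}\sum_{g_0,\dots,g_m}Z(G_{g_0,\dots,g_m})$ using Eq.~\eqref{genus_is_a_variable} on each hole (this is the notation~\eqref{notation_of_genus_as_a_variable}). Then I would evaluate the effect of the double circle on a fixed term $G_{g_0,\dots,g_m}$. By Proposition~\ref{addbar} and Lemma~\ref{stringgenusn} (genus cut), a closed string encircling a set of holes with hole-variables $g_0,\dots,g_m$ acts as a bar operator whose label is the sum $g_0+\dots+g_m$ modulo $2$ computed against the winding, combined with the charge running through the region — this is exactly what Eq.~\eqref{bar_odd} records, with the sign $(-1)^{gv}$. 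The double (nested) circle versus a single circle is handled by Eq.~\eqref{cuttwo}: two parallel strings between the same two regions contract to one with a factor $\tfrac{1}{\sqrt2}$; squaring the single-circle insertion (or equivalently using the idempotent structure of the Jones--Wenzl projection at index $2$) is what makes the double circle a genuine \emph{projection} rather than just a scalar multiple — this accounts for the overall $\tfrac12$ on the right-hand side.

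Next I would use the hypothesis that the constraint $g_0+\dots+g_m=x$ holds, i.e.\ $Z(G_{g_0,\dots,g_m})=0$ whenever $g_0+\dots+g_m\ne x$. The double circle, expanded via Eq.~\eqref{fusion}, is $\tfrac12(\text{identity}+\bar v\text{ with label }1)$ on the strands it crosses, and the $x$-labeled charge on the inner wire contributes the sign $(-1)^{(g_0+\dots+g_m)\cdot 1}$ relative phase by Eqs.~\eqref{bolangquan_genus_cut} and~\eqref{relativegenus}. On the subspace where $g_0+\dots+g_m=x$, this operator acts as $\tfrac12(1+(-1)^{0})=1$ if we arrange the charge label correctly (namely the $x$ on the inner wire cancels the $x$ from the constraint), while on the complementary subspace $g_0+\dots+g_m\ne x$ it would act as $\tfrac12(1+(-1)^{1})=0$ — but those terms already vanish by hypothesis, so no information is lost. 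Hence term by term the right-hand side equals $\tfrac{1}{2^{m+1}}\sum Z(G_{g_0,\dots,g_m})=Z(G)$, which is the claimed identity as a diagrammatic equation (it holds inside any larger diagram since all the moves used — \eqref{switch}, \eqref{cuttwo}, \eqref{fusion}, \eqref{bar_odd}, \eqref{bolangquan_genus_cut} — are local or are genus-cut/handle-slide relations valid in any CDM).

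The main obstacle I anticipate is bookkeeping the phases and the parity of winding numbers correctly: the sign $(-1)^{gv}$ in Eqs.~\eqref{bolangquan_genus_cut}, \eqref{relativegenus}, \eqref{bar_odd} depends on how many times the encircling strings and the charged wire wind around each hole, and one must check that the double-circle insertion picks up precisely the parity $g_0+\dots+g_m$ and not some twisted variant, and that the extra charge labeled $x$ placed on the inner wire is exactly what is needed to turn the constraint's vanishing into the projector's being the identity on the surviving subspace. A secondary subtlety is justifying that one may freely isotope the double circle to enclose exactly $g_0,\dots,g_m$ while the two ``boundary'' strands on the outside (the $a$ and $b$ regions in the figure) and the interior strings are crossed transversally an appropriate number of times — this uses the genus-cut relation of Lemma~\ref{stringgenusn} to slide the circle past holes not among $g_0,\dots,g_m$ without changing the value, and Eq.~\eqref{underbraid} to move it past any braids. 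Once these parity/phase computations are pinned down, the rest is a routine application of the already-established relations.
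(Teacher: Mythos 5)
Your top-level strategy is the right one and it is essentially the mechanism behind the paper's argument: expand every hole into its label variable (Eq.~\eqref{genus_is_a_variable}), show that inserting the double circle with the $x$-charge multiplies each labeled term by $1+(-1)^{g_0+\cdots+g_m+x}$, i.e.\ by $2$ on the constrained terms and by $0$ off them, and conclude from the constraint hypothesis that the off-constraint terms already vanish. The paper establishes exactly this multiplier statement, but by a concrete route you never use: it applies the genus cut of Lemma~\ref{stringgenusn} so that a pair of charges labelled $g_i$ is deposited on the two circles; together with the charge $x$ the circles then carry total charge determined by the constraint, become neutral and contractible precisely on the constrained terms, and are removed by Eq.~\eqref{evaluation} at the cost of the factor $2$, after which the genus cuts are undone.

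The gap in your proposal is precisely at this multiplier computation, which you assert rather than derive. The move you invoke --- ``a pair of nested closed strings, when slid across a collection of strings, is equivalent (up to $\tfrac{1}{\sqrt2}$) to a bar operator together with a bounding sphere that can be removed'' --- is not one of the paper's relations, and it cannot be applied as stated because the circles enclose the holes $g_0,\dots,g_m$: how the enclosed labels enter the parity count is the entire content of the proposition, and that is exactly what the genus-cut transfer supplies. Likewise, attributing the $(-1)^{g_0+\cdots+g_m}$ dependence to the $x$-labelled wire via Eqs.~\eqref{bolangquan_genus_cut} and \eqref{relativegenus} is not justified: those relations give a sign $(-1)^{gv}$ for a charged wire interacting with a single labelled hole, whereas your wire sits in the annulus between the two circles and winds around no hole; the dependence on all the $g_i$ appears only after the labels have been cut onto the circles (or an equivalent accounting). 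You yourself flag ``checking that the double-circle insertion picks up precisely the parity $g_0+\dots+g_m$'' as the anticipated obstacle, but that check is the proof, not residual bookkeeping. There is also an internal factor-of-two inconsistency: in one place you treat the double circle as twice a projection (so that the prefactor $\tfrac12$ normalizes it), in another you write it as $\tfrac12(\mathrm{id}+\text{parity})$ acting as $1$ on the constrained subspace, which would make the right-hand side equal $\tfrac12 Z(G)$ rather than $Z(G)$; the correct statement, and the one the genus-cut argument yields, is that the circled diagram equals $\bigl(1+(-1)^{g_0+\cdots+g_m+x}\bigr)$ times the uncircled one on each labeled term.
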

\begin{proof}
Start with the right hand side (RHS): apply genus cut from $g_1,...,g_{m}$ to other genus such that a pair of $g_i$ lies on the circles. The charge $x$ neutralizes the circles. By Eq.~\eqref{evaluation}, the contractible double circles reduce to a scalar factor 2. Multiply by $\frac{1}{2}$, and apply the inverse of genus cut to get the left hand side (LHS). 
\end{proof}

When $x=1$, we add a pair of charges between $g_m$ and $b$ as in Prop.~\ref{addbar}, and we have:
\begin{coro}
\label{addtwostringstwo}
If there is a constraint $g_0+g_1+...+g_m=x$, \begin{center}
    
\begin{tikzpicture}[line width=.3mm]
    \begin{scope}[scale=.6]
        \node at (-.25,0){$...$};
        \draw(-.5,-1)--(-.5,1);
        \draw(0,-1)--(0,1);

    \begin{scope}

    \begin{scope}[scale=.55,shift={(-3,1)}]%genus
        \node at (.6,0){$a$};
        
            \draw[blue](.25,-1)arc[start angle=150,end angle=30, radius=.5];
            \draw[blue] (0,-.8)arc[start angle=-150,end angle=-30, radius=.8];
        \end{scope}
        \begin{scope}[scale=.55,shift={(.9,1)}]%genus
        \node at (.6,0){$g_0$};
        
            \draw[blue](.25,-1)arc[start angle=150,end angle=30, radius=.5];
            \draw[blue] (0,-.8)arc[start angle=-150,end angle=-30, radius=.8];
             \node at (2.3,-1){$...$};
        \end{scope}
    \end{scope}
        
     \begin{scope}[shift={(4,0)}]

        \begin{scope}[scale=.55,shift={(.9,1)}]%genus
        \node at (.6,0){$g_m$};
            \draw[blue](.25,-1)arc[start angle=150,end angle=30, radius=.5];
            \draw[blue] (0,-.8)arc[start angle=-150,end angle=-30, radius=.8];
            
        \draw(3,-2.8)--(3,.8);
        \draw(2,-2.8)--(2,.8);
             \node at (3.6,-1){$...$};
        \end{scope}
        \begin{scope}[scale=.55,shift={(5.5,1)}]%genus
        \node at (.6,0){$b$};
            \draw[blue](.25,-1)arc[start angle=150,end angle=30, radius=.5];
            \draw[blue] (0,-.8)arc[start angle=-150,end angle=-30, radius=.8];
        \end{scope}
        
    \end{scope}

     \begin{scope}[shift={(2,0)}]

        \begin{scope}[scale=.55,shift={(.9,1)}]%genus
        \node at (.6,0){$g_1$};
            \draw[blue](.25,-1)arc[start angle=150,end angle=30, radius=.5];
            \draw[blue] (0,-.8)arc[start angle=-150,end angle=-30, radius=.8];
             \node at (2.3,-1){$...$};
        \end{scope}
        
        \draw(2,-1)--(2,1);
        \draw(1.5,-1)--(1.5,1);
    \end{scope}

     \begin{scope}

        \draw(2,-1)--(2,1);
        \draw(1.5,-1)--(1.5,1);
    \end{scope}
    \end{scope}

     \begin{scope}[scale=.6,shift={(13,0)}]
      \node at (-3,0){\large $=\frac{1}{2}$};
       \node at (8.5,0){$.$};
    \begin{scope}[shift={(.2,0)},yscale=0.5]
        \draw(0,0)to[bend left=90](5.5,0);\draw(0,0)to[bend right=90](5.5,0);
    \end{scope}
    \begin{scope}[shift={(.43,0)},yscale=0.4,xscale=.9]
        \draw(0,0)to[bend left=90](5.5,0);\draw(0,0)to[bend right=90](5.5,0);
    \end{scope}
        \node at (-.25,0){$...$};
        \draw(-.5,-1)--(-.5,1);
        \draw(0,-1)--(0,1);

    \begin{scope}[scale=.55,shift={(-3,1)}]%genus
        \node at (.6,0){$a$};
        
            \draw[blue](.25,-1)arc[start angle=150,end angle=30, radius=.5];
            \draw[blue] (0,-.8)arc[start angle=-150,end angle=-30, radius=.8];
        \end{scope}

    \begin{scope}

        \begin{scope}[scale=.55,shift={(.9,1)}]%genus
        \node at (.6,0){$g_0$};
            \draw[blue](.25,-1)arc[start angle=150,end angle=30, radius=.5];
            \draw[blue] (0,-.8)arc[start angle=-150,end angle=-30, radius=.8];
             \node at (2.3,-1){$...$};
        \end{scope}
         \draw[white,WL](2,-1)--(2,1);
        \draw[white,WL](1.5,-1)--(1.5,1);
        
    \end{scope}
        
     \begin{scope}[shift={(4,0)}]

        \begin{scope}[scale=.55,shift={(.9,1)}]%genus
        \node at (.6,0){$g_m$};
            \draw[blue](.25,-1)arc[start angle=150,end angle=30, radius=.5];
            \draw[blue] (0,-.8)arc[start angle=-150,end angle=-30, radius=.8];
            \draw(3,-3)--(3,1);
        \draw(3.8,-3)--(3.8,1);
        
             \node at (4.6,-1){$...$};
       \draw[red,decorate,decoration={snake}](3,-1)--(3.8,-1);
       \node[red] at (3.4,-.3){$x$};
       \fill(3,-1)circle(.1);
        \fill(3.8,-1)circle(.1);
        \end{scope}

        \begin{scope}[scale=.55,shift={(6.5,1)}]%genus
        \node at (.6,0){$b$};
            \draw[blue](.25,-1)arc[start angle=150,end angle=30, radius=.5];
            \draw[blue] (0,-.8)arc[start angle=-150,end angle=-30, radius=.8];
        \end{scope}

    \end{scope}

     \begin{scope}[shift={(2,0)}]

        \begin{scope}[scale=.55,shift={(.9,1)}]%genus
        \node at (.6,0){$g_1$};
            \draw[blue](.25,-1)arc[start angle=150,end angle=30, radius=.5];
            \draw[blue] (0,-.8)arc[start angle=-150,end angle=-30, radius=.8];
             \node at (2.3,-1){$...$};
        \end{scope}
        \draw[white,WL](2,-1)--(2,1);
        \draw[white,WL](1.5,-1)--(1.5,1);
        
        \draw(2,-1)--(2,1);
        \draw(1.5,-1)--(1.5,1);
    \end{scope}

     \begin{scope}

        \draw(2,-1)--(2,1);
        \draw(1.5,-1)--(1.5,1);
    \end{scope}
        
    \end{scope}

\end{tikzpicture}
\end{center}
where there are $k$ pairs of variables added on previous $2k$ strings between $g_m$ and $b$.
\end{coro}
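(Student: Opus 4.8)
The plan is to obtain the corollary as an immediate consequence of Proposition~\ref{addtwostrings} combined with the neutrality statement Proposition~\ref{addbar}. Proposition~\ref{addtwostrings} already produces the double contractible circle surrounding $g_0,\dots,g_m$, carrying (when $x=1$) a single charge pair placed on an arc adjacent to $g_m$; the only discrepancy with the diagram in the corollary is the location of the $x$-labelled charge pair, which is now moved onto two of the $2k$ strands running between $g_m$ and $b$, together with the extra bar operator ($k$ charge pairs) inserted on those same $2k$ strands. So the work reduces to inserting the bar and transporting the $x$-charge into position.

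First I would note that the $2k$ strands between $g_m$ and $b$ bound a cylinder whose two ends cap off against the holes $g_m$ and $b$, which is precisely the configuration of Proposition~\ref{addbar}. Hence, by neutrality, one may freely insert a label-$1$ bar operator (that is, $k$ pairs of charges) on these strands without changing the value of the diagram; this accounts for the ``$k$ pairs of variables'' appearing in the statement. Next, for the $x$-charge: if $x=0$ there is nothing to move, and if $x=1$ I would slide the charge pair off the arc near $g_m$ and onto two of the $2k$ strands, using that charges travel freely along strings up to signs (Eq.~\eqref{relative2}), that they can be re-paired (Eq.~\eqref{repair}), and that a charged wire crosses a cap or a hole only up to a sign governed by winding parity (Eqs.~\eqref{bar_cap}, \eqref{bar_even}, \eqref{bolangquan_genus_cut}, \eqref{relativegenus}). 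The whole picture is set up so that every such crossing involves an even number of strands, so these signs cancel and no net phase survives; combined with the genus-cut normalisation already contained in Proposition~\ref{addtwostrings}, this yields exactly the claimed identity.

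The main obstacle is the sign and phase bookkeeping in this last step: one must verify that moving the $x$-charge past $g_m$ and through the freshly inserted bar produces no overall factor. This comes down to tracking the parity of the strand counts and winding numbers encountered along the way, and it is exactly here that the even-parity setup is used — the $2k$ strands between neighbouring holes and the even number of charges distributed on the double circle force every relative-phase relation invoked to return $(-1)^{0}=1$. Everything else in the argument is a direct citation of Propositions~\ref{addtwostrings} and~\ref{addbar}, so the corollary is short once this parity check is in hand.
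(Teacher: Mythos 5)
Your two ingredients are the right ones — the paper's own derivation of this corollary is exactly ``Proposition~\ref{addtwostrings} plus the neutrality of Proposition~\ref{addbar}'' — but your execution has a genuine gap that starts with a misreading of the target diagram. In the corollary's right-hand side, the ``$k$ pairs of variables on the $2k$ strings between $g_m$ and $b$'' \emph{are} the whole charge configuration: they are the bar with label $x$, and the single drawn pair labelled $x$ is just the $k=1$ instance (compare the identical convention and picture in Lemma~\ref{handleslideslem} and its proof, where the corollary is invoked). There is no $x$-labelled pair left on the double circle in addition to them. Your plan ends with the double circle carrying a label-$1$ bar ($k$ pairs) \emph{and} the transported $x$-pair, i.e.\ $k+1$ pairs when $x=1$. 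That is a different diagram from the one in the statement, and the surplus pair is not harmless: a single charge pair sitting on two of the $2k$ strands is not a full bar, so Proposition~\ref{addbar} does not remove it, and nothing else you cite does either.

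The transport step itself is the second problem. Charges live on their own strings: Eq.~\eqref{repair} only re-pairs the wires, and Eqs.~\eqref{relative2}, \eqref{bar_cap}, \eqref{bar_even}, \eqref{bolangquan_genus_cut}, \eqref{relativegenus} move charges along a string or past caps and holes up to phases — none of them transfers a charge from the double circle onto one of the $2k$ strings. So ``sliding the charge pair off the arc near $g_m$ onto two of the $2k$ strands'' is not a move available in this calculus, and the parity bookkeeping you defer to cannot rescue an operation that is not defined. The paper's proof needs no such transport: for $x=0$ the corollary is Proposition~\ref{addtwostrings} verbatim, and for $x=1$ one simply inserts the $k$ pairs of charges on the $2k$ strings between $g_m$ and $b$ by neutrality (Proposition~\ref{addbar}); those inserted pairs are precisely the ``$k$ pairs of variables'' of the statement, and nothing remains to be moved. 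If you correct the reading of the right-hand side, your argument collapses to that one-line derivation.
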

\subsection{Odd hole handle slides}
\label{handleslidesapp}

Recall that the string-genus relation reduces 1-handle to reshape 3-manifold without an exponential cost, which is an efficient relation to replace Move 2. Now we introduce more such 3D relations to reshape 3-manifolds, inspired by the handle slide in surgery theory and arithmetical properties over the binary field.

The following lemma is a generalized handle slide version of the string genus relation.
\begin{prop}
\label{handleslidesbasic}
If there exists a closed circle that is the lowest layer, i.e. crossing below any other layers on negative braids, then any two strings from the left can be slided to the right.
\begin{align}\label{handleslidesbasic_equation}
    \begin{tikzpicture}[baseline={(current bounding box.center)},line width=.3mm]
\begin{scope}[scale=1.5]
    \begin{scope}
    \begin{scope}[scale=.8]
     \draw(.8,0)arc[start angle=180, end angle=540,radius=.6];
    \begin{scope}[shift={(1.2,0)}]
    \draw[white,WLL](0.1,0.2)--(0.1,1); 
    \draw(0.1,1)--(0.1,.2);
    \draw[white,WLL](-0.1,0.1)--(-0.1,1); 
    \draw(-0.1,1)--(-0.1,.1);
     \draw[white,WLL](0.5,0.1)--(.5,1); 
    \draw(0.5,0.1)--(.5,1); 
    \draw[white,WLL](.3,0.2)--(.3,1); 
    \draw(0.3,0.2)--(.3,1); 
    \end{scope}
       \begin{scope}
         \draw[white,WLL](0.2,1)--(0.2,-1);
         \draw(0.2,1)--(0.2,-1);
        \draw[white,WLL](.5,1)--(.5,-1);
        \draw(.5,1)--(.5,-1);
              \fill[white](1,0)arc[start angle=180, end angle=540,radius=.4];
         \draw(1.8,0)[blue,->]arc[start angle=0, end angle=360,radius=.4];
        \node at (1.4,0){$\forall$};       
    \end{scope}
    \end{scope}
    \end{scope}
\end{scope}
\begin{scope}[shift={(4,0)},scale=1.5]
\node at (-.5,0){$=$};
\node at (2.2,0){$.$};
    \begin{scope}[shift={(0.1,0)},scale=1]
        \draw(0,1)--(0,.5)arc[start angle=180,end angle=360, radius=.4]arc[start angle=180,end angle=0, radius=.5]--++(0,-1)arc[start angle=0,end angle=-180, radius=.5]arc[start angle=0,end angle=180, radius=.4]--++(0,-.5);
    \end{scope}
     \begin{scope}[shift={(.4,0)},scale=1]
         \draw(0,1)--(0,.5)arc[start angle=180,end angle=360, radius=.15]arc[start angle=180,end angle=0, radius=.7]--++(0,-1)arc[start angle=0,end angle=-180, radius=.7]arc[start angle=0,end angle=180, radius=.15]--++(0,-.5);
    \end{scope}
 \begin{scope}[shift={(0,0)}]
       \draw(.88,0)arc[start angle=180, end angle=540,radius=.42];
    \begin{scope}[shift={(1.15,0)},xscale=.7]
   \draw[white,WLL](0.1,0.2)--(0.1,1.3); 
    \draw(0.1,1.3)--(0.1,.2);
    \draw[white,WLL](-0.1,0.1)--(-0.1,1.3); 
    \draw(-0.1,1.3)--(-0.1,.1);
     \draw[white,WLL](0.5,0.1)--(.5,1.3); 
    \draw(0.5,0.1)--(.5,1.3); 
    \draw[white,WLL](.3,0.2)--(.3,1.3); 
    \draw(0.3,0.2)--(.3,1.3); 
    \end{scope}
    \fill[white](1,0)arc[start angle=180, end angle=540,radius=.3]
     \draw(1.6,0)[blue,->]arc[start angle=0, end angle=360,radius=.3];
    \end{scope}
   \node at (1.3,0){$\forall$};
\end{scope}
\end{tikzpicture}
\end{align}
\end{prop}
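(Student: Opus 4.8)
The plan is to prove the identity by an ambient isotopy of the space-boundary decoration together with two ingredients: the Ising fusion relation~\eqref{fusion}, applied to the two strings being slid, and the hole-interaction relations — the string--genus relation~\eqref{string_genus-1}, its genus-cut form~\eqref{stringgenusn_equation}, and \eqref{wirecircle}, \eqref{bolangquan_genus_cut}, \eqref{bar_even}. The hypothesis that the closed circle $C$ is the lowest layer is used exactly once, but it is essential: at every crossing of another strand with $C$, that strand lies over $C$ and $C$ underneath, so by the move ``any diagram moves freely under braids''~\eqref{underbraid} (equivalently Reidemeister~\eqref{R2}) the strand can be pulled off $C$. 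Thus $C$ is transparent to over-passing strands, and we may isotope the two left strings rightward, sweeping them over $C$, over the intervening string bundle, and over the hole, until they form a single band encircling the common central configuration (which is identical on both sides of the asserted equation). The content of the proposition is then that this encircling band equals the original pair of straight strings.

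I would first treat the case with no charges on the intervening strands. Apply the fusion relation~\eqref{fusion} to the two-string band, writing it as $\tfrac{1}{\sqrt2}$ times an uncharged single-string loop around the configuration plus $\tfrac{1}{\sqrt2}$ times a charged loop (a red wire circle, as in~\eqref{wirecircle}). After writing the enclosed hole as a variable $g\in\{0,1\}$ via~\eqref{genus_is_a_variable}, each term is a loop encircling the hole and can be evaluated using the string--genus and wire-circle relations~\eqref{string_genus-1}, \eqref{wirecircle}, \eqref{bolangquan_genus_cut}. Summing the two fusion terms is precisely an instance of the neutrality Proposition~\ref{addbar} (``adding the bar operator on $2k$ strings in a cylinder does not change the diagram'') with $k=1$, so the band collapses back to two straight strings; carrying the resulting bar operator past the surrounding caps and the hole uses~\eqref{bar_cap} and~\eqref{bar_even}. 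This reproduces the right-hand side.

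The main obstacle I expect is the sign bookkeeping: when the band — and, in the charged fusion term, its charge — is dragged past the hole and past the crossings of the central configuration, each of~\eqref{bolangquan_genus_cut}, \eqref{relativegenus}, \eqref{bar_odd} can produce a relative phase $(-1)^{gv}$, and one must check that these cancel in pairs so that the net effect is trivial. The cleanest way to organize this is to first apply the genus cut~\eqref{stringgenusn_equation} to trade the hole neighbouring $C$ for a bar operator; the claim then reduces to the already-established neutrality of the bar operator (Proposition~\ref{addbar}) plus the elementary observation that~\eqref{underbraid} lets the two strings pass over everything because $C$ is the lowest layer. Finally, to reinstate intervening charges I would invoke~\eqref{charge_crossing_equation} and~\eqref{fourstringschange}, exactly as in the proof of Proposition~\ref{thm:Matchgate_YBR}, to clear all such charges beforehand and thereby reduce to the charge-free case treated above.
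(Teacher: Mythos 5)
Your opening step---the unit/fusion decomposition \eqref{fusion} of the two-strand band---is the same as the paper's, but the way you dispose of the two resulting terms has a genuine gap. First, the ``isotopy'' you invoke is not available: the hypothesis that the closed circle $C$ is the lowest layer only lets other objects pass over $C$ (equivalently, lets $C$ be pulled out from under them); it gives no crossing data between the two left strings and the intervening string bundle or the disc carrying $\forall$, so \eqref{underbraid} does not let you sweep those two strings over the rest of the configuration. Carrying them from one side of $C$ to the other is exactly the handle-slide the proposition asserts, so treating it as an ambient isotopy is circular.

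Second, your evaluation of the two fusion terms does not fit this configuration. The region enclosed by the slid band is not a bare hole: it contains an arbitrary element $\forall$ on a time-boundary disc and strands that exit the region, so the string-genus relation \eqref{string_genus-1}, the wire-circle rules \eqref{wirecircle}, \eqref{bolangquan_genus_cut}, and Prop.~\ref{addbar} (which inserts a bar operator across strands running in a cylinder between two holes) are not applicable, and the claimed collapse ``back to two straight strings'' is unsupported. What the paper does instead, and what your argument is missing, is the handling of the charged term: after the decomposition, the uncharged turnback term slides to the right by a genuine isotopy (the turnbacks disconnect the strands, so nothing is enclosed), while for the charged term the charge pair is re-paired onto the closed circle $C$ via \eqref{repair} and transported around $C$ to the far side---this transport is the only place where the hypothesis that $C$ is a \emph{closed} circle in the \emph{lowest} layer does real work. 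Since your proposal never performs this charge transport, the essential step of the proof is absent.
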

\begin{proof}
We use unit decomposition of double strings:
\begin{center}
\begin{tikzpicture}[line width=.3mm]
    \begin{scope}[scale=1.5,shift={(0,0)}]
        \node at (-.5,0){$\sqrt{2}LHS=$};
        \node at (1.4,0){$\forall$};
        \draw(0.2,1)--(0.2,.3)arc[start angle=180,end angle=360,radius=.15]--(0.5,1);
        \draw(0.2,-1)--(0.2,-.3)arc[start angle=180,end angle=0,radius=.15]--(0.5,-1);
        \begin{scope}[shift={(.3,0)},scale=.8]
        
            \draw(.8,0)arc[start angle=180, end angle=540,radius=.6];
            \begin{scope}[shift={(1.2,0)}]
            
                \draw[white,WLL](0.1,0.2)--(0.1,1.2); 
                \draw(0.1,1.2)--(0.1,.2);
                \draw[white,WLL](-0.1,0.1)--(-0.1,1.2); 
                \draw(-0.1,1.2)--(-0.1,.1);
                \draw[white,WLL](0.5,0.1)--(.5,1.2); 
                \draw(0.5,0.1)--(.5,1.2); 
                
                \draw[white,WLL](.3,0.2)--(.3,1.2); 
                \draw(0.3,0.2)--(.3,1.2); 
            \end{scope}
            \fill[white](1.8,0)arc[start angle=0, end angle=360,radius=.4];
            \draw(1.8,0)[blue,->]arc[start angle=0, end angle=360,radius=.4];
            \node at (1.4,0){$\forall$};
        \end{scope}
    \end{scope}

    \begin{scope}[scale=1.5,shift={(2.5,0)}]
        \node at (-.25,0){$+$};
        
        \node at (2.1,0){.};
        \node at (1.4,0){$\forall$};
        \draw(0.2,1)--(0.2,.3)arc[start angle=180,end angle=360,radius=.15]--(0.5,1);
        \draw[red,decorate, decoration={snake}](.35,.15)--(.35,-.15);
        \fill(.35,.15) circle(.04);
        \fill(.35,-.15) circle(.04);
        \draw(0.2,-1)--(0.2,-.3)arc[start angle=180,end angle=0,radius=.15]--(0.5,-1);

        \begin{scope}[shift={(.3,0)},scale=.8]
        
            \draw(.8,0)arc[start angle=180, end angle=540,radius=.6];
            \begin{scope}[shift={(1.2,0)}]
                \draw[white,WLL](0.1,0.2)--(0.1,1.2); 
                \draw(0.1,1.2)--(0.1,.2);
                \draw[white,WLL](-0.1,0.1)--(-0.1,1.2); 
                \draw(-0.1,1.2)--(-0.1,.1);
                \draw[white,WLL](0.5,0.1)--(.5,1.2); 
                \draw(0.5,0.1)--(.5,1.2); 
                \draw[white,WLL](.3,0.2)--(.3,1.2); 
                \draw(0.3,0.2)--(.3,1.2); 
            \end{scope}
            \fill[white](1,0)arc[start angle=180, end angle=540,radius=.4];
            \draw(1.8,0)[blue,->]arc[start angle=0, end angle=360,radius=.4];
            \node at (1.4,0){$\forall$};
        \end{scope}
    \end{scope}

\end{tikzpicture}
\end{center}

Let $\sqrt{2}LHS=f_1+f_2$. It is clear that the cap and cup in $f_1$ can be slided to the right by isotopy, so we compute the second term $f_2$:
\begin{center}
    \begin{tikzpicture}[line width=.3mm]
        
\begin{scope}[scale=1.5,shift={(0,0)}]
    \begin{scope}
       \node at (-.25,0){$f_2=$};
        \node at (1.4,0){$\forall$};
    \begin{scope}
         \draw(0.2,1)--(0.2,.3)arc[start angle=180,end angle=360,radius=.15]--(0.5,1);
        \draw[red,decorate, decoration={snake}](.35,.15)--(.35,-.15);
        \fill(.35,.15) circle(.04);
         \fill(.35,-.15) circle(.04);

         \draw(0.2,-1)--(0.2,-.3)arc[start angle=180,end angle=0,radius=.15]--(0.5,-1);
        \draw[white,WLL](1,0)arc[start angle=180, end angle=540,radius=.4];
        
         \draw(1-.05,0)arc[start angle=180, end angle=540,radius=.45];

         \draw[red](.95,.1)arc[start angle=90,end angle=270,radius=.1];
         \fill(.95,.1) circle(.04);
         \fill(.95,-.1) circle(.04);
    \end{scope}
    \begin{scope}[shift={(0.28,0)},scale=.8]
    \begin{scope}[shift={(1.2,0)}]

    \draw[white,WLL](0.1,0.2)--(0.1,1.2); 
    \draw(0.1,1.2)--(0.1,.2);
    \draw[white,WLL](-0.1,0.1)--(-0.1,1.2); 
    \draw(-0.1,1.2)--(-0.1,.1);
     \draw[white,WLL](0.5,0.1)--(.5,1.2); 
    \draw(0.5,0.1)--(.5,1.2); 

    \draw[white,WLL](.3,0.2)--(.3,1.2); 
    \draw(0.3,0.2)--(.3,1.2); 
        
    \end{scope}
    
       \begin{scope}

        \fill[white](1,0)arc[start angle=180, end angle=540,radius=.4];
         \draw(1.8,0)[blue,->]arc[start angle=0, end angle=360,radius=.4];
        \node at (1.4,0){$\forall$};

    \end{scope}

    \end{scope}

    \end{scope}

\end{scope}

\begin{scope}[scale=1.5,shift={(2.5,0)}]
    \begin{scope}
       \node at (-.25,0){$=$};
        \node at (1.4,0){$\forall$};
    \begin{scope}

         \draw(1-.05,0)arc[start angle=180, end angle=540,radius=.45];

         \draw[red,decorate,decoration={snake}](.35,.15)--(1,.1);
         \draw[red,decorate,decoration={snake}](.35,-.15)--(1,-.1);

         \draw[white,WLL](0.2,-1)--(0.2,-.3)arc[start angle=180,end angle=0,radius=.15]--(0.5,-1);
         \draw(0.2,-1)--(0.2,-.3)arc[start angle=180,end angle=0,radius=.15]--(0.5,-1);

         \draw[white,WLL](0.2,1)--(0.2,.3)arc[start angle=180,end angle=360,radius=.15]--(0.5,1);

       \draw(0.2,1)--(0.2,.3)arc[start angle=180,end angle=360,radius=.15]--(0.5,1);

        \fill(.35,.15) circle(.04);
         \fill(.35,-.15) circle(.04);
         \fill(.95,.1) circle(.04);
         \fill(.95,-.1) circle(.04);

    \end{scope}
    
   \begin{scope}[shift={(0.28,0)},scale=.8]
    \begin{scope}[shift={(1.2,0)}]

    \draw[white,WLL](0.1,0.2)--(0.1,1.2); 
    \draw(0.1,1.2)--(0.1,.2);
    \draw[white,WLL](-0.1,0.1)--(-0.1,1.2); 
    \draw(-0.1,1.2)--(-0.1,.1);
     \draw[white,WLL](0.5,0.1)--(.5,1.2); 
    \draw(0.5,0.1)--(.5,1.2); 

    \draw[white,WLL](.3,0.2)--(.3,1.2); 
    \draw(0.3,0.2)--(.3,1.2); 
        
    \end{scope}
    
       \begin{scope}

        \fill[white](1,0)arc[start angle=180, end angle=540,radius=.4];
         \draw(1.8,0)[blue,->]arc[start angle=0, end angle=360,radius=.4];
        \node at (1.4,0){$\forall$};

    \end{scope}

    \end{scope}
    \end{scope}

\end{scope}

\begin{scope}[shift={(8,0)},scale=1.5]
\node at (-.45,0){$=$};

    \begin{scope}[shift={(0,0)}]
        \draw(0,1)--(0,.5)arc[start angle=180,end angle=360, radius=.4]arc[start angle=180,end angle=0, radius=.5]--++(0,-.3)arc[start angle=180,end angle=360,radius=.1];
    \end{scope}

     \begin{scope}[shift={(.3,0)}]
        
         \draw(0,1)--(0,.5)arc[start angle=180,end angle=360, radius=.15]arc[start angle=180,end angle=0, radius=.7]--++(0,-.3);
      \fill(1.6,-0.1)circle(.04);
    \end{scope}
    \draw[white,WL,decorate,decoration=snake](1.7,-0.1)--(1.9,-.1);
    \draw[red,decorate,decoration=snake](1.9,-0.1)--(1.7,-.1);
    \draw[red,decorate,decoration=snake](1.9,0.1)--(1.7,.1);
    \begin{scope}[yscale=-1,shift={(.3,0)}]
        
         \draw(0,1)--(0,.5)arc[start angle=180,end angle=360, radius=.15]arc[start angle=180,end angle=0, radius=.7]--++(0,-.3);
         
        \fill(1.6,-0.1)circle(.04);
    \end{scope}

    \begin{scope}[yscale=-1,shift={(0,0)}]
       \draw[white,WL](0,1)--(0,.5)arc[start angle=180,end angle=360, radius=.4]arc[start angle=180,end angle=0, radius=.5]--++(0,-.3)arc[start angle=180,end angle=360,radius=.1];
       \draw(0,1)--(0,.5)arc[start angle=180,end angle=360, radius=.4]arc[start angle=180,end angle=0, radius=.5]--++(0,-.3)arc[start angle=180,end angle=360,radius=.1];
    \end{scope}

    \fill(1.7,-.1)circle(.04);
     \fill(1.7,.1)circle(.04);
     
    \fill(1.9,-.1)circle(.04);
     \fill(1.9,.1)circle(.04);
    
    \begin{scope}[shift={(0,0)}]
       \draw(.88,0)arc[start angle=180, end angle=540,radius=.42];
    \begin{scope}[shift={(1.15,0)},xscale=.7]

   \draw[white,WLL](0.1,0.2)--(0.1,1.3); 
    \draw(0.1,1.3)--(0.1,.2);
    \draw[white,WLL](-0.1,0.1)--(-0.1,1.3); 
    \draw(-0.1,1.3)--(-0.1,.1);
     \draw[white,WLL](0.5,0.1)--(.5,1.3); 
    \draw(0.5,0.1)--(.5,1.3); 

    \draw[white,WLL](.3,0.2)--(.3,1.3); 
    \draw(0.3,0.2)--(.3,1.3); 
        
    \end{scope}
    \fill[white](1,0)arc[start angle=180, end angle=540,radius=.3]
      \draw(1.6,0)[blue,->]arc[start angle=0, end angle=360,radius=.3];

     \node at (1.3,0){$\forall$};
    \end{scope}

\end{scope}

\begin{scope}[shift={(12,0)},scale=1.5]
\node at (-.25,0){$=$};

    \begin{scope}[shift={(0,0)}]
        \draw(0,1)--(0,.5)arc[start angle=180,end angle=360, radius=.4]arc[start angle=180,end angle=0, radius=.5]--++(0,-.3)arc[start angle=180,end angle=360,radius=.1];
    \end{scope}

    \begin{scope}[yscale=-1,shift={(0,0)}]
        \draw(0,1)--(0,.5)arc[start angle=180,end angle=360, radius=.4]arc[start angle=180,end angle=0, radius=.5]--++(0,-.3)arc[start angle=180,end angle=360,radius=.1];
    \end{scope}

    \draw(1.7,-.1)[red,bend right=90]to(1.75,.1);
    \fill(1.7,-.1)circle(.04);
     \fill(1.7,.1)circle(.04);
     
    \draw(1.9,-0.1)[red,bend left=45]to(1.9,0);
    \draw(1.9,0)[red,bend right=45]to(1.9,.1);

     \begin{scope}[shift={(.3,0)}]
        
         \draw(0,1)--(0,.5)arc[start angle=180,end angle=360, radius=.15]arc[start angle=180,end angle=0, radius=.7]--++(0,-.3);
      \fill(1.6,-0.1)circle(.04);
    \end{scope}
    
    \begin{scope}[yscale=-1,shift={(.3,0)}]
        
         \draw(0,1)--(0,.5)arc[start angle=180,end angle=360, radius=.15]arc[start angle=180,end angle=0, radius=.7]--++(0,-.3);
         
        \fill(1.6,-0.1)circle(.04);
    \end{scope}

    \begin{scope}[shift={(0,0)}]
       \draw(.88,0)arc[start angle=180, end angle=540,radius=.42];
    \begin{scope}[shift={(1.15,0)},xscale=.7]

   \draw[white,WLL](0.1,0.2)--(0.1,1.3); 
    \draw(0.1,1.3)--(0.1,.2);
    \draw[white,WLL](-0.1,0.1)--(-0.1,1.3); 
    \draw(-0.1,1.3)--(-0.1,.1);
     \draw[white,WLL](0.5,0.1)--(.5,1.3); 
    \draw(0.5,0.1)--(.5,1.3); 

    \draw[white,WLL](.3,0.2)--(.3,1.3); 
    \draw(0.3,0.2)--(.3,1.3); 
        
    \end{scope}
    \fill[white](1,0)arc[start angle=180, end angle=540,radius=.3]
      \draw(1.6,0)[blue,->]arc[start angle=0, end angle=360,radius=.3];

     \node at (1.3,0){$\forall$};
      \node at (2.3,0){;};
    \end{scope}
   
\end{scope}
    \end{tikzpicture}
\end{center}
\begin{center}
\begin{tikzpicture}[line width=.3mm]

\begin{scope}[shift={(0,-6)},scale=1.5]

\node at (-.5,0){$\sqrt{2}LHS=$};
\node at (5.5,0){$=\sqrt{2}RHS.$};
    \begin{scope}[shift={(0,0)}]
        \draw(0,1)--(0,.5)arc[start angle=180,end angle=360, radius=.4]arc[start angle=180,end angle=0, radius=.5]--++(0,-.3)arc[start angle=180,end angle=360,radius=.1];
    \end{scope}

    \begin{scope}[yscale=-1,shift={(0,0)}]
        \draw(0,1)--(0,.5)arc[start angle=180,end angle=360, radius=.4]arc[start angle=180,end angle=0, radius=.5]--++(0,-.3)arc[start angle=180,end angle=360,radius=.1];
    \end{scope}

    \begin{scope}[shift={(.3,0)}]
        
         \draw(0,1)--(0,.5)arc[start angle=180,end angle=360, radius=.15]arc[start angle=180,end angle=0, radius=.7]--++(0,-.3);
    \end{scope}

    \begin{scope}[yscale=-1,shift={(.3,0)}]
        
         \draw(0,1)--(0,.5)arc[start angle=180,end angle=360, radius=.15]arc[start angle=180,end angle=0, radius=.7]--++(0,-.3);
    \end{scope}

      \begin{scope}[shift={(0,0)}]
       \draw(.88,0)arc[start angle=180, end angle=540,radius=.42];
    \begin{scope}[shift={(1.15,0)},xscale=.7]

   \draw[white,WLL](0.1,0.2)--(0.1,1.3); 
    \draw(0.1,1.3)--(0.1,.2);
    \draw[white,WLL](-0.1,0.1)--(-0.1,1.3); 
    \draw(-0.1,1.3)--(-0.1,.1);
     \draw[white,WLL](0.5,0.1)--(.5,1.3); 
    \draw(0.5,0.1)--(.5,1.3); 

    \draw[white,WLL](.3,0.2)--(.3,1.3); 
    \draw(0.3,0.2)--(.3,1.3); 
        
    \end{scope}
    \fill[white](1,0)arc[start angle=180, end angle=540,radius=.3]
      \draw(1.6,0)[blue,->]arc[start angle=0, end angle=360,radius=.3];

     \node at (1.3,0){$\forall$};
    \end{scope}
   
\end{scope}

\begin{scope}[shift={(4,-6)},scale=1.5]

\node at (-.25,0){$+$};
    \begin{scope}[shift={(0,0)}]
        \draw(0,1)--(0,.5)arc[start angle=180,end angle=360, radius=.4]arc[start angle=180,end angle=0, radius=.5]--++(0,-.3)arc[start angle=180,end angle=360,radius=.1];
    \end{scope}

    \begin{scope}[yscale=-1,shift={(0,0)}]
        \draw(0,1)--(0,.5)arc[start angle=180,end angle=360, radius=.4]arc[start angle=180,end angle=0, radius=.5]--++(0,-.3)arc[start angle=180,end angle=360,radius=.1];
    \end{scope}

     \begin{scope}[shift={(.3,0)}]
        
         \draw(0,1)--(0,.5)arc[start angle=180,end angle=360, radius=.15]arc[start angle=180,end angle=0, radius=.7]--++(0,-.3);
      \fill(1.6,-0.1)circle(.04);
    \end{scope}
    
    \draw(1.9,-0.1)[red,bend left=45]to(1.9,0);
    \draw(1.9,0)[red,bend right=45]to(1.9,.1);
    \begin{scope}[yscale=-1,shift={(.3,0)}]
        
         \draw(0,1)--(0,.5)arc[start angle=180,end angle=360, radius=.15]arc[start angle=180,end angle=0, radius=.7]--++(0,-.3);
         
        \fill(1.6,-0.1)circle(.04);
    \end{scope}

      \begin{scope}[shift={(0,0)}]
       \draw(.88,0)arc[start angle=180, end angle=540,radius=.42];
    \begin{scope}[shift={(1.15,0)},xscale=.7]

   \draw[white,WLL](0.1,0.2)--(0.1,1.3); 
    \draw(0.1,1.3)--(0.1,.2);
    \draw[white,WLL](-0.1,0.1)--(-0.1,1.3); 
    \draw(-0.1,1.3)--(-0.1,.1);
     \draw[white,WLL](0.5,0.1)--(.5,1.3); 
    \draw(0.5,0.1)--(.5,1.3); 

    \draw[white,WLL](.3,0.2)--(.3,1.3); 
    \draw(0.3,0.2)--(.3,1.3); 
        
    \end{scope}
    \fill[white](1,0)arc[start angle=180, end angle=540,radius=.3]
      \draw(1.6,0)[blue,->]arc[start angle=0, end angle=360,radius=.3];

     \node at (1.3,0){$\forall$};
    \end{scope}
   
\end{scope}
\end{tikzpicture}
\end{center}

\end{proof}

This handle slides can be generated to cases where there is a constraint by previous corollary. So we obtain the following key lemma to eliminate an odd hole:
\begin{lemma}[Odd hole Handle slides]
\label{handleslideslem}
Given an odd hole $g_0$ and pick a constraint $g_0+g_1+...+g_m=x$, $x=0\ or\ 1$, then $g_0$ can be eliminated by the following identity:
\begin{center}
\begin{tikzpicture}[line width=.3mm]

    \begin{scope}[scale=.6]
    
        \node at (-.25,0){$...$};
        \draw(-.5,-1)--(-.5,1);
        \draw(0,-1)--(0,1);

    \begin{scope}
    
    \begin{scope}[scale=.55,shift={(-3,1)}]%genus
     \node at (.6,0){$a$};
        
            \draw[blue](.25,-1)arc[start angle=150,end angle=30, radius=.5];
            \draw[blue] (0,-.8)arc[start angle=-150,end angle=-30, radius=.8];
        \end{scope}
        \begin{scope}[scale=.55,shift={(.9,1)}]%genus
        \node at (.6,0){$g_0$};
        
            \draw[blue](.25,-1)arc[start angle=150,end angle=30, radius=.5];
            \draw[blue] (0,-.8)arc[start angle=-150,end angle=-30, radius=.8];
             \node at (2.3,-1){$...$};
        \end{scope}
    \end{scope}
        
     \begin{scope}[shift={(4,0)}]

        \begin{scope}[scale=.55,shift={(.9,1)}]%genus
        \node at (.6,0){$g_m$};
            \draw[blue](.25,-1)arc[start angle=150,end angle=30, radius=.5];
            \draw[blue] (0,-.8)arc[start angle=-150,end angle=-30, radius=.8];
            
        \draw(3,-2.8)--(3,.8);
        \draw(2,-2.8)--(2,.8);
             \node at (2.6,-1){$...$};
        \end{scope}

    \end{scope}

     \begin{scope}[shift={(2,0)}]

        \begin{scope}[scale=.55,shift={(.9,1)}]%genus
        \node at (.6,0){$g_1$};
            \draw[blue](.25,-1)arc[start angle=150,end angle=30, radius=.5];
            \draw[blue] (0,-.8)arc[start angle=-150,end angle=-30, radius=.8];
             \node at (2.3,-1){$...$};
        \end{scope}
        
        \draw(2,-1)--(2,1);
        \draw(1.5,-1)--(1.5,1);
    \end{scope}

     \begin{scope}

        \draw(2,-1)--(2,1);
        \draw(1.5,-1)--(1.5,1);
    \end{scope}
        \begin{scope}[scale=.55,shift={(12,1)}]%genus
        \node at (.6,0){$b$};
            \draw[blue](.25,-1)arc[start angle=150,end angle=30, radius=.5];
            \draw[blue] (0,-.8)arc[start angle=-150,end angle=-30, radius=.8];
        \end{scope}
    \end{scope}
    
    \begin{scope}[shift={(6,0)},scale=.6]
    
     \node at (-1,0){$=\frac{1}{2}$};
      \node at (9.5,0){$.$};
    \begin{scope}[scale=.55,shift={(0,1)}]%genus
        \node at (.6,0){$a$};
        
            \draw[blue](.25,-1)arc[start angle=150,end angle=30, radius=.5];
            \draw[blue] (0,-.8)arc[start angle=-150,end angle=-30, radius=.8];
        \end{scope}
        \node at (.4,-2){$...$};
        \node at (.4,2){$...$};
      \draw(-.5,-3)..controls(1,1)and(5.8,-3)..(6,-1)--(6,1);
      \draw(-.5,3)..controls(1,-1)and(5.8,3)..(6,1);

        \draw(0,-3)..controls(1,0)and(6,-3.5)..(6.5,-1)--(6.5,1)..controls(6,3.5)and(1,0)..(0,3);

    \begin{scope}

        \begin{scope}[scale=.55,shift={(.9,1)}]
             \node at (2.3,-1){$...$};
        \end{scope}
        \draw[white,WL](2,-3)--(2,3);
        \draw[white,WL](1.5,-3)--(1.5,3);
        \draw(2,-3)--(2,3);
        \draw(1.5,-3)--(1.5,3);
    \end{scope}
        
     \begin{scope}[shift={(4,0)}]

        \begin{scope}[scale=.55,shift={(.9,1)}]%genus
        \node at (.6,0){$g_m$};
            \draw[blue](.25,-1)arc[start angle=150,end angle=30, radius=.5];
            \draw[blue] (0,-.8)arc[start angle=-150,end angle=-30, radius=.8];
         \draw(5.4,-6.3)--(5.4,5);
        \draw(4.5,-6.3)--(4.5,5);
        
             \node at (6.5,-1){$...$};
             \node at (6,-.3){$x$};
             \draw[red,decorate,decoration={snake}](4.5,-.3)--(5.4,-.3);
             \fill (4.5,-.3) circle(.1);
              \fill (5.4,-.3) circle(.1);
        
        \end{scope}
        \begin{scope}[scale=.55,shift={(8,1)}]%genus
        \node at (.6,0){$b$};
            \draw[blue](.25,-1)arc[start angle=150,end angle=30, radius=.5];
            \draw[blue] (0,-.8)arc[start angle=-150,end angle=-30, radius=.8];
        \end{scope}

    \end{scope}

     \begin{scope}[shift={(2,0)}]

        \begin{scope}[scale=.55,shift={(.9,1)}]%genus
        \node at (.6,0){$g_1$};
            \draw[blue](.25,-1)arc[start angle=150,end angle=30, radius=.5];
            \draw[blue] (0,-.8)arc[start angle=-150,end angle=-30, radius=.8];
             \node at (2.3,-1){$...$};
        \end{scope}
        \draw[white,WL](2,-3)--(2,3);
        \draw[white,WL](1.5,-3)--(1.5,3);
        
        \draw(2,-3)--(2,3);
        \draw(1.5,-3)--(1.5,3);
    \end{scope}

     \begin{scope}
        \draw(2,-1)--(2,1);
        \draw(1.5,-1)--(1.5,1);
    \end{scope}
    \end{scope}

\end{tikzpicture}
\end{center}where there are $k$ pairs of variables added on previous $2k$ strings between $g_m$ and $b$.
\end{lemma}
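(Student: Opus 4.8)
The plan is to derive the identity from two tools already established: Corollary~\ref{addtwostringstwo}, which promotes a constraint into an auxiliary double circle, and Proposition~\ref{handleslidesbasic}, which slides strings across a closed circle that forms a lowest layer. Concretely, starting from the left-hand side we would: (i) use the given constraint $g_0+g_1+\dots+g_m=x$ to rewrite it, via Corollary~\ref{addtwostringstwo}, as $\tfrac12$ times the identical diagram now carrying a genuine double circle enclosing $g_0,g_1,\dots,g_m$, together with a charge pair labeled $x$ and $k$ variable-pairs on the $2k$ strands between $g_m$ and $b$; (ii) arrange this double circle to be a lowest layer and apply Proposition~\ref{handleslidesbasic} repeatedly to slide all strands incident to $g_0$ across it, detaching $g_0$ from every other decoration; (iii) delete the resulting free hole $g_0$ and read off the right-hand side, whose ``double loop'' is precisely the image of the slid strands enclosing only $g_1,\dots,g_m$. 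Thus the content beyond Corollary~\ref{addtwostringstwo} is exactly that the double circle may be shrunk past $g_0$, eliminating it.

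For step (i) the hypothesis is self-consistent: because $g_0$ is an odd hole with respect to a layer $L$ and the $g_i$ are taken among the holes involved in $L$, the proposition preceding Proposition~\ref{addtwostrings} furnishes the constraint $g_0+g_1+\dots+g_m=x$ (via a genus cut of each hole to infinity along $r$ followed by a Yang--Baxter reduction of the layer to a single circle). Corollary~\ref{addtwostringstwo} then applies directly; when $x=1$ the extra charge pair is absorbed onto the $2k$ strands between $g_m$ and $b$ using the neutrality of the bar operator on $2k$ strings (Proposition~\ref{addbar}), which is where the $k$ variable-pairs in the statement originate.

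For step (ii) we must first make the ``lowest layer'' hypothesis genuine: the double circle created in step (i) came from genus cuts, so it is a pair of parallel unknots topologically unlinked from the rest, and by swapping braid signs with Eq.~\eqref{switch} and pushing components past it with Eq.~\eqref{underbraid} we may place it below all crossings without changing the partition function. Proposition~\ref{handleslidesbasic}, in the form valid in the presence of a constraint (as indicated immediately after Eq.~\eqref{handleslidesbasic_equation}), then permits sliding two strings at a time from the left of $g_0$ over to its right; iterating over the strands joining $a$ to $g_0$ and those joining $g_0$ to $g_1$ turns these segments into the long strands of the target double loop and leaves $g_0$ bare. A bare hole not involved in any layer is removed by the degenerate case of the string-genus relation Eq.~\eqref{string_genus-1} (equivalently, the genus cut of Eq.~\eqref{stringgenusn_equation} to the ambient region); collecting constants reproduces the prefactor $\tfrac12$ and the right-hand diagram.

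The main obstacle is step (ii): ensuring the lowest-layer hypothesis really holds after the isotopy, and then bookkeeping the slides strand by strand so that the output is literally the right-hand side picture, not merely an isotopic diagram presented differently, while checking that no spurious phases or charges are created as strands and charges cross crossings and holes during the slide. This requires carefully invoking the relative-phase relations Eqs.~\eqref{relative0}--\eqref{relative2} and \eqref{relativegenus}, the charge-crossing relation Eq.~\eqref{charge_crossing_equation}, and Eq.~\eqref{fourstringschange} to keep the charge count on the double circle even throughout. The qudit version noted in the text follows the same scheme and is not needed here.
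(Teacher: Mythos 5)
Your steps (i) and (ii) start out on the same track as the paper: first invoke Cor.~\ref{addtwostringstwo} to turn the constraint into an auxiliary double circle (with the charge $x$ and the $k$ variable pairs between $g_m$ and $b$) at the cost of a prefactor $\tfrac12$, then use the handle slide of Prop.~\ref{handleslidesbasic} to move strings across that circle. (Two small corrections there: the double circle produced by Cor.~\ref{addtwostringstwo} is already inserted in the lowest layer, so no extra isotopy via Eq.~\eqref{switch} or Eq.~\eqref{underbraid} is needed; and only the strings between $a$ and $g_0$ are slid — if you slid ``all strands incident to $g_0$,'' in particular those between $g_0$ and $g_1$, you would not land on the stated right-hand side, where those strings remain in place.)

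The genuine gap is in your step (iii). After the slide, $g_0$ is \emph{not} a bare hole: it is still enclosed by the two strands of the auxiliary double circle (whose left edges now run between $a$ and $g_0$) and still separated from $g_1$ by the untouched strings on its right, so there is no ``degenerate case'' of the string-genus relation Eq.~\eqref{string_genus-1} to appeal to, and your account never explains how the double circle itself disappears from the final picture. The paper's concluding move does both jobs at once: since exactly two strings (the double-circle edges) now separate $g_0$ from $a$, apply Eq.~\eqref{cuttwo}, which merges $g_0$ into $a$ at the cost of $\tfrac{1}{\sqrt2}$ and caps those two strands; this cuts the two concentric circles open and rejoins them into a single curve bounding the empty annular region, i.e.\ a contractible circle in the lowest layer, which evaluates to $\sqrt2$ by Eq.~\eqref{evaluation}. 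The factors $\tfrac{1}{\sqrt2}\cdot\sqrt2=1$ cancel, leaving only the prefactor $\tfrac12$ and exactly the right-hand diagram. Without this step your argument either leaves the auxiliary double circle in the final diagram or removes $g_0$ by a relation that does not apply, so the identity as stated is not reached; with it, your outline becomes essentially the paper's proof.
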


\begin{proof}
Add double circles by Cor.~\ref{addtwostringstwo},
and then by Prop.~\ref{handleslidesbasic} we can slide all strings between $g_0$ and $a$ except for the two strings given by double circles.

\begin{center}
    
\begin{tikzpicture}[line width=.3mm]
\begin{scope}[scale=.6]
      \node at (-3.5,0){\large{$LHS=\frac{1}{2}$}};
    \begin{scope}[shift={(.2,0)},yscale=0.5]
        \draw(0,0)to[bend left=90](5.5,0);\draw(0,0)to[bend right=90](5.5,0);
    \end{scope}
    \begin{scope}[shift={(.43,0)},yscale=0.4,xscale=.9]
        \draw(0,0)to[bend left=90](5.5,0);\draw(0,0)to[bend right=90](5.5,0);
    \end{scope}
        \node at (-.25,0){$...$};
        \draw(-.5,-1)--(-.5,1);
        \draw(0,-1)--(0,1);

    \begin{scope}[scale=.55,shift={(-3,1)}]%genus
        \node at (.6,0){$a$};
        
            \draw[blue](.25,-1)arc[start angle=150,end angle=30, radius=.5];
            \draw[blue] (0,-.8)arc[start angle=-150,end angle=-30, radius=.8];
        \end{scope}

        \begin{scope}[scale=.55,shift={(.9,1)}]%genus
        \node at (.6,0){$g_0$};
            \draw[blue](.25,-1)arc[start angle=150,end angle=30, radius=.5];
            \draw[blue] (0,-.8)arc[start angle=-150,end angle=-30, radius=.8];
             \node at (2.3,-1){$...$};
        \end{scope}
         \draw[white,WL](2,-1)--(2,1);
        \draw[white,WL](1.5,-1)--(1.5,1);
        
     \begin{scope}[shift={(4,0)}]

        \begin{scope}[scale=.55,shift={(.9,1)}]%genus
        \node at (.6,0){$g_m$};
            \draw[blue](.25,-1)arc[start angle=150,end angle=30, radius=.5];
            \draw[blue] (0,-.8)arc[start angle=-150,end angle=-30, radius=.8];
            \draw(3,-3)--(3,1);
        \draw(3.8,-3)--(3.8,1);
        
             \node at (4.6,-1){$...$};
       \draw[red,decorate,decoration={snake}](3,-1)--(3.8,-1);
       \node[red] at (3.4,-.3){$x$};
       \fill(3,-1)circle(.1);
        \fill(3.8,-1)circle(.1);
        \end{scope}

        \begin{scope}[scale=.55,shift={(6.5,1)}]%genus
        \node at (.6,0){$b$};
            \draw[blue](.25,-1)arc[start angle=150,end angle=30, radius=.5];
            \draw[blue] (0,-.8)arc[start angle=-150,end angle=-30, radius=.8];
        \end{scope}

    \end{scope}

     \begin{scope}[shift={(2,0)}]

        \begin{scope}[scale=.55,shift={(.9,1)}]%genus
        \node at (.6,0){$g_1$};
            \draw[blue](.25,-1)arc[start angle=150,end angle=30, radius=.5];
            \draw[blue] (0,-.8)arc[start angle=-150,end angle=-30, radius=.8];
             \node at (2.3,-1){$...$};
        \end{scope}
        \draw[white,WL](2,-1)--(2,1);
        \draw[white,WL](1.5,-1)--(1.5,1);
        
        \draw(2,-1)--(2,1);
        \draw(1.5,-1)--(1.5,1);
    \end{scope}

        \draw(2,-1)--(2,1);
        \draw(1.5,-1)--(1.5,1);
        
    \end{scope}
    \begin{scope}[scale=.6,shift={(12,0)}]
    
     \node at (-2.7,0){\large{$=\frac{1}{2}$}};
     
      \node at (9.5,0){$.$};
    \begin{scope}[shift={(.2,0)},yscale=0.5]
        \draw(0,0)to[bend left=90](5.5,0);\draw(0,0)to[bend right=90](5.5,0);
    \end{scope}
    \begin{scope}[shift={(.43,0)},yscale=0.4,xscale=.9]
        \draw(0,0)to[bend left=90](5.5,0);\draw(0,0)to[bend right=90](5.5,0);
    \end{scope}
    \begin{scope}[scale=.55,shift={(-3,1)}]%genus
        \node at (.6,0){$a$};
        
            \draw[blue](.25,-1)arc[start angle=150,end angle=30, radius=.5];
            \draw[blue] (0,-.8)arc[start angle=-150,end angle=-30, radius=.8];
        \end{scope}
        \node at (.5,-2){$...$};
        \node at (.5,2){$...$};
      \draw(-.5,-3)..controls(1,1)and(5.8,-3)..(6,-1)--(6,1);
      \draw(-.5,3)..controls(1,-1)and(5.8,3)..(6,1);
        \begin{scope}[yscale=.6]
         \end{scope}

        \draw(0,-3)..controls(1,0)and(6,-3.5)..(6.5,-1)--(6.5,1)..controls(6,3.5)and(1,0)..(0,3);

    \begin{scope}

        \begin{scope}[scale=.55,shift={(.9,1)}]%genus
        \node at (.6,0){$g_0$};
            \draw[blue](.25,-1)arc[start angle=150,end angle=30, radius=.5];
            \draw[blue] (0,-.8)arc[start angle=-150,end angle=-30, radius=.8];
             \node at (2.3,-1){$...$};
        \end{scope}
        \draw[white,WL](2,-3)--(2,3);
        \draw[white,WL](1.5,-3)--(1.5,3);
        \draw(2,-3)--(2,3);
        \draw(1.5,-3)--(1.5,3);
    \end{scope}
        
     \begin{scope}[shift={(4,0)}]

        \begin{scope}[scale=.55,shift={(.9,1)}]%genus
        \node at (.6,0){$g_m$};
            \draw[blue](.25,-1)arc[start angle=150,end angle=30, radius=.5];
            \draw[blue] (0,-.8)arc[start angle=-150,end angle=-30, radius=.8];
         \draw(5.4,-6.3)--(5.4,5);
        \draw(4.5,-6.3)--(4.5,5);
        
             \node at (6.5,-1){$...$};
        \end{scope}

    \end{scope}

     \begin{scope}[shift={(2,0)}]

        \begin{scope}[scale=.55,shift={(.9,1)}]%genus
        \node at (.6,0){$g_1$};
            \draw[blue](.25,-1)arc[start angle=150,end angle=30, radius=.5];
            \draw[blue] (0,-.8)arc[start angle=-150,end angle=-30, radius=.8];
             \node at (2.3,-1){$...$};
        \end{scope}
        \draw[white,WL](2,-3)--(2,3);
        \draw[white,WL](1.5,-3)--(1.5,3);
        
        \draw(2,-3)--(2,3);
        \draw(1.5,-3)--(1.5,3);
    \end{scope}

     \begin{scope}

        \draw(2,-1)--(2,1);
        \draw(1.5,-1)--(1.5,1);
    \end{scope}
        \begin{scope}[scale=.55,shift={(15.3,1)}]%genus
        \node at (.6,0){$b$};
            \draw[blue](.25,-1)arc[start angle=150,end angle=30, radius=.5];
            \draw[blue] (0,-.8)arc[start angle=-150,end angle=-30, radius=.8];
        \end{scope}

        \begin{scope}[shift={(1.6,-1.3)}]
        \end{scope}
        \node at (7.2,0.5){$x$};
        \draw[red,decorate,decoration={snake}](6.96,0)--(7.45,0);
        \fill (6.97,0) circle(.05);
        \fill (7.48,0) circle(.05);

    \end{scope}

\end{tikzpicture}
\end{center}

Then, since there are only two strings between $g_0$ and $a$, the genus $g_0$ can be eliminated by Eq.~\eqref{cuttwo}, multiplied by $\frac{1}{\sqrt{2}}$. The double-circle is cut into a contractible circle and reduces to a scalar $\sqrt{2}$, which gives RHS.

\end{proof}

\input{QCS_even}
\subsection{QCS algorithm}

We are now ready to present the unified algorithm of this work: the \textbf{Quon Classical Simulation (QCS)} algorithm.
It proceeds in three fundamental steps, each designed to systematically reduce the topological complexity of a given 2d-CDM.

QCS algorithm:
\begin{itemize}
\item
\textbf{Step 1:} Component identification and layer construction. 

Given a 2d-CDM, identify all connected components and resolve the links between them. Then insert a parity circle and re-pair charges to it to construct layers, as described in Eq.~\eqref{paritycircle}.

\item
\textbf{Step 2:} Odd hole elimination. 

Apply Lem.~\ref{handleslideslem} to an odd hole. Repeat this process until no odd holes remain.

\item
\textbf{Step 3:} Even Clifford hole elimination. 

If all remaining holes are even, search for a Clifford cut and apply Thm.~\ref{Cliffordhole}. If the reduction falls into the case described by Eq.~\eqref{C1}, new odd holes may be introduced, in which case return to Step 2.
\end{itemize}
At this point, the closed 2d-CDM is reduced to a configuration where all remaining holes are Magic holes.

For open tensor networks $T_1$ and $T_2$ with time boundaries, we represent them as 2d-CDMs and apply the QCS procedure, resulting in simplified CDMs $Q(T_1)$ and $Q(T_2)$, each consisting of layers and holes. 
Denote their parity circles by $C_1$ and $C_2$, respectively. To contract the two tensors, we first re-pair charges on parity circles such that each pair of charges is attached to different layers. 
Then we glue their time boundaries and connect the corresponding strings. 
After contraction, new holes may be introduced, and we apply the QCS procedure again for further reduction, during which a new parity circle is created. 

Contracting two Clifford tensors may create odd holes or even Clifford holes, while contracting two Matchgate tensors produces holes that can be eliminated by the string-genus relation. 
Neither operation generates Magic holes.
However, contracting a Clifford tensor with a Matchgate tensor may result in the formation of Magic holes.

After the QCS procedure, the original 2d-CDM is reduced to one containing only Magic holes. 
A straightforward approach is to perform genus cuts (Lem.~\ref{stringgenusn}), introducing binary variables at the cost of doubling the number of diagrams with each cut. This results in a collection of $2^k$ 2D-CDMs, denoted $G_i, i=1,\cdots,2^k$ where $k$ is the number of Magic holes. Each $G_i$ is a hole-free CDM, i.e., a Matchgate Quon diagram. According to subsubsection \S \ref{mg_and_graph}, the value of each such diagram can be efficiently computed using the FKT algorithm. This leads to the central result of this work:

\begin{theorem}
\label{mainthmapp}
    Let $Q$ be a closed 2d-CDM. Then its value is given by
    \begin{equation}
    \label{maineq}
    Z(Q) =\left(\frac{1}{2}\right)^{|O|+|C1|+\frac{|C2|}{2}} \sum_{i=1}^{2^k}
    \text{Pfaffian}\left(\mathrm{Adj}[\mathfrak{F}(\widehat{G}_i(O;C))] \right),
    \end{equation}
    where:
    \begin{itemize}
    \item
    $O$ is the set of odd holes eliminated via Lem.~\ref{handleslideslem};
    \item
    $C=C_1\sqcup C_2$ are sets of even Clifford holes eliminated via Thm.~\ref{Cliffordhole}, corresponding to cases in Eq.~\eqref{C1} and~\eqref{C2} respectively;
    \item
    $k$ is the number of Magic holes after QCS;
    \item
    $G_i(O;C)$ denotes the $i$-th diagram after Magic hole genus cut;
    \item
    $\mathfrak{F}(\hat{\cdot})$ is the translation from a Matchgate Quon diagram to a graph-theoretic representation;
    \item
    $\mathrm{Adj}(G)$ is the adjacency matrix of $G$ under a Pfaffian orientation.
    \end{itemize}
The time complexity of evaluating this expression is $O(n^32^k)$, where $n$ denotes the total number of crossings, charges, and holes.
\end{theorem}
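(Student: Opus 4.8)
The plan is to assemble Theorem~\ref{mainthmapp} directly from the three reduction stages of the QCS algorithm and the already-established Matchgate--FKT dictionary, carefully bookkeeping the scalar factors at each step. First I would run the QCS algorithm on $Q$: Step~1 (component identification, link resolution via Eq.~\eqref{switch}, and parity-circle insertion as in Eq.~\eqref{paritycircle}) produces a layered 2d-CDM with the same partition function, contributing only phases and wire-circles that are absorbed into the eventual diagrams. Step~2 applies Lem.~\ref{handleslideslem} repeatedly; each odd-hole elimination via that lemma carries a global factor $\tfrac12$, accounting for the $|O|$ contribution to the prefactor $\left(\tfrac12\right)^{|O|+|C1|+|C2|/2}$. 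Step~3 applies Thm.~\ref{Cliffordhole}: each Clifford-hole elimination in case~\eqref{C1} carries $1/c=1/2$ (giving the $|C1|$ term) and each in case~\eqref{C2} carries $1/c=1/\sqrt{2}$ (giving the $|C2|/2$ term); I would note that case~\eqref{C1} can reintroduce odd holes, so one loops back to Step~2, but since each elimination strictly decreases the number of holes the process terminates, and termination is what guarantees the finite sums in Eq.~\eqref{maineq}.

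Next I would handle the residual Magic holes. After QCS the diagram has exactly $k$ Magic holes and no others. Applying the genus cut of Lem.~\ref{stringgenusn} to each Magic hole writes that hole as a sum of two terms (by Eq.~\eqref{genus_is_a_variable}), with a normalization $\tfrac12$ per cut folded into the $\tfrac{1}{2^m}$-type convention of Eq.~\eqref{notation_of_genus_as_a_variable}; doing this for all $k$ Magic holes expresses $Z(Q)$ as a sum over $2^k$ hole-free 2d-CDMs $G_i(O;C)$, $i=1,\dots,2^k$. Here I must be careful that the normalization from the Magic-hole cuts is consistent with how the $G_i$ are defined — I would absorb the per-cut $\tfrac12$'s into the definition of $G_i$ (equivalently, the statement's prefactor only records $|O|$, $|C1|$, $|C2|$ because the Magic-hole normalization is packaged into the $G_i$ themselves and into what ``$\widehat{\cdot}$'' does), so that the stated Eq.~\eqref{maineq} has precisely the prefactor $\left(\tfrac12\right)^{|O|+|C1|+|C2|/2}$ and nothing else outside the sum. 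Each $G_i$ is a genus-zero 2d-CDM — indeed a Matchgate Quon diagram — so by Thm.~\ref{MatchgateThm} and the discussion in \S\ref{mg_and_graph} it admits a standard form $\widehat{G}_i$ decomposing into the elementary Matchgate tensors, and by the proposition relating $\mathfrak{F}$ to perfect matchings together with the FKT algorithm we get $Z(\widehat{G}_i) = \operatorname{Pfaffian}\!\left(\mathrm{Adj}[\mathfrak{F}(\widehat{G}_i(O;C))]\right)$. Summing over $i$ gives Eq.~\eqref{maineq}.

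Finally, the complexity count. Step~1 resolves $O(n)$ links and re-pairs $O(n)$ charges, each in $O(1)$ or $O(n)$ work; Step~2 performs at most $O(n)$ odd-hole eliminations, and by the complexity analysis inside the proof of Thm.~\ref{Cliffordhole} (the genus-cut creates $O(n)$ variable pairs each costing $O(n)$), one elimination costs $O(n^2)$, so all of Step~2 and Step~3 together cost $O(n^3)$ — and since case~\eqref{C1} can only decrease the hole count, the interleaving of Steps~2 and~3 does not change this bound. For each of the $2^k$ diagrams $G_i$, converting to standard form, applying $\mathfrak{F}$, and running FKT costs $O(n^3)$ (matching Prop.~\ref{thm:Matchgate_YBR} / the FKT Pfaffian evaluation). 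Hence the total is $O(n^3 2^k)$.

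The main obstacle I anticipate is the bookkeeping of scalar prefactors and phases: making sure that the phases accumulated in Step~1 (Eqs.~\eqref{relative0},~\eqref{relative1},~\eqref{relative2},~\eqref{relativegenus},~\eqref{charge_crossing_equation}) and in Step~2, together with the $\tfrac12$'s, the $1/c$'s of Thm.~\ref{Cliffordhole}, and the Magic-hole genus-cut normalizations, combine to exactly the claimed prefactor $\left(\tfrac12\right)^{|O|+|C1|+|C2|/2}$ with all remaining phase/sign data correctly carried inside the Pfaffian of each $G_i$ via the orientation convention. A secondary subtlety is confirming that the interleaved loop between Steps~2 and~3 terminates and that the final diagram genuinely contains only Magic holes — this follows because the odd/Clifford/Magic trichotomy of an even hole is determined by the layers (an intrinsic $H^1(M;\mathbb{Z}_2)$-orthogonality condition, as discussed after the odd-hole definition), so once a hole is neither odd nor Clifford it stays Magic under the remaining moves.
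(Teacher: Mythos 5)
Your proposal is correct and follows essentially the same route as the paper: the paper likewise obtains Eq.~\eqref{maineq} by running the three QCS steps, collecting the factor $\tfrac12$ from each application of Lem.~\ref{handleslideslem} and the factors $1/c$ from Thm.~\ref{Cliffordhole} (cases \eqref{C1} and \eqref{C2}), genus-cutting the $k$ remaining Magic holes into $2^k$ hole-free diagrams, and evaluating each via the Matchgate dictionary $\mathfrak{F}$ and the FKT Pfaffian, with the same $O(n^3 2^k)$ complexity count. Your extra care about absorbing the Magic-hole cut normalizations into the $G_i$ and about termination of the Step~2/3 loop is consistent with (and slightly more explicit than) the paper's presentation.
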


\begin{coro}
Clifford and Matchgate tensor networks can be classically simulated in polynomial time, as they contain no Magic holes—that is, $k=0$.
\end{coro}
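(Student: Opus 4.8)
The plan is to deduce the corollary from the topological dictionaries of Theorems~\ref{MatchgateThm} and~\ref{CliffordThm} together with the evaluation formula of Theorem~\ref{mainthmapp}: in each case it suffices to check that running the QCS algorithm on the associated $2$d-CDM terminates with no Magic holes, i.e.\ $k=0$, since then \eqref{maineq} collapses to a single Pfaffian and the stated $O(n^3)$ bound follows. For open tensor networks one first represents each factor as a $2$d-CDM, runs QCS, glues the time boundaries and runs QCS again; since contracting two Clifford (resp.\ two Matchgate) diagrams only creates odd holes or even Clifford holes (resp.\ holes removable by the string-genus relation), it is enough to treat a single closed diagram of each type.

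For the Matchgate direction I would argue as follows. By Theorem~\ref{MatchgateThm} a planar Matchgate tensor network is represented by a genus-zero $2$d-CDM carrying an outermost (pink) Hamiltonian cycle; equivalently, as in the necessity half of that proof, each generating tensor $\ket{0},\bra{0},\ket{Max},e^{i\theta Z},X,Y,Z$ contributes a hole together with a surrounding closed (pink) string. Using \eqref{fourstringschange} one moves charges so that every pink loop carries an even number of charges---the total charge count being preserved mod $2$, and the network vanishing identically if it is odd---after which each pink loop is neutral and may be dropped by the string-genus relation \eqref{string_genus-1}, eliminating the enclosed hole. Equivalently, each such hole is an odd hole (a ray to infinity meets the surrounding loop once) and is killed in Step~2 of the QCS algorithm via Lemma~\ref{handleslideslem}. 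Iterating removes every hole, so the reduced diagram is hole-free; in particular $k=0$, and \eqref{maineq}---or concretely the dictionary $\mathfrak{F}$ of \S\ref{mg_and_graph} followed by the FKT algorithm---evaluates it in $O(n^3)$ time.

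For the Clifford direction I would use Theorem~\ref{CliffordThm}: a Clifford tensor network is a CDM in which every crossing is Clifford, $\alpha\in\{1,-1,i,-i\}$. Run QCS. Steps~1 and~2 introduce only braids, charges and links (resolved by \eqref{switch}), and Step~3 in case \eqref{C1} inserts a closed string with Clifford crossing $\alpha=\pm1$; at no point is a genuinely parameterized crossing created, so the invariant ``all crossings are Clifford'' is preserved throughout. Consequently any genus cut---in particular from any even hole to another hole, or to infinity---is vacuously a Clifford cut, since the defining condition (every string met by the cut crosses every other string only through Clifford crossings) holds automatically. Hence every even hole is a Clifford hole and is removed by Theorem~\ref{Cliffordhole}. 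Each application of Lemma~\ref{handleslideslem} or Theorem~\ref{Cliffordhole} strictly decreases the number of holes, and case \eqref{C1} may only recreate odd holes which are then handled by returning to Step~2; the alternation therefore terminates, and at termination no odd holes and no even Clifford holes remain---but we have just shown there are no even non-Clifford holes either, so $k=0$. Plugging $k=0$ into \eqref{maineq} gives the value as a single Pfaffian computable in $O(n^3)$ time, which is the Gottesman--Knill-type conclusion.

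The main obstacle is the bookkeeping that the relevant notion of ``Clifford-ness'' is genuinely invariant under the full algorithm: one must verify that none of the moves used in Steps~1--3---the link resolution \eqref{switch}, the charge re-pairing onto parity circles \eqref{paritycircle}, the odd-hole handle slide of Lemma~\ref{handleslideslem}, and the hole elimination of Theorem~\ref{Cliffordhole} in case \eqref{C1}---ever produces a non-Clifford crossing, and, on the Matchgate side, that the parity-conservation argument behind Theorem~\ref{MatchgateThm} survives the generic Yang--Baxter reductions so that every pink loop can be neutralized. Both invariants are essentially established already inside the proofs of Theorems~\ref{MatchgateThm} and~\ref{CliffordThm}; assembling them to read off $k=0$ and then invoking Theorem~\ref{mainthmapp} with $k=0$ completes the argument.
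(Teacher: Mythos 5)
Your proposal is correct and follows essentially the same route the paper intends: the corollary is read off from Theorems~\ref{MatchgateThm} and~\ref{CliffordThm} (all Matchgate holes are killed by the string-genus/odd-hole moves, and in a Clifford CDM every crossing is Clifford so every even hole admits a Clifford cut, hence no hole is Magic), after which Theorem~\ref{mainthmapp} with $k=0$ gives the $O(n^3)$ bound. The invariance bookkeeping you flag (QCS moves only introduce Clifford crossings, charge parity on pink loops is conserved) is exactly the content already established inside the proofs of those theorems, so no new argument is needed.
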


There exists a much broader class of quantum circuits that can be classically simulated beyond Clifford and Matchgates,
provided that they contain at most $O(\log n)$ Magic holes. 
This is because such circuits can be expanded as a sum over a polynomial number of Clifford diagrams. 
After this transformation, all remaining holes become Clifford, making the entire diagram classically simulable.
\begin{coro}
    Any 2d-CDM with at most $O(\log n)$ Magic holes can be classically simulated in polynomial time, where $n$ is an upper bound on the total number of braids, charges, and crossings in the diagram.
\end{coro}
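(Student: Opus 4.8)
The plan is to run the QCS algorithm on the given 2d-CDM and then invoke Theorem~\ref{mainthmapp}, exploiting the elementary fact that $2^{O(\log n)}$ is polynomial in $n$. First I would carry out Step~1 of the QCS algorithm: identify the connected components of the string diagram, resolve every link between distinct components by repeated use of Eq.~\eqref{switch}, then insert a parity circle and re-pair the leftover charges onto it as in Eq.~\eqref{paritycircle}. This organizes the diagram into layers together with a fixed finite set of holes, and is plainly done in time polynomial in $n$, the number of braids, charges and crossings.

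Next I would iterate Steps~2 and~3. Step~2 applies the odd-hole handle slide of Lem.~\ref{handleslideslem} to remove odd holes one at a time; each application strictly lowers the number of holes and takes polynomial time. Step~3 searches for a Clifford cut and applies Theorem~\ref{Cliffordhole}, eliminating an even Clifford hole in time $O(n^2)$; in case~\eqref{C1} this may turn previously even holes into odd ones, so control returns to Step~2. To bound the number of iterations I would use as a potential function the total genus (equivalently $\dim H_1$) of the underlying handlebody: an odd-hole elimination drops it by one, and a Clifford-hole elimination drops it by one as well, since the closed strings added around $g_1,\dots,g_m$ are strings, not new handles. Hence at most $O(n)$ eliminations occur, each costing $O(n^2)$, for a total cost of $O(n^3)$. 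When the loop halts, the only holes left are Magic holes, and by hypothesis there are $k = O(\log n)$ of them.

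Finally I would invoke Theorem~\ref{mainthmapp}: the value of the reduced closed 2d-CDM equals a $\left(\tfrac12\right)^{O(n)}$-weighted sum of $2^k$ Pfaffians, one for each choice of the binary variables produced by genus-cutting the $k$ Magic holes (Lem.~\ref{stringgenusn}). Since $k\le c\log n$ for some constant $c$, this sum has at most $n^{c}$ terms; each summand is the value of a hole-free CDM, hence by \S\ref{mg_and_graph} of a Matchgate Quon diagram, equal to the Pfaffian of an $O(n)\times O(n)$ skew-symmetric adjacency matrix in a Pfaffian orientation, which the FKT algorithm evaluates in $O(n^3)$. The total cost is $O(n^{3+c})$, polynomial. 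For a quantum circuit one represents each amplitude, and more generally each marginal outcome probability, as a capped 2d-CDM; the hypothesis then says every such CDM again has $O(\log n)$ Magic holes, so both amplitude estimation and sampling are polynomial.

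The step I expect to be the real obstacle is not the concluding count but establishing the two facts that make the reduction legitimate and efficient: first, pinning down a clean monotone quantity — the genus potential sketched above, or a refinement of it — that provably forces the Step~2/Step~3 loop to terminate after polynomially many rounds despite case~\eqref{C1} re-creating odd holes; and second, verifying that the hypothesis is stable under the operations used to extract amplitudes and marginals, i.e. that capping output discs and inserting measurement projectors never pushes the Magic-hole count past $O(\log n)$. Both are believable given the interpretation of Magic holes as the carriers of global long-range entanglement, but they are precisely the bookkeeping that a bare appeal to Theorem~\ref{mainthmapp} leaves implicit.
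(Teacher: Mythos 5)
Your proposal is correct and follows essentially the same route as the paper: run the QCS reduction so that only Magic holes remain, genus-cut the $k=O(\log n)$ Magic holes into $2^{k}=\mathrm{poly}(n)$ hole-free (Matchgate) Quon diagrams, and evaluate each by the Pfaffian/FKT step of Thm.~\ref{mainthmapp}, giving the $O(n^{3}2^{k})$ bound. The paper treats the corollary as an immediate consequence of Thm.~\ref{mainthmapp} (expanding over the Magic holes so that the remaining holes are efficiently eliminable), so your extra bookkeeping — the genus potential for termination of the Step~2/Step~3 loop and the stability of the hypothesis under capping — is reasonable diligence but not needed beyond what the theorem already asserts.
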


In \S~\ref{ttn}, we present additional relations to reduce the number of Magic holes.

\subsection{Layers and holes in 3D handlebodies}
\label{homologysection}
For readers interested in holes and layers in 3D handlebodies, we give the following formal interpretations. 
A hole is an element of the first \textbf{homology}, and a layer is an element of the first \textbf{cohomology}.

\label{Homology}
 Given a crossing-decorated 3-manifold $M$, if $M$ comes from a tensor network, there is a graph $\Gamma$, where degree-$k$ tensors are degree-$k$ vertices and their contractions are edges. Then $M$ is 3-dimensional neighborhood of $\Gamma$, denoted by $M=N(\Gamma)$, while tensor network representation uses only 0 and 1-skeleton of the manifold $M$. In this manifold, a ``hole" is a cycle of $\Gamma$. Since two cycles with a common edge merge into another cycle, we can define addition of two cycles. Then we choose a basis of cycles. The number of holes that we choose is equal to the first Betti number of the graph $\Gamma$: $|E|+|V|-1$, which is the rank of the first homology group $H_1$. The detailed definitions are as follows:

The standard $0$-simplex is a vertex; the standard $1$-simplex is an edge; the standard $2$-simplex is a triangle; the $3$-simplex is a tetrahedron. They are all defined to sit in a ``standard" position.

A $k$-simplex in a manifold $M$ is a continuous map from the standard $k$-simplex to $M$. 
Denote a $k$-simplex in $M$ as $[v_0,...,v_{k}]$. 
Let $C_k(M)$ be the formal $\mathbb{Z}_2$-linear combination of all $k$-simplices in $M$, which is called the $k$-chain group. 
Define the boundary map $\partial_k:C_k(M)\rightarrow C_{k-1}(M)$, which maps each $k$-simplex to the sum of its $(k-1)$-dimensional faces: 
$$
\partial_k([v_0,v_1,...,v_{k}]):=\sum\limits_{i=0}^k[v_0,...,\hat{v_i},...,v_{k}],
$$
where $\hat{v_i}$ denotes omission of the $i$-th vertex.

We have $\partial_k\circ \partial_{k-1}=0$. 
We define the homology groups of $M$ as
$$
H_k(M) = \frac{\ker(\partial_k)}{\operatorname{im}(\partial_{k+1})}\ .
$$
We also define the groups of cochains: $C^k(M; \mathbb{Z}_2) := \text{Hom}(C_k(M), \mathbb{Z}_2)$.
The coboundary map $\delta^k: C^k(M;\mathbb{Z}_2) \to C^{k+1}(M;\mathbb{Z}_2)$ is defined by dualizing the boundary operator on chains: $$(\delta^k \varphi)(\sigma) = \varphi(\partial_{k+1} \sigma),
$$
We also have $\delta^{k+1} \circ \delta^k = 0$ for all $k$.
Therefore, $\operatorname{im}(\delta^{k-1})\subset \operatorname{ker}(\delta^k).$ 
Then we can define the $k$-th cohomology groups:
$$
H^k(M;\mathbb{Z}_2) = \frac{\ker(\delta^k)}{\operatorname{im}(\delta^{k-1})}\ ,
$$
which is the space of $k$-cocycles (cochains with zero coboundary) modulo $k$-coboundaries (those cochains that are coboundaries of $(k-1)$-cochains).

Given a CDM from taking a neighborhood of a tensor network $M=N(\Gamma)$, with only non-trivial $C_0(M)$ and $C_1(M)$. 
% $\partial_1:C_1\rightarrow C_0$ is a system of linear equations: 
% $$
% A_{n\times m}(e_1,...,e_m)^T=0,
% $$
% where $n$ rows correspond to $n$ vertices and $m$ columns correspond to $m$ edges. We have $H_1=\operatorname{ker}(\partial_1)$. 
We choose a basis of $H_1$ as the ``holes" of our interest. 
A string $S$ is an element of $H^1(M;\mathbb{Z}_2)$, whose action on a hole $g$ is the winding number mod 2:
$$
S \in H^1(M;\mathbb{Z}_2);S(g)\in\mathbb{Z}_2.
$$
These 3D concepts are intrinsically well defined, which naturally extend to 3-manifolds beyond handlebodies, and the computation in 3D is efficient.

\section{Topological tensor network}
\label{ttn}

\newcommand{\topologicalmultispin}{

\hole{0.25}{1}{1};
\hole{0.25}{-1.5}{1};
\hole{0.25}{3.5}{1};
\draw(.82,1.3)--(.82,-1.1);
\draw(.68,1.3)--(.68,-1.1);
\draw(.82-.6,1.3)--(.82-.6,-1.1);
\draw(.68-.6,1.3)--(.68-.6,-1.1);

\draw[white,WLL]
(2,0.4)arc[start angle=15,end angle=90,radius=1]--++(-1,0)arc[start angle=90,end angle=270,radius=1]--++(1,0)arc[start angle=270,end angle=345,radius=1]--(1.6,0.4)
(1.6,0.4)arc[start angle=30,end angle=90,radius=.6]--++(-1,0)arc[start angle=90,end angle=270,radius=.6]--++(1,0)arc[start angle=270,end angle=330,radius=.6]--(2,.4);

\node at (2.5,0.2){$(\alpha,\beta)$};
\draw[red]
(2,0.4)arc[start angle=15,end angle=90,radius=1]--++(-1,0)arc[start angle=90,end angle=270,radius=1]--++(1,0)arc[start angle=270,end angle=345,radius=1]--(1.6,0.4)
(1.6,0.4)arc[start angle=30,end angle=90,radius=.6]--++(-1,0)arc[start angle=90,end angle=270,radius=.6]--++(1,0)arc[start angle=270,end angle=330,radius=.6]--(2,.4);
}

\newcommand{\topologicalmultispinA}{

\begin{scope}[shift={(-.9,0)}]

\hole{0.25}{1}{1};
\hole{0.25}{-1.5}{1};
\hole{0.25}{3.5}{1};

\draw(.82,1.3)--(.82,-1.3);
\draw(.68,1.3)--(.68,-1.3);
\draw(.82-.6,1.3)--(.82-.6,-1.3);
\draw(.68-.6,1.3)--(.68-.6,-1.3);
  \end{scope}
  
  % 外圈（中心在原点）
  \draw[white,WLL]
    (1,0) arc[start angle=0,end angle=90,radius=1]
    --++(-1,0)
    arc[start angle=90,end angle=270,radius=1]
    --++(1,0)
    arc[start angle=270,end angle=360,radius=1];
  \draw[red]
    (1,0) arc[start angle=0,end angle=90,radius=1]
    --++(-1,0)
    arc[start angle=90,end angle=270,radius=1]
    --++(1,0)
    arc[start angle=270,end angle=360,radius=1];

  % 内圈
  \draw[white,WLL]
    (0.6,0) arc[start angle=0,end angle=90,radius=0.6]
    --++(-1,0)
    arc[start angle=90,end angle=270,radius=0.6]
    --++(1,0)
    arc[start angle=270,end angle=360,radius=0.6];
  \draw[red]
    (0.6,0) arc[start angle=0,end angle=90,radius=0.6]
    --++(-1,0)
    arc[start angle=90,end angle=270,radius=0.6]
    --++(1,0)
    arc[start angle=270,end angle=360,radius=0.6];
}

\newcommand{\topologicalmultispinB}{

\begin{scope}[shift={(-.9,0)}]

\hole{0.25}{1}{1};
\hole{0.25}{-1.5}{1};
\hole{0.25}{3.5}{1};

\draw(.82,1.3)--(.82,-1.3);
\draw(.68,1.3)--(.68,-1.3);
\draw(.82-.6,1.3)--(.82-.6,-1.3);
\draw(.68-.6,1.3)--(.68-.6,-1.3);
  \end{scope}
  
  % 外圈（中心在原点）
  \draw[white,WLL]
    (1,0) arc[start angle=0,end angle=90,radius=1]
    --++(-1,0)
    arc[start angle=90,end angle=270,radius=1]
    --++(1,0)
    arc[start angle=270,end angle=360,radius=1];
  \draw[red]
    (1,0) arc[start angle=0,end angle=90,radius=1]
    --++(-1,0)
    arc[start angle=90,end angle=270,radius=1]
    --++(1,0)
    arc[start angle=270,end angle=360,radius=1];

  % 内圈
  
  \draw[white,WLL]
    (0.6,0) arc[start angle=0,end angle=90,radius=0.6]
    --++(-1,0)
    arc[start angle=90,end angle=270,radius=0.6]
    --++(1,0)
    arc[start angle=270,end angle=360,radius=0.6];
  \draw[red]
    (0.6,0) arc[start angle=0,end angle=90,radius=0.6]
    --++(-1,0)
    arc[start angle=90,end angle=270,radius=0.6]
    --++(1,0)
    arc[start angle=270,end angle=360,radius=0.6];

    \draw[red,decorate,decoration={snake}](1,0)--(0.6,0);
    \fill(1,0) circle(.06);
    \fill(.6,0) circle(.06);
}

Clifford and Matchgate gates together form a universal gate set. Our QCS algorithm is capable of simulating hybird-Clifford-Matchgate circuits. This leads to a topological criterion--namely, the presence of \emph{Magic holes}--which delineates the boundary between classically simulable models and those with potential quantum advantage.
QCS eliminates all odd holes as well as even Clifford holes, both of which reflect \textbf{global} topological properties of the tensor network, as captured in the Quon framework.
The remaining Magic holes constitute the core source of quantum complexity.
We further introduce global techniques to reduce the number of Magic holes, applicable only when all remaining holes are even.

\subsection{Topological spins}

By Eq.~\eqref{Equ:kernel element}, any planar diagram in a $S^1\times [1,2]$ reduces to a linear sum of the two diagrams:

\begin{center}
    \begin{tikzpicture}[line width=.3mm]
        \begin{scope}
        \node at (0,.96)[red]{$g$};
            \draw(-1.1,0)[dashed]arc[start angle=180,end angle=540,radius=1.1];
             \draw(-.8,0)[red,decorate,decoration={snake}]arc[start angle=180,end angle=540,radius=.8];

            \draw(-.6,-.1)[dashed]arc[start angle=180,end angle=540,radius=.6];
        \end{scope}
    \end{tikzpicture}
\end{center}
for $g=0,1$, or a multiple of
\begin{center}
    \begin{tikzpicture}[line width=.3mm]
        \begin{scope}
            \draw(-1.1,0)[dashed]arc[start angle=180,end angle=540,radius=1.1];
             \draw(-.8,0)arc[start angle=180,end angle=540,radius=.8];
            \draw(-.6,0)[dashed]arc[start angle=180,end angle=540,radius=.6];
        \end{scope}
    \end{tikzpicture}
\end{center}
according to the parity of the winding number. Moreover, the composition of the three annular diagrams verifies the Ising type fusion rule.

In particular, if a layer involves only one hole, 
then it shrinks to the small neighborhood of the hole. 
If the winding number is odd, then they reduce to a scalar. 
If the winding number is even, then they reduce to a topological spin:

\begin{center}
    \begin{tikzpicture}[line width=0.3mm]
        % \begin{scope}[scale=.8]
        % \hole{0.6}{1}{1};
        % \node at (2.4,0.2){$(\alpha,\beta)$};
        % \draw(2,0.4)arc[start angle=15,end angle=345,radius=1]--(1.6,0.4)arc[start angle=30,end angle=330,radius=.6]--(2,.4);
        % \end{scope}
        
        \begin{scope}[shift={(3,0)},scale=.8]
 %       \node at (-.6,0){$=$};
        \hole{0.6}{1}{1};
        \node at (2.5,0.2){$(\alpha,\beta)$};
        \draw(2,0.4)arc[start angle=15,end angle=345,radius=1]--(1.6,0.4)arc[start angle=30,end angle=330,radius=.6]--(2,.4);
        \end{scope}
        \node at (5,0){$.$};
    \end{tikzpicture}
\end{center}

All topological spins form a two-dimensional vector space.
When we replace the crossing by a Pauli $X$-, $Y$- or $Z$-basis, we obtain the corresponding basis for topological spins. 
In particular, we have the following basis, for $g=0,1$:
\begin{center}
  \begin{tikzpicture}[line width=.3mm]
  \begin{scope}[shift={(0,0)}]
        \node at (0.55,1)[red]{$g$};
        \draw[red,decorate,decoration={snake}](-.3,0)arc[start angle=180,end angle=540,radius=.8];
        \hole{0.6}{0.2}{0.8};
        \end{scope}
    \end{tikzpicture}
\end{center}
% and another basis:
% \begin{center}
%   \begin{tikzpicture}[line width=.3mm]
%    \begin{scope}[shift={(0,3)}]
%         \node at (0.55,0.35){$g$};
%         \hole{0.6}{0.2}{0.8};
%         \end{scope}
%     \end{tikzpicture}
% \end{center}

\begin{prop}
\label{reducetotpsp}
If the boundary of a 2-disc does not intersect with any string, then the internal diagram reduces to a scalar or a topological spin according to the parity of the winding number. 
\begin{align}\label{lastproposition}
\begin{tikzpicture}[baseline=(current bounding box.center),line width=.3mm]
    \begin{scope}
    \draw(-1.5,0)[dashed]arc[start angle=180,end angle=540,radius=1.5];
    \draw[white,WLL](1,1)--(-1,1)--(-1,-1)--(1,-1);
    \newcommand{\epsl}{.3};
        \draw(1,1-\epsl-.14)--(1,1-\epsl)[bend right=45]to(1-\epsl,1)--(-1+\epsl,1)[bend right=45]to(-1,1-\epsl)--(-1,-1+\epsl)[bend right=45]to(-1+\epsl,-1)--(1-\epsl,-1)[bend right=45]to(1,-1+\epsl)--++(0,1.3);
        \draw(1,1-\epsl)[bend right=45]to(1-\epsl,1)--(-1+\epsl,1)[bend right=45]to(-1,1-\epsl)--(-1,-1+\epsl)[bend right=45]to(-1+\epsl,-1)--(1-\epsl,-1)[bend right=45]to(1,-1+\epsl);
   % \node at (0,0)[red]{A layer $L$};
    \end{scope}
    \begin{scope}[shift={(-1.2,-0.5)},scale=0.25]
        \hole{1}{1}{1};
    \end{scope}
    \begin{scope}[shift={(-0.5,-0.5)},scale=0.25]
        \hole{1}{1}{1};
    \end{scope}
    \begin{scope}[shift={(0.2,-0.5)},scale=0.25]
        \hole{1}{1}{1};
    \end{scope}
    \draw(-0.5,0.8)--(-0.5,-0.8);
    \draw(-0.4,0.8)--(-0.4,-0.8);
    \draw(0.3,0.8)--(0.3,-0.8);
    \draw(0.4,0.8)--(0.4,-0.8);
\end{tikzpicture}
=
\begin{tikzpicture}[baseline=(current bounding box.center),line width=.3mm]
    \begin{scope}
    \draw(-.5,0)[dashed]arc[start angle=180,end angle=540,radius=1.5];
        \hole{0.6}{1}{1};
        \draw[white,WLL](2,0.4)arc[start angle=15,end angle=345,radius=1]--(1.6,0.4)arc[start angle=30,end angle=330,radius=.6]--(2,.4);
        \draw(2,0.4)arc[start angle=15,end angle=345,radius=1]--(1.6,0.4)arc[start angle=30,end angle=330,radius=.6]--(2,.4);
        \node at (2.5,0.2){$(\alpha,\beta)$};
    \end{scope}
\end{tikzpicture}.
\end{align}
\end{prop}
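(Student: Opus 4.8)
The plan is to reduce everything to the single-hole case that was already treated at the start of this section, where a layer encircling one hole collapses to a scalar (odd winding number) or to a topological spin (even winding number). So the only genuinely new work is to collapse the several holes sitting inside the $2$-disc into a single one.

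First I would expand every parameterized crossing inside the $2$-disc using its defining relations \eqref{crossing1} and \eqref{crossing2}, so that the left-hand side of \eqref{lastproposition} becomes a finite linear combination of \emph{planar} diagrams, each still containing the holes $g_1,\dots,g_k$ inside a $2$-disc whose boundary meets no string. The parity of the number of strings met by a radial arc is preserved by this expansion, since both planar resolutions of a crossing change that number by an even amount; hence the winding number mod $2$ is a well-defined invariant of the original diagram.

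Next, for a fixed planar term I would merge the holes pairwise. Between any two holes the separating arc is crossed by an even number of strings, by the shading/parity argument; if that number is $0$ the holes merge by \eqref{two holes meet}, if it is $2$ they merge by \eqref{cuttwo}, and if it is $\ge 4$ I would first slide decoration across with \eqref{fourstringschange} and split the bundle with the string-genus relation \eqref{string_genus-1} exactly as in the proof of Theorem~\ref{CliffordThm}, then merge. Iterating, all $k$ holes coalesce into one hole $g$, leaving a planar diagram of nested loops in the annulus $A\cong S^1\times[1,2]$ between $g$ and the disc boundary. Contractible loops evaluate to scalars by \eqref{evaluation}, concentric loops around $g$ fuse by \eqref{fusion}, and by the Jones--Wenzl relation \eqref{Equ:kernel element} the residual annular diagram is, up to a scalar, either the empty hole or the hole encircled by a charged wire (the two topological-spin basis states, $g=0,1$) in the even case, or — via \eqref{string_genus-1}/\eqref{bolangquan_genus_cut}, which annihilate the hole — just a scalar in the odd case.

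Finally I would recombine the linear combination: in the odd case every term is a scalar and hence so is the whole internal diagram; in the even case every term lies in the two-dimensional span of the topological-spin basis, so the sum is again a topological spin, which I write in the form $(\alpha,\beta)$ as on the right of \eqref{lastproposition}, the coefficients being read off from the expansion. The hard part is the bookkeeping in the merging step — ensuring the process terminates and correctly handling bundles of size $\ge 4$ together with any crossings and charges sitting on them. A more conceptual alternative, which bypasses this bookkeeping, is to observe that the $2$-disc with a single hole is a solid torus whose associated TQFT Hilbert space on the boundary is two-dimensional and spanned by topological spins — this being the reflection-positivity/index-$2$ input behind Proposition~\ref{prop:dim} — so any diagram inside it must already be a topological spin; one then still has to track the charged-wire sector to separate out the odd (scalar) case.
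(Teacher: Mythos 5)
Your overall strategy coincides with the paper's: coalesce all holes inside the disc into a single one, and then invoke the single‑hole dichotomy (odd winding $\Rightarrow$ scalar, even winding $\Rightarrow$ topological spin); your closing remark about the two‑dimensional solid‑torus space is exactly the conceptual explanation the paper gives in the main text. The gap is in the merging step. First, your parity claim ``between any two holes the separating arc is crossed by an even number of strings'' is not justified and is false in general: after the planar expansion every string is a closed loop, and a loop encircling exactly one of the two holes meets the separating arc an odd number of times, so bundles of odd size do occur. Second, and more seriously, your prescription for bundles of size $\ge 4$ does not work: relation \eqref{fourstringschange} only slides a four‑point decoration along the bundle, and the string‑genus splitting borrowed from the proof of Theorem~\ref{CliffordThm} merely inserts additional holes, each adjacent pair still separated by four strings. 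Since \eqref{two holes meet} requires zero strings and \eqref{cuttwo} requires two, no pair of holes ever becomes mergeable, and the induction stalls.

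The missing ingredient is the Jones--Wenzl relation \eqref{Equ:kernel element}, which is precisely what the paper uses at this point: applied to any three parallel strands crossing the separating arc, it rewrites the diagram as a combination of terms in which the number of strings between the two holes drops by two, so one reduces to at most two strings and then merges by \eqref{two holes meet} or \eqref{cuttwo}. You only deploy \eqref{Equ:kernel element} at the final annulus stage; moving it into the merging stage repairs the argument, and it also renders your preliminary expansion of all crossings via \eqref{crossing1}--\eqref{crossing2} unnecessary, since all the relations involved are linear and can be applied directly to the crossing‑decorated diagram, as the paper does.
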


\begin{proof}
By Eq.~\eqref{Equ:kernel element}, the number of strings between two holes can be reduced to at most 2. Then by Eq.~\eqref{two holes meet}, Eq.~\eqref{cuttwo}, the two holes merge into one hole. Thus, all holes reduce to at most one hole. Then the diagram inside the disc reduces to a scalar multiple of a topological spin.      
\end{proof}

The coefficients can be computed as the inner products between the diagram and the basis vectors. If the number of crossings or holes is $O(\log n)$ inside the 2-disc, then the coefficients can be computed efficiently.
Prop.~\ref{reducetotpsp} is a topological analogue of the small entanglement rank reduction method in tensor networks \cite{liao2023simulation}. 
In general, if the boundary circle has $2m$ points, then the vector space is $2^m$ dimensional. Comparing to Prop.~\ref{prop:dim}, the actual factor 2 is contributed by the hole.

\subsection{Free move of topological spin and topological tensor network}

\begin{lemma}
Two topological spins can be eliminated when they have the same connectivity with respect to the layers, based on the following cases: 
\begin{enumerate}
    \item Merging: Two topological spins merge into one if there is no string between them:
\begin{tikzequation}
    \label{merge}
    \begin{scope}[scale=.8]

    \begin{scope}
    \hole{0.6}{1}{1};
    \node at (2.2,0.15){$\alpha_1$};
    \draw(2,0.4)arc[start angle=15,end angle=345,radius=1]--(1.6,0.4)arc[start angle=30,end angle=330,radius=.6]--(2,.4);
    \end{scope}
    \begin{scope}[shift={(2.5,0)}]
    \hole{0.6}{1}{1};
    \node at (2.2,0.15){$\alpha_2$};
    \draw(2,0.4)arc[start angle=15,end angle=345,radius=1]--(1.6,0.4)arc[start angle=30,end angle=330,radius=.6]--(2,.4);
    \end{scope}
    \begin{scope}[shift={(5.5,0)}]
    \hole{0.6}{1}{1};
    \node at (-.25,0){$=$};
    \node at (2.5,0.15){${\alpha_1}{\alpha_2}$};
    \draw(2,0.4)arc[start angle=15,end angle=345,radius=1]--(1.6,0.4)arc[start angle=30,end angle=330,radius=.6]--(2,.4);
    \end{scope}
    
    \end{scope}
    \node at (7,0){$.$};
\end{tikzequation}

    \item Free move: If a layer involves only even holes and has no charge attached to the parity circle, and a topological spin is not involved in that layer, then the topological spin can move freely through the layer.

\begin{tikzequation}
    \label{Equ: Magic hole free move}
    \begin{scope}[scale=.8]       
    \begin{scope}[shift={(6,0)}]
    
    \node at (-3.5,0){$=$};
    \begin{scope}[shift={(1.35,0)},scale=.5]
    \hole{0.6}{1}{1};
    \draw(2,0.4)arc[start angle=15,end angle=345,radius=1]--(1.6,0.4)arc[start angle=30,end angle=330,radius=.6]--(2,.4);
    \end{scope}

    \begin{scope}[scale=1.1]
    \newcommand{\epsl}{.3};
    
    \hole{0.26}{-2.7}{3.8};
    \hole{0.26}{-2.7}{-2.1};
    \draw(1,1-\epsl-.14)--(1,1-\epsl)[bend right=45]to(1-\epsl,1)--(-1+\epsl,1)[bend right=45]to(-1,1-\epsl)--(-1,-1+\epsl)[bend right=45]to(-1+\epsl,-1)--(1-\epsl,-1)[bend right=45]to(1,-1+\epsl)--++(0,1.3);
    
    \draw(1,1-\epsl)[bend right=45]to(1-\epsl,1)--(-1+\epsl,1)[bend right=45]to(-1,1-\epsl)--(-1,-1+\epsl)[bend right=45]to(-1+\epsl,-1)--(1-\epsl,-1)[bend right=45]to(1,-1+\epsl);
    \node at (0,0){$a\ layer$};
    \end{scope}

    \begin{scope}[shift={(-.3,1)},yscale=1.2]
    \draw[white,WL](0,1)--(0,-.2)[bend left=90]to(-.5,-.2)[bend right=60]to(-2,0.3)[bend right=60]to(-2,-1.9)[bend right=50]to(-.5,-1.4)[bend left=90]to (0,-1.4)--(0,-2.5);
    \draw(0,1)--(0,-.2)[bend left=90]to(-.5,-.2)[bend right=60]to(-2,0.3)[bend right=60]to(-2,-1.9)[bend right=50]to(-.5,-1.4)[bend left=90]to (0,-1.4)--(0,-2.5);
    \begin{scope}[shift={(.3,0)}]
    \draw[white,WL](0,1)--(0,-.2)[bend left=90]to(-1.1,-.2)[bend right=60]to(-2,0.3)[bend right=60]to(-2,-1.8)[bend right=60]to(-1.1,-1.4)[bend left=90]to (0,-1.4)--(0,-2.5);
    \draw(0,1)--(0,-.2)[bend left=90]to(-1.1,-.2)[bend right=60]to(-2,0.3)[bend right=60]to(-2,-1.8)[bend right=60]to(-1.1,-1.4)[bend left=90]to (0,-1.4)--(0,-2.5);
    \end{scope}
    \end{scope}
    \end{scope}
    
    \begin{scope}
    \begin{scope}[scale=1.1]
    \newcommand{\epsl}{.3};
    \hole{0.26}{-2.7}{3.8};
    \hole{0.26}{-2.7}{-2.1};
    \draw(1,1-\epsl-.14)--(1,1-\epsl)[bend right=45]to(1-\epsl,1)--(-1+\epsl,1)[bend right=45]to(-1,1-\epsl)--(-1,-1+\epsl)[bend right=45]to(-1+\epsl,-1)--(1-\epsl,-1)[bend right=45]to(1,-1+\epsl)--++(0,1.3);
    
    \draw(1,1-\epsl)[bend right=45]to(1-\epsl,1)--(-1+\epsl,1)[bend right=45]to(-1,1-\epsl)--(-1,-1+\epsl)[bend right=45]to(-1+\epsl,-1)--(1-\epsl,-1)[bend right=45]to(1,-1+\epsl);
    \node at (0,0){$a\ layer$};
    \end{scope}
    \begin{scope}[shift={(-2.2,0)},scale=.5]
    \hole{0.6}{1}{1};
    \draw(2,0.4)arc[start angle=15,end angle=345,radius=1]--(1.6,0.4)arc[start angle=30,end angle=330,radius=.6]--(2,.4);
    \end{scope}
    
    \begin{scope}[shift={(-.3,1)},yscale=1.2]
    \draw[white,WL](0,1)--(0,-.2)[bend left=90]to(-.5,-.2)[bend right=60]to(-2,0.3)[bend right=60]to(-2,-1.9)[bend right=50]to(-.5,-1.4)[bend left=90]to (0,-1.4)--(0,-2.5);
    \draw(0,1)--(0,-.2)[bend left=90]to(-.5,-.2)[bend right=60]to(-2,0.3)[bend right=60]to(-2,-1.9)[bend right=50]to(-.5,-1.4)[bend left=90]to (0,-1.4)--(0,-2.5);
    \begin{scope}[shift={(.3,0)}]
    \draw[white,WL](0,1)--(0,-.2)[bend left=90]to(-1.1,-.2)[bend right=60]to(-2,0.3)[bend right=60]to(-2,-1.8)[bend right=60]to(-1.1,-1.4)[bend left=90]to (0,-1.4)--(0,-2.5);
    \draw(0,1)--(0,-.2)[bend left=90]to(-1.1,-.2)[bend right=60]to(-2,0.3)[bend right=60]to(-2,-1.8)[bend right=60]to(-1.1,-1.4)[bend left=90]to (0,-1.4)--(0,-2.5);
    \end{scope}
    \end{scope}
    
    \end{scope}
    \end{scope}
    \node at (7,0){$.$};
\end{tikzequation}
\end{enumerate}
\end{lemma}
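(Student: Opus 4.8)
The plan is to treat the two cases as independent statements and to reduce both to the skein/handle-move toolkit already developed in the excerpt. For the Merging relation Eq.~\eqref{merge}, I would first recall that a topological spin is, by definition, the class of an annular diagram in $S^1\times[1,2]$ around a single hole, and that all such classes form the two-dimensional space with the Ising fusion rule, as shown right before Prop.~\ref{reducetotpsp}. The key point is that when two topological spins sit around two holes with \emph{no string running between the holes}, the two holes can first be merged into one hole by the basic hole-merging identity Eq.~\eqref{two holes meet} (and, if a string configuration forces two parallel strands, by Eq.~\eqref{cuttwo}). Once the holes coincide, the two annular crossings — living in nested annuli around the same hole — must compose, and multiplicativity of the crossing coefficient under stacking (which is exactly the ``merge'' behaviour recorded in Eq.~\eqref{merge} at the level of the parameterized-crossing parameter $\alpha$, cf.\ Eq.~\eqref{crossing1}--\eqref{crossing2}) yields the single topological spin with parameter $\alpha_1\alpha_2$. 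So the steps are: (i) isotope the two annuli to be nested, using that there is no string between them; (ii) apply Eq.~\eqref{two holes meet}; (iii) compose the crossings and read off the coefficient $\alpha_1\alpha_2$.

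For the Free-move relation Eq.~\eqref{Equ: Magic hole free move}, I would argue that the given hypotheses — the layer involves only even holes and carries no charge on the parity circle, and the topological spin is \emph{not involved} in that layer — are precisely what is needed to apply the odd-hole handle slide machinery ``in reverse,'' or more directly the neutrality result Prop.~\ref{addbar} together with Eq.~\eqref{bar_even}. Concretely: since the moving topological spin is not involved in the layer, there is a curve from its hole to infinity disjoint from the layer; I would use this curve to push the small annulus-with-crossing past the layer. The only thing that could obstruct a naive isotopy is a string of the layer crossing the disc that the topological spin's annulus bounds on its way through; but the ``even holes, no charge on parity circle'' hypothesis guarantees that whenever such a strand appears, it appears an even number of times (the layer has even winding around every hole it touches, and the topological spin's hole contributes an even count), so the relevant bar/charge bookkeeping is trivial by Eq.~\eqref{bar_even} and Eq.~\eqref{bolangquan_genus_cut} with $v=0$. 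After passing through, one re-checks that no uncancelled phase $(-1)^{gv}$ survives (here $v=0$ because the parity circle carries no charge on this layer), so the diagram on the right of Eq.~\eqref{Equ: Magic hole free move} equals the diagram on the left.

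More carefully, I would organize the free-move argument as: (a) write the layer as a union of closed strings; using Eq.~\eqref{fourstringschange} and Eq.~\eqref{switch}, arrange the layer to be the lowest layer near the spin's hole, resolving links at the cost of charges that, by the even-count hypothesis, re-pair and annihilate via Eq.~\eqref{charge_annihilation} without leaving a residual charge on the spin's hole; (b) apply Prop.~\ref{handleslidesbasic} (the basic handle slide) to slide the two strands of the topological spin's annulus past the layer — this is exactly the situation ``there exists a closed circle that is the lowest layer,'' and the ``$\forall$''-box in Eq.~\eqref{handleslidesbasic_equation} absorbs whatever crossing decorates the topological spin; (c) track relative phases using Eqs.~\eqref{relative0}, \eqref{relative1}, \eqref{relative2}, \eqref{relativegenus}, and observe they all vanish because every exponent is a product with an even winding number or with the (charge-free, hence $0$) parity variable. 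The upshot is that the isotopy-plus-handle-slide takes the left side to the right side of Eq.~\eqref{Equ: Magic hole free move}.

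The main obstacle I anticipate is case (b)/(a) of the free-move argument: verifying rigorously that the handle slide Prop.~\ref{handleslidesbasic} applies requires the topological spin's annulus to present exactly two strands to the layer, and in a general 2d-CDM the projection may force more strands or link the spin's annulus with other layers. Handling this cleanly means first invoking Eq.~\eqref{fourstringschange} repeatedly to bring the local picture into the two-strand normal form near the layer, and then checking that the ``not involved'' hypothesis is stable under these moves — i.e.\ that reducing strand count does not secretly create an involvement. I expect this bookkeeping to be the technical heart of the proof, whereas the Merging case and the phase-tracking are routine given Eqs.~\eqref{two holes meet}, \eqref{cuttwo}, \eqref{crossing1}, and \eqref{crossing2}.
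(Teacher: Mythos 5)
There is a genuine gap, and it sits at the heart of the free-move case. The content of Eq.~\eqref{Equ: Magic hole free move} is that the \emph{hole} of the topological spin changes sides with respect to the doubled strand of the layer, and none of the tools you invoke can do that: Prop.~\ref{handleslidesbasic} slides \emph{strings} around a closed lowest-layer circle, and Eqs.~\eqref{switch}, \eqref{charge_annihilation}, \eqref{relative0}--\eqref{relative2}, \eqref{relativegenus} only re-route strings and track phases. A string can never be isotoped across a hole (the hole is a handle, so winding classes are rigid), and symmetrically a hole cannot be carried across strands by any isotopy, Reidemeister move, or handle slide of strings; so your steps (a)--(c) never produce the right-hand side. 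The paper's mechanism is different and is exactly what your argument is missing: because the layer involves only even holes and carries no charge on the parity circle, Eqs.~\eqref{bar_cap} and \eqref{bar_even} let one insert a bar (red wire) across the layer for free, as in Eq.~\eqref{layerbar}; then the genus-cut Lemma~\ref{stringgenusn} (in the form of Eq.~\eqref{barholemove}) trades the spin's hole for the bar label and re-creates it on the other side of the layer's strands, after which Reidemeister moves carry the spin's own string across and the bar is removed again. Without the bar-insertion plus genus-cut step, the hole simply does not move, so the proof does not close.

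The merging case also does not go through as written, though it is closer in spirit. You propose to ``isotope the two annuli to be nested'' and then apply Eq.~\eqref{two holes meet}; but each annulus winds around its own hole, so nesting them is not an isotopy, and Eq.~\eqref{two holes meet} requires \emph{no} strings between the holes, whereas here the four strands of the two annuli lie between them. The paper's route handles exactly this: use Eq.~\eqref{fourstringschange} to transport the crossing of one spin across the four strands separating the holes, compose the two adjacent crossings into $\alpha_1\alpha_2$ by \eqref{R2} (this is the multiplicativity you assert, but it should be cited as a Yang--Baxter relation rather than read off from Eqs.~\eqref{crossing1}--\eqref{crossing2} or from Eq.~\eqref{merge} itself, which is circular), and then eliminate the now crossing-free circle and its hole by the string-genus relation \eqref{string_genus-1} together with Eq.~\eqref{evaluation}, so that no stray scalar survives. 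So for both parts you have identified plausible-sounding tools, but the two relations that actually make the lemma true --- Eq.~\eqref{fourstringschange} for merging, and Eq.~\eqref{layerbar} plus Lemma~\ref{stringgenusn} for the free move --- are absent from your argument.
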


\begin{proof}
The merging relation follows from Eq.~\eqref{fourstringschange}, together with Eq.~\eqref{R2} for topological spins with double crossings and Eq.~\eqref{string_genus-1} for those without crossings. 

For the free move relation, as the layer involves only even holes, by Eqs.~\eqref{bar_cap} and \eqref{bar_even}, we have that  
\begin{tikzequation}
\label{layerbar}
\newcommand{\epsl}{.3};
\begin{scope}[scale=.8]   

    \draw(1,1-\epsl-.14)--(1,1-\epsl)[bend right=45]to(1-\epsl,1)--(-1+\epsl,1)[bend right=45]to(-1,1-\epsl)--(-1,-1+\epsl)[bend right=45]to(-1+\epsl,-1)--(1-\epsl,-1)[bend right=45]to(1,-1+\epsl)--++(0,1.3);
    
    \draw(1,1-\epsl)[bend right=45]to(1-\epsl,1)--(-1+\epsl,1)[bend right=45]to(-1,1-\epsl)--(-1,-1+\epsl)[bend right=45]to(-1+\epsl,-1)--(1-\epsl,-1)[bend right=45]to(1,-1+\epsl);
    \node at (0,0.5){$a\ layer$};
\end{scope}
\node at (2,0){$=$};
\begin{scope}[shift={(4,0)}]
    \begin{scope}[scale=.8]
    \draw[red](-1.2,-0.2)--(1.2,-0.2);
    \draw(1,1-\epsl-.14)--(1,1-\epsl)[bend right=45]to(1-\epsl,1)--(-1+\epsl,1)[bend right=45]to(-1,1-\epsl)--(-1,-1+\epsl)[bend right=45]to(-1+\epsl,-1)--(1-\epsl,-1)[bend right=45]to(1,-1+\epsl)--++(0,1.3);
    
    \draw(1,1-\epsl)[bend right=45]to(1-\epsl,1)--(-1+\epsl,1)[bend right=45]to(-1,1-\epsl)--(-1,-1+\epsl)[bend right=45]to(-1+\epsl,-1)--(1-\epsl,-1)[bend right=45]to(1,-1+\epsl);
    \node at (0,.5){$a\ layer$};
    \end{scope}
\end{scope}
\end{tikzequation}
Then by Lem.~\ref{stringgenusn} and by Reidemeister moves, we can move the topological spin along the wire. 
\end{proof}

\begin{rmk}
%Magic hole free move holds only when all holes are even holes. 
The example in Fig.~\ref{fig:not free move} shows that the free move may not hold for a layer involving odd holes.
\begin{figure}[H]
\begin{center}
\begin{tikzpicture}[line width=.3mm,scale=1.2]
\begin{scope}[scale=.7]
\begin{scope}[shift={(0.1,0)},scale=.4]
\hole{0.6}{1}{1};
\draw(2,0.4)arc[start angle=15,end angle=345,radius=1]--(1.6,0.4)arc[start angle=30,end angle=330,radius=.6]--(2,.4);
\end{scope}
\hole{.25}{1.3}{7};
\hole{.25}{1.3}{-5};
    \begin{scope}
    \draw[white,WLL](.5,1)arc[start angle=-90,end angle=90,radius=.5];
    \draw[white,WLL](.5,-1)arc[start angle=90,end angle=-90,radius=.5];
    \draw[white,WLL](.5,2)arc[start angle=90, end angle=270,radius=2];
    \draw[white,WLL](.5,1)arc[start angle=90,end angle=270,radius=1];
    \draw(.5,1)arc[start angle=-90,end angle=90,radius=.5];
    \draw(.5,-1)arc[start angle=90,end angle=-90,radius=.5];\draw(.5,2)arc[start angle=90, end angle=270,radius=2];
    \draw(.5,1)arc[start angle=90,end angle=270,radius=1];
    \end{scope}
    
    \begin{scope}[shift={(1,0)},xscale=-1]
    \draw[white,WLL](.5,1)arc[start angle=-90,end angle=90,radius=.5];
    \draw[white,WLL](.5,-1)arc[start angle=90,end angle=-90,radius=.5];
    \draw[white,WLL](.5,2)arc[start angle=90, end angle=270,radius=2];
    \draw[white,WLL](.5,1)arc[start angle=90,end angle=270,radius=1];
    \draw(.5,1)arc[start angle=-90,end angle=90,radius=.5];
    \draw(.5,-1)arc[start angle=90,end angle=-90,radius=.5];
    \draw(.5,2)arc[start angle=90, end angle=270,radius=2];
    \draw(.5,1)arc[start angle=90,end angle=270,radius=1];
    \end{scope}
\end{scope}
\end{tikzpicture}
\end{center}
\caption{The topological spin in the middle cannot move out, even though it is not involved in the two layers.}
\label{fig:not free move}
\end{figure}

\end{rmk}

When all holes are even, we regard holes as topological spins, and every layer characterizes interactions of inside topological spins, then we get a network and call it a \textbf{topological tensor network}.
If two topological spins are not separated by any single layer, then after the free move, then they merge into one topological spin.
In general, two holes can be contracted by performing a genus cut between them by Lem.~\ref{stringgenusn}. 
This will increase the computational complexity by a factor 2, namely the contraction of two holes results in a sum of two terms, which corresponds to topological entanglement entropy rather than conventional entanglement entropy.

To compute a tensor network classically, one can reduce the bulk spins of a local part to boundary spins, when the number of spins is small, corresponding to small entanglement entropy. If the contraction dimension is kept under a computable bound, say $2^{31}$, in the reduction process, then the tensor network can be evaluated classically. There are multiple methods to minimize dimensions appeared in contraction. For example, the tree-shaped networks can be efficiently evaluated by contracting from the leaves toward the root \cite{markov2008simulating,tree2019}.

Given bulk spins, the other spins interacting with them are boundary spins.
As an analogue in topological tensor network, if we consider a set of holes as bulk topological spins, then the other holes within common layers are boundary topological spins. One can reduce the bulk ones to the boundary ones when the number of topological spins is small, corresponding to small topological entanglement entropy. 
If the number is kept small, then the topological tensor network can be evaluated classically.

\section{Examples of QCS}

\newcommand{\smallbox}[4]{
\pgfmathsetmacro{\x}{#1};
\pgfmathsetmacro{\y}{#2};
\pgfmathsetmacro{\rx}{#3};
\pgfmathsetmacro{\ry}{#4};

\begin{scope}[shift={(\x,\y)},xscale=\rx,yscale=\ry]
\fill[white](1,-1)--(1,1)--(-1,1)--(-1,-1)--(1,-1);
\draw(1,-1)--(1,1)--(-1,1)--(-1,-1)--(1,-1);
\end{scope}
}

\subsection{An example of the Clifford circuit}
\label{Cliffordalg}
Now we provide an example of the Clifford circuit to illustrate the QCS algorithm.
We want to compute the value $\bra{q_1q_2q_3}U\ket{001}$ as shown in Fig.~\ref{Clifford_example_1}. We draw the corresponding 2d-CDM (Fig.~\ref{Cliffford_example_2}) horizontally to conform to the convention of quantum circuits.
\begin{figure}[H]
\centering
    \begin{tikzpicture}
    \begin{scope}
        \foreach \i in {0,1,2}{
        \draw(0,\i)--(11.5,\i);
        }
        
        \node at (-.5,2){$\ket{0}$};
        \node at (-.5,1){$\ket{0}$};
        \node at (-.5,0){$\ket{1}$};

        \node at (12,2){$q_1$};
        \node at (12,1){$q_2$};
        \node at (12,0){$q_3$};
        
        \smallbox{1}{.5}{.5}{1.05};
        \node at (1,.7){$R_{ZZ}$};
        \node at (1,.3){$(-\frac{\pi}{2})$};
        \smallbox{1}{2}{.7}{.25};
        \node at (1,2){$R_X(-\frac{\pi}{2})$};

        \smallbox{2.5}{1.5}{.5}{1.05};
        \node at (2.5,1+.7){$R_{ZZ}$};
        \node at (2.5,1+.3){$(-\frac{\pi}{2})$};
        \smallbox{2.5}{0}{.7}{.25};
        \node at (2.5,0){$R_X(-\frac{\pi}{2})$};

        \smallbox{4}{.5}{.5}{1.05};
        \node at (4,.5){$CZ$};
        \smallbox{4}{2}{.7}{.25};
        \node at (4,2){$R_X(-\frac{\pi}{2})$};

        \smallbox{5.5}{1.5}{.5}{1.05};
        \node at (5.5,1+.5){$CZ$};

        \smallbox{7}{.5}{.5}{1.05};
        \node at (7,.5){$CZ$};
        \smallbox{7}{2}{.7}{.25};
        \node at (7,2){$R_X(-\frac{\pi}{2})$};

        \smallbox{8.5}{1.5}{.5}{1.05};
        \node at (8.5,1+.5){$CZ$};

        \smallbox{10}{2}{.7}{.25};
        \node at (10,2){$R_X(-\frac{\pi}{2})$};

        \smallbox{10}{1}{.7}{.25};
        \node at (10,1){$R_X(-\frac{\pi}{2})$};

        \smallbox{10}{0}{.7}{.25};
        \node at (10,0){$R_X(-\frac{\pi}{2})$};

        \smallbox{11.5}{2}{.3}{.25};

        \smallbox{11.5}{1}{.3}{.25};

        \smallbox{11.5}{0}{.3}{.25};

        \begin{scope}[shift={(11.5,0)}]
        \draw(-.25,-.2)arc[start angle=180,end angle=0,radius=.25];
        \draw[->](-.1,-.2)--(.2,.2);
        \end{scope}

        \begin{scope}[shift={(11.5,1)}]
        \draw(-.25,-.2)arc[start angle=180,end angle=0,radius=.25];
        \draw[->](-.1,-.2)--(.2,.2);
        \end{scope}

        \begin{scope}[shift={(11.5,2)}]
        \draw(-.25,-.2)arc[start angle=180,end angle=0,radius=.25];
        \draw[->](-.1,-.2)--(.2,.2);
        \end{scope}

    \end{scope}
        
    \end{tikzpicture}
    \caption{An example of the Clifford circuit}
\label{Clifford_example_1}
\end{figure}

\begin{figure}[H]
\centering
\begin{tikzpicture}[line width=.3mm]
\begin{scope}[scale=1.15,rotate=90]
    \CLketone{.6}{0.3}{2}{0}{2};
    \CLketzero{.6}{0.3}{2}{1.8}{2};
    \CLketzero{.6}{0.3}{2}{3.6}{2};
    \CLtwoqubitgateone{.6}{0.3}{2}{0}{0};
    \CLtwoqubitgatetwo{.6}{0.3}{2}{1.8}{-2};
    \CLczgatethree{.6}{0.3}{2}{0}{-4};
    \CLczgatefour{.6}{0.3}{2}{1.8}{-6};
    \CLczgatefive{.6}{0.3}{2}{0}{-8};
    \CLczgate{.6}{0.3}{2}{1.8}{-10};
    
    \CLrxgate{0.6}{0.3}{2}{3.6}{0};
    \CLrxgatetwo{0.6}{0.3}{2}{0}{-2};
    \CLrxgate{0.6}{0.3}{2}{3.6}{-4};
    
    \CLIDgatefour{0.6}{0.3}{2}{0}{-6};
    \CLHgate{0.6}{0.3}{2}{3.6}{-8};
    \CLIDgate{0.6}{0.3}{2}{0}{-10};
    \CLrxgate{0.6}{0.3}{2}{0}{-12};
    \CLrxgate{0.6}{0.3}{2}{1.8}{-12};
    \CLrxgate{0.6}{0.3}{2}{3.6}{-12};
    
    \CLbra{.6}{0.3}{2}{0}{-12};
    \CLbra{.6}{0.3}{2}{1.8}{-12};
    \CLbra{.6}{0.3}{2}{3.6}{-12};

    \node at (.8,-13)[orange]{$q_3$};
    \node at (1.8+.8,-13)[orange]{$q_2$};
    \node at (3.6+.8,-13)[orange]{$q_1$};
\end{scope}
\end{tikzpicture}
\caption{Quon diagram of the Clifford circuit}
\label{Cliffford_example_2}
\end{figure}

A pair of odd holes is involved in the red layer. We apply handle slides to the left hole and slide all four strings from the left side of the red layer to the right. After this transformation, all remaining holes become even Clifford holes. By Thm.~\ref{Cliffordhole}, they can be eliminated one by one.

\subsection{An example of the Matchgate circuit}
\label{MG_algorithm}
We have introduced an algorithm that evaluates planar Matchgate Quon diagrams in polynomial time using Yang-Baxter relations. 
The following is an illustrative example. Note that we choose $\ket{0}_X=\ket{+}_Z,\ket{1}_X=\ket{-}_Z$ as our basis, instead of the usual Z-basis.

\begin{figure}[H]
\centering
    \begin{tikzpicture}
    \begin{scope}
        \foreach \i in {0,1,2}{
        \draw(0,\i)--(5.5,\i);
        }
        
        \node at (-.5,2){$\ket{0}$};
        \node at (-.5,1){$\ket{0}$};
        \node at (-.5,0){$\ket{1}$};

        \node at (6.5,2){$q_1=0$};
        \node at (6.5,1){$q_2=0$};
        \node at (6.5,0){$q_3=0$};
        
        \smallbox{1}{.5}{.5}{1.05};
        \node at (1,.7){$R_{XX}$};
        \node at (1,.3){$(\theta_1)$};
        \smallbox{1}{2}{.7}{.25};
        \node at (1,2){$R_Z(\theta_2)$};

        \smallbox{2.5}{1.5}{.5}{1.05};
        \node at (2.5,1+.7){$R_{ZZ}$};
        \node at (2.5,1+.3){$(\theta_3)$};
        \smallbox{2.5}{0}{.7}{.25};
        \node at (2.5,0){$R_Z(\theta_4)$};

        \smallbox{4}{.5}{.5}{1.05};
        \node at (4,.7){$R_{XX}$};
        \node at (4,.3){$(\theta_6)$};
        \smallbox{4}{2}{.7}{.25};
        \node at (4,2){$R_Z(\theta_5)$};

        \smallbox{5.5}{2}{.3}{.25};

        \smallbox{5.5}{1}{.3}{.25};

        \smallbox{5.5}{0}{.3}{.25};

        \begin{scope}[shift={(5.5,0)}]
        \draw(-.25,-.2)arc[start angle=180,end angle=0,radius=.25];
        \draw[->](-.1,-.2)--(.2,.2);
        \end{scope}

        \begin{scope}[shift={(5.5,1)}]
        \draw(-.25,-.2)arc[start angle=180,end angle=0,radius=.25];
        \draw[->](-.1,-.2)--(.2,.2);
        \end{scope}

        \begin{scope}[shift={(5.5,2)}]
        \draw(-.25,-.2)arc[start angle=180,end angle=0,radius=.25];
        \draw[->](-.1,-.2)--(.2,.2);
        \end{scope}

    \end{scope}
        
    \end{tikzpicture}
\caption{An example of the Matchgate circuit}
\label{Matchgate_example_1}
\end{figure}

\begin{figure}[H]
\centering
\begin{tikzpicture}[line width=.3mm]
\begin{scope}[scale=1.2, rotate=90]
    \MGket{.6}{0.3}{2.5}{0}{2.5};
    \MGket{.6}{0.3}{2.5}{1.8}{2.5};
    \MGket{.6}{0.3}{2.5}{3.6}{2.5};
    \MGtwoqubitgate{.6}{0.3}{2.5}{0}{0};
    \MGtwoqubitgate{.6}{0.3}{2.5}{1.8}{-2.5};
    \MGtwoqubitgate{.6}{0.3}{2.5}{0}{-5};
    \MGonequbitgate{0.6}{0.3}{2.5}{3.6}{0};
    \MGonequbitgate{0.6}{0.3}{2.5}{0}{-2.5};
    \MGonequbitgate{0.6}{0.3}{2.5}{3.6}{-5};
    \MGbra{.6}{0.3}{2.5}{0}{-5};
    \MGbra{.6}{0.3}{2.5}{1.8}{-5};
    \MGbra{.6}{0.3}{2.5}{3.6}{-5};
\end{scope}
\end{tikzpicture}
\caption{Quon diagram of the Matchgate circuit}
\label{Matchgate_example_2}
\end{figure}

We can transform the Matchgate circuit into its Quon diagram in Fig.~\ref{Matchgate_example_2}. The procedure for computing this Quon diagram is shown in Fig.~\ref{Matchgate_example_3}.

\begin{figure}[H]
\centering
    
\begin{center}
\begin{tikzpicture}[line width=.3mm]

\node at (-1.8,0.5) {$\Longrightarrow$};
%左侧
\begin{scope}[rotate=90]
    \draw(-0.5,1)--(1,-1);
    \draw(-0.5,-1)--(1,1);
    \draw(-0.25,1.25)arc[start angle=90,end angle=180,radius=0.25];
    \draw(-0.25,-1.25)arc[start angle=-90,end angle=-180,radius=0.25];
    \draw(-0.25,1.25)arc[start angle=90,end angle=-90,radius=1.25];

%中间
    \draw(1,1)arc[start angle=180,end angle=0,radius=0.25];
    \draw(1,-1)arc[start angle=-180,end angle=0,radius=0.25];

%右侧叉子
    \draw(1.5,1)--(3,-1);
    \draw(1.5,-1)--(3,1);

%最右侧
    \draw(3,1)arc[start angle=0,end angle=180,radius=0.25];
    \draw(3,-1)arc[start angle=0,end angle=-180,radius=0.25];
    \draw(2.5,1)--(2.5,-1);
\end{scope}
\begin{scope}[shift={(5,0)}]
    
\node at (-1.8,0.5) {$\Longrightarrow$};
\begin{scope}[rotate=90]
%左侧
    \draw(-0.5,1)--(1,-1);
    \draw(-0.5,-1)--(1,1);
    \draw(-0.25,1.25)arc[start angle=90,end angle=180,radius=0.25];
    \draw(-0.25,-1.25)arc[start angle=-90,end angle=-180,radius=0.25];
    \draw(-0.25,1.25)arc[start angle=90,end angle=-90,radius=1.25];

%中间
    \draw(1,1)arc[start angle=180,end angle=0,radius=0.25];
    \draw(1,-1)arc[start angle=-180,end angle=0,radius=0.25];

%右侧叉子
    \draw(1.5,1)--(2.5,-0.333);
    \draw(1.5,-1)--(2.5,0.333);

%最右侧
    \draw(2.5,0.333)arc[start angle=90,end angle=-90,radius=0.333];
\end{scope}
\end{scope}
\begin{scope}[shift={(10,0)}]
\node at (-1.8,0.5) {$\Longrightarrow$};
%左侧
\begin{scope}[rotate=90]
    \draw(-0.5,1)--(1,-1);
    \draw(-0.5,-1)--(1,1);
    \draw(-0.25,1.25)arc[start angle=90,end angle=180,radius=0.25];
    \draw(-0.25,-1.25)arc[start angle=-90,end angle=-180,radius=0.25];
    \draw(-0.25,1.25)arc[start angle=90,end angle=-90,radius=1.25];

%中间
    \draw(1,1)arc[start angle=90,end angle=-90,radius=1];
\end{scope}
\end{scope}
\end{tikzpicture}
\end{center}

\begin{center}
\begin{tikzpicture}[line width=.3mm]
\begin{scope}[shift={(-1,0)}]
\node at (-1.8,0) {$\Longrightarrow$};
    \draw(0,0.5)arc[start angle=180,end angle=540,radius=1];
    \draw(0,-0.5)arc[start angle=180,end angle=540,radius=1];
\end{scope}
\begin{scope}[shift={(4,0)}]
\node at (-1.8,0) {$\Longrightarrow$};
\begin{scope}
\draw(0,0)--(0.5,0.5);
\draw(0,0)--(0.5,-0.5);
\draw(0,0)--(-0.5,0.5);
\draw(0,0)--(-0.5,-0.5);
\draw(0.5,0.5)arc[start angle=0,end angle=180,radius=0.5];
\draw(0.5,-0.5)arc[start angle=0,end angle=-180,radius=0.5];
\end{scope}
\begin{scope}[shift={(3.5,0)}]
\node at (-1.4,0) {$\Longrightarrow$};
    \draw(1,0)arc[start angle=0,end angle=360,radius=0.5];
\end{scope}
\end{scope}
\end{tikzpicture}
\end{center}
\caption{Computation procedure for the Matchgate circuit}
\label{Matchgate_example_3}
\end{figure}

\subsection{Examples of circuits with Magic holes}

We present two families of examples of quantum circuits with $O(n)$ width and depth, $O({n^2})$ Clifford gates and Matchgates, which are classically simulable by the free move and merging Magic holes. We simplify our notation: 
\begin{center}
\begin{tikzpicture}

\begin{scope}[shift={(-6,3)}]
    \draw (-1,0)[blue] circle(.1);
    \node at (-.5,0){$=$};
    \holestringsmall{.4}{0}{1};
\end{scope}

\begin{scope}[shift={(-3,3)}]
    \node at (-1,0) {$\times$};
    \node at (-.5,0){$=$};
    \tpspin{.5}{0}{0};
\end{scope}

\begin{scope}[shift={(0,3)}]
    \node at (-1,0) {$?$};
    \node at (-.5,0){$=$};
    \tpspin{.5}{0}{0};
\end{scope}

\end{tikzpicture}
\end{center}
\begin{itemize}
    \item Each blue circle represents a string-genus;
    \item Each small wrong mark $\times$ or question mark $?$ in the plaquette represents a topological spin;
    \item Intersections of strings are positive or negative braids.
\end{itemize}
Whenever there are four strings between every pair of neighboring holes, we can translate the Quon diagram to a quantum circuit.

\begin{figure}[H]
\begin{center}
    
\begin{tikzpicture}
\begin{scope}[scale=1,rotate=90]
\node at (7.25,11){$\ket{+}$};
\node at (6.25,11){$\ket{0}$};
\node at (11.75,11){$\ket{0}$};
\node at (9,11){$\ket{0}$};
\node at (5,11){$\ket{+}$};
\node at (10.75,11){$\ket{+}$};
\node at (12.9,11){$\ket{+}$};
  % 画从左下到右上的线
  \foreach \i in {1,...,8} {
    \draw[thick] ({(\i+19)*0.5},{(19-\i)*0.5})-- +(-5.5,-5.5);
  }

  % 画从右下到左上的线
  \foreach \i in {1,...,8} {
    \draw[thick] ({(17-\i)*0.5},{(19-\i)*0.5})-- +(+5.5,-5.5);
  }

  %%%%%%%%%%%%%%%%%%%%%%%%%%%%%%%%%%%%%%%%%%%%%%%

  % 画交点符号：先判断坐标是否整数，再判断 i+j 奇偶性
  \foreach \i in {1,...,8} {
    \foreach \j in {1,...,8} {
      \pgfmathtruncatemacro{\numerator}{\i + 18 - \j}
      \pgfmathsetmacro{\x}{0.5*(\i + 18 - \j)}
      \pgfmathsetmacro{\y}{0.5*(18 - \j - \i)}
      \pgfmathtruncatemacro{\paritytest}{mod(\numerator,2)}
      \pgfmathtruncatemacro{\ijparity}{mod(\i+\j,2)}
      
      % \ifnum\paritytest=0
      %   \ifnum\ijparity=0
      %   \ifnum\i>1
      %   \ifnum\j>1
      %   \pgfmathtruncatemacro{\iisodd}{mod(\i,2)}
      %   \ifnum\iisodd=1
      %     \node at (\x,{\y + 0.2}) {\textcolor{black}{$\times$}};
      %     \fi
      %     \fi
      %     \fi
      %   \fi
      % \fi
        \ifnum\i=3
        \ifnum\j=5
          \node at (\x,{\y + 0.4}) {\textcolor{black}{$\times$}};
      \fi
      \fi
      \ifnum\i=5
        \ifnum\j=3
          \node at (\x,{\y + 0.4}) {\textcolor{black}{$\times$}};
      \fi
      \fi
      \ifnum\i=7
        \ifnum\j=5
          \node at (\x,{\y + 0.4}) {\textcolor{black}{$\times$}};
      \fi
      \fi
      \ifnum\i=5
        \ifnum\j=7
          \node at (\x,{\y + 0.4}) {\textcolor{black}{$\times$}};
      \fi
      \fi
      
      \ifnum\paritytest=1
        \ifnum\ijparity=1
        \ifnum\j<8
        \ifnum\i<8
          \node at (\x,{\y - 0.4}) {\textcolor{blue}{$\circ$}};
        \fi
        \fi
        \fi
      \fi
    }
  }

  %%%%%%%%%%%%%%%%%%%%%%%%%%%%%%%%%%%%%%%%%%%%%%%%%%
\foreach \i in {1,2,3,4,5,6,7,8} {
    \draw[thick]({(19+\i)*0.5},{(19-\i)*0.5}) -- ({(19+\i)*0.5},10);
}

\foreach \i in {1,2,3,4,5,6,7,8} {
    \draw[thick]({(17-\i)*0.5},{(19-\i)*0.5}) -- ({(17-\i)*0.5},10);
}
%%%%%%%%%%%%%
\foreach \i in {1,2,3,4,5,6,7,8} {
    \draw[thick]({(19+\i)*0.5},{9-(19-\i)*0.5}) -- ({(19+\i)*0.5},-1);
}

\foreach \i in {1,2,3,4,5,6,7,8} {
    \draw[thick]({(17-\i)*0.5},{9-(19-\i)*0.5}) -- ({(17-\i)*0.5},-1);
}
%%%%%%%%%%%%%%%%%%%%%%%%%%%%%%%%%%%%%%
% 在顶部 y = 10 封口半圆（半径 0.5）
\draw (4.5,10) arc (180:0:{(5.6-4.5)*.5});
\draw (5.6,10) -- (5.6,6.9);

\draw (5,10) arc (180:0:{(5.5-5)*.5});

\draw (5.9,10) arc (180:0:{(6-5.9)*.5});
\draw (5.9,10)--(5.9,6.9);

\draw (6.5,10) arc (180:0:{(6.6-6.5)*.5});
\draw (6.6,10) -- (6.6,7.9);

\draw (6.9,10) arc (180:0:{(7.6-6.9)*.5});
\draw (6.9,10)--(6.9,7.9);
\draw (7.6,10)--(7.6,8.9);

\draw (7,10) arc (180:0:{(7.5-7)*.5});

\draw (7.9,10) arc (180:0:{(8-7.9)*.5});
\draw (7.9,10)--(7.9,8.9);

%%%%%%%%

\draw (10,10) arc (180:0:{(10.1-10)*.5});
\draw (10.1,10)--(10.1,  8.9);

\draw (10.4,10) arc (180:0:{(11.1-10.4)*.5});

\draw (10.4,10)--(10.4,  8.9);
\draw (11.1,10)--(11.1,  7.9);

\draw (10.5,10) arc (180:0:{(11-10.5)*.5});

\draw (11.4,10) arc (180:0:{(11.5-11.4)*.5});
\draw (11.4,10)--(11.4,  7.9);

\draw (12,10) arc (180:0:{(12.1-12)*.5});
\draw (12.1,10)--(12.1,  6.9);

\draw (12.4,10) arc (180:0:{(13.5-12.4)*.5});
\draw (12.4,10)--(12.4,  6.9);

\draw (12.5,10) arc (180:0:{(13-12.5)*.5});
%%%%%%%%%%%%%%%%%%%%%%%%%%%%%%%%%%%%
\draw (5.9,6.9) arc (360:180:{(5.9-5.6)*.5});
\draw (6.9,7.9) arc (360:180:{(6.9-6.6)*.5});
\draw (7.9,8.9) arc (360:180:{(7.9-7.6)*.5});

\draw (10.1,8.9) arc (180:360:{(10.4-10.1)*.5});
\draw (11.1,7.9) arc (180:360:{(11.4-11.1)*.5});
\draw (12.1,6.9) arc (180:360:{(12.4-12.1)*.5});

%%%%%%%%%%%%%%%%%%%%%%%%%%%%%%%%%%%%%%%%%%%%
%%%%%%%%%%%%%%%%%%%%%%%%%%%%%%%%%%%%%%%%%%%\draw (4.5,12) arc (180:0:{(5.6-4.5)*.5});
\draw (5.6,{9-10}) -- (5.6,{9-6.9});
\draw (5.9,{9-10})--(5.9,{9-6.9});
\draw (6.6,{9-10}) -- (6.6,{9-7.9});
\draw (6.9,{9-10})--(6.9,{9-7.9});
\draw (7.6,{9-10})--(7.6,{9-8.9});
\draw (7.9,{9-10})--(7.9,{9-8.9});

%%%%%
\draw (10.1,10)--(10.1,  8.9);
\draw (10.1,{9-10})--(10.1,  {9-8.9});
\draw (10.4,{9-10})--(10.4,  {9-8.9});
\draw (11.1,{9-10})--(11.1,  {9-7.9});
\draw (11.4,{9-10})--(11.4,  {9-7.9});
\draw (12.1,{9-10})--(12.1,  {9-6.9});
\draw (12.4,{9-10})--(12.4,  {9-6.9});

\draw (5.9,{9-6.9}) arc (0:180:{(5.9-5.6)*.5});
\draw (6.9,{9-7.9}) arc (0:180:{(6.9-6.6)*.5});
\draw (7.9,{9-8.9}) arc (0:180:{(7.9-7.6)*.5});

\draw (10.1,{9-8.9}) arc (180:0:{(10.4-10.1)*.5});
\draw (11.1,{9-7.9}) arc (180:0:{(11.4-11.1)*.5});
\draw (12.1,{9-6.9}) arc (180:0:{(12.4-12.1)*.5});

\end{scope}

\end{tikzpicture}
\end{center}
\caption{Basis-independent simulability of circuits with Magic holes}
\label{fig:big_example_1}
\end{figure}

The first family of examples is shown in Fig.~\ref{fig:big_example_1}.
The blue circles are removed by the string-genus relation, and the topological spins are moved to the left side and merged with the $\ket{+}$ state. As a result, we obtain a Quon diagram without holes; therefore, which allows for efficient classical simulation of measurement sampling in both the $Z$- and $X$-bases.

\begin{figure}[H]
    \centering
\begin{tikzpicture}
\begin{scope}[scale=0.9,rotate=90]
%\node at (7.25,11){$\ket{+}$};
%\node at (6.25,11){$\ket{0}$};
%\node at (11.75,11){$\ket{0}$};
%\node at (9,11){$\ket{0}$};
%\node at (5,11){$\ket{+}$};
%\node at (10.75,11){$\ket{+}$};
%\node at (12.9,11){$\ket{+}$};
  % 画从左下到右上的线
  \foreach \i in {1,...,8} {
    \draw({(\i+19)*0.5-.1},{(19-\i)*0.5})-- +(-5.5,-5.5);
    \draw({(\i+19)*0.5+.1},{(19-\i)*0.5})-- +(-5.5,-5.5);
  }

  % 画从右下到左上的线
  \foreach \i in {1,...,8} {
    \draw ({(17-\i)*0.5-.1},{(19-\i)*0.5})-- +(+5.5,-5.5);
    \draw ({(17-\i)*0.5+.1},{(19-\i)*0.5})-- +(+5.5,-5.5);
  }

  %%%%%%%%%%%%%%%%%%%%%%%%%%%%%%%%%%%%%%%%%%%%%%%

\foreach \i in {0,1,4,5}{
 \draw[blue] (5.25+\i*.5, 4.75+\i*.5) circle(.05);
 \draw[blue] (7.5+5.25-\i*.5, 4.75+\i*.5) circle(.05);

 }
\foreach \i in {-1,0,1,2,3,4,5,6,7}{

\foreach \j in {0,1,2,3,4,5,6,7}{
 \draw[blue] (.5+5.25+\i*.5+\j*.5, 4.75+\i*.5-\j*.5) circle(.05);
}

\foreach \j in {0,1,2,3,4,5,6,7}{
 \draw[blue] (7+5.25-\i*.5-\j*.5, 4.75+\i*.5-\j*.5) circle(.05);
 }
}
% \draw[blue] (.5+5.25-6*.5, 4.75+6*.5) circle(.05);

  % 画交点符号：先判断坐标是否整数，再判断 i+j 奇偶性
 \node at (5.5, 5) {\textcolor{black}{$\times$}};
  \node at (6.5, 6) {\textcolor{black}{$\times$}};
  \node at (7.5, 7) {\textcolor{black}{$\times$}};
  \node at (8.5, 8) {\textcolor{black}{$\times$}};
  \node at (9.5, 8) {\textcolor{black}{$\times$}};
  \node at (10.5, 7) {\textcolor{black}{$\times$}};
  \node at (11.5, 6) {\textcolor{black}{$\times$}};
  \node at (12.5, 5) {\textcolor{black}{$\times$}};
    \node at (12.5, 4) {\textcolor{black}{$\times$}};
    \node at (11.5, 3) {\textcolor{black}{$\times$}};
    \node at (10.5, 2) {\textcolor{black}{$\times$}};
    \node at (9.5, 1) {\textcolor{black}{$\times$}};
    \node at (8.5, 1) {\textcolor{black}{$\times$}};
    \node at (7.5, 2) {\textcolor{black}{$\times$}};
    \node at (6.5, 3) {\textcolor{black}{$\times$}};
    \node at (5.5, 4) {\textcolor{black}{$\times$}};

    \node at (6.5, 4) {\textcolor{black}{$?$}};
    \node at (7.5, 5) {\textcolor{black}{$?$}};
    \node at (8.5, 6) {\textcolor{black}{$?$}};
    \node at (9.5, 7) {\textcolor{black}{$?$}};
    \node at (10.5, 6) {\textcolor{black}{$\times$}};
     \node at (11.5, 5) {\textcolor{black}{$?$}};
     \node at (10.5, 4) {\textcolor{black}{$?$}};
     \node at (9.5, 3) {\textcolor{black}{$?$}};
     \node at (8.5, 2) {\textcolor{black}{$?$}};

     \node at (7.5, 3) {\textcolor{black}{$\times$}};
     \node at (8.5, 4) {\textcolor{black}{$\times$}};
     \node at (9.5, 5) {\textcolor{black}{$\times$}};

     \node at (7.5, 6) {\textcolor{black}{$\times$}};
     \node at (8.5, 5) {\textcolor{black}{$\times$}};
     \node at (9.5, 4) {\textcolor{black}{$\times$}};
     \node at (10.5, 3) {\textcolor{black}{$\times$}};

    \node at (6.5, 5) {\textcolor{black}{$?$}};
    \node at (8.5, 7) {\textcolor{black}{$?$}};
    
    \node at (7.5, 4) {\textcolor{black}{$?$}};
    \node at (9.5, 6) {\textcolor{black}{$?$}};

    \node at (8.5, 3) {\textcolor{black}{$?$}};
    \node at (10.5, 5) {\textcolor{black}{$?$}};

    \node at (9.5, 2) {\textcolor{black}{$?$}};
    \node at (11.5, 4) {\textcolor{black}{$?$}};
%\node at (6.5, 5) {\textcolor{blue}{$\circ$}};
%\node at (7.5, 5) {\textcolor{black}{$\times$}};
%\node at (8.5, 5) {\textcolor{blue}{$\circ$}};
%\node at (9.5, 5) {\textcolor{black}{$\times$}};
%\node at (10.5, 5) {\textcolor{blue}{$\circ$}};
%\node at (11.5, 5) {\textcolor{black}{$\times$}};
%\node at (12.5, 5) {\textcolor{blue}{$\circ$}};
  %%%%%%%%%%%%%%%%%%%%%%%%%%%%%%%%%%%%%%%%%%%%%%%%%%
\foreach \i in {1,2,3,4,5,6,7,8} {
    \draw({(19+\i)*0.5-.1},{(19-\i)*0.5}) -- ({(19+\i)*0.5-.1},10);
    \draw({(19+\i)*0.5+.1},{(19-\i)*0.5}) -- ({(19+\i)*0.5+.1},10);
}

\foreach \i in {1,2,3,4,5,6,7,8} {
    \draw({(17-\i)*0.5-.1},{(19-\i)*0.5}) -- ({(17-\i)*0.5-.1},10);
    \draw({(17-\i)*0.5+.1},{(19-\i)*0.5}) -- ({(17-\i)*0.5+.1},10);
}
%%%%%%%%%%%%%
\foreach \i in {1,2,3,4,5,6,7,8} {
    \draw({(19+\i)*0.5-.1},{9-(19-\i)*0.5}) -- ({(19+\i)*0.5-.1},-1);
    \draw({(19+\i)*0.5+.1},{9-(19-\i)*0.5}) -- ({(19+\i)*0.5+.1},-1);
}

\foreach \i in {1,2,3,4,5,6,7,8} {
    \draw({(17-\i)*0.5-.1},{9-(19-\i)*0.5}) -- ({(17-\i)*0.5-.1},-1);
    \draw({(17-\i)*0.5+.1},{9-(19-\i)*0.5}) -- ({(17-\i)*0.5+.1},-1);
}
%%%%%%%%%%%%%%%%%%%%%%%%%%%%%%%%%%%%%%
% 在顶部 y = 10 封口半圆（半径 0.5）
\draw (4.5-.1,10) arc (180:0:{(6-4.5)*.5+.1});
\draw (4.5+.1,10) arc (180:0:{(6-4.5)*.5-.1});

\draw (5-.1,10) arc (180:0:{(5.5-5)*.5+.1});
\draw (5+.1,10) arc (180:0:{(5.5-5)*.5-.1});

\draw (6.5-.1,10) arc (180:0:{(6-4.5)*.5+.1});
\draw (6.5+.1,10) arc (180:0:{(6-4.5)*.5-.1});

\draw (7-.1,10) arc (180:0:{(5.5-5)*.5+.1});
\draw (7+.1,10) arc (180:0:{(5.5-5)*.5-.1});

\draw (4.5-.1,10) arc (180:0:{(6-4.5)*.5+.1});
\draw (4.5+.1,10) arc (180:0:{(6-4.5)*.5-.1});

\draw (10-.1,10) arc (180:0:{(6-4.5)*.5+.1});
\draw (10+.1,10) arc (180:0:{(6-4.5)*.5-.1});

\draw (10.5-.1,10) arc (180:0:{(5.5-5)*.5+.1});
\draw (10.5+.1,10) arc (180:0:{(5.5-5)*.5-.1});

\draw (12-.1,10) arc (180:0:{(6-4.5)*.5+.1});
\draw (12+.1,10) arc (180:0:{(6-4.5)*.5-.1});

\draw (12.5-.1,10) arc (180:0:{(5.5-5)*.5+.1});
\draw (12.5+.1,10) arc (180:0:{(5.5-5)*.5-.1});
\begin{scope}[yscale=-1,shift={(0,-8)}]
\draw (4.5-.1,10) arc (180:0:{(6-4.5)*.5+.1});
\draw (4.5+.1,10) arc (180:0:{(6-4.5)*.5-.1});

\draw (5-.1,10) arc (180:0:{(5.5-5)*.5+.1});
\draw (5+.1,10) arc (180:0:{(5.5-5)*.5-.1});

\draw (6.5-.1,10) arc (180:0:{(6-4.5)*.5+.1});
\draw (6.5+.1,10) arc (180:0:{(6-4.5)*.5-.1});

\draw (7-.1,10) arc (180:0:{(5.5-5)*.5+.1});
\draw (7+.1,10) arc (180:0:{(5.5-5)*.5-.1});

\draw (4.5-.1,10) arc (180:0:{(6-4.5)*.5+.1});
\draw (4.5+.1,10) arc (180:0:{(6-4.5)*.5-.1});

\draw (10-.1,10) arc (180:0:{(6-4.5)*.5+.1});
\draw (10+.1,10) arc (180:0:{(6-4.5)*.5-.1});

\draw (10.5-.1,10) arc (180:0:{(5.5-5)*.5+.1});
\draw (10.5+.1,10) arc (180:0:{(5.5-5)*.5-.1});

\draw (12-.1,10) arc (180:0:{(6-4.5)*.5+.1});
\draw (12+.1,10) arc (180:0:{(6-4.5)*.5-.1});

\draw (12.5-.1,10) arc (180:0:{(5.5-5)*.5+.1});
\draw (12.5+.1,10) arc (180:0:{(5.5-5)*.5-.1});
\end{scope}
\begin{scope}[shift={(0,-.3)}]

\draw (13-.1,-1) arc (180:360:{(.5)*.5+.1});
\draw (13+.1,-1) arc (180:360:{(.5)*.5-.1});

\draw (12-.1,-1) arc (180:360:{(.5)*.5+.1});
\draw (12+.1,-1) arc (180:360:{(.5)*.5-.1});

\draw (11-.1,-1) arc (180:360:{(.5)*.5+.1});
\draw (11+.1,-1) arc (180:360:{(.5)*.5-.1});

\draw (10-.1,-1) arc (180:360:{(.5)*.5+.1});
\draw (10+.1,-1) arc (180:360:{(.5)*.5-.1});

\draw (7.5-.1,-1) arc (180:360:{(.5)*.5+.1});
\draw (7.5+.1,-1) arc (180:360:{(.5)*.5-.1});
\draw (6.5-.1,-1) arc (180:360:{(.5)*.5+.1});
\draw (6.5+.1,-1) arc (180:360:{(.5)*.5-.1});
\draw (5.5-.1,-1) arc (180:360:{(.5)*.5+.1});
\draw (5.5+.1,-1) arc (180:360:{(.5)*.5-.1});
\draw (4.5-.1,-1) arc (180:360:{(.5)*.5+.1});
\draw (4.5+.1,-1) arc (180:360:{(.5)*.5-.1});
\end{scope}

\end{scope}
\end{tikzpicture}
    
\caption{Basis-dependent simulability of circuits with Magic holes}
\label{fig:big_example_2}
\end{figure}

In the second family of examples, shown in Fig.~\ref{fig:big_example_2}, the efficient simulability depends on the measurement basis:
The blue circles are removed by the string-genus relation, and the topological spins at the wrong mark positions are moved to the left side and merged with the initial state.
Under $X$-basis measurement, the topological spins in plaquettes with the question mark ``?" are free to move to the left, resulting in a diagram without holes and thus enabling efficient classical simulation of measurement sampling.
However, under Bell-Basis measurement, these topological spins in plaquettes with the question mark $?$ cannot move. QCS fails to efficiently simulate measurement sampling.

\section{Table of calculation rules}

Relations of braids:

\begin{tikzequation}
\tag{R1}
    \begin{scope}[shift={(0,2.5)}]
    %\node at (-1.8,-.5){$R1:$};
        \begin{scope}[shift={(-.1,0)}]
        
            \Ronebraidleft
            \node at (1,-0.5){$=e^{-i\frac{\pi}{8}}$};
            \draw (2,-1)--(2,0);
            \node at (2.5,-.6){,};
        \end{scope}
        
        \begin{scope}[shift={(4.4,0)}]
            \Ronebraidright
            \node at (1,-0.2){$=e^{i\frac{\pi}{8}}$};
            \draw (2,0)--(2,-1);
            \node at (2.5,-.6){.};
        \end{scope}
    \end{scope}
  \end{tikzequation}  
\begin{tikzequation}
    \tag{R2}

    \begin{scope}

    \begin{scope}[scale=.8]
       % \node at (-2.3,0){$R2:$};
            \Rtwobraidleftright
        \node at (1,0){$=$};
        \begin{scope}[scale=.8]
            \draw(2,1)--(2,-1);
            \draw(2.8,1)--(2.8,-1);
            \node at (3.5,-.8){,};
        \end{scope}
        \end{scope}

        \begin{scope}[scale=.8,shift={(5,0)}]
            \Rtwobraidrightleft
            \node at (1,0){$=$};
        \begin{scope}[scale=.8]
            \draw(2,1)--(2,-1);
            \draw(2.8,1)--(2.8,-1);
            \node at (3.5,-.8){.};
        \end{scope}
        
        \end{scope}
    \end{scope}
\end{tikzequation}

\begin{tikzequation}
    
\tag{R3}

    \begin{scope}[scale=.8, shift={(0,-5)}]
      %      \node at (-2.25,1.5){$R3:$};
        \begin{scope}[shift={(-.5,0)}]
            \draw(0,3)--(1,2);
            \draw(2,3)--(2,2);
            \draw[white,WLL](1,3)--(0,2);
            \draw(1,3)--(0,2);
            
            \draw(0,2)--(0,1);
            
            \draw(1,2)--(2,1);
            \draw[white,WLL](1,1)--(2,2);
            \draw(1,1)--(2,2);
            \draw(2,1)--(2,0);
            \draw(1,1)--(0,0);
            \draw[white,WLL](0,1)--(1,0);
            \draw(0,1)--(1,0);
            \node at (2.5,1.5){$=$};
            \node at (6,0){,};
        \end{scope}

        \begin{scope}[shift={(5,0)},xscale=-1]
            \draw(0,3)--(1,2);
            \draw(2,3)--(2,2);
            \draw[white,WLL](1,3)--(0,2);
            \draw(1,3)--(0,2);
            
            \draw(0,2)--(0,1);
            \draw(1,1)--(2,2); \draw[white,WLL](1,2)--(2,1);
            \draw(1,2)--(2,1);
            \draw(2,1)--(2,0);
            \draw(1,1)--(0,0);
            \draw[white,WLL](0,1)--(1,0);
            \draw(0,1)--(1,0);
        \end{scope}
    \end{scope}

    \begin{scope}[scale=.8,shift={(6,-5)}]
        \begin{scope}
            \draw(0,3)--(1,2);
            \draw(2,3)--(2,2);
            \draw[white,WLL](1,3)--(0,2);
            \draw(1,3)--(0,2);
            \draw(0,2)--(0,1);
            \draw(1,1)--(2,2);
            \draw[white,WLL](1,2)--(2,1); 
            \draw(1,2)--(2,1);
            \draw(2,1)--(2,0);
            \draw(1,1)--(0,0);
            \draw[white,WLL](0,1)--(1,0);
            \draw(0,1)--(1,0);
            \node at (2.5,1.5){$=$};
            \node at (5.8,0){.};
        \end{scope}
        
        \begin{scope}[shift={(5,0)},xscale=-1]
            \draw(0,3)--(1,2);
            \draw(2,3)--(2,2);
            \draw[white,WLL](1,3)--(0,2);
            \draw(1,3)--(0,2);
            \draw(0,2)--(0,1);
            \draw(1,2)--(2,1);
            \draw[white,WLL](1,1)--(2,2); 
            \draw(1,1)--(2,2); 
            \draw(2,1)--(2,0);
            \draw(1,1)--(0,0);
            \draw[white,WLL](0,1)--(1,0);
            \draw(0,1)--(1,0);
        \end{scope}
        
    \end{scope}
\end{tikzequation}

\begin{tikzequation}
\tag{\ref{underbraid}}
\begin{scope}[scale=.8]
        \begin{scope}
            \draw(1,1)--(-1,-1);
            \draw(1.2,1)--(-1+.2,-1);
            \draw(1.4,1)--(-1+.4,-1);
            \draw(1.6,1)--(-1+.6,-1);
            \begin{scope}[shift={(-.7,-.5)}]
            \fill[white](0,0)arc[start angle=180,end angle=540,radius=.4];
            \draw(0,0)arc[start angle=180,end angle=540,radius=.4];
            \node at (0.4,0){$\forall$};
            \end{scope}
            \draw[white,WLL](-1,1)--(1,-1);
            \draw(-1,1)--(1,-1);
        \end{scope}
        \begin{scope}[shift={(4,0)}]
        \node at (-1.5,0){$=$};
            \draw(1,1)--(-1,-1);
            \draw(1.2,1)--(-1+.2,-1);
            \draw(1.4,1)--(-1+.4,-1);
            \draw(1.6,1)--(-1+.6,-1);
            \begin{scope}[shift={(0.5,.5)}]
            
            \fill[white](0,0)arc[start angle=180,end angle=540,radius=.4];
            \draw(0,0)arc[start angle=180,end angle=540,radius=.4];
            \node at (0.4,0){$\forall$};
            \end{scope}
            \draw[white,WLL](-1,1)--(1,-1);
            \draw(-1,1)--(1,-1);
        \end{scope}
        \node at (5.8,-.2){.};
    \end{scope}
\end{tikzequation}

Relations of charges:
\begin{tikzequation}
    \tag{\ref{charge_annihilation}}
    \begin{scope}[scale=.5]
    \draw[domain=0:4*pi,smooth,variable=\x,red] plot ( {cos(\x r)/8-0.2}, {\x/10-.1});

    \draw(0,-1)--(0,2);
    
    \fill (0,-0.1) circle (.1);
    \fill (0,1.1) circle (.1);
    \node at (1.5, .5) {$=$};
    
    \node at (3.6, .4) {$.$};
    \end{scope}
    
    \begin{scope}[scale=.5, shift ={(5,0)}]
    \draw(-1.9,-1)--(-1.9,2);
    
    \end{scope}
\end{tikzequation}

\begin{equation}
    \tag{\ref{repair}}
    \raisebox{0pt}{
    \begin{tikzpicture}[baseline=(current bounding box.center),line width=.3mm]
    \begin{scope}[scale=.5]
    \draw[domain=-6*pi:0,smooth,variable=\x,red] plot (\x/10, {-sin(\x r)/5});
    \draw[domain=-6*pi:0,smooth,variable=\x,red] plot (\x/10, {sin(\x r)/5+1});
    \node at (1.5,.5) {$=$};
    \end{scope}
    \begin{scope}[scale=.5, shift={(5,0)}]
    \draw[domain=0:4*pi,smooth,variable=\x,red] plot ( {-sin(\x r)/5}, {\x/10-.1});
    \draw[domain=0:4*pi,smooth,variable=\x,red] plot ({sin(\x r)/5-1.9}, {\x/10-.1});
    \end{scope}
    
    \begin{scope}[scale=.5, shift ={(6.5,1)}]
    \node at (0,-.5){$=(-1)$};
    \draw[red,decorate,decoration={snake}](1.5,0.2)--(3,-1.2);
    \draw[red,decorate,decoration={snake}](1.5,-1.2)--(3,.2);
    
    \node at (3.9,-.6){.};
    \end{scope}
    \end{tikzpicture} 
    }
\end{equation}

\begin{tikzequation}
\tag{\ref{relative0}}
\begin{scope}[scale=.6]
    \begin{scope}
        \draw[red,decorate,decoration={snake}](-1,1)--(1,-1);
        \draw[white,WLL](-1,-1)--(1,1);
        \draw(-1,-1)--(1,1);
    \end{scope}
    \begin{scope}[shift={(4,0)}]
        \node at (-2,0){$=(-1)$};
        \node at (1.6,0){$.$};
        \draw(-1,-1)--(1,1);
        \draw[white,WLL,decorate,decoration={snake}](-1,1)--(1,-1);
        
        \draw[red,decorate,decoration={snake}](-1,1)--(1,-1);
    \end{scope}
    \end{scope}
\end{tikzequation}

\begin{tikzequation}
    
\tag{\ref{relative1}}

\begin{scope}[scale=.6, shift={(0,-1.1)}]
\draw (0,-.2)--(0,0) arc (180:0:.5)--++(0,-.2);
\draw[domain=-20:0,smooth,variable=\x,red] plot (\x/10, {-sin(\x r)/5});
\fill (0,0) circle (.1);
\node at (2.1,.15) {\small{$= \sqrt{-1}$}};
\node at (5.9,-.1) {.};
\draw[domain=-20:0,smooth,variable=\x, red] plot (\x/10+5.5, {-sin(\x r)/5});

\draw[white,WLL] (4.5,-.2)--(4.5,0) arc (180:90:.5)--++(0,-.2);
\draw (4.5,-.2)--(4.5,0) arc (180:0:.5)--++(0,-.2);
\fill (5.5,0) circle (.1);
\end{scope}

\begin{scope}[scale=.6, shift={(0,-2.2)}]
\draw (0,.2)--(0,0) arc (-180:0:.5)--++(0,.2);
\draw[domain=-20:0,smooth,variable=\x,red] plot (\x/10, {sin(\x r)/5});
\fill (0,0) circle (.1);
\node at (2.4,.15) {\small{$=\sqrt{-1}^3$}};

\node at (5.9,.05) {.};
\draw[domain=-20:0,smooth,variable=\x, red] plot (\x/10+5.5, {sin(\x r)/5});

\draw[white,WLL] (4.5,.2)--(4.5,0) arc (-180:-90:.5)--++(0,.2);
\draw (4.5,.2)--(4.5,0) arc (-180:0:.5)--++(0,.2);
\fill (5.5,0) circle (.1);
\end{scope}
\end{tikzequation}

\begin{tikzequation}
    \tag{\ref{relative2}}
\begin{scope}[shift={(-1,-3.5)}]
        \begin{scope}[scale=.6]
        \draw[red, decorate,decoration={snake}](0,-.5)--(2,-.5);
        \draw[red, decorate,decoration={snake}](0,.5)--(1,.5);
        \draw[white,WLL](0,1)--(0,-1);
        \draw(0,1)--(0,-1);
        \draw[white,WLL](1,1)--(1,-1);
	   \draw(1,1)--(1,-1);
       \draw[white,WLL](2,1)--(2,-1);
        \draw(2,1)--(2,-1);

        \fill (1,.5)circle(.08);
        \fill (2,-.5)circle(.08);
        \fill (0,.5)circle(.08);
        \fill (0,-.5)circle(.08);
        \end{scope}

         \begin{scope}[shift={(3,0)},scale=.6]
         \node at (-1.5,0){$=(-1)$};
         
\node at (2.5,-.3) {.};
         
        \draw[red, decorate,decoration={snake}](0,.5)--(2,.5);
        \draw[red, decorate,decoration={snake}](0,-.5)--(1,-.5);
        \draw[white,WLL](0,1)--(0,-1);
        \draw(0,1)--(0,-1);
        \draw[white,WLL](1,1)--(1,-1);
	   \draw(1,1)--(1,-1);
       \draw[white,WLL](2,1)--(2,-1);
        \draw(2,1)--(2,-1);

        \fill (1,-.5)circle(.08);
        \fill (2,.5)circle(.08);
        \fill (0,-.5)circle(.08);
        \fill (0,.5)circle(.08);
        \end{scope}
\end{scope}
    \begin{scope}[scale=.6,shift={(0,-4.2)}]
\draw[domain=-20:0,smooth,variable=\x,red] plot (\x/10+0.2, {sin(\x r)/5+0.2});

\draw (1,0) -- (0,1);
\draw[white,WLL] (-.2,-.2) -- (1,1);
\draw (0,0) -- (1,1);

\fill (.2,.2) circle (.1);
\node at (1.75,.5) {$=(-1)$};

\node at (4.75,.5) {.};
\begin{scope}[shift={(3.5,0)}]
\draw (1,0) -- (0,1);
\draw[domain=-20:0,smooth,variable=\x,red] plot (\x/10+0.8, {sin(\x r)/5+0.8});
\draw[white,WLL] (1,0) -- (0,1);
\draw (1,0) -- (0,1);
\draw[white,WLL] (0,0) -- (1,1);

\draw (0,0) -- (1,1);
\fill (.8,.8) circle (.1);
\end{scope}
\end{scope}
\end{tikzequation}

\begin{equation}
\tag{\ref{switch}}
\raisebox{0pt}{
\begin{tikzpicture}[baseline=(current bounding box.center),line width=.3mm]
\begin{scope}[scale=.7]
\draw (0,0) -- (1,1);
\draw[white,WLL] (1,0) -- (0,1);
\draw (1,0) -- (0,1);
\node at (1.75,.5) {$=$};

\node at (3.75,.5) {$.$};
\begin{scope}[shift={(2.5,0)}]
\draw (1,0) -- (0,1);
\draw[white,WLL] (0,0) -- (1,1);
\draw (0,0) -- (1,1);
\draw[domain=6:24,smooth,variable=\x,red] plot (\x/30, {sin(\x r)/16+0.2});
\fill (.8,.2) circle (.065);
\fill (.2,.2) circle (.065);
\end{scope}
\end{scope}
\end{tikzpicture}
}
\end{equation}

Relations of crossings:

\begin{equation}
    \crossingleftnode=
\frac{1+\alpha}{\sqrt{2}} 
\crossinguppernode,\quad\alpha\ne -1.
\tag{YB0}
\end{equation}
\begin{equation}
\Rone = \frac{1+\alpha}{\sqrt{2}}
\shu\  .
\tag{YB1}
\end{equation}
\begin{equation}
\Rtwo=
\crossingleftnodetwo.
\tag{YB2}
\end{equation}
\begin{equation}
\shadowone =  k\times\shadowtwo.
\tag{YB3}
\end{equation}

\begin{equation}
\tag{\ref{charge_crossing_equation}}
\begin{aligned}
\chargerelations{b}{leftup}
=&
\chargerelations{-b}{rightup}
\times
(-\sqrt{-1}),
\quad\quad\quad\quad
\chargerelations{b}{leftdown}
&&=
\chargerelations{-b}{rightdown}
\times
\sqrt{-1}.
\\
\chargerelations{b}{leftup}
=&
\chargerelations{-\frac{1}{b}}{rightdown}
\times
 b\sqrt{-1},
 \quad\quad\quad\quad\quad
 \chargerelations{b}{leftdown}
&&=
\chargerelations{-\frac{1}{b}}{rightup}
\times
(-b\sqrt{-1}).
\\
\chargerelations{b}{leftup}
=&
\chargerelations{\frac{1}{b}}{leftdown}
\times
b,
\quad\quad\quad\quad\quad\quad
\chargerelations{b}{leftdown}
&&=
\chargerelations{\frac{1}{b}}{leftup}
\times
b.
\end{aligned}
\end{equation}

Relations of holes:

\begin{tikzequation}\tag{\ref{string_genus-1}}
    \begin{scope}[shift={(-1.3,.7)},scale=.8]
    \draw[blue] (1.25,-1)arc[start angle=150,end angle=30, radius=.5];
                \draw[blue] (1,-.8)arc[start angle=-150,end angle=-30, radius=.8];
\end{scope}
        \begin{scope}[yscale=.5]
            \draw(1,0) arc[start angle=0,end angle=360,radius=1];
        \node at (1.6,.3){$=\frac{1}{\sqrt{2}}.$};
        \end{scope}
\end{tikzequation}

\begin{align}\tag{\ref{stringgenusn_equation}}
    \begin{tikzpicture}[baseline={(current bounding box.center)},line width=.3mm]
    \begin{scope}
    \node at (-.6,.5){$g_1$};
    \node at (1.5,.5){$g_2$};
        \hole{0.5}{-1.8}{1};
        \draw(0,0.5)--(0,-.5);
        \draw(0.25,0.5)--(0.25,-.5);
        \draw(0.75,0.5)--(0.75,-.5);
        \draw(1,0.5)--(1,-.5);
        \node at (.5,0){$...$};
        \hole{0.5}{2.3}{1};
    \end{scope} 
    \end{tikzpicture}
    =
    \begin{tikzpicture}[baseline={(current bounding box.center)},line width=.3mm]
        \begin{scope}[shift={(3.3,0)}]
    \node at (-.6,.5){$g_1$};
    \node at (0.5,.2)[red]{$g_2$};
        \hole{0.5}{-2}{1};
        \draw(0,0.5)--(0,-.5);
        \draw(0.25,0.5)--(0.25,-.5);
        \draw(0.75,0.5)--(0.75,-.5);
        \draw(1,0.5)--(1,-.5);
        \node at (.5,-.15){$...$};
        \draw[red](-.25,0)--(1.25,0);
    \end{scope} 
    \end{tikzpicture}
    =
    \begin{tikzpicture}[baseline={(current bounding box.center)},line width=.3mm]
        \begin{scope}[shift={(5.5,0)}]
    \node at (0.5,0.2)[red]{$g_1$};
    \node at (1.6,.5){$g_2$};
        \draw(0,0.5)--(0,-.5);
        \draw(0.25,0.5)--(0.25,-.5);
        \draw(0.75,0.5)--(0.75,-.5);
        \draw(1,0.5)--(1,-.5);
        \node at (.5,-.15){$...$};
        \draw[red](-.25,0)--(1.25,0);
        \hole{0.5}{2.6}{1};
    \end{scope}
    \end{tikzpicture}\ .
\end{align}

\begin{tikzequation}
\tag{\ref{barholemove}}

    \begin{scope}
        \hole{0.5}{-1.8}{1};
        \draw(0,0.5)--(0,-.5);
        \draw(0.25,0.5)--(0.25,-.5);
        \draw(0.75,0.5)--(0.75,-.5);
        \draw(1,0.5)--(1,-.5);
        \node at (.5,0){$...$};
        \hole{0.5}{2.3}{1};
    \end{scope} 

\begin{scope}[shift={(3.3,0)}]
        \hole{0.5}{-2}{1};
        \node at (-1.2,0){$=$};
        \draw(0,0.5)--(0,-.5);
        \draw(0.25,0.5)--(0.25,-.5);
        \draw(0.75,0.5)--(0.75,-.5);
        \draw(1,0.5)--(1,-.5);
        \node at (.5,-.15){$...$};
        \draw[red](-.25,0)--(1.25,0);
    \end{scope} 
 \begin{scope}[shift={(5.5,0)}]
        \node at (-.6,0){$=$};
        \draw(0,0.5)--(0,-.5);
        \draw(0.25,0.5)--(0.25,-.5);
        \draw(0.75,0.5)--(0.75,-.5);
        \draw(1,0.5)--(1,-.5);
        \node at (.5,-.15){$...$};
        \draw[red](-.25,0)--(1.25,0);
        \hole{0.5}{2.6}{1};
        \node at (2.2,0){.};
    \end{scope} 

\end{tikzequation}

% \iffalse
% \begin{tikzequation}
%     \begin{scope}
%         \hole{0.5}{-2}{1};
%         \draw(0,0.5)--(0,-.5);
%         \draw(0.25,0.5)--(0.25,-.5);
%         \draw(0.75,0.5)--(0.75,-.5);
%         \draw(1,0.5)--(1,-.5);
%         \node at (.5,0){$...$};
%         \hole{0.5}{2.5}{1};
%     \end{scope} 

%     \begin{scope}[shift={(4,0)}]
%         \hole{0.5}{-2}{1};
%         \node at (-1.5,0){$=$};
        
%         \node at (2.2,0){.};
%         \draw(0,0.5)--(0,-.5);
%         \draw(0.25,0.5)--(0.25,-.5);
%         \draw(0.75,0.5)--(0.75,-.5);
%         \draw(1,0.5)--(1,-.5);
%         \fill(0,0)circle(.04);
%         \fill(0.25,0)circle(.04);
%         \draw[red](0,0)..controls(.08,.2)and(.16,-.2)..(0.25,0);
%         \draw[red](0.75,0)..controls(.08+.75,.2)and(.16+.75,-.2)..(1,0);
%         \fill(1,0)circle(.04);
%         \fill(0.75,0)circle(.04);
%         \node at (.5,0){$...$};
%         \hole{0.5}{2.5}{1};
%     \end{scope}
% \tag{\ref{add_bar}}
% \end{tikzequation}
% \fi
%two holes meet
\begin{align}
\tag{\ref{two holes meet}}
    \begin{tikzpicture}[line width=.3mm]
    \begin{scope}
    \hole{.5}{1}{1};
    \hole{.5}{4}{1};
    \node at (3.2,0){$=$};
    \hole{.5}{7}{1};
    \end{scope}
    \end{tikzpicture}.
\end{align}

%cut two
\begin{align}
\tag{\ref{cuttwo}}
    \begin{tikzpicture}[line width=.3mm]
    \begin{scope}
    \hole{.5}{1}{1};
    \hole{.5}{4}{1};
    \draw(1.4,.5)--(1.4,-.5);
    \draw(1.75,.5)--(1.75,-.5);
    \end{scope}
    \begin{scope}[shift={(3.5,0)}]
    \hole{.5}{1}{1};
    \hole{.5}{4}{1};
    \node at (-.2,0){$=\frac{1}{\sqrt{2}}$};
    \draw(1.4,.5)[bend right=90]to(1.75,.5);
    \draw(1.4,-.5)[bend left=90]to(1.75,-.5);
    \node at (3,0){$=$};
    % \hole{.5}{7}{1};
    \end{scope}
    \begin{scope}[shift={(7,0)}]
    \hole{.5}{1}{1};
    \node at (-.2,0){$=\frac{1}{\sqrt{2}}$};
    \draw(1.4,.5)[bend right=90]to(1.75,.5);
    \draw(1.4,-.5)[bend left=90]to(1.75,-.5);
    \end{scope}
    \end{tikzpicture}\ .
\end{align}

%fourstrings change
\begin{align}
\tag{\ref{fourstringschange}}
    \begin{tikzpicture}[baseline={(current bounding box.center)},line width=.3mm]
    \begin{scope}
        \hole{0.5}{-2.5}{1};
        \draw(-0.3,1)--(-0.3,-1);
        \draw(0.2,1)--(0.2,-1);
        \draw(0.7,1)--(0.7,-1);
        \draw(1.2,1)--(1.2,-1);
        \hole{0.5}{3}{1};
    \end{scope}
    \fill[white] (-0.4,-0.3) rectangle (0.3,0.3);
    \draw (-0.4,-0.3) rectangle (0.3,0.3); % 可选：画边框
    \node at (-.1,0) {$\forall$};
    \end{tikzpicture}
    =
    \begin{tikzpicture}[baseline={(current bounding box.center)},line width=.3mm]
    \begin{scope}
        \hole{0.5}{-2.5}{1};
        \draw(-0.3,1)--(-0.3,-1);
        \draw(0.2,1)--(0.2,-1);
        \draw(0.7,1)--(0.7,-1);
        \draw(1.2,1)--(1.2,-1);
        \hole{0.5}{3}{1};
    \end{scope}
    \fill[white] (0.6,-0.3) rectangle (1.3,0.3);
    \draw (0.6,-0.3) rectangle (1.3,0.3);
    \node at (.95,0) {$\forall$};
    \end{tikzpicture},
\end{align}

\begin{align}\tag{\ref{bar_cap}}
    \begin{tikzpicture}[baseline={(current bounding box.center)},line width=.3mm]
        \begin{scope}[scale=0.4]
            \draw[red](-3.5,0.5)--(3.5,0.5);
            \draw(-3,2)--(-3,-2);
            \node at (-2.5,0) {$...$};
            \draw(-2,2)--(-2,-2);
            \draw(-1,1)--(-1,-2);
            \draw(3,2)--(3,-2);
            \node at (2.5,0) {$...$};
            \draw(2,2)--(2,-2);
            \draw(1,1)--(1,-2);
            \draw(-1,1)arc[start angle=180, end angle = 0, radius = 1];
        \end{scope}
    \end{tikzpicture}
    =
    \begin{tikzpicture}[baseline={(current bounding box.center)},line width=.3mm]
        \begin{scope}[scale=0.4]
            \draw[red](-3.5,0.5)--(3.5,0.5);
            \draw(-3,2)--(-3,-2);
            \node at (-2.5,0) {$...$};
            \draw(-2,2)--(-2,-2);
            \draw(3,2)--(3,-2);
            \node at (2.5,0) {$...$};
            \draw(2,2)--(2,-2);
            \draw(-1,-2)arc[start angle=180, end angle = 0, radius = 1];
        \end{scope}
    \end{tikzpicture}.
\end{align}

%bareven

If there are even number of strings between holes:
\begin{align}
\tag{\ref{bar_even}}
\begin{tikzpicture}[baseline={(current bounding box.center)},line width=.3mm]
\begin{scope}
    \hole{0.5}{-.65}{0};
    \draw(-1.8,-1)--(-1.8,0);
    \draw(-2.3,-1)--(-2.3,0);
    \draw(-1,-1)--(-1,0);
    \draw(-.5,-1)--(-.5,0);
    \draw(1,-1)--(1,0);
    \draw(.5,-1)--(.5,0);
    \draw(1.8,-1)--(1.8,0);
    \draw(2.3,-1)--(2.3,0);
    \node at (-1.4,-.5){$\dots$};
    \node at (1.4,-.5){$\dots$};
    \draw[red](-2.5,-.8)--(2.5,-.8);
\end{scope}
\end{tikzpicture}
=
\begin{tikzpicture}[baseline={(current bounding box.center)},line width=.3mm]
\begin{scope}[shift={(7,0)}]
\node at (3,-.8){.};
    \hole{0.5}{-.65}{0};
    \draw(-1.8,-1)--(-1.8,0);
    \draw(-2.3,-1)--(-2.3,0);
    \draw(-1,-1)--(-1,0);
    \draw(-.5,-1)--(-.5,0);
    \draw(1,-1)--(1,0);
    \draw(.5,-1)--(.5,0);
    \draw(1.8,-1)--(1.8,0);
    \draw(2.3,-1)--(2.3,0);
    \node at (-1.4,-.5){$\dots$};
    \node at (1.4,-.5){$\dots$};
    \draw[red](-2.5,-.4)--(-.4,-.4);
    \draw[red](2.5,-.4)--(.4,-.4);
\end{scope}
\end{tikzpicture}
\end{align}

\begin{align}\tag{\ref{bolangquan_genus_cut}}
\begin{tikzpicture}[baseline={(current bounding box.center)},line width=.3mm]
    \begin{scope}
    \node at (0.7,1.6)[red]{$v$};
    \node at (0.7,.9){$g$};
    \begin{scope}      
    \draw[red,decorate,decoration={snake}](0,0.7)arc[start angle=180,end angle=540,radius=0.7];
    \hole{0.4}{1}{2.3};
    \end{scope}
    \node at (2.5,.6){$=(-1)^{gv}$};
    \begin{scope}[shift={(3.2,0)}]
    \node at (0.6,.9){$g$};
     %\draw[red,decorate,decoration={snake}](0,0.7)arc[start angle=180,end angle=540,radius=0.7];
    \hole{0.4}{1}{2.3};
    \end{scope}
\end{scope}
\end{tikzpicture}
\end{align}

%relativegenus
\begin{tikzequation}
\tag{\ref{relativegenus}}
\begin{scope}
    \begin{scope}[shift={(0,0)}]
    \node at (-1.3,.4){$v$};
    \draw[red,decorate,decoration={snake}](-1,.4)--(0,.4);
        \fill(-1,.4)circle(.05);
        \fill(0,.4)circle(.05);
        \draw(-1,.7)--(-1,-.7);
         \draw(0,.7)--(0,-.7);
         \begin{scope}[scale=.5,shift={(-2.7,0.9)}]
         \node at (1.6,-.4){$g$};
                \draw[blue] (1.25,-1)arc[start angle=150,end angle=30, radius=.5];
                \draw[blue] (1,-.8)arc[start angle=-150,end angle=-30, radius=.8];
            \end{scope}
        
    \end{scope}
    \begin{scope}[shift={(3,0)}]
    \node at (-2,0){$=(-1)^{gv}$};
    \node at (0.3,0){.};
        \draw(-1,.7)--(-1,-.7);
         \draw(0,.7)--(0,-.7);
        \node at (-1.3,-.4){$v$};
    \draw[red,decorate,decoration={snake}](-1,-.4)--(0,-.4);
        \fill(-1,-.4)circle(.05);
        \fill(0,-.4)circle(.05);
         \begin{scope}[scale=.5,shift={(-2.7,0.9)}]
         \node at (1.6,-.4){$g$};
                \draw[blue] (1.25,-1)arc[start angle=150,end angle=30, radius=.5];
                \draw[blue] (1,-.8)arc[start angle=-150,end angle=-30, radius=.8];
            \end{scope}
        
    \end{scope}
\end{scope}
\end{tikzequation}

%barodd
If there are odd number of strings between holes:
\begin{align}
\begin{tikzpicture}[baseline={(current bounding box.center)},line width=.3mm]
\tag{\ref{bar_odd}}
\begin{scope}
\begin{scope}[shift={(0,-0.5)}]
    \genus{0.5}{$g$};
\end{scope}
\node at (1,-.6)[red]{$v$};
    \begin{scope}[shift={(0.5,0)}]
        \draw(-1.8,-1)--(-1.8,0);
    \draw(-2.3,-1)--(-2.3,0);
    \node at (-1.4,-.5){$\dots$};
    \draw(-1,-1)--(-1,0);
    \end{scope}
    \begin{scope}[shift={(-0.5,0)}]
    \draw(1,-1)--(1,0);
    \draw(2.3,-1)--(2.3,0);
    \node at (1.4,-.5){$\dots$};
    \draw(1.8,-1)--(1.8,0);
    \end{scope}
    \draw[red](-2.5,-.8)--(2.5,-.8);
\end{scope}
\end{tikzpicture}
=(-1)^{gv}
\begin{tikzpicture}[baseline={(current bounding box.center)},line width=.3mm]
\begin{scope}%[shift={(7,0)}]
\node at (1,-.05)[red]{$v$};
    \begin{scope}[shift={(0,-0.5)}]
    \genus{0.5}{$g$};
\end{scope}
    \begin{scope}[shift={(0.5,0)}]
        \draw(-1.8,-1)--(-1.8,0);
    \draw(-2.3,-1)--(-2.3,0);
    \node at (-1.4,-.5){$\dots$};
    \draw(-1,-1)--(-1,0);
    \end{scope}
    \begin{scope}[shift={(-0.5,0)}]
    \draw(1,-1)--(1,0);
    \draw(2.3,-1)--(2.3,0);
    \node at (1.4,-.5){$\dots$};
    \draw(1.8,-1)--(1.8,0);
    \end{scope}
    \draw[red](-2.5,-.2)--(2.5,-.2);
\end{scope}
\end{tikzpicture}\ .
\end{align}

\begin{tikzequation}
\tag{\ref{handleslidesbasic_equation}}
\begin{scope}[scale=1.5]
    \begin{scope}
    \begin{scope}[scale=.8]
        
     \draw(.8,0)arc[start angle=180, end angle=540,radius=.6];
    \begin{scope}[shift={(1.2,0)}]

    \draw[white,WLL](0.1,0.2)--(0.1,1); 
    \draw(0.1,1)--(0.1,.2);
    \draw[white,WLL](-0.1,0.1)--(-0.1,1); 
    \draw(-0.1,1)--(-0.1,.1);
     \draw[white,WLL](0.5,0.1)--(.5,1); 
    \draw(0.5,0.1)--(.5,1); 

    \draw[white,WLL](.3,0.2)--(.3,1); 
    \draw(0.3,0.2)--(.3,1); 
        
    \end{scope}
    
       \begin{scope}
         \draw[white,WLL](0.2,1)--(0.2,-1);
         \draw(0.2,1)--(0.2,-1);
         
        \draw[white,WLL](.5,1)--(.5,-1);
        \draw(.5,1)--(.5,-1);

        \fill[white](1,0)arc[start angle=180, end angle=540,radius=.4];
         \draw(1.8,0)[blue,->]arc[start angle=0, end angle=360,radius=.4];
        \node at (1.4,0){$\forall$};

    \end{scope}

    \end{scope}
          
    \end{scope}

\end{scope}

\begin{scope}[shift={(4,0)},scale=1.5]
\node at (-.5,0){$=$};
\node at (2.2,0){.};

    \begin{scope}[shift={(0.1,0)},scale=1]
       
        \draw(0,1)--(0,.5)arc[start angle=180,end angle=360, radius=.4]arc[start angle=180,end angle=0, radius=.5]--++(0,-1)arc[start angle=0,end angle=-180, radius=.5]arc[start angle=0,end angle=180, radius=.4]--++(0,-.5);
    \end{scope}

     \begin{scope}[shift={(.4,0)},scale=1]
    
         \draw(0,1)--(0,.5)arc[start angle=180,end angle=360, radius=.15]arc[start angle=180,end angle=0, radius=.7]--++(0,-1)arc[start angle=0,end angle=-180, radius=.7]arc[start angle=0,end angle=180, radius=.15]--++(0,-.5);
    \end{scope}

 \begin{scope}[shift={(0,0)}]
       \draw(.88,0)arc[start angle=180, end angle=540,radius=.42];
    \begin{scope}[shift={(1.15,0)},xscale=.7]

   \draw[white,WLL](0.1,0.2)--(0.1,1.3); 
    \draw(0.1,1.3)--(0.1,.2);
    \draw[white,WLL](-0.1,0.1)--(-0.1,1.3); 
    \draw(-0.1,1.3)--(-0.1,.1);
     \draw[white,WLL](0.5,0.1)--(.5,1.3); 
    \draw(0.5,0.1)--(.5,1.3); 

    \draw[white,WLL](.3,0.2)--(.3,1.3); 
    \draw(0.3,0.2)--(.3,1.3); 
        
    \end{scope}
    \fill[white](1,0)arc[start angle=180, end angle=540,radius=.3]
     \draw(1.6,0)[blue,->]arc[start angle=0, end angle=360,radius=.3];
    \end{scope}
   \node at (1.3,0){$\forall$};
        
\end{scope}
\end{tikzequation}

If $g_0$ is a Clifford hole, it can be substituted with some Clifford crossing:

\begin{tikzequation}
\tag{\ref{Cliffordhole_equation}}
        \begin{scope}[scale=.8]
        \node at (-.25,0){$...$};
        \draw(-.5,-1)--(-.5,1);
        \draw(0,-1)--(0,1);
    \begin{scope}[shift={(3,-1.5)}]
    \end{scope}

    \begin{scope}

        \begin{scope}[scale=.55,shift={(.9,1)}]
        \node at (.6,0){$g_0$};
        
            \draw[blue](.25,-1)arc[start angle=150,end angle=30, radius=.5];
            \draw[blue] (0,-.8)arc[start angle=-150,end angle=-30, radius=.8];
             \node at (2.3,-1){$...$};
        \end{scope}
    \end{scope}
        
     \begin{scope}[shift={(4,0)}]

        \begin{scope}[scale=.55,shift={(.9,1)}]
        \node at (.6,0){$g_m$};
            \draw[blue](.25,-1)arc[start angle=150,end angle=30, radius=.5];
            \draw[blue] (0,-.8)arc[start angle=-150,end angle=-30, radius=.8];
            
        \draw(3,-2.8)--(3,.8);
        \draw(2,-2.8)--(2,.8);
             \node at (2.6,-1){$...$};
        \end{scope}
        
    \end{scope}

     \begin{scope}[shift={(2,0)}]

        \begin{scope}[scale=.55,shift={(.9,1)}]
        \node at (.6,0){$g_1$};
            \draw[blue](.25,-1)arc[start angle=150,end angle=30, radius=.5];
            \draw[blue] (0,-.8)arc[start angle=-150,end angle=-30, radius=.8];
             \node at (2.3,-1){$...$};
        \end{scope}
        
        \draw(2,-1)--(2,1);
        \draw(1.5,-1)--(1.5,1);
    \end{scope}

     \begin{scope}

        \draw(2,-1)--(2,1);
        \draw(1.5,-1)--(1.5,1);
    \end{scope}
    \end{scope}
    \begin{scope}[shift={(8,0)},scale=.8]
      \node at (-2,0){$=\frac{1}{c}$};
      \node at (7,-.2){.};
    \begin{scope}[shift={(.2,0)},yscale=0.5]
        \draw(2,0)to[bend left=90](5.5,0.5);\draw(2,0)to[bend right=90](5.5,-.5)--(5.2,.4)[bend right=90]to(2.2,0)[bend right=90]to(5.2,-.4)--(5.5,0.5);
        \node at (5.7,0){$\alpha$};
    \end{scope}
    \begin{scope}[shift={(.43,0)},yscale=0.4,xscale=.9]
    \end{scope}
        \node at (-.25,0){$...$};
        \draw(-.5,-1)--(-.5,1);
        \draw(0,-1)--(0,1);

    \begin{scope}

        \begin{scope}[scale=.55,shift={(.9,1)}]
             \node at (2.3,-1){$...$};
        \end{scope}
         \draw[white,WL](2,-1)--(2,1);
        \draw[white,WL](1.5,-1)--(1.5,1);
    \end{scope}
        
     \begin{scope}[shift={(4,0)}]
    
        \begin{scope}[scale=.55,shift={(.9,1)}]
        \node at (.6,0){$g_m$};
            \draw[blue](.25,-1)arc[start angle=150,end angle=30, radius=.5];
            \draw[blue] (0,-.8)arc[start angle=-150,end angle=-30, radius=.8];
            \draw(3,-3)--(3,1);
        \draw(4,-3)--(4,1);
            \node at (3.5,-1){$...$};
        \end{scope}
        
    \end{scope}

     \begin{scope}[shift={(2,0)}]

        \begin{scope}[scale=.55,shift={(.9,1)}]
        \node at (.6,0){$g_1$};
            \draw[blue](.25,-1)arc[start angle=150,end angle=30, radius=.5];
            \draw[blue] (0,-.8)arc[start angle=-150,end angle=-30, radius=.8];
             \node at (2.3,-1){$...$};
        \end{scope}
        \draw[white,WL](2,-1)--(2,1);
        \draw[white,WL](1.5,-1)--(1.5,1);
        
        \draw(2,-1)--(2,1);
        \draw(1.5,-1)--(1.5,1);
    \end{scope}

     \begin{scope}

        \draw(2,-1)--(2,1);
        \draw(1.5,-1)--(1.5,1);
    \end{scope}
        
    \end{scope}
\end{tikzequation}

If the boundary of a 2-disc does not intersect with any string, then the internal diagram reduces to a scalar or a topological spin:
\begin{align}\tag{\ref{lastproposition}}
\begin{tikzpicture}[baseline=(current bounding box.center),line width=.3mm]
    \begin{scope}
    \draw(-1.5,0)[dashed]arc[start angle=180,end angle=540,radius=1.5];
    \draw[white,WLL](1,1)--(-1,1)--(-1,-1)--(1,-1);
    \newcommand{\epsl}{.3};
        \draw(1,1-\epsl-.14)--(1,1-\epsl)[bend right=45]to(1-\epsl,1)--(-1+\epsl,1)[bend right=45]to(-1,1-\epsl)--(-1,-1+\epsl)[bend right=45]to(-1+\epsl,-1)--(1-\epsl,-1)[bend right=45]to(1,-1+\epsl)--++(0,1.3);
        \draw(1,1-\epsl)[bend right=45]to(1-\epsl,1)--(-1+\epsl,1)[bend right=45]to(-1,1-\epsl)--(-1,-1+\epsl)[bend right=45]to(-1+\epsl,-1)--(1-\epsl,-1)[bend right=45]to(1,-1+\epsl);
   % \node at (0,0)[red]{A layer $L$};
    \end{scope}
    \begin{scope}[shift={(-1.2,-0.5)},scale=0.25]
        \hole{1}{1}{1};
    \end{scope}
    \begin{scope}[shift={(-0.5,-0.5)},scale=0.25]
        \hole{1}{1}{1};
    \end{scope}
    \begin{scope}[shift={(0.2,-0.5)},scale=0.25]
        \hole{1}{1}{1};
    \end{scope}
    \draw(-0.5,0.8)--(-0.5,-0.8);
    \draw(-0.4,0.8)--(-0.4,-0.8);
    \draw(0.3,0.8)--(0.3,-0.8);
    \draw(0.4,0.8)--(0.4,-0.8);
\end{tikzpicture}
=
\begin{tikzpicture}[baseline=(current bounding box.center),line width=.3mm]
    \begin{scope}
    \draw(-.5,0)[dashed]arc[start angle=180,end angle=540,radius=1.5];
        \hole{0.6}{1}{1};
        \draw[white,WLL](2,0.4)arc[start angle=15,end angle=345,radius=1]--(1.6,0.4)arc[start angle=30,end angle=330,radius=.6]--(2,.4);
        \draw(2,0.4)arc[start angle=15,end angle=345,radius=1]--(1.6,0.4)arc[start angle=30,end angle=330,radius=.6]--(2,.4);
        \node at (2.5,0.2){$(\alpha,\beta)$};
    \end{scope}
\end{tikzpicture}.
\end{align}

Magic hole free move:
If a layer involves only even holes and has no charge attached to the parity circle, and a topological spin is not involved in that layer, then the topological spin can move freely through the layer:
\begin{tikzequation}
    \tag{\ref{Equ: Magic hole free move}}
    \begin{scope}[scale=.8]       
    \begin{scope}[shift={(6,0)}]
    
    \node at (-3.5,0){$=$};
    \begin{scope}[shift={(1.35,0)},scale=.5]
    \hole{0.6}{1}{1};
    \draw(2,0.4)arc[start angle=15,end angle=345,radius=1]--(1.6,0.4)arc[start angle=30,end angle=330,radius=.6]--(2,.4);
    \end{scope}

    \begin{scope}[scale=1.1]
    \newcommand{\epsl}{.3};
    
    \hole{0.26}{-2.7}{3.8};
    \hole{0.26}{-2.7}{-1.8};
    \draw(1,1-\epsl-.14)--(1,1-\epsl)[bend right=45]to(1-\epsl,1)--(-1+\epsl,1)[bend right=45]to(-1,1-\epsl)--(-1,-1+\epsl)[bend right=45]to(-1+\epsl,-1)--(1-\epsl,-1)[bend right=45]to(1,-1+\epsl)--++(0,1.3);
    
    \draw(1,1-\epsl)[bend right=45]to(1-\epsl,1)--(-1+\epsl,1)[bend right=45]to(-1,1-\epsl)--(-1,-1+\epsl)[bend right=45]to(-1+\epsl,-1)--(1-\epsl,-1)[bend right=45]to(1,-1+\epsl);
    \node at (0,0){$a\ layer$};
    \end{scope}

    \begin{scope}[shift={(-.3,1)},yscale=1.2]
    \draw[white,WL](0,1)--(0,-.2)[bend left=90]to(-.5,-.2)[bend right=60]to(-2,0.3)[bend right=60]to(-2,-1.9)[bend right=50]to(-.5,-1.4)[bend left=90]to (0,-1.4)--(0,-2.5);
    \draw(0,1)--(0,-.2)[bend left=90]to(-.5,-.2)[bend right=60]to(-2,0.3)[bend right=60]to(-2,-1.9)[bend right=50]to(-.5,-1.4)[bend left=90]to (0,-1.4)--(0,-2.5);
    \begin{scope}[shift={(.3,0)}]
    \draw[white,WL](0,1)--(0,-.2)[bend left=90]to(-1.1,-.2)[bend right=60]to(-2,0.3)[bend right=60]to(-2,-1.8)[bend right=60]to(-1.1,-1.4)[bend left=90]to (0,-1.4)--(0,-2.5);
    \draw(0,1)--(0,-.2)[bend left=90]to(-1.1,-.2)[bend right=60]to(-2,0.3)[bend right=60]to(-2,-1.8)[bend right=60]to(-1.1,-1.4)[bend left=90]to (0,-1.4)--(0,-2.5);
    \end{scope}
    \end{scope}
    \end{scope}
    
    \begin{scope}
    \begin{scope}[scale=1.1]
    \newcommand{\epsl}{.3};
    \hole{0.26}{-2.7}{3.8};
    \hole{0.26}{-2.7}{-1.8};
    \draw(1,1-\epsl-.14)--(1,1-\epsl)[bend right=45]to(1-\epsl,1)--(-1+\epsl,1)[bend right=45]to(-1,1-\epsl)--(-1,-1+\epsl)[bend right=45]to(-1+\epsl,-1)--(1-\epsl,-1)[bend right=45]to(1,-1+\epsl)--++(0,1.3);
    
    \draw(1,1-\epsl)[bend right=45]to(1-\epsl,1)--(-1+\epsl,1)[bend right=45]to(-1,1-\epsl)--(-1,-1+\epsl)[bend right=45]to(-1+\epsl,-1)--(1-\epsl,-1)[bend right=45]to(1,-1+\epsl);
    \node at (0,0){$a\ layer$};
    \end{scope}
    \begin{scope}[shift={(-2.2,0)},scale=.5]
    \hole{0.6}{1}{1};
    \draw(2,0.4)arc[start angle=15,end angle=345,radius=1]--(1.6,0.4)arc[start angle=30,end angle=330,radius=.6]--(2,.4);
    \end{scope}
    
    \begin{scope}[shift={(-.3,1)},yscale=1.2]
    \draw[white,WL](0,1)--(0,-.2)[bend left=90]to(-.5,-.2)[bend right=60]to(-2,0.3)[bend right=60]to(-2,-1.9)[bend right=50]to(-.5,-1.4)[bend left=90]to (0,-1.4)--(0,-2.5);
    \draw(0,1)--(0,-.2)[bend left=90]to(-.5,-.2)[bend right=60]to(-2,0.3)[bend right=60]to(-2,-1.9)[bend right=50]to(-.5,-1.4)[bend left=90]to (0,-1.4)--(0,-2.5);
    \begin{scope}[shift={(.3,0)}]
    \draw[white,WL](0,1)--(0,-.2)[bend left=90]to(-1.1,-.2)[bend right=60]to(-2,0.3)[bend right=60]to(-2,-1.8)[bend right=60]to(-1.1,-1.4)[bend left=90]to (0,-1.4)--(0,-2.5);
    \draw(0,1)--(0,-.2)[bend left=90]to(-1.1,-.2)[bend right=60]to(-2,0.3)[bend right=60]to(-2,-1.8)[bend right=60]to(-1.1,-1.4)[bend left=90]to (0,-1.4)--(0,-2.5);
    \end{scope}
    \end{scope}
    
    \end{scope}
    \end{scope}
    \node at (7,0){$.$};
\end{tikzequation}

%\input{ack}

%%%%%%%%%%%%%%%%%%%%%%%%%%%%%%%%%%%%%%%%%%%%%%%%%%%%%%%%%%%

\bibliography{aps_preprint}

\clearpage

\end{document}